\documentclass[journal,final]{IEEEtran}

\usepackage{amsmath,amssymb,mathrsfs,dsfont,amsthm}
\usepackage{epsfig,epsf,subfigure,graphicx,graphics, url,color}

\allowdisplaybreaks

\bibliographystyle{ieeetr}

\usepackage{xifthen}

\newtheorem{lemma}{Lemma}[section]
\newtheorem{theorem}{Theorem}[section]
\newtheorem{corollary}{Corollary}[section]
\newtheorem{remark}{Remark}[section]
\newtheorem{claim}{Claim}[section]

\renewcommand{\thetheorem}{\arabic{section}.\arabic{theorem}}
\renewcommand{\thelemma}{\arabic{section}.\arabic{lemma}}
\renewcommand{\thecorollary}{\arabic{section}.\arabic{corollary}}
\renewcommand{\theremark}{\arabic{section}.\arabic{remark}}
\renewcommand{\theclaim}{\arabic{section}.\arabic{claim}}

\newcommand{\SNR}{\mathsf{SNR}}
\newcommand{\INR}{\mathsf{INR}}
\newcommand{\aaaa}{\mathrm{(a)}}
\newcommand{\bbbb}{\mathrm{(b)}}
\newcommand{\cccc}{\mathrm{(c)}}
\newcommand{\dddd}{\mathrm{(d)}}
\newcommand{\eeee}{\mathrm{(e)}}
\newcommand{\ffff}{\mathrm{(f)}}
\newcommand{\gggg}{\mathrm{(g)}}
\newcommand{\hhhh}{\mathrm{(h)}}

\newcommand{\E}[1]{\mathbb{E} \left[ #1 \right]}
\newcommand{\Prob}[1]{\mathbb{P} \left( #1 \right)}
\newcommand{\lp}{\left(}
\newcommand{\rp}{\right)}
\newcommand{\lb}{\left[}
\newcommand{\rb}{\right]}
\newcommand{\lbp}{\left\{}
\newcommand{\rbp}{\right\}}
\newcommand{\ul}{\underline}

\newcommand{\mcal}{\mathcal}

\newcommand{\what}{\widehat}
\newcommand{\wtild}{\widetilde}
\newcommand{\mb}{\mathbf}
\newcommand{\mbb}{\mathbb}
\newcommand{\msf}{\mathsf}

\newcommand{\ijj}{(i,j)=(1,2),(2,1)}
\newcommand{\eqFunc}{\overset{\mathrm{f}}{=}}

\newcommand{\Es}[1]{\mathbb{E}_{S^N} \left[ #1 \right]}
\newcommand{\hs}[1]{h_S \left( #1 \right)}
\newcommand{\hsc}[1]{h_S \left( #1 \right)}

\newcommand{\cgauss}[2]{\mathcal{CN}\lp#1,#2\rp}
\newcommand{\intypset}[1]{\left( #1 \right) \in \mathcal{T}^{(N)}_\epsilon}
\newcommand{\nintypset}[1]{\left( #1 \right) \notin \mathcal{T}^{(N)}_\epsilon}

\newcommand{\ind}[1]{\mathds{1}_{\left \{ #1 \right\} }}
\newcommand{\inds}[1]{\mathds{1}_{ #1  }}

\graphicspath{{./figs/}}

\begin{document}

\title{Gaussian Interference Channel with Intermittent Feedback}

\author{Can~Karakus,~\IEEEmembership{Student Member,~IEEE,}
        I-Hsiang~Wang,~\IEEEmembership{Member,~IEEE,}
        and~Suhas~Diggavi,~\IEEEmembership{Fellow,~IEEE}%
        \thanks{The research of C. Karakus and S. Diggavi was supported in part by
NSF grant 1314937 and a gift from Intel. The work of I.-H. Wang was supported by Ministry of Science and Technology, Taiwan, under Grants MOST 103-2221-E-002-089-MY2 and MOST 103-2622-E-002-034. This paper was presented in part at 2013 IEEE International Symposium on Information Theory and 2013 Allerton Conference on Communication, Control, and Computing.}
        \thanks{C. Karakus and S. Diggavi are with the Department of Electrical Engineering,
        University of California, Los Angeles, CA 90095 USA (e-mail: karakus@ucla.edu; 	suhas@ee.ucla.edu)}%
        \thanks{I.-H. Wang is with the Department of Electrical Engineering,
        National Taiwan University, Taipei, Taiwan (e-mail: ihwang@ntu.edu.tw)}%
        \thanks{Copyright (c) 2014 IEEE. Personal use of this material is permitted.  However, permission to use this material for any other purposes must be obtained from the IEEE by sending a request to pubs-permissions@ieee.org.}
}

\maketitle

\begin{abstract}
We investigate how to exploit intermittent feedback for interference
management by studying the two-user Gaussian interference channel
(IC).  We approximately characterize (within a universal constant) the
capacity region for the Gaussian IC with intermittent feedback.  We
exactly characterize the the capacity region of the linear
deterministic version of the problem, which gives us insight into the
Gaussian problem.  We find that the characterization only depends on
the forward channel parameters and the marginal probability
distribution of each feedback link.  The result shows that passive and
unreliable feedback can be harnessed to provide multiplicative
capacity gain in Gaussian interference channels.  We find that when
the feedback links are active with sufficiently large probabilities,
the perfect feedback sum-capacity is achieved to within a constant
gap.  In contrast to other schemes developed for interference channel
with feedback, our achievable scheme makes use of
\emph{quantize-map-and-forward} to relay the information obtained
through feedback, performs forward decoding, and does not use
structured codes. We also develop new outer bounds enabling us to
obtain the (approximate) characterization of the capacity region.
\end{abstract}

\begin{IEEEkeywords}
Interference management, interference channel, intermittent feedback, unreliable feedback, quantize-map-and-forward
\end{IEEEkeywords}

\section{Introduction} \label{sec:introduction}

The simplest information theoretic model for studying interference is
the two-user Gaussian \emph{interference channel} (IC). It has been
shown that feedback can provide an unbounded gain in capacity for
two-user Gaussian interference channels \cite{SuhTse_11}, in contrast
to point-to-point memoryless channels, where feedback gives no
capacity gain \cite{Shannon_56}, and multiple-access channels, where
feedback can at most provide \emph{power} gain \cite{Ozarow_84}. This
has been demonstrated when the feedback is unlimited, perfect, and
free of cost in \cite{SuhTse_11}.  Given the optimistic result
obtained under this setting, a natural question arises: Can feedback
be leveraged for interference management under imperfect feedback
models?

There have been several pieces of work so far, attempting to answer
this question. Vahid \emph{et al.} \cite{VahidSuh_12} considered a
rate-limited feedback model, where the feedback links are modeled as
fixed-capacity deterministic bit pipes. They developed a scheme based
on decode-and-forward at transmitters and lattice coding to extract
the helping information in the feedback links, and showed that it
achieves the sum-capacity to within a constant gap. The work in
\cite{LeTandon_12} studied a deterministic model motivated by passive
feedback over AWGN channels, and \cite{SahaiAggarwal_09, SuhWang_12} studied
the two-way interference channel, where the feedback is provided
through a backward interference channel that occupies the same
resource as the forward channel. \cite{LeTandon_12, SahaiAggarwal_09} and
\cite{SuhWang_12} only dealt with the linear deterministic model
\cite{AvestimehrDiggavi_09} of the Gaussian IC.

In this paper, we investigate how to exploit \emph{intermittent}
feedback for managing interference \cite{KarakusWang_13, KarakusWang_13_2}.  Such intermittent feedback could
occur in several situations. For example, one could use a side-channel
such as WiFi for feedback; in this case since the WiFi channel is best-effort, 
dropped packets might cause intermittent feedback.  In other
situations, control mechanisms in higher network layers could cause
the feedback resource to be available intermittently. For the feedback links, Bernoulli
processes $\{S_1[t]\}$ and $\{S_2[t]\}$ control the presence of
feedback for user $1$ and $2$, respectively. The two processes can be
dependent, but their joint distribution is i.i.d. over time. We assume
that the receivers are \emph{passive}: they simply feed back their
received signals to the transmitters without any processing. In
other words, each transmitter receives from feedback an observation of
the channel output of its own receiver through an erasure channel,
with unit delay. We focus on the passive feedback model as the
intermittence of feedback is motivated by the availability of feedback
resources (either through use of best-effort WiFi for feedback or
through feedback resource scheduling). Therefore, it might be that the
time-variant statistics of the intermittent feedback are not \emph{a
  priori} available at the receiver, precluding active
coding\footnote{In general, the statistics of $S_1[t]$ and $S_2[t]$ can have arbitrary time-
dependence, which could be unavailable at the receivers before feedback
transmission, but this information could be learned after the transmission. This
implies that receivers may not be able to actively code the feedback signals, but the
feedback statistics can potentially be used at the transmitters and later on at the receivers, after 
the feedback transmissions (therefore the transmitters could use these statistics to encode after receiving feedback). In this work, we focus on the case where feedback statistics
is time-invariant for simplicity, but the schemes described here can be easily
extended into the time-variant case.}. Moreover, the availability of the feedback resource may not be
known ahead of transmission, therefore motivating the assumption of
causal state information at the transmitter. If the receiver has
\emph{a priori} information about the feedback channel statistics, it
can perform active coding, in which case, the intermittent feedback
model reduces to the rate-limited model of \cite{VahidSuh_12}.

We study the
effect of intermittent feedback for the two-user Gaussian IC inspired
by ideas we develop for the linear deterministic IC model
\cite{AvestimehrDiggavi_09}. Our main contribution is the approximate characterization of the
capacity region of the interference channel with intermittent feedback, 
under the Gaussian model. We also derive an exact characterization of the capacity
region under the linear deterministic model, which agrees with the Gaussian result. 
The capacity characterizations under both models depend only on 
the forward channel parameters and the marginal
distributions of $S_1$ and $S_2$; not on their joint distribution. 

Our result shows that feedback can be harnessed to provide
multiplicative gain in Gaussian interference channel capacity even
when it is unreliable and intermittent. The result can be interpreted using the picture given in Figure~\ref{fig:gdof_duplicate}, which is depicted (for convenience) in
terms of symmetric generalized degrees of freedom for the special case of symmetric channel parameters. The given GDoF curves suggest that as the feedback probability increases, the achievable GDoF also increases for all interference regimes for which perfect feedback provides any GDoF gain. One can also observe from the figure that the capacity gain from intermittent feedback, which depends on the portion of time when
the feedback is active, remains unbounded, similar to the perfect feedback case.

\begin{figure}[t]
\centering
\includegraphics[scale=0.58]{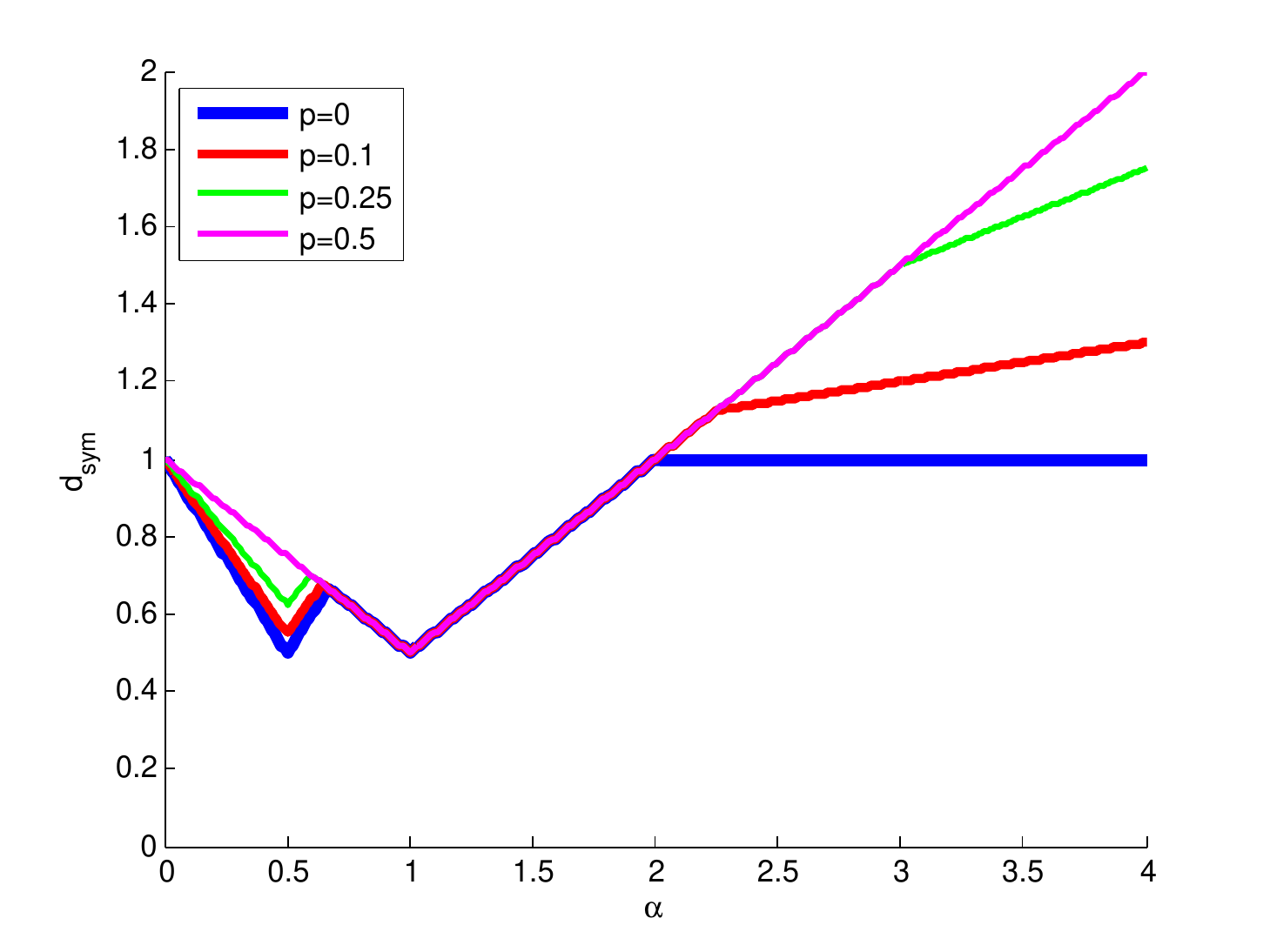}
\caption{Generalized degrees of freedom per user with respect to interference strength $\alpha := \frac{\log \INR}{\log \SNR}$ for symmetric channel parameters.}
\label{fig:gdof_duplicate}
\end{figure}

A consequence of this result is that when the feedback links are active with large enough probabilities, the
sum-capacity of the perfect feedback channel can be achieved to within
a constant gap. Similarly for the linear deterministic case, the
perfect feedback capacity is exactly achieved even when there is only
intermittent feedback, with large enough ``on'' probability. In particular, under the symmetric setting, this
threshold is $1/2$ for each feedback link. This is also reflected in Figure~\ref{fig:gdof_duplicate}, where the ``V-curve'' achievable with perfect feedback is already achievable when the feedback probability is only $1/2$.

Our achievable scheme has three main differences from the previous
schemes developed in \cite{SuhTse_11, VahidSuh_12} and
\cite{LeTandon_12}. First, we use \emph{quantize-map-and-forward} (QMF)\footnote{The QMF scheme of \cite{AvestimehrDiggavi_09} was
  generalized to DMCs in \cite{LimKim_11} (and the scheme was called
  noisy network coding) and to lattices in \cite{OzgurDiggavi_10, OzgurDiggavi_13}. In this
  paper we use the ``short-messaging'' version of QMF
  \cite{KramerHou_11} instead of the ``long-messaging'' version first
  studied in \cite{AvestimehrDiggavi_09} and extended to DMCs in \cite{LimKim_11}. 
For a longer discussion about this and other issues, refer to Section \ref{sec:discussion}.}
\cite{AvestimehrDiggavi_09} at the transmitters to send the
information obtained through feedback, as opposed to (partial or
complete) decode-and-forward, which has been used in \cite{SuhTse_11,
VahidSuh_12, LeTandon_12}. This is because when there is intermittent
feedback, the transmitters might not be able to decode the other
user's (partial) message, but would still need to send useful information about the
interference. A similar situation arises in a relay network, where QMF enables
forwarding of evidence, without requiring decoding
\cite{AvestimehrDiggavi_09}. Second, at the receivers, we perform
forward decoding of blocks instead of backward decoding, which results
in a better delay performance. Third, we do not use structured codes,
\emph{i.e.}, we only perform random coding.

We also develop novel outer bounds that are within a constant of the
achievable rate region for the Gaussian IC and match the achievable region
for the linear deterministic IC. These outer bounds are based on
constructing an enhanced channel and appropriate side-information.
These are illustrated in Section \ref{sec:converse}.

Lastly, we extend these results for packet transmission channels, modeled
through parallel channels which are $M$-symbol extensions of the original
model. This can be considered as a model for OFDM and packet drops over
a best-effort channel.

The rest of this paper is organized as follows. We formally state the
problem and establish the notation in Section~\ref{sec:model}. We
present our main results in Section~\ref{sec:results} and give
interpretations of them. We motivate our coding scheme and explain it
through an example in Section \ref{sec:motivation}. We give the
analysis of the coding scheme in Section \ref{sec:achievability}.  The
outer bound is developed in Section~\ref{sec:converse} and
Section~\ref{sec:discussion} concludes the paper with a brief discussion of
possible extensions of the work.  Many of the detailed proofs are
given in the Appendices.

\section{System Model} \label{sec:model}

\begin{figure} 
\centering
\includegraphics[scale=0.75]{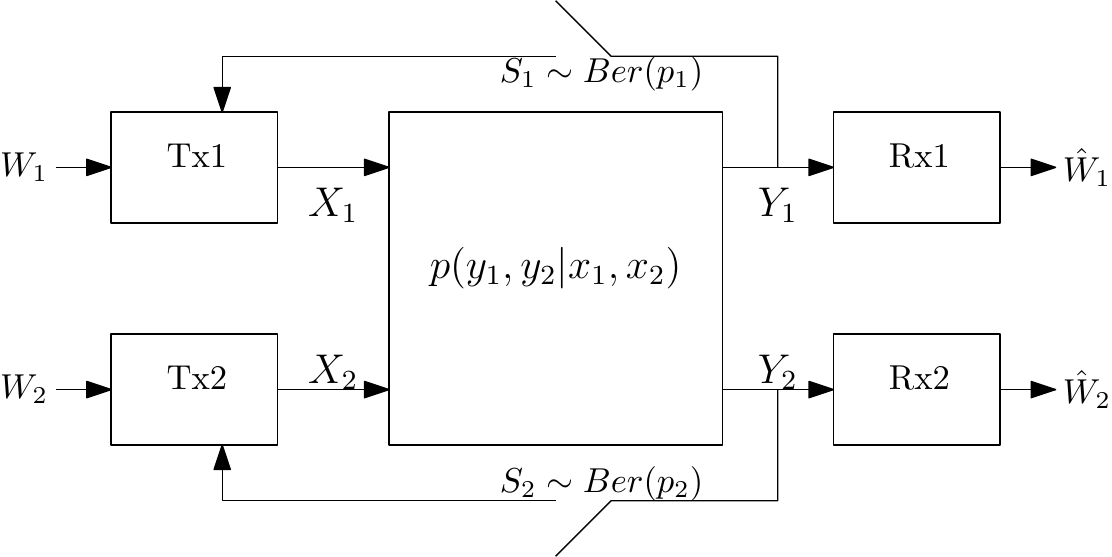}
\centering
\caption{Two-user discrete memoryless interference channel with intermittent feedback} 
\label{fig:dm_ic}
\end{figure}

We consider the 2-user discrete memoryless interference channel (DM-IC) with intermittent feedback, illustrated in Figure~\ref{fig:dm_ic}. We assume Transmitter $i$ (Tx$i$) has a message $W_i$ intended for Receiver $i$ (Rx$i$), $i=1,2$. $W_1 \in \lb 2^{NR_1} \rb$ and $W_2 \in \lb 2^{NR_2} \rb$ are independent and uniformly distributed, where, for $n \in \mbb{N}$, $\lb n \rb := \lbp k \in \mbb{N}: k \leq n\rbp$. The signal transmitted by Tx$i$ at time $t$ is denoted by $X_{i,t} \in \mcal{X}_i$, while the channel output observed at Rx$i$ is denoted by $Y_{i,t} \in \mcal{Y}_i$, for $i=1,2$. For a block length $N$, the conditional probability distribution mapping the input codeword to the output sequence is given by
\begin{align*}
p(Y_1^N, Y_2^N | X_1^N, X_2^N) = \prod_{t=1}^N p \lp Y_{1,t}, Y_{2,t} | X_{1,t}, X_{2,t} \rp
\end{align*}
The feedback state sequence pair $\ul{S}:=\lp S_1^N, S_2^N \rp$ have the joint distribution
\begin{align*}
p\lp S_1^N, S_2^N\rp = \prod_{t=1}^N p\lp S_{1,t}, S_{2,t}\rp.
\end{align*}
and marginally, at time $t$, $S_{i,t} \sim Bernoulli (p_i)$, for $i=1,2$, for all $t$ and $N$. Note that, for any fixed time slot $t$, the random variables $S_{1,t}$ and $S_{2,t}$ are not necessarily independent, that is, the joint distribution $p(S_{1,t}, S_{2,t})$ can be arbitrary. We assume that receivers have access to $\ul{S}$ strictly causally, that is, at time $t$, both receivers know the realization of $\ul{S}^{t-1}$.

At the beginning of time $t$, Tx$i$ observes the channel output received by Rx$i$ at time $t-1$ through an erasure channel, \emph{i.e.}, it receives $\wtild Y_{i,t-1} := S_{i,t-1} Y_{i,t-1}$, for $i=1,2$. Note that this is a \emph{passive} feedback model, in that it does not allow the receiver to perform any processing on the channel output; it simply forwards the received signal $Y_i$ at every time slot, which gets erased with probability $1-p_i$.

For random variables $A$ and $B$, we use the notation $A \eqFunc B$ to denote that $A$ is a deterministic function of $B$\footnote{More formally, $A \eqFunc B$ means that there exists a $\sigma (B)$-measurable function $f$ such that $A=f(B)$ almost surely, where $\sigma (B)$ is the sigma-algebra generated by $B$.}. Then our channel model implies $X_{i,t} \eqFunc \lp W_i, S_i^{t-1}, \wtild Y_i^{t-1} \rp$. 

\begin{figure}
  \centering
  \includegraphics[scale=0.45]{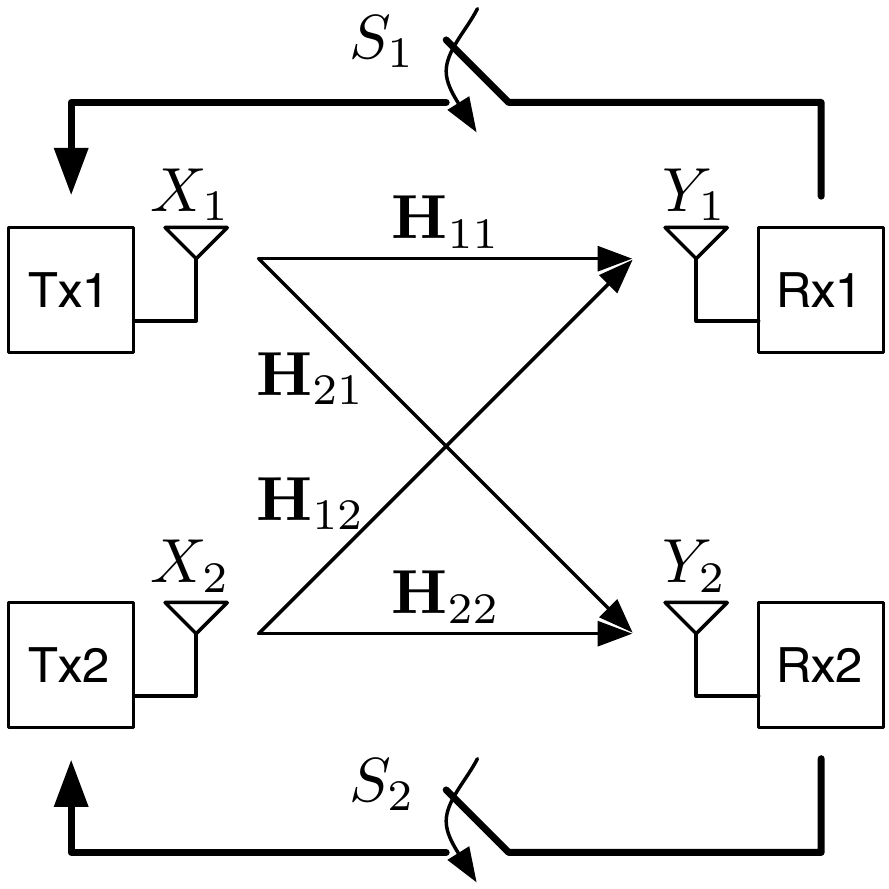}
  \caption{Two-user linear deterministic interference channel with intermittent feedback}
  \label{fig:ldc_ic}
\end{figure}

\begin{figure}
  \centering
  \includegraphics[scale=0.45]{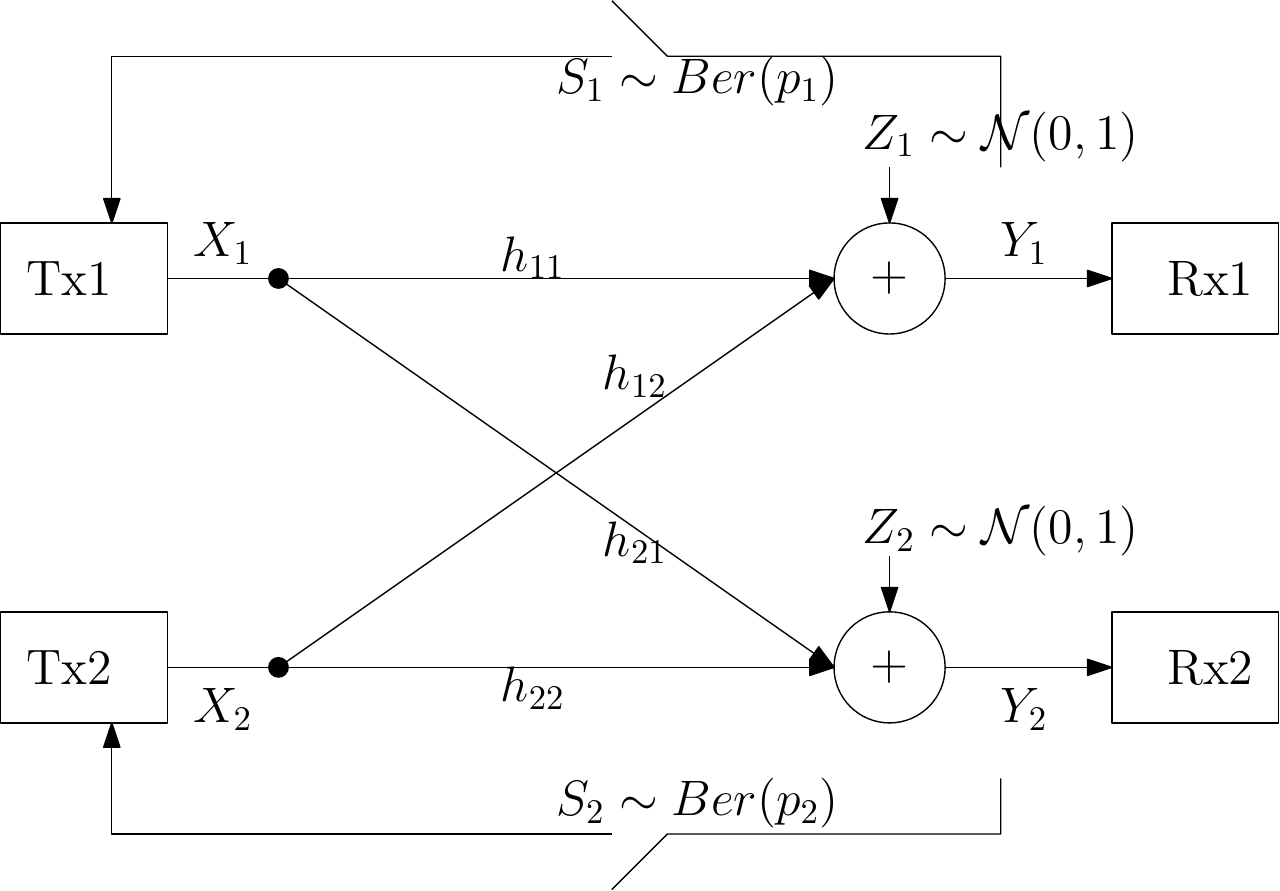}
  \caption{Two-user Gaussian interference channel with intermittent feedback}
  \label{fig:gaussian_ic}
\end{figure}

A rate pair $(R_1, R_2)$ is said to be achievable if there exists a pair of codebooks $(\mcal{C}_1, \mcal{C}_2)$ at Tx1 and Tx2, with rates $R_1$ and $R_2$, respectively, and pairs of encoding and decoding functions such that the average probability of error at any decoder goes to zero as the block length $N$ goes to infinity. The capacity region with feedback probabilities $p_1$ and $p_2$, $\mcal{C} (p_1,p_2)$, is defined as the closure of the set of all achievable rate pairs $(R_1, R_2)$ when $S_1 \sim Bernoulli(p_1)$ and $S_2 \sim Bernoulli(p_2)$. Sum-capacity is defined by
\begin{align*}
C^{\text{sum}} (p_1, p_2) := \sup \lbp R_1+R_2: \lp R_1, R_2 \rp \in \mcal{C} (p_1, p_2) \rbp.
\end{align*}
In this work, we consider two specific channel models (that is, two specific classes of $\lp \mcal{X}_1, \mcal{X}_2, \mcal{Y}_1, \mcal{Y}_2, p\lp y_1,y_2|x_1,x_2\rp \rp$), described in the following subsections.
%


\subsection{Linear Deterministic Model}
This channel model was introduced in \cite{AvestimehrDiggavi_09} and since then proved useful in providing insight into the nature of signal interactions many network information theory problems (see Figure~\ref{fig:ldc_ic}). 

We assume $X_{i,t} \in \mbb{F}_2^{q}$, for $i=1,2$, where $\mbb{F}_2$ is the binary field. The received signal at Rx$i$ is given by
\begin{align*}
Y_{i,t} &= \mb{H}_{ii} X_{i,t} + \mb{H}_{ij} X_{j,t}
\end{align*}
for $\ijj$. The channel matrices are given by $\mb{H}_{ij} := \mb{S}^{q-n_{ij}}$ for $(i,j) \in \{1,2\}^2$, where $q = \max\lbp n_{11},n_{12},n_{21},n_{22}\rbp$, and $\mb{S}\in\mathbb{F}_2^{q\times q}$ is the shift matrix
$
\begin{bmatrix}
\mb{0}^T & 0\\ \mb{I}_{q-1} & \mb{0}
\end{bmatrix}
$, where $\mb{0}$ is the zero vector in $\mbb{F}_2^{q-1}$ and $\mb{I}_{q-1}$ is the identity matrix in $\mbb{F}_2^{(q-1)\times(q-1)}$. We also define, for $\ijj$,
\begin{align*}
V_{i,t} = \mb{H}_{ji} X_{i,t}.
\end{align*}
The capacity region for the linear deterministic model will be denoted by $\mcal{C}_{LDC} (p_1,p_2)$, while its sum-capacity will be denoted by $C_{LDC}^{\text{sum}} (p_1,p_2)$.

\begin{figure*}[!t]
\begin{align}
R_1 &\le \min\lbp \max(n_{11},n_{12}), n_{11}+p_2(n_{21}-n_{11})^+\rbp \label{eq:ldc_R1}\\
R_2 &\le \min\lbp \max(n_{22},n_{21}), n_{22}+p_1(n_{12}-n_{22})^+\rbp \label{eq:ldc_R2}\\
R_1+R_2 &\le \min\Big\{ \max(n_{11},n_{12}) + (n_{22}-n_{12})^+ , \max(n_{22},n_{21}) + (n_{11}-n_{21})^+ \Big\}\label{eq:ldc_R1R2_1}\\
R_1+R_2 &\le \max\lbp n_{12}, (n_{11}-n_{21})^+\rbp + \max\lbp n_{21}, (n_{22}-n_{12})^+\rbp \notag\\
&\quad + p_1\min\lbp n_{12}, (n_{11}-n_{21})^+\rbp + p_2\min\lbp n_{21}, (n_{22}-n_{12})^+\rbp \label{eq:ldc_R1R2_3}\\
2R_1+R_2 &\le \max(n_{11},n_{12}) + \max\lbp n_{21}, (n_{22}-n_{12})^+\rbp + (n_{11}-n_{21})^+ + p_2\min\lbp n_{21}, (n_{22}-n_{12})^+\rbp \label{eq:ldc_2R1R2}\\
R_1+2R_2 &\le \max(n_{22},n_{21}) + \max\lbp n_{12}, (n_{11}-n_{21})^+\rbp + (n_{22}-n_{12})^+ + p_1\min\lbp n_{12}, (n_{11}-n_{21})^+\rbp \label{eq:ldc_R12R2}
\end{align}
\hrulefill
\end{figure*}

\subsection{Gaussian Model}
Under the canonical Gaussian model (see Figure~\ref{fig:gaussian_ic}), the channel outputs are related to the inputs through the equations
\begin{align*}
Y_{1,t} &= h_{11} X_{1,t} + h_{12} X_{2,t} + Z_{1,t} \\
Y_{2,t} &= h_{21} X_{1,t} + h_{22} X_{2,t} + Z_{2,t}
\end{align*}
where $h_{ij} \in \mbb{C}$, for $\lp i,j\rp \in \lbp 1,2\rbp^2$, are channel gains, and $Z_{1,t}, Z_{2,t} \sim \cgauss{0}{1}$ are circularly symmetric complex white Gaussian noise. We assume an average transmit power constraint of $P_i$ at Tx$i$, \emph{i.e.}, for any length-$N$ codeword $X_i^N$ transmitted by Tx$i$, $\frac{1}{N}\sum_{t=1}^N \left|X_{i,t}\right|^2 \leq P_i$, $i=1,2$. We also define
\begin{align*}
\SNR_i &:= |h_{ii}|^2 P_i \\
\INR_i &:= |h_{ij}|^2 P_j 
\end{align*}
and
\begin{align*}
V_i &:= h_{ji}X_i + Z_j, \\
\wtild V_i &:= S_j V_i,
\end{align*}
for $\ijj$. Note that this definition of $V_{i,t}$ is consistent with its definition under linear deterministic model, in the sense that it is what remains out of the channel output when the intended signal is completely cancelled.

The capacity region for the Gaussian model will be denoted by $\mcal{C}_{G} (p_1,p_2)$, while its sum-capacity will be denoted by $C_{G}^{\text{sum}} (p_1,p_2)$. We will also use the notation $C_{G,p}^{\text{sum}} := C_{G}^{\text{sum}} (1,1)$, denoting the sum-capacity under perfect feedback.

Gaussian parallel channel is described by the equations
\begin{align}
\mb{Y}_{1,t} &= h_{11}\mb{X}_{1,t} + h_{12} \mb{X}_{2,t} + \mb{Z}_{1,t} \label{eq:parallel_model_first} \\
\mb{Y}_{2,t} &= h_{21}\mb{X}_{2,t} + h_{22} \mb{X}_{2,t} + \mb{Z}_{2,t} \\
\mb{\wtild Y}_{1,t} &= S_{1,t} \mb{Y}_{1,t} \\
\mb{\wtild Y}_{2,t} &= S_{2,t} \mb{Y}_{2,t} \label{eq:parallel_model_last} 
\end{align}
where $\mb{X}_{i,t}, \mb{Y}_{i,t} \in \mbb{C}^M$, $i=1,2$, are the channel input and output, respectively, at user $i$; $\mb{Z}_{1,t}$ and $\mb{Z}_{2,t}$ are independent and distributed with $\cgauss{\mb{0}}{\mb{I}}$; and $\mb{\wtild Y}_{i,t}, i=1,2$ is the output of the feedback channel of Tx$i$, at time $t$. Note that the channel gains are scalars. It should also be noted that any given time, the same feedback state variable $S_{i,t}$ controls the presence of feedback for all sub-channels, \emph{i.e.}, the feedback is present either for all $M$ channels, or for none of them.

\section{Main Results} \label{sec:results}


In this section, we present our results and discuss their consequences for both linear deterministic and Gaussian models.

%
%
%
\subsection{Linear Deterministic Model}
The following theorem captures our main result for the linear deterministic model. 
\begin{theorem} \label{th:ldc}
The capacity region $\mcal{C}_{LDC} (p_1, p_2)$ of the linear deterministic interference channel with intermittent feedback is given by the set of rate pairs $\lp R_1,R_2\rp$ satisfying \eqref{eq:ldc_R1}--\eqref{eq:ldc_R12R2}.
\end{theorem}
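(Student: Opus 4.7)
The plan is to prove Theorem~\ref{th:ldc} in two parts: achievability of the rate region defined by \eqref{eq:ldc_R1}--\eqref{eq:ldc_R12R2}, and a matching converse.

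For achievability, I would build on a block-Markov coding scheme combined with a Han-Kobayashi common/private message split. In each block $b$, each transmitter sends a freshly drawn codeword carrying the current block's message pair plus a ``relay'' codeword that is a deterministic function of the feedback signal $\wtild Y_i^N$ received during block $b-1$. Because intermittent feedback cannot in general be used to decode the other user's message, the transmitter forwards the feedback via \emph{quantize-map-and-forward} rather than decode-and-forward; in the linear deterministic setting the quantization is trivial and reduces to re-transmitting the received bits on distinct signal levels. Receivers perform forward (not backward) decoding: at the end of block $b$, Rx$i$ jointly decodes the messages of block $b-1$ using the signal of block $b-1$ together with the relay information now available from block $b$. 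A careful error-probability analysis, in which the effective ``help rate'' provided by feedback scales as $p_i$ times the per-symbol feedback entropy, yields the six bounds \eqref{eq:ldc_R1}--\eqref{eq:ldc_R12R2}.

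For the converse, the bounds split into three groups. The individual-rate bounds \eqref{eq:ldc_R1}--\eqref{eq:ldc_R2} combine two cut-sets: the MAC-type bound $R_i \le \max(n_{ii},n_{ij})$ at Rx$i$, and a cooperative-transmitter bound $R_i \le n_{ii}+p_j(n_{ji}-n_{ii})^+$, in which any additional rate that Tx$j$ can supply toward Rx$i$ beyond the direct link must traverse Rx$j$'s intermittent feedback link (active only a $p_j$ fraction of the time). The sum-rate bound \eqref{eq:ldc_R1R2_1} is the classical Etkin-Tse-Wang no-feedback bound, which applies a fortiori and follows by revealing $X_j$ to Rx$i$ and applying Fano. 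The remaining bounds \eqref{eq:ldc_R1R2_3}, \eqref{eq:ldc_2R1R2}, and \eqref{eq:ldc_R12R2} are new and require constructing an enhanced channel with carefully chosen genie side information. The critical algebraic step is that, after expectation over the state sequence $(S_1^N,S_2^N)$, the conditional entropy of an intermittent feedback signal decomposes as $N\,[A + p_i\,B]$, where $A$ is the height of the ``always-present'' interference layer and $B$ is the height of the ``feedback-gated'' layer; this mechanism is precisely what produces the $\max\{\cdot\} + p_i\min\{\cdot\}$ structure appearing in \eqref{eq:ldc_R1R2_3}--\eqref{eq:ldc_R12R2}.

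The main obstacle I anticipate lies in this third group of bounds, for two reasons. First, the genie side information must be chosen so that, after Fano, the residual mutual-information term reduces to a pure entropy of an intermittent feedback signal conditioned on the message of the ``unhelped'' user; concretely, for \eqref{eq:ldc_R1R2_3} I would give Rx1 the side information $(W_2, S_1^N, S_2^N)$ (and symmetrically for Rx2), which reduces the problem to bounding $H(\wtild Y_2^N \mid W_2, S_1^N, S_2^N)$. Second, this entropy must be shown to split exactly along the signal-level hierarchy into an always-active portion of height $\max\{n_{12},(n_{11}-n_{21})^+\}$ and a feedback-gated portion of height $\min\{n_{12},(n_{11}-n_{21})^+\}$ whose contribution is scaled by $p_1$ after averaging over $S_1^N$. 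The shift-matrix structure of the linear deterministic channel makes these level-wise decompositions exact, which I expect to be delicate but tractable; applying the same decomposition twice then produces the asymmetric bounds \eqref{eq:ldc_2R1R2} and \eqref{eq:ldc_R12R2}.
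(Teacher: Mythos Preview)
Your achievability plan matches the paper's scheme: block-Markov coding with a Han--Kobayashi split, quantize-map-forward of the feedback (lossless in the LDC), and forward decoding.

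On the converse side there are two real problems. The smaller one concerns \eqref{eq:ldc_R1R2_1}: a \emph{no-feedback} outer bound does not apply a fortiori to a feedback channel, since feedback can only enlarge the capacity region; the implication runs the other way. The paper imports \eqref{eq:ldc_R1R2_1} as the Suh--Tse \emph{perfect-feedback} bound, which does apply a fortiori to intermittent feedback. Relatedly, ``reveal $X_j^N$ to Rx$i$'' is not a clean genie once feedback is present, because $X_j^N$ already depends on $W_i$ through the feedback loop.

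The larger gap is your genie for \eqref{eq:ldc_R1R2_3}--\eqref{eq:ldc_R12R2}. Handing Rx1 the message $W_2$ (and symmetrically) does not yield the $\max\{\cdot\}+p_i\min\{\cdot\}$ structure: conditioned on $W_2$ you still cannot peel $X_{2,t}$ off of $Y_{1,t}$, because $X_{2,t}$ depends on $\wtild Y_2^{t-1}$ and hence on $W_1$; nor does the term $H(\wtild Y_2^N\mid W_2,\ul{S}^N)$ naturally appear in a Fano bound written at Rx1. The paper's genie is different: give Rx$i$ the pair $(V_i^N,\wtild V_j^N)$, i.e.\ its own outgoing interference (\`a la ETW) together with the \emph{intermittent} incoming interference (this is the channel enhancement). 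The pivotal identity is $H(Y_i^N\mid W_i,\ul{S}^N)=H(V_j^N,\wtild V_i^N\mid W_i,\ul{S}^N)$, which lets the negative entropy from user $i$'s Fano term pair off against part of user $j$'s positive term. What survives is $H(Y_i^N\mid V_i^N,\wtild V_j^N,\ul{S}^N)\le N[p_iA+(1-p_i)\max(A,B)]$ and $I(W_j;V_i^N,\wtild V_j^N\mid\ul{S}^N)\le Np_iB$, with $A=(n_{ii}-n_{ji})^+$ and $B=n_{ij}$; their sum is exactly $N[\max(A,B)+p_i\min(A,B)]$. For \eqref{eq:ldc_2R1R2} the same enhancement is applied to Rx2, one copy of Rx1 receives the original $Y_1^N$ with no genie, and a second copy of Rx1 receives $(W_2,V_1^N)$.
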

\begin{proof}
See Section~\ref{sec:achievability} for achievability, and Section~\ref{sec:converse} for converse.
\end{proof}
{Note that for the special cases of $p_1=p_2=1$ and $p_1=1,p_2=0$, existing results in the literature \cite{SuhTse_11}, \cite{SahaiAggarwal_09} are recovered.} The following corollary shows that it is possible to achieve perfect feedback sum-capacity even when feedback probabilities are less than one.
\begin{corollary} \label{cor:threshold_ldc}
For $n_{12},n_{21}>0$, there exists $p^* < 1$ such that
\begin{align*}
C_{LDC}^{\text{sum}} (p_1,p_2) = C_{LDC}^{\text{sum}} (1,1)
\end{align*}
for all $p_1,p_2 \geq p^*$.
\end{corollary}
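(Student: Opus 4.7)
The plan is to apply Theorem~\ref{th:ldc} directly, exploiting the fact that exactly one of the six bounds defining $\mcal{C}_{LDC}(p_1,p_2)$ is independent of $(p_1,p_2)$: only the right-hand side of the sum-rate bound \eqref{eq:ldc_R1R2_1} does not depend on the feedback probabilities, while each of \eqref{eq:ldc_R1}, \eqref{eq:ldc_R2}, \eqref{eq:ldc_R1R2_3}, \eqref{eq:ldc_2R1R2}, \eqref{eq:ldc_R12R2} has a right-hand side that is continuous and non-decreasing in each $p_i$. Hence $\mcal{C}_{LDC}(p_1,p_2)$ is monotone non-decreasing in $(p_1,p_2)$ and the inequality $C_{LDC}^{\text{sum}}(p_1,p_2) \le C_{LDC}^{\text{sum}}(1,1)$ is automatic. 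What remains is to construct a rate pair $(R_1^\circ,R_2^\circ)$ with $R_1^\circ + R_2^\circ = C_{LDC}^{\text{sum}}(1,1)$ that satisfies every $p$-dependent bound \emph{strictly} at $p_1 = p_2 = 1$; a continuity argument then closes the proof.

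To pin down the perfect-feedback sum-capacity, abbreviate $A := \max(n_{11},n_{12}) + (n_{22}-n_{12})^+$ and $B := \max(n_{22},n_{21}) + (n_{11}-n_{21})^+$, so $C_{LDC}^{\text{sum}}(1,1) \le \min\{A,B\}$ by \eqref{eq:ldc_R1R2_1}. I would verify the reverse inequality by checking that at $p=1$ none of the other bounds further restricts the sum-rate. Concretely, the right-hand side of \eqref{eq:ldc_R1R2_3} at $p=1$ simplifies to
\begin{equation*}
n_{12} + (n_{11}-n_{21})^+ + n_{21} + (n_{22}-n_{12})^+,
\end{equation*}
and a short case split on the sign of $n_{11} - n_{12}$ shows that this strictly exceeds $A$; by symmetry it also strictly exceeds $B$, with slack at least $\min\{n_{12},n_{21}\} > 0$ under the hypothesis. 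A parallel computation gives strict slack in the weighted bounds \eqref{eq:ldc_2R1R2}, \eqref{eq:ldc_R12R2}. I would then parametrize the sum-optimal face $\{R_1 + R_2 = \min\{A,B\}\}$ by $R_1$, observe that its intersection with the $p = 1$ versions of \eqref{eq:ldc_R1} and \eqref{eq:ldc_R2} is an interval of positive length (again using $n_{12},n_{21} > 0$), and pick $(R_1^\circ, R_2^\circ)$ in the relative interior of this interval; a direct check shows $(R_1^\circ, R_2^\circ)$ also satisfies \eqref{eq:ldc_2R1R2} and \eqref{eq:ldc_R12R2} strictly at $p = 1$.

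To close, since every $p$-dependent right-hand side is continuous in $(p_1,p_2)$, each of the strict inequalities enjoyed by $(R_1^\circ, R_2^\circ)$ at $(1,1)$ persists on an open neighborhood of $(1,1)$. Any $p^* < 1$ chosen so that $[p^*,1] \times [p^*,1]$ lies in that neighborhood ensures $(R_1^\circ, R_2^\circ) \in \mcal{C}_{LDC}(p_1,p_2)$ for all $p_1,p_2 \ge p^*$, giving $C_{LDC}^{\text{sum}}(p_1,p_2) \ge R_1^\circ + R_2^\circ = \min\{A,B\} = C_{LDC}^{\text{sum}}(1,1)$; together with monotonicity this proves the corollary. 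The main obstacle is the middle paragraph---explicitly constructing the robust rate pair and verifying strict slack in all five $p$-dependent bounds---which breaks into a handful of sub-cases depending on the orderings of the $n_{ij}$'s, but in every case the hypothesis $n_{12},n_{21} > 0$ is precisely what furnishes the positive slack.
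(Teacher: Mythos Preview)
Your proposal is correct and follows essentially the same approach as the paper: both hinge on showing that at $p_1=p_2=1$ every $p$-dependent constraint is \emph{strictly} redundant relative to the perfect-feedback sum-rate bound \eqref{eq:ldc_R1R2_1}, then invoking continuity and monotonicity in $p$. The only packaging difference is that the paper works directly at the sum-rate level---checking that each combination $\eqref{eq:ldc_R1R2_3}$, $\eqref{eq:ldc_R1}+\eqref{eq:ldc_R2}$, $\tfrac{1}{2}\bigl(\eqref{eq:ldc_R1}+\eqref{eq:ldc_R12R2}\bigr)$, $\tfrac{1}{2}\bigl(\eqref{eq:ldc_R2}+\eqref{eq:ldc_2R1R2}\bigr)$, $\tfrac{1}{3}\bigl(\eqref{eq:ldc_2R1R2}+\eqref{eq:ldc_R12R2}\bigr)$ strictly exceeds $\min\{A,B\}$---whereas you propose to exhibit a specific rate pair $(R_1^\circ,R_2^\circ)$ on the sum-optimal face with strict slack in each individual bound. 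The paper's route is marginally cleaner because your ``direct check'' that an interior point of the face automatically satisfies \eqref{eq:ldc_2R1R2} and \eqref{eq:ldc_R12R2} strictly is not obvious for \emph{every} such interior point and ultimately reduces to the same inequalities the paper verifies; but the underlying case analysis (via something like your Claim on $n_{12}+n_{21}+(n_{11}-n_{21})^+ +(n_{22}-n_{12})^+ > \min\{A,B\}$) is identical.
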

\begin{proof}
See Appendix~\ref{sec:ap_threshold}.
\end{proof}
We illustrate Corollary~\ref{cor:threshold_ldc} through an example. Let us assume $n_{12}=n_{21}=m$, $n_{11}=n_{22}=n$, and $p_1=p_2=p$. It is easy to see that if $p_1=p_2=0.5$, the bounds on $R_1+R_2$, $2R_1+R_2$ and $R_1+2R_2$ that involve $p_1$ and $p_2$ become redundant, and the sum-capacity does not increase beyond this point, for all $\lp m,n \rp$.

\begin{figure*}[!t]
\begin{align}
R_i &< \min \lbp \log \lp 1 + \SNR_i + \INR_i \rp,  \log \lp 1 + \SNR_i \rp + p_j \log \lp 1 + \frac{\INR_j}{1+\SNR_i} \rp \rbp \label{eq:g_Ri}\\
R_i+R_j &< \log \lp 1 + \frac{\SNR_i}{1+\INR_j} \rp + \log \lp 1 + \SNR_j + \INR_j \rp \label{eq:g_RiRj_1}\\
R_i+R_j &< \log \lp 1 + \frac{\SNR_i}{1+\INR_j} + \INR_i \rp + \log \lp 1 + \frac{\SNR_i}{1+\INR_j} + \INR_i \rp \notag\\
&\qquad + p_i \log \lp \frac{\lp 1+\INR_i \rp\lp 1+\frac{\SNR_i}{1+\INR_j}\rp}{1+\frac{\SNR_i}{1+\INR_j}+\INR_i} \rp+ p_j \log \lp \frac{\lp 1+\INR_j \rp\lp 1+\frac{\SNR_j}{1+\INR_i}\rp}{1+\frac{\SNR_j}{1+\INR_i}+\INR_j} \rp \label{eq:g_RiRj_2}\\
2R_i + R_j &< \log \lp 1 + \frac{\SNR_i}{1+\INR_j} \rp + \log \lp 1 + \frac{\SNR_j}{1+\INR_i} + \INR_j \rp \notag\\
&\qquad + \log \lp 1 + \SNR_i + \INR_i \rp + p_j \log \lp \frac{\lp 1+\INR_j \rp\lp 1+\frac{\SNR_j}{1+\INR_i}\rp}{1+\frac{\SNR_j}{1+\INR_i}+\INR_j} \rp \label{eq:g_2RiRj}
\end{align}
\hrulefill
\end{figure*}

\subsection{Gaussian Model}
We define, for any set $\mcal{R}$ of rate pairs $\lp R_1,R_2\rp$ and scalar $\delta \in \mbb{R}$,
\begin{align*}
\mcal{R}-\delta &:= \lbp (R_1,R_2): (R_1+\delta,R_2+\delta) \in \mcal{R} \rbp, \\
\mcal{R}+\delta &:= \lbp (R_1,R_2): (R_1-\delta,R_2-\delta) \in \mcal{R} \rbp.
\end{align*}

The following theorem captures our main result for the Gaussian model.
\begin{theorem} \label{th:gaussian}
The capacity region $\mcal{C}_{G} (p_1, p_2)$ of the Gaussian interference channel with intermittent feedback satisfies 
\begin{align}
\mcal{\bar C}(p_1, p_2) - \delta_1 \subseteq \mcal{C}_G(p_1, p_2) \subseteq \mcal{\bar C}(p_1, p_2) + \delta_2 \label{eq:result}
\end{align}
where $\mcal{\bar C}\lp p_1,p_2\rp$ is the set of rate pairs satisfying \eqref{eq:g_Ri}--\eqref{eq:g_2RiRj} for $\ijj$ and $\delta_1 <2\log 3 + 3\lp p_1+p_2\rp$ bits, and $\delta_2 < \log 3 + p_1 + p_2 $ bits.
\end{theorem}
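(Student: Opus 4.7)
My plan is a two-sided sandwich argument that uses the linear-deterministic characterization of Theorem~\ref{th:ldc} as a blueprint. The Gaussian bounds \eqref{eq:g_Ri}--\eqref{eq:g_2RiRj} mirror the LDC bounds \eqref{eq:ldc_R1}--\eqref{eq:ldc_R12R2} with each $n_{ij}$ replaced by an appropriate logarithm of an $\SNR$ or $\INR$ term, so the approach is to establish each achievable and outer bound separately by adapting the corresponding LDC argument to Gaussian noise and paying a constant number of bits per log-term for the approximation.

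For achievability, I would use block-Markov encoding that combines a Han--Kobayashi public/private message split at the information level with quantize-map-and-forward at the cooperation level. In each block, Tx$i$ transmits an independent superposition of a public codeword $U_i$, a private codeword $V_i$ whose power is set so that its interference at Rx$j$ sits at the noise floor, and a ``cooperation'' codeword carrying the QMF description of the intermittent feedback $\wtild Y_i$ that Tx$i$ collected in the previous block. Receivers decode forward: in block $b$, Rx$i$ uses its current output together with Tx$j$'s cooperation codeword transmitted in block $b+1$ (which, because Tx$j$ knows its own transmission, carries a description of the $X_i$ signal that leaked into Rx$j$) to jointly decode both users' public messages and its own private message. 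A joint-typicality analysis of QMF then yields mutual-information constraints in which the feedback contribution enters as $\mathbb{E}[S_j]\cdot I(\cdot;\cdot\mid S_j=1)=p_j\log(1+\INR_j/(1+\SNR_i))$ and its symmetric counterpart, reproducing the target rate region up to a few bits of noise-floor and QMF-distortion slack that get absorbed into $\delta_1$.

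For the converse, I would prove each inequality in \eqref{eq:g_Ri}--\eqref{eq:g_2RiRj} by a separate genie-augmentation. The cut-set/MIMO term $\log(1+\SNR_i+\INR_i)$ in \eqref{eq:g_Ri} and the Kramer-type bound \eqref{eq:g_RiRj_1} carry over from the no-feedback Gaussian IC. The feedback-dependent terms follow from Fano's inequality $NR_i\le I(W_i;Y_i^N,\wtild Y_j^N,S_1^N,S_2^N)+N\epsilon_N$, a chain-rule split into a no-feedback part and a feedback part, and the elementary identity $h(\wtild Y_{j,t}\mid \cdot,S_{j,t})=S_{j,t}\,h(Y_{j,t}\mid\cdot)$, whose average produces exactly the prefactor $p_j$. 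I expect the main obstacle to lie in the multi-user bounds \eqref{eq:g_RiRj_2} and \eqref{eq:g_2RiRj}: the $p_i,p_j$ factors, the private-level terms $\SNR_i/(1+\INR_j)$, and the cross-terms $\INR_i$ must all arise simultaneously from a single carefully chosen side-information, most likely a noisy version of $X_i$ scaled to the interference floor given to Rx$j$ (and symmetrically). Keeping the resulting gap to a universal constant rather than one that grows with the channel parameters will require tracking every additive ``$+1$'' in the resulting entropy manipulations, which is precisely how the specific numerical bounds $\delta_1<2\log 3+3(p_1+p_2)$ and $\delta_2<\log 3+p_1+p_2$ are expected to emerge.
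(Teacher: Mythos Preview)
Your achievability outline is essentially the paper's scheme: block-Markov transmission with a Han--Kobayashi split, private power set to the interference floor, QMF of the (own-signal-cancelled) feedback observation, and forward decoding. One point that is blurred in your description is that the paper uses \emph{two} distinct helping modes depending on the interference regime. In weak interference Rx$i$ exploits Tx$i$'s quantization $U_i$ of $\wtild V_j$ (its own feedback loop) to peel off the interference; in strong interference Rx$i$ instead decodes Tx$j$'s quantization $U_j$ of $\wtild V_i$ through the cross link, which is the ``alternative path'' you describe. Both are needed, and in addition the scheme falls back to plain Han--Kobayashi (ignoring the helping codeword) at operating points where the common rate is large; the achievable region is the union of the two, which is what eventually matches \eqref{eq:g_Ri}--\eqref{eq:g_2RiRj} within $\delta_1$.

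The converse is where your proposal has a genuine gap. First, the bounds $\log(1+\SNR_i+\INR_i)$ and \eqref{eq:g_RiRj_1} are \emph{perfect-feedback} bounds inherited from Suh--Tse, not no-feedback bounds, so ``carry over from the no-feedback Gaussian IC'' is not the right provenance. More importantly, for the novel bounds \eqref{eq:g_RiRj_2} and \eqref{eq:g_2RiRj} the paper does \emph{not} use an ETW-style noisy-$X_i$ genie. The key device is a \emph{channel enhancement}: one replaces the original channel by an enhanced one in which the receiver noise is split as $Z_i=Z_{ii}+Z_{ij}$ with independent $\cgauss{0}{1/2}$ halves, and whenever $S_i=1$ the receiver is allowed to observe the direct and interfering components $(h_{ii}X_i+Z_{ii},\,h_{ij}X_j+Z_{ij})$ separately rather than only their sum. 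This enhancement (a Gaussian analogue of Khisti--Lapidoth's MAC argument) is what makes the $p_i$- and $p_j$-weighted terms emerge with the exact $\log\bigl[(1+\INR_i)(1+\SNR_i/(1+\INR_j))/(1+\SNR_i/(1+\INR_j)+\INR_i)\bigr]$ structure; a standard interference-floor genie alone will not produce this factorization within a universal constant. The second $R_i$ bound uses a different genie, namely $(W_j,\wtild V_i^N)$ handed to Rx$i$, rather than $\wtild Y_j^N$. Without these two ingredients the converse side of the sandwich does not close to within the stated $\delta_2$.
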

\begin{proof}
Section~\ref{sec:achievability} proves an inner bound region $\mcal{R}_G^i (p_1, p_2)$, Section~\ref{sec:converse} proves an outer bound region $\mcal{R}_G^o (p_1, p_2)$, and Appendix~\ref{sec:ap_gap} shows that $\mcal{\bar C}(p_1, p_2) - \delta_1 \subseteq \mcal{R}_G^i (p_1, p_2) $ and $\mcal{R}_G^o (p_1, p_2) - \delta_2 \subseteq \mcal{\bar C}(p_1, p_2)$.
\end{proof}
\begin{remark}
Theorem~\ref{th:gaussian} uniformly approximates the capacity region under Gaussian model to within a gap of $3\log 3 + 4\lp p_1+p_2\rp$ bits, independent of channel parameters. To our knowledge, this is the first constant-gap capacity region characterization for Gaussian interference channel with non-perfect feedback with arbitrary channel parameters.
\end{remark}
\begin{remark}
As will be seen in the achievability proof, the proposed coding scheme achieves a smaller gap than what is given in Theorem~\ref{th:gaussian}; however, for simplicity in the achievability proof, we lower bound the achievable rate terms with computationally more tractable ones, which articifically contributes to the claimed gap. Moreover, one can optimize over the parameters of the proposed coding scheme, such as power allocation and quantization distortion, to further reduce the gap, but this issue will not be dealt with in this paper.
\end{remark}

Theorem~\ref{th:gaussian} allows us to characterize the symmetric generalized degrees of freedom under symmetric channel parameters, which is a metric often used to compare the capabilities of the interference channel under different settings.

\begin{corollary}[Generalized Degrees of Freedom] \label{cor:gdof}
For symmetric channel parameters ($\SNR_1=\SNR_2=\SNR$, $\INR_1=\INR_2=\INR$, $p_1=p_2=p$), the symmetric generalized degrees of freedom of freedom, defined by
\begin{align*}
d_{\text{sym}} := \lim_{\substack{ \SNR \to \infty \\ \INR = \SNR^{\alpha}}} \frac{C_{\text{sym}}(\SNR, \INR, p)}{\log \SNR},
\end{align*}
where $C_{\text{sym}} (\SNR, \INR, p) := \sup \lbp R: (R, R) \in \mcal{C}_G (p, p) \rbp$, is given by 
\begin{align*}
d_{\text{sym}} = \lbp\begin{array}{ll}
\min\lbp 1-\alpha/2, 1-(1-p)\alpha\rbp, &\alpha \le 1/2\\
\min\lbp 1-\alpha/2, p+(1-p)\alpha\rbp, &1/2 \le \alpha \le 1\\
\min\lbp \alpha/2, (1-p) + p\alpha\rbp, & \alpha \ge 1
\end{array}\right. 
\end{align*}
\end{corollary}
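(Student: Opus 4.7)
The plan is to deduce the corollary directly from Theorem~\ref{th:gaussian}. Because that theorem approximates $\mcal{C}_G(p,p)$ to within an additive constant that depends only on $p$ (not on $\SNR$ or $\INR$), dividing by $\log \SNR$ and sending $\SNR \to \infty$ with $\INR = \SNR^\alpha$ makes the gap vanish. Hence $d_{\text{sym}}$ coincides with the symmetric GDoF of the outer region $\mcal{\bar C}(p,p)$ defined by \eqref{eq:g_Ri}--\eqref{eq:g_2RiRj}, which I would compute by substituting the high-$\SNR$ pre-logs of the elementary terms into each bound and taking the minimum.

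First I would collect the pre-logs
\begin{align*}
\tfrac{1}{\log \SNR}\log(1+\SNR+\INR) &\to \max(1,\alpha), \\
\tfrac{1}{\log \SNR}\log\lp 1+\tfrac{\SNR}{1+\INR}\rp &\to (1-\alpha)^+, \\
\tfrac{1}{\log \SNR}\log\lp 1+\tfrac{\SNR}{1+\INR}+\INR\rp &\to \max(1-\alpha,\alpha),
\end{align*}
and note that the ratio $\tfrac{(1+\INR)(1+\SNR/(1+\INR))}{1+\SNR/(1+\INR)+\INR}$ appearing inside the $p_i$-weighted logarithms in \eqref{eq:g_RiRj_2}--\eqref{eq:g_2RiRj} has pre-log $\delta(\alpha)$ equal to $\alpha$ for $\alpha \le 1/2$, $1-\alpha$ for $1/2 \le \alpha \le 1$, and $0$ for $\alpha \ge 1$, obtained by isolating the dominant powers of $\SNR$ in numerator and denominator.

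Next I would specialize to $R_1 = R_2 = R$ and insert these asymptotics. The single-rate bound \eqref{eq:g_Ri} yields $d \le \min\lbp \max(1,\alpha),\,1+p(\alpha-1)^+\rbp$, which is vacuous for $\alpha \le 1$ and reduces to $(1-p)+p\alpha$ for $\alpha \ge 1$. The sum-rate bound \eqref{eq:g_RiRj_1} gives $2d \le (1-\alpha)^+ + \max(1,\alpha)$, i.e., $d \le 1-\alpha/2$ for $\alpha \le 1$ and $d \le \alpha/2$ for $\alpha \ge 1$. The sum-rate bound \eqref{eq:g_RiRj_2} gives $d \le \max(1-\alpha,\alpha) + p\,\delta(\alpha)$, which equals $1-(1-p)\alpha$ for $\alpha \le 1/2$, $p+(1-p)\alpha$ for $1/2 \le \alpha \le 1$, and $\alpha$ (slack) for $\alpha \ge 1$. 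Taking the minimum of the active constraints on each sub-interval of $\alpha$ reproduces exactly the three-branch expression in the statement.

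The final step is to verify that the $2R_i+R_j$ bound \eqref{eq:g_2RiRj} is redundant for GDoF. Substituting the asymptotics gives $3d \le (1-\alpha)^+ + \max(1-\alpha,\alpha) + \max(1,\alpha) + p\,\delta(\alpha)$, which simplifies to $d \le 1-(2-p)\alpha/3$ for $\alpha \le 1/2$, $d \le 2/3 + p(1-\alpha)/3$ for $1/2 \le \alpha \le 1$, and $d \le 2\alpha/3$ for $\alpha \ge 1$. A short case split confirms this is always at least as large as the minimum of the other two bounds: in the $\alpha \le 1/2$ regime it dominates \eqref{eq:g_RiRj_1} when $p \ge 1/2$ and dominates \eqref{eq:g_RiRj_2} when $p \le 1/2$; in the $1/2 \le \alpha \le 1$ regime the dividing curve is $\alpha = 2(1-p)/(3-2p)$, with \eqref{eq:g_RiRj_2} the binding competitor below it and \eqref{eq:g_RiRj_1} above; and for $\alpha \ge 1$ the bound $2\alpha/3$ is simply weaker than $\alpha/2$. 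I expect the main source of friction to be exactly this bookkeeping---keeping the three $\alpha$-regimes aligned uniformly across every inequality---rather than any conceptual difficulty. No separate achievability argument is required, since the inner region of Theorem~\ref{th:gaussian} matches $\mcal{\bar C}(p,p)$ to within the same constant gap, which washes out in the GDoF limit.
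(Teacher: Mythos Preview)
Your approach is correct and is precisely the computation the paper leaves implicit: the corollary is stated without proof as a consequence of Theorem~\ref{th:gaussian}, and your plan of reading off pre-logs from \eqref{eq:g_Ri}--\eqref{eq:g_2RiRj} and checking that the $2R_i+R_j$ bound is inactive is exactly what is required.

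One simplification worth noting for the redundancy check: in the symmetric setting the GDoF version of \eqref{eq:g_2RiRj} is \emph{exactly} the weighted average $\tfrac{1}{3}\big(2\cdot\eqref{eq:g_RiRj_1}+\eqref{eq:g_RiRj_2}\big)$ in each of your three $\alpha$-regimes (e.g., for $\alpha\le 1/2$, $3-(2-p)\alpha = 2(1-\alpha/2)\cdot 2 + (1-(1-p)\alpha)$ is an identity). Since a convex combination of two numbers is never smaller than their minimum, redundancy follows in one line and you can dispense with the threshold $\alpha=2(1-p)/(3-2p)$ bookkeeping entirely.
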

Figure~\ref{fig:gdof_duplicate} plots the available generalized degrees of freedom with respect to interference strength for various values of $p$. As can be observed, as $p$ is increased, gradually better curves are obtained. It should be noted that once $p\geq 0.5$, the ``V-curve'' that is achieved by perfect feedback \cite{SuhTse_11} is already achieved. Next, this observation will be made precise.

The perfect feedback outer bound on the sum-capacity, $C_{G,p}^{\text{sum}}$, is given in Theorem 3 of \cite{SuhTse_11} as follows.
\begin{align*}
C_{G,p}^{\text{sum}} \leq &\sup \limits_{0\leq\rho\leq 1} \min_{(i,j)\in\mathcal{I}} \log \lp 1 + \frac{(1-\rho^2)\SNR_i}{1+(1-\rho^2)\INR_j} \rp \\
&\quad + \log \lp 1 + \SNR_j + \INR_j + 2\rho \sqrt{\SNR_j \cdot \INR_j} \rp
\end{align*}
where $\mathcal{I}=\lbp (1,2),(2,1)\rbp$. The next corollary shows that when $p_1$ and $p_2$ are sufficiently large, the sum-capacity of the perfect feedback Gaussian channel can be achieved with intermittent feedback, to within a constant gap. Hence, this corollary is the Gaussian counterpart of the similar result given in Corollary~\ref{cor:threshold_ldc}, for the linear deterministic channel.
\begin{corollary} \label{cor:threshold}
For $\INR_1,\INR_2>0$, there exists $p^*<1$ such that
\begin{align*}
C_{G,p}^{\text{sum}} - C_{G}^{\text{sum}} (p_1, p_2) \leq \delta_p
\end{align*}
for all $p_1, p_2 \geq p^*$, where $\delta_p$ is a constant independent of channel parameters.
\end{corollary}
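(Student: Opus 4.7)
The plan is to combine the inner/outer bound characterization of Theorem~\ref{th:gaussian} with the Suh--Tse perfect-feedback sum-capacity bound recalled just before the statement. The high-level idea mirrors the linear deterministic case of Corollary~\ref{cor:threshold_ldc}: the $p$-independent sum-rate bound \eqref{eq:g_RiRj_1}, whose right-hand side I denote by $B_{ij}$, coincides up to a universal constant with $C_{G,p}^{\text{sum}}$, and as $p_1, p_2 \to 1$ the remaining $p$-dependent constraints in $\bar{\mcal{C}}(p_1, p_2)$ relax toward their $p_i = 1$ values, eventually becoming no tighter than $\min_{(i,j) \in \mcal{I}} B_{ij}$ up to a constant.

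First I would bound $C_{G,p}^{\text{sum}}$ by $\min_{(i,j) \in \mcal{I}} B_{ij}$ up to a universal constant. Starting from the Suh--Tse bound, I would use $2\rho\sqrt{\SNR_j \cdot \INR_j} \le \SNR_j + \INR_j$ to control the cross-term in the second log by at most one bit, and the monotonicity of $a \mapsto a\SNR_i/(1 + a\INR_j)$ in $a = 1-\rho^2 \in [0,1]$ to bound the first log by its $\rho = 0$ value. Since $\sup_\rho \min_{(i,j)} \lb \cdot \rb \le \min_{(i,j)} \sup_\rho \lb \cdot \rb$, this yields, for either choice of $(i,j) \in \mcal{I}$,
\begin{align*}
C_{G,p}^{\text{sum}} \le B_{ij} + 1.
\end{align*}

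Next I would lower bound $C_G^{\text{sum}}(p_1, p_2)$ using Theorem~\ref{th:gaussian}. Since $\bar{\mcal{C}}(p_1, p_2) - \delta_1 \subseteq \mcal{C}_G(p_1, p_2)$, it suffices to exhibit a point in $\bar{\mcal{C}}(p_1, p_2)$ whose sum-rate is close to $\min_{(i,j)} B_{ij}$. I would examine each $p$-dependent constraint---namely \eqref{eq:g_Ri}, \eqref{eq:g_RiRj_2}, and the two instantiations of \eqref{eq:g_2RiRj}---and bound its deficit relative to the $p_i = p_j = 1$ case by a sum of terms of the form $(1 - p_k)\log(\cdot)$, where the logarithmic factor is bounded by channel-dependent quantities such as $\log(1+\INR_k)$ and $\log(1 + \SNR_k/(1+\INR_{k'}))$. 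Choosing $p^* < 1$ sufficiently close to $1$ (depending on the channel parameters) makes each such deficit at most a universal constant, so every $p$-dependent constraint is looser than the corresponding bound on $R_1 + R_2$ induced by \eqref{eq:g_RiRj_1}, up to a universal additive constant. Hence the sum-rate maximized over $\bar{\mcal{C}}(p_1, p_2)$ is within a universal constant of $\min_{(i,j)} B_{ij}$, and therefore of $C_{G,p}^{\text{sum}}$.

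The main obstacle will be the joint treatment of the weighted-sum bounds \eqref{eq:g_2RiRj}, which constrain $R_1 + R_2$ only indirectly via $3(R_1 + R_2) \le (2R_i + R_j) + (R_i + 2R_j)$. One needs to verify that the single threshold $p^*$ relaxes both instantiations simultaneously by at most a universal constant, and that the individual-rate bound \eqref{eq:g_Ri} does not choke off the point whose sum-rate we wish to realize. Chaining the two directions $C_{G,p}^{\text{sum}} \le \min_{(i,j)} B_{ij} + O(1)$ and $C_G^{\text{sum}}(p_1, p_2) \ge \min_{(i,j)} B_{ij} - O(1)$ then yields $C_{G,p}^{\text{sum}} - C_G^{\text{sum}}(p_1, p_2) \le \delta_p$ for a universal constant $\delta_p$, as claimed; note that $p^*$ is allowed to depend on channel parameters, but $\delta_p$ is not.
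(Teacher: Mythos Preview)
Your proposal is correct and follows essentially the same route as the paper: show that the $p$-dependent constraints in $\bar{\mcal C}(p_1,p_2)$ become redundant relative to \eqref{eq:g_RiRj_1} as $p_1,p_2\to 1$, then invoke Theorem~\ref{th:gaussian} for the constant gap to actual capacity. The paper carries this out slightly more cleanly than you outline: rather than bounding the deficit $(1-p_k)\log(\cdot)$ and absorbing it into $\delta_p$, it proves that at $p_1=p_2=1$ every relevant combination of $p$-dependent bounds---namely \eqref{eq:g_RiRj_2}, $\eqref{eq:g_Ri}_{(1,2)}+\eqref{eq:g_Ri}_{(2,1)}$, $\tfrac12\bigl(\eqref{eq:g_Ri}+\eqref{eq:g_2RiRj}\bigr)$ and $\tfrac13\bigl(\eqref{eq:g_2RiRj}_{(1,2)}+\eqref{eq:g_2RiRj}_{(2,1)}\bigr)$---is \emph{strictly} larger than $\min_{(i,j)}B_{ij}$, using the hypothesis $\INR_1,\INR_2>0$ to get strictness. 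Continuity and monotonicity in $p$ then yield $p^*<1$ with \emph{zero} additional loss in $\bar{\mcal C}$, so $\delta_p$ comes entirely from the $\delta_1,\delta_2$ of Theorem~\ref{th:gaussian}. Your deficit bookkeeping reaches the same conclusion but contributes an extra (avoidable) universal constant; either way the step you flag as the ``main obstacle''---verifying that at $p=1$ the $p$-dependent bounds do not bind---is precisely the computation the paper isolates as its key claim.
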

\begin{proof}
See Appendix~\ref{sec:ap_threshold}.
\end{proof}
In our intermittent feedback model, erasures are symbol-wise, that is, each symbol can get erased independently of others. However, in a best-effort channel, erasures might occur on \emph{packet-level} instead. In order to study this scenario, we consider the parallel channel model described by the equations \eqref{eq:parallel_model_first}--\eqref{eq:parallel_model_last}, which is simply the $M$-symbol extension of the Gaussian channel, where the channel parameters are the same for each subchannel. Each extended symbol over this channel models a packet. The result in Theorem~\ref{th:gaussian} easily generalizes to parallel channel model, as shown by the following corollary. 

\begin{corollary}[Parallel channel] \label{cor:parallel}
The capacity region $\mcal{C}_G^{(M)}(p_1, p_2)$ of any parallel channel of size $M$ with feedback probabilities $p_1$ and $p_2$ satisfies
\begin{align*}
M\mcal{\bar C}(p_1, p_2) - M\delta_1 \subseteq \mcal{C}_G^{(M)}(p_1, p_2) \subseteq M\mcal{\bar C}(p_1, p_2) + M\delta_2
\end{align*}
where $\mcal{\bar C}(p_1, p_2),\delta_1$ and $\delta_2$ are as defined in Theorem \ref{th:gaussian}.
\end{corollary}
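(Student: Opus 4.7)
The plan is to reduce Corollary~\ref{cor:parallel} to Theorem~\ref{th:gaussian} by exploiting two facts: all $M$ subchannels share the same forward-channel parameters, and although the feedback state $S_{i,t}$ is common to all subchannels at time $t$, its marginal within each subchannel is still $\text{Bernoulli}(p_i)$, which is all that $\mcal{\bar C}(p_1,p_2)$ depends on.

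For the inner bound, I would apply the scalar coding scheme of Section~\ref{sec:achievability} to each subchannel $m\in\{1,\ldots,M\}$ independently, using independent codebooks of rate $(R_1^{(m)},R_2^{(m)}) \in \mcal{\bar C}(p_1,p_2)-\delta_1$ per subchannel. At each time $t$, the $m$-th encoder uses only the $m$-th coordinate of the feedback vector $\mb{\wtild Y}_{i,t-1} = S_{i,t-1}\mb{Y}_{i,t-1}$. Since $S_{i,t-1}\sim\text{Bernoulli}(p_i)$ marginally, each scalar subchannel exactly satisfies the hypotheses of Theorem~\ref{th:gaussian} and thus achieves $\mcal{\bar C}(p_1,p_2)-\delta_1$; the coupling of the feedback states across subchannels is irrelevant to the scalar achievability proof, which never invokes any joint statistics of $S_i$ across different letters other than through its marginal. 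Taking the Minkowski sum of $M$ copies of the same region yields $M\mcal{\bar C}(p_1,p_2)-M\delta_1$.

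For the outer bound, I would repeat the derivation of Section~\ref{sec:converse} with all scalar symbols replaced by their $M$-dimensional vector analogues. The Gaussian differential-entropy bounds of the form $\log(1+\SNR+\INR)$ that appear throughout the scalar converse become $M\log(1+\SNR+\INR)$ by the standard Gaussian-maximizes-entropy argument under the sum power constraint $\tfrac{1}{N}\sum_t\|\mb{X}_{i,t}\|^2 \le MP_i$ together with concavity of $\log\det$. The indicator $S_{i,t}$ remains a scalar that now multiplies the entire $M$-vector at once, so the feedback-presence factors $p_i,p_j$ appearing in \eqref{eq:g_Ri}--\eqref{eq:g_2RiRj} arise from the same expectation over $\ul S$ as in the scalar proof, and the per-inequality slack bounded in Appendix~\ref{sec:ap_gap} scales linearly with $M$, producing the target outer region $M\mcal{\bar C}(p_1,p_2)+M\delta_2$.

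The main obstacle is verifying that the genie-aided side information and channel enhancements used in Section~\ref{sec:converse} continue to yield tight bounds when the common state $S_{i,t}$ erases an entire packet (all $M$ subchannel outputs) simultaneously rather than erasing individual symbols independently. Since the scalar outer bound depends only on the forward-channel statistics and on the marginal of $S_i$ --- both preserved subchannel-by-subchannel in the parallel model --- this extension is conceptually routine, and the per-subchannel slacks accumulate additively into the $M\delta_1$ and $M\delta_2$ gaps claimed in the corollary.
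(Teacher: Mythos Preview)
Your proposal is correct and matches the paper's approach: the paper does not give a separate proof of Corollary~\ref{cor:parallel} at all, noting only (in the remark following the corollary) that the scalar achievability and converse of Theorem~\ref{th:gaussian} ``directly extend to the parallel channel without any non-trivial modification.'' Your sketch---independent scalar codes on each subchannel for the inner bound, and rerunning the genie/enhancement converse with $M$-vectors so that every Gaussian-entropy term and every slack picks up a factor $M$---is exactly that extension made explicit, and your observation that only the marginal law of $S_i$ is used (so the common-erasure coupling across subchannels is harmless) is the right justification.
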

\begin{remark}
Although strictly speaking, the claim in Corollary~\ref{cor:parallel} is more general than that in Theorem~\ref{th:gaussian}, the achievability and converse proofs for the scalar channel directly extend to the parallel channel without any non-trivial modification. Hence, for simplicity, we focus on the scalar case in the paper, and omit a separate proof for the parallel channel.
\end{remark}

\subsection{Discussion of Results}
\subsubsection{Feedback Strategy}
Our result shows that even unreliable feedback provides multiplicative gain in interference channels. The key insight in showing this result is using quantize-map-forward as a feedback strategy at the transmitters. This is in contrast to the schemes proposed for perfect feedback \cite{SuhTse_11} and rate-limited feedback \cite{VahidSuh_12}, which use decode-and-forward to extract the feedback information. When the feedback channel is noisy\footnote{``Noise'' in this context refers to the erasures in the feedback channel.}, such schemes can result in rates arbitrarily far from optimality. In order to see this, consider unfolding the channel over time, as shown in Figure~\ref{fig:unfold}. This transformation effectively turns this channel into a relay network, where it is known that decode-and-forward based relaying schemes can give arbitrarily loose rates. This also motivates using quantize-map-forward as a feedback strategy, which has been shown to approximately achieve the relay network capacity \cite{AvestimehrDiggavi_09}. This observation also suggests that quantize-map-forward might be a promising feedback strategy for the additive white Gaussian noise (AWGN) feedback model of \cite{LeTandon_12} in order to uniformly achieve its capacity region to within a constant gap. 

\begin{figure}[!t]
\centering
\includegraphics[scale=0.48]{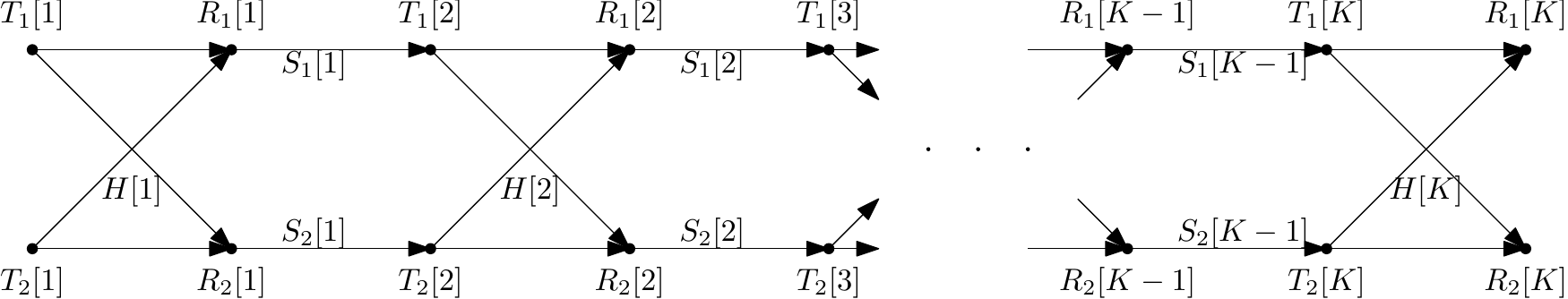}
\caption{The interference network unfolded over a block of $K$ time slots. The node $T_i[t]$ corresponds to the copy of Tx$i$ at time $t$, while $R_i[t]$ corresponds to the copy of Rx$i$ at time $t$. The feedback channel for time $t$ is an erasure channel controlled by $S_1[t]$ and $S_2[t]$, while the forward channel is a Gaussian interference channel with channel matrix $H[t]$.}
\label{fig:unfold}
\end{figure}

It is instructive to compare the achievable rate region for the case of $p_1=p_2=1$ with the outer bound region of the perfect feedback model of \cite{SuhTse_11}. Evaluating the region $\mcal{\bar C}(p_1, p_2) - \delta_1$ with $p_1=p_2=1$, we see that the perfect feedback bound \eqref{eq:g_RiRj_1} becomes redundant, and the achievable region comes within $\lp 3+3\log 3 \rp$ bits of the outer bound region of \cite{SuhTse_11} (see Appendix~\ref{sec:ap_gap} for details). We note that this gap is larger than what is achieved by the decode-and-forward based scheme of \cite{SuhTse_11}. This shows that uniform approximation of capacity region via quantize-map-forward comes at the expense of an additional (but constant) gap\footnote{Although we stated that the quantize-map-forward scheme achieves a smaller gap than what is claimed in Theorem~\ref{th:gaussian}, the actual gap is still expected to be larger than that of the decode-and-forward based scheme for perfect feedback, due to quantization distortion.}. The source of this additional gap is the quantization step at the transmitters, which introduces a distortion in the feedback signal, and eventually incurs a constant rate penalty whose amount depends on the distortion level.

\subsubsection{Perfect Feedback Sum Capacity with Intermittent Feedback}
Corollary~\ref{cor:threshold} shows that for any set of channel parameters, there exists a threshold $p^*$ on the feedback probability above which perfect feedback sum-capacity is achieved to within a constant gap. Although the exact closed-form expression of $p^*$ is not clean, an examination of the symmetric case (see Figure~\ref{fig:gdof}) reveals that in some cases it can be as low as 0.5.

The intuition behind this result lies in the fact that it takes the transmitter forward-channel resources to send the information obtained through feedback. Note that the larger $p$ is, the larger the amount of additional information about the past reception can be obtained through intermittent feedback at the transmitters. If the amount of such information is larger than a threshold, then sending it to the receivers will limit the rate for delivering fresh information. Hence, once this threshold is reached, having more feedback resource is no longer useful. However, this property is not observed for the entire capacity region, since if one of the users transmit at a low rate, then it will have sufficient slackness in rate to forward the entire feedback information.

\begin{figure}
\centering
\includegraphics[scale=0.5]{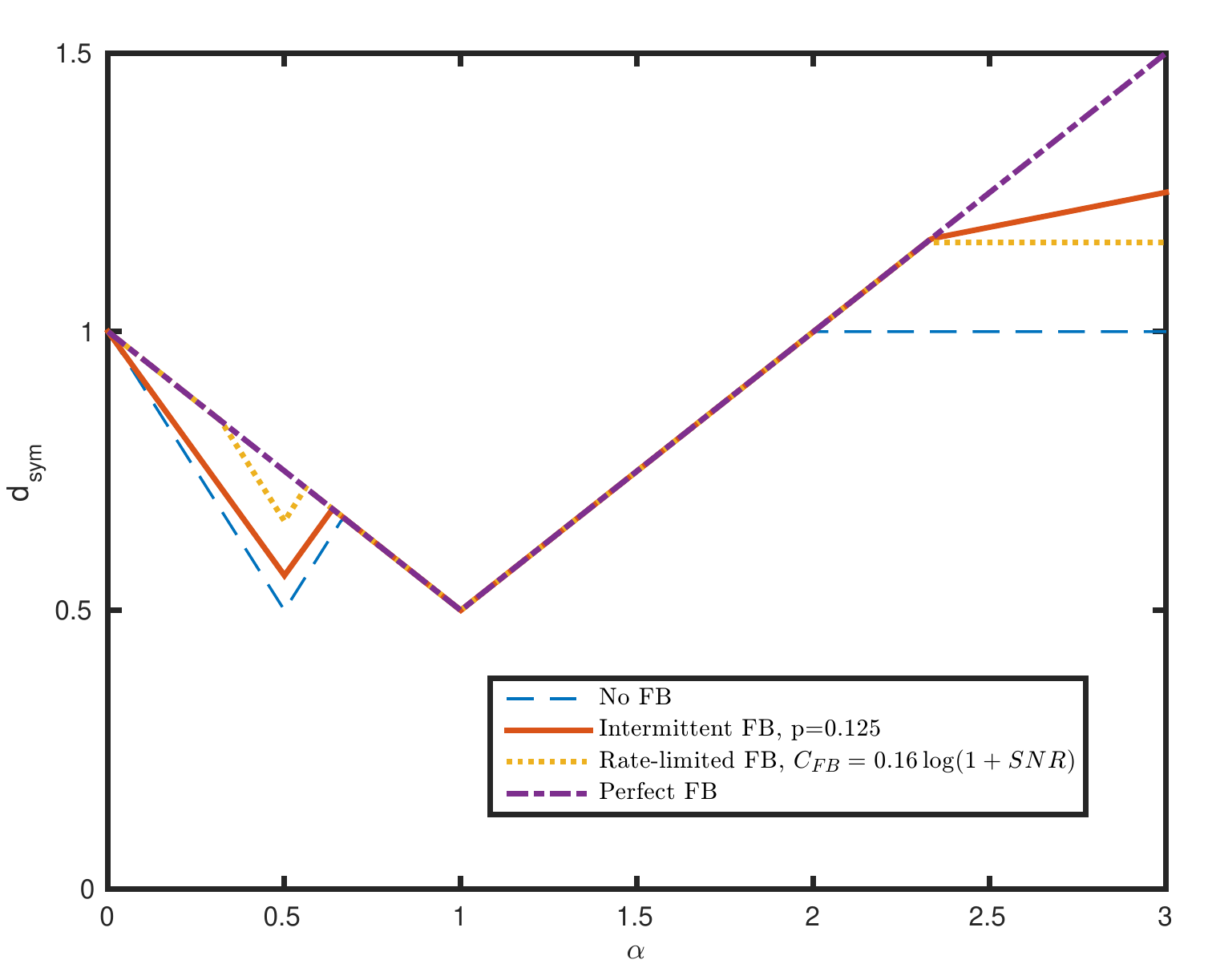}
\caption{Generalized degrees of freedom per user with respect to interference strength $\alpha := \frac{\log \INR}{\log \SNR}$ for symmetric channel parameters, for no feedback, intermittent feedback, rate-limited feedback and perfect feedback.}
\label{fig:gdof}
\end{figure}

\begin{figure*}[!t]
\centering
\includegraphics{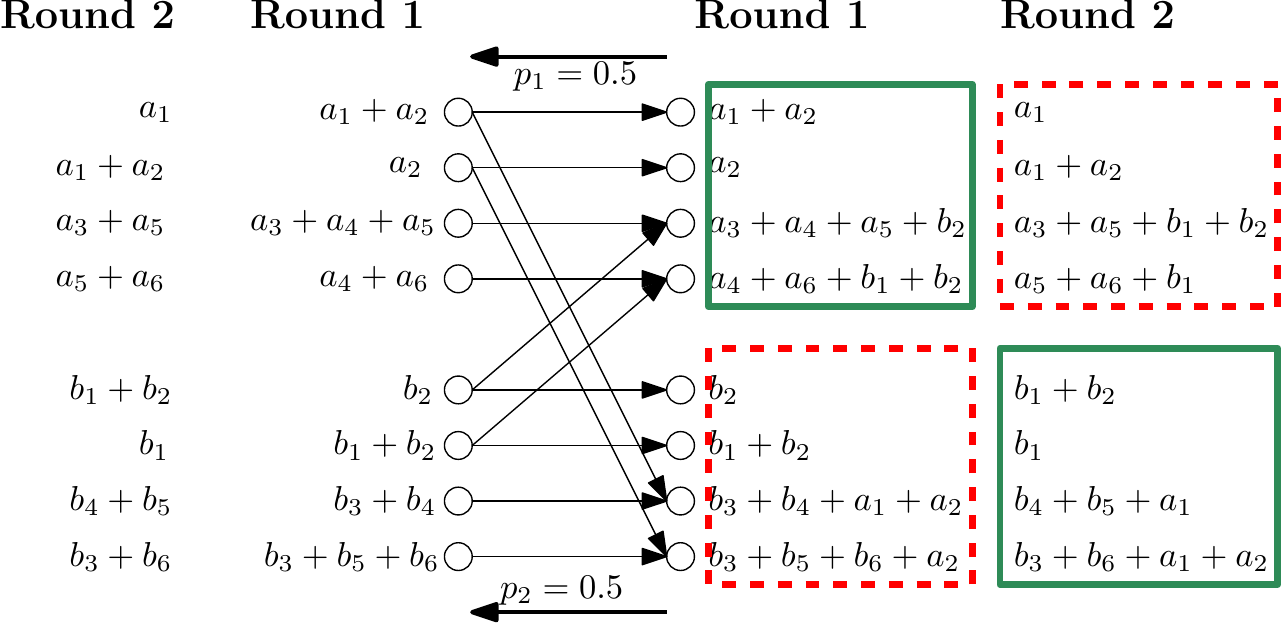}
\centering
\caption{First block of transmissions for the example coding scheme over linear deterministic channel. Receptions enclosed in green/solid rectangles represent the channel outputs that the receivers are able to feed back; whereas those enclosed in red/dashed rectangles represent the channel outputs that gets erased through the feedback channel.} 
\label{fig:ldc_example1}
\centering
\includegraphics{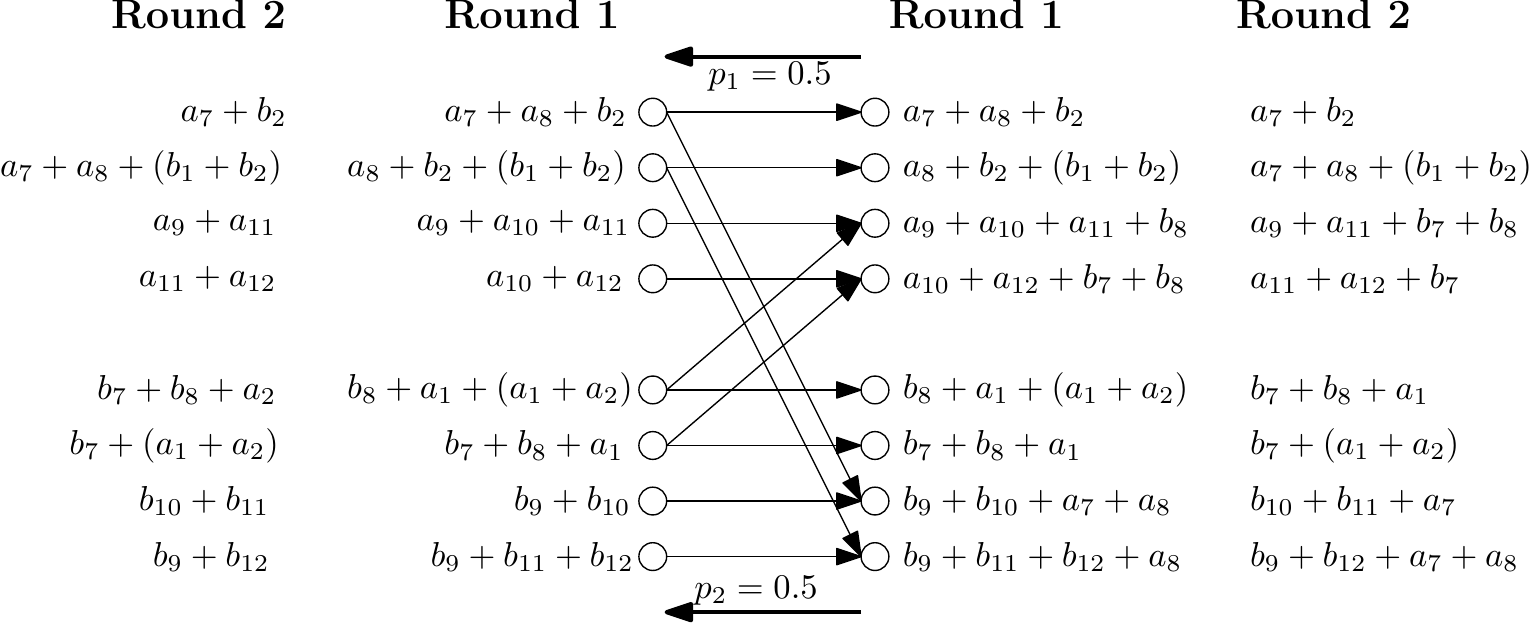}
\centering
\caption{Second block of transmissions for the example coding scheme over linear deterministic channel. The helping information sent by the interfering transmitters ($a_1,a_2$ at Rx1, $b_1,b_2$ at Rx2) are omitted for brevity. Note that these are already known at the receivers from previous block, and hence can be cancelled.} 
\label{fig:ldc_example2}
\end{figure*}

\subsubsection{Comparison with Rate-Limited Feedback}
{Given that both intermittent feedback and rate-limited feedback of \cite{VahidSuh_12} provide degrees-of-freedom gains, one might ask how the two models compare. In order to understand the relative merits of the two feedback models, we revisit the symmetric generalized-degrees-of-freedom curves for the two models for symmetric channel parameters, plotted in Figure~\ref{fig:gdof}. The figure illustrates the fact that there is no direct equivalence between the two models, \emph{i.e.}, there is no amount of rate-limited feedback that can exactly replicate the gain of intermittent feedback uniformly for all interference strengths, and vice versa. For the specific feedback parameters given in Figure~\ref{fig:gdof}, we observe that rate-limited feedback is more useful for the weak interference regime described by $\alpha \leq 2/3$, whereas intermittent feedback is more useful for the strong interference regime, given by $\alpha \geq 2$. The reason is that for weak interference, rate-limited model allows for block processing of the channel output to generate the feedback signal, hence feedback helps to resolve the interference in all time slots, whereas for intermittent feedback, information about interference in some time slots gets unrecoverably lost on the erasure channel. On the other hand, for strong interference, rate-limited feedback imposes a hard limit on the amount of gain that can be obtained from feedback, but the gain is still unbounded for intermittent feedback, since the end-to-end mutual information of the alternative path created by feedback gets larger with increasing interference.}

\section{Motivation of the Coding Scheme} \label{sec:motivation}

In this section, we illustrate our coding scheme through an example over the linear deterministic channel. This example is intended to demonstrate how and why the proposed scheme works, and motivate the use of quantize-map-forward as a feedback strategy.

We consider the symmetric channel shown in Figures~\ref{fig:ldc_example1} and ~\ref{fig:ldc_example2}, with $n_{11}=n_{22}=4$, $n_{12}=n_{21}=2$, and $p_1=p_2=0.5$, and focus on the achievable symmetric rate. In this example we will take a block length of $N=2$ for illustration purposes. Although for this particular case, the probability of decoding error is large due to short block length, in general the same coding idea can be applied for a large block length, in which case arbitrarily small error probability can be achieved by taking advantage of the law of large numbers.

We focus on two blocks of transmission. At each block, the users split their messages into common and private parts. The common parts of the messages are decoded by both receivers, whereas the private part is only decoded by the intended receiver, as in Han-Kobayashi scheme for the interference channel without feedback \cite{HanKobayashi_81}. In the first block, Tx$1$ sends linear combinations of its two common information symbols, $a_1,a_2$ on its two common (upper) levels, and linear combinations of its private information symbols, $a_3,a_4,a_5,a_6$, over its private (lower) two levels over a block of two time slots. Tx2 performs similar operations for its common symbols $b_1,b_2$, and its private symbols $b_3,b_4,b_5,b_6$.

Note that at this point, the receivers can decode the symbols sent at their upper two levels by solving the four equations in two unknowns.

After each time slot, the receivers feed back their channel outputs, but the transmitters wait until the end of the block to collect sufficient information from feedback. We consider a particular feedback channel realization $\lp S_1^N, S_2^N\rp=\lp (1,0),(0,1)\rp$ for illustration purposes. After the first block, each transmitter gets from feedback two linear combinations of the interfering symbols of the previous block, by subtracting their own linear combinations from the channel outputs. In the second block, the transmitters perform further linear encoding of these two linear combinations. These additional linear combinations of the interference symbols are superimposed on top of the linear combinations of the fresh common information symbols $a_7, a_8$ (and $b_7,b_8$ for Tx2) of the second block. On the private levels, linear combinations of new symbols $a_9,a_{10},a_{11},a_{12}$ at Tx1 and $b_9,b_{10},b_{11},b_{12}$ at Tx2 are sent, as in the first block.

After the second block of transmission, the receivers collect the four linear equations obtained in the lower two levels of the first block and the four linear equations obtained at the upper two uninterfered levels in the second block. It is easy to check that these eight equations are linearly independent, and hence the receivers can solve for the eight unknowns ($a_3,a_4,a_5,a_6,a_7,a_8, b_1,b_2$ for Tx1, and $b_3,b_4,b_5,b_6,b_7,b_8, a_1,a_2$ for Tx2).

Having decoded the private information (and interference) of the first block and the common information of the second block, the receivers next cancel the additional linear combinations of the previously decoded common information received at the lower two levels of the second block due to feedback. This means that Rx1 cancels the $a_1$ and $a_2$ symbols in the lower two levels, and Rx2 cancels the $b_1$ and $b_2$ symbols.

Since the transmitters can also cancel this information from the received feedback (because it is a function of their own symbols), the state of each terminal reduces to that in the end of the first block. Therefore, in each of the following blocks, the operation in the second block can be repeated, each time letting the receivers decode the private information of the previous block and the common information of the new block. 

One caveat is that, the feedback channel realization will not be the same at each block. To address this point, we first note that the only decoding error event is when the channel realization is such that the resulting linear system in any of the receivers is not full rank. For the particular code in the example, it is easy to check that the probability of this event is zero for any feedback channel realization as long as $S_i^N \neq (0,0)$ for $i=1,2$. In general, for any $\epsilon>0$, in order to achieve a symmetric rate $C_{\text{sym}}-\epsilon$, Tx$i$ needs to receive feedback for at least $N(p_i - \epsilon)$ time slots at each block. This condition is ensured by law of large numbers by letting $N \to \infty$, and arbitrarily small error probability can be achieved\footnote{Note that this does not prove the existence of a sequence of codes that allows arbitrarily small error probability for an arbitrary block length. The intention in this section is to give an illustration of the coding scheme; the precise achievability proof will be presented in Section~\ref{sec:achievability}.}. 

To find the symmetric rate achieved by this scheme, we assume the scheme is run for $B$ blocks. At the end, each receiver will have resolved $6B-4$ information bits in $2B$ time slots. Letting $B \to \infty$ gives a symmetric rate of 3 bits/time slot. Note that without feedback, a symmetric rate of at most 2 bits/time slot can be achieved. At the other extreme, it is also easy to verify from the results in \cite{SuhTse_11} that symmetric capacity under perfect feedback is also 3 bits/time slot, which is in agreement with Figure~\ref{fig:gdof} and Corollary~\ref{cor:threshold_ldc}. 

This example also serves to demonstrate why we perform quantize-map-forward instead of decode-and-forward as a feedback strategy. In general, to achieve the symmetric capacity, Tx2 needs to send linear combinations of $N$ information symbols on its common levels, while Tx1 receives $2Np_1$ of these linear combinations on the average. Hence, if $p_1 < 0.5$, Tx1 will not be able to decode the interference of the previous block. Instead, Tx1 performs a linear mapping of the received feedback information, which turns out to achieve the symmetric capacity.   

Finally, we point out that decoding in this scheme is \emph{sequential}, \emph{i.e.}, the receiver decodes the blocks in the same order they are encoded\footnote{An alternate scheme based on backward decoding was presented in \cite{KarakusWang_13}, for the case of linear deterministic channel.}. This is in contrast to earlier feedback coding schemes proposed for interference channel, which perform backward coding. The obvious advantage of using sequential decoding is better delay performance, since the receiver does not need to wait for the end of the entire transmission to start to decode.

\section{Achievability Proof} \label{sec:achievability}
In this section, we describe the coding scheme in detail and derive an inner bound $\mcal{R}^i_G \lp p_1,p_2\rp$ on the rate region.

\subsection{Overview of the Achievable Strategy}

The main idea of the coding scheme is the same as the one presented
for the example in Section~\ref{sec:motivation}. However, it substantially generalizes 
the example scheme in order to account for possible channel noise, different 
interference regimes and an arbitrary target rate point in the achievable region.

The scheme consists of transmission over $B$ blocks, each of length
$N$. At the beginning of block $b$, upon reception of feedback,
transmitters first remove their own contribution from the feedback
signal and obtain a function of the interference and noise realization
of block $b-1$. This signal is then quantized and mapped to a random
codeword, which will be called the helping information. Finally, a new
common codeword, which is to be decoded by both receivers, and a private codeword,
to be decoded by only the intended receiver, are superimposed to the helping
information, and transmitted.

The decoding operation depends on the desired rate point (see
Figure \ref{fig:decoding_policy}). To achieve the rate points for which the
common component of the message is large, the receiver simply performs
a variation of Han-Kobayashi decoding \cite{HanKobayashi_81}, \emph{i.e.}, it decodes the
intended information jointly with the common part of the
interference. Note that this does not make use of the helping
information.

To achieve the remaining rate points, the helping information is
used. For weak interference, at block $b$, we assume that the receiver
has already decoded the intended common information of block
$b-1$. After receiving the transmission of block $b$, the receivers
jointly decode the intended private information and the interference
of block $b-1$ jointly with the common information of block $b$, while
using the helping information sent at block $b$ as side
information. For strong interference, the roles of intended common
information and the interfering common information get switched.

Next, we present a detailed description of the coding scheme and proof
of achievability.

\subsection{Codebook Generation}

Fix $p(x_{ie})p(x_{ic})p(x_{ip})$ for $i=1,2$\footnote{Although the scheme loses beamforming gain by generating independent codebooks at the two users, this only results in a constant rate penalty.}, and $p(u_i|\wtild v_j)$
that achieves $\E{d(U_i, \wtild V_j)}\leq D_i$ for
$\ijj$, where $d:\mcal{U}\times\mcal{V} \to \mbb{R}$ is the distortion measure,
where $\mcal{U}$ and $\mcal{V}$ are the alphabets of $U_i$ and $\wtild V_j$, 
respectively. Generate $2^{Nr_i}$
quantization codewords $U_i^N$ i.i.d. $\sim p(u_i) = \sum_{\wtild v_j}
p(u_i|\wtild v_j)p(\wtild v_j)$, for $\ijj$. For
$i=1,2$, generate $2^{Nr_i}$ codewords $X^N_{ie}$ i.i.d. $\sim
p(x_{ie})$. Further generate, for $i=1,2$, $2^{NR_{ic}}$ codewords
$X^N_{ic}$ i.i.d. $\sim p(x_{ic})$ and $2^{NR_{ip}}$ codewords
$X_{ip}^N$ i.i.d. $\sim p(x_{ip})$. For $i=1,2$, define symbol-by-symbol 
mapping functions $x_i: \mcal{X}_{if}\times\mcal{X}_{ip} \to \mcal{X}_i$ and
 $x_{if}: \mcal{X}_{ie}\times\mcal{X}_{ic} \to \mcal{X}_{if}$, where $\mcal{X}_{ie}$,
$\mcal{X}_{ic}$, $\mcal{X}_{ip}$, and $\mcal{X}_{if}$ are the alphabets for the symbols
$X_{ie}$, $X_{ic}$, $X_{ip}$, and $X_{if}$, respectively.

\subsection{Encoding}

Encoding is performed over blocks (indexed by $b$) of length $N$. See
Figure \ref{fig:encoder} for a system diagram. At the beginning of block
$b$, Tx$i$ receives the punctured feedback signal $\wtild Y^N_i (b-1) = S_i^N(b-1) Y_i^N(b-1)$
containing information about the channel output in block $b-1$,
where the multiplication is element-wise. Upon
reception of $\wtild Y^N_i$, Tx$i$ first removes its own
contribution from the feedback signal to obtain $\wtild V_j^N (b-1)=S_i^N (b-1) V_j^N (b-1)$. 
For linear deterministic model, this is done by
\begin{align*}
\wtild V^N_j (b-1) = \wtild Y_i^N (b-1) - S^N_i (b-1) \mb{H}_{ii} X^N_i (b-1),
\end{align*}
whereas for Gaussian model, it can be obtained by
\begin{align*}
\wtild V^N_j (b-1) &= \wtild Y_i^N (b-1) - S^N_i (b-1) h_{ii} X^N_i (b-1)
\end{align*}
for $\ijj$. 

The interference signal $\wtild V_j^N (b-1)$ is then quantized by finding an
index $Q_i (b)$ such that
\begin{align*}
\intypset{\wtild V^N_j (b-1), U^N_{i}(Q_i(b))},
\end{align*}
where $\mcal{T}_\epsilon^{(N)}$ denotes the $\epsilon$-typical set
with respect to the distribution $p(\wtild v_j)p(u_i|\wtild v_j)$, and
$p(\wtild v_j)$ is induced by the channel and the input
distributions. If such an index $Q_i (b)$ has been found, the codeword
$X^N_{ie} (Q_i (b))$ that has the same index is chosen to be sent for
block $b$. If there are multiple such indices, the smallest one is
chosen. If no such index is found, the quantization index 1 is chosen.

\begin{figure}
\centering
\includegraphics[scale=0.65]{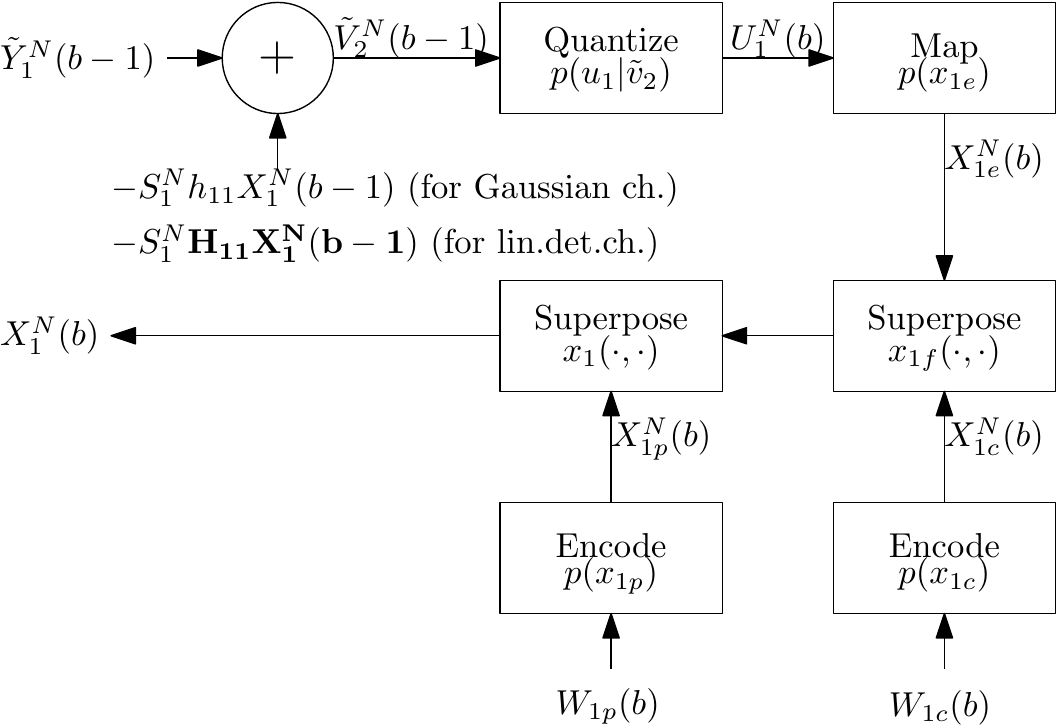}
\centering
\caption{Encoder diagram at Tx1}
\label{fig:encoder}
\end{figure}

Next, the message $W_i (b) \in \lb 2^{NR_i}
\rb$ to be sent at block $b$ is split into common and private
components $(W_{ic} (b), W_{ip} (b)) \in \lb 2^{NR_{ic}} \rb \times
\lb 2^{NR_{ip}} \rb$. Depending on the desired message indices $\lp
W_{ic} (b), W_{ip} (b) \rp$, a common codeword $X^N_{ic} (W_{ic}
(b))$, and a private codeword $X^N_{ip} (W_{ip} (b))$ is chosen from
the respective codebooks.

Finally, the using the symbol-wise maps $x_{if}\lp \cdot,\cdot\rp$ and $x_{i}\lp \cdot,\cdot\rp$, we obtain the codewords
\begin{align*}
X_{if}^N (b) &= x_{if}\lp X^N_{ie} (b), X^N_{ic} (b)\rp \\
X_i^N (b) &= x_i \lp X_{if}^N (b), X_{ip}^N (b) \rp
\end{align*}
where the functions are applied to vectors element-wise. $X_i^N (b)$ is sent at Tx$i$ over $N$ channel uses.

\subsection{Decoding}
The message indices for common and private messages, and the
quantization indices of Tx$i$ at block $b$ will be denoted by $m_i
(b)$, $n_i (b)$, and $q_i (b)$, respectively. When there are two
quantization indices to be decoded from the same user, the second one will be denoted
with $q_i ' (b)$. 

In order to describe the decoding process, we need to introduce some
notation. Define the following sequence of sets:
\begin{align*}
&\mcal{B}_i^{(N)} ( (q_j, m_j )(b-1)) := \Big \{ q_i(b) : \big( \ul{S}^N (b-1),\\
& \quad X_{jf}^{N} ((q_j, m_j )(b-1)), (U_i^N, X_{ie}^N) (q_i (b)) \big) \in \mcal{T}_\epsilon^{(N)} \Big \}.
\end{align*}
for $(i,j)=(1,2), (2,1)$. Loosely, $\mcal{B}_i^{(N)}$ is the set of
quantization indices of Tx$i$ that are jointly typical with the interference
of the previous round. If any of the indices $(q_j, m_j )$ is known,
we will suppress the dependence to that index, \emph{e.g.}, if both
are known, we simply denote
\begin{align*}
  \mcal{B}_i^{(N)} (b) &:= \Big \{ q_i (b): \big( \underline{S}^N (b-1),  X_{jf}^{N} (b-1),\\
  &\qquad\qquad\qquad\quad (U_i^N, X_{ie}^N) (q_i (b)) \big)
  \in \mcal{T}_\epsilon^{(N)} \Big \}
\end{align*}
where $X_{jf}^{N} (b-1)$ refers to the codeword corresponding to the
known message indices.

We assume that the set $\mcal{B}_i^{(N)} (b)$ has cardinality $2^{NK_i (b)}$. Specifically,
\begin{align*}
  K_i (b) = \frac{\log \left| \lbp q_i (b) : \intypset{\wtild
      V^N_j (b-1), U^N_{i}(q_i(b))} \rbp \right|}{N} 
\end{align*}
Note that due to random codebook generation, $K_i (b), i=1,2$, are
random variables.
The following lemma shows that $K_i (b)$ is almost surely bounded for sufficiently large $N$.

\begin{lemma} \label{lem:listsize} For any $\epsilon>0$, there exists
  a block length $N$, and a quantization scheme such that $K_i (b) <
  \kappa_i + \delta(\epsilon)$, where
\begin{align*}
\kappa_i := I(\wtild V_j ; U_i|S_i) - I(X_{jf}; U_i|S_i)
\end{align*}
for $\ijj$, and $\delta(\epsilon)$ is such that
$\delta(\epsilon) \to 0$ as $\epsilon \to 0$.
\end{lemma}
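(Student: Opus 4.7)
The plan is to bound the expected cardinality of $\mcal{B}_i^{(N)}(b)$ by $2^{N(\kappa_i+\delta(\epsilon))}$ via a random-coding calculation, and then obtain the high-probability statement by Markov's inequality together with the standard codebook-existence argument. I would first set the quantization codebook rate to $r_i = I(U_i; \wtild V_j) + \delta'$ for some small $\delta' > 0$; the covering lemma then guarantees that a quantization index $Q_i(b)$ with $(\wtild V_j^N(b-1), U_i^N(Q_i(b))) \in \mcal{T}_\epsilon^{(N)}$ exists with probability tending to one as $N \to \infty$.

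The key structural observations are two Markov chains, both arising because the test channel $p(u_i|\wtild v_j)$ depends only on $\wtild v_j$. First, $U_i - \wtild V_j - S_i$ gives the decomposition $I(U_i; \wtild V_j) = I(U_i; S_i) + I(U_i; \wtild V_j | S_i)$, which will convert $r_i$ into a $S_i$-conditional form. Second, $X_{jf} - \wtild V_j - U_i$ given $S_i$ yields the identity
\begin{align*}
\kappa_i = I(\wtild V_j; U_i | S_i) - I(X_{jf}; U_i | S_i) = I(\wtild V_j; U_i | X_{jf}, S_i),
\end{align*}
which will appear naturally in the list-size exponent.

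For the list-size estimate itself, for any $q_i \neq Q_i(b)$ the codewords $U_i^N(q_i)$ and $X_{ie}^N(q_i)$ are drawn independently of $(\ul{S}^N(b-1), X_{jf}^N(b-1))$. Since $X_{ie}^N$ is generated i.i.d.\ from $p(x_{ie})$ independently of $U_i^N$ (the two share only an index), the standard joint-typicality estimate yields
\begin{align*}
\Prob{q_i \in \mcal{B}_i^{(N)}(b)} \leq 2^{-N(I(U_i; \ul{S}, X_{jf}) - \delta(\epsilon))}.
\end{align*}
Using that $U_i$ depends on $S_j$ only through $S_i$, this exponent equals $I(U_i; S_i) + I(U_i; X_{jf} | S_i)$. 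Summing over the $2^{Nr_i}$ codewords and substituting $r_i$ produces $\E{|\mcal{B}_i^{(N)}(b)|} \leq 1 + 2^{N(\kappa_i + \delta' + \delta(\epsilon))}$. Markov's inequality then gives $K_i(b) < \kappa_i + \delta(\epsilon)$ with vanishing failure probability, from which the existence of a deterministic codebook with the desired property follows.

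The main technical care lies in keeping track of the conditional-independence structure induced by the successive random generation of codewords: specifically, the Markov chain $U_i - \wtild V_j - S_i$ (which lets the quantization rate decompose around $S_i$), the independence of $X_{ie}^N$ from the rest (which eliminates its contribution from the typicality exponent), and the Markov chain $X_{jf} - \wtild V_j - U_i$ given $S_i$ (which yields the interpretation of $\kappa_i$ as the mutual information conditioned on the receiver's side-information). Once these independences are in place, the bound is a routine consequence of the typicality and covering lemmas.
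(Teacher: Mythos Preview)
Your proposal is correct and shares the paper's core idea: bound the expected list size by a random-coding count, using $r_i \approx I(\wtild V_j;U_i)$ for the quantization codebook and a joint-typicality estimate for the probability that an independent $U_i^N(q_i)$ falls in the list. The two proofs diverge, however, in the concentration step and in which typicality test is analysed.

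The paper does not stop at the first moment. It computes the variance of $2^{NK_i}$ (exploiting that the indicators over $q_i$ are independent, since the $U_i^N(q_i)$ are generated independently), applies Chebyshev, and then invokes Borel--Cantelli to obtain an almost-sure bound $2^{NK_i}<2^{N(r_i-I(X_{jf};U_i)+2\delta/3)}$ for all sufficiently large $N$. Your route---Markov's inequality on $\E{2^{NK_i}}$---is more elementary and gives a high-probability bound for each fixed $N$, which is all the downstream error analysis needs. The paper's second-moment machinery buys an almost-sure statement uniform in $N$, but this extra strength is not actually used elsewhere.

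A second difference: you keep $\ul{S}$ (and $X_{ie}$) in the typicality test, matching the definition of $\mcal{B}_i^{(N)}$, and this yields the exponent $I(U_i;\ul{S},X_{jf})=I(U_i;S_i)+I(U_i;X_{jf}\mid S_i)$ directly, so that the list-size exponent is exactly $\kappa_i$ after subtracting $r_i$. The paper instead bounds the larger set $\{q_i:(X_{jf}^N,U_i^N(q_i))\in\mcal{T}_\epsilon^{(N)}\}$, obtaining $I(\wtild V_j;U_i)-I(X_{jf};U_i)$; identifying this with $\kappa_i$ requires the additional observation $I(S_i;U_i\mid X_{jf})=I(S_i;U_i)$, which the paper leaves implicit. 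Your handling of the $S_i$-conditioning via the two Markov chains is cleaner on this point.
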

\begin{proof}
See Appendix~\ref{sec:ap_listsize}.
\end{proof}

Lemma~\ref{lem:listsize} suggests that for each interference codeword, there is a constant number of plausible quantization codewords, for sufficiently large block length (to see that $\kappa_i$ is a constant independent of channel parameters, refer to Appendix~\ref{sec:ap_evaluation}). This means that the cost of jointly decoding the quantization indices together with the actual messages is a constant reduction in the achievable rate, which will be a useful observation in deriving the constant-gap result.

We also define $C_i = \kappa_i + 2\kappa_j$, for $\ijj$. The
reason for this particular definition will become clear in the error
analysis. Intuitively, $C_i$ represents the rate cost associated with
performing quantization to forward the feedback information, which introduces
distortion. However, as we will show later in the proof, the upper bound given in
Lemma~\ref{lem:listsize} can be evaluated as a constant independent of
channel parameters.

Given an input distribution, Rx1 is said to be in weak interference if $I(X_2;Y_1|X_1) \leq I(X_1;Y_1|X_2)$, and in strong interference otherwise. These regimes are defined similarly for Rx2.

Decoding operation depends on the interference regime and the desired
operating point $(R_1,
R_2)$. In order to describe the relevant regimes of operating points, we define
\begin{align}
I_{wi} &:= I( X_{if}; Y_i|X_{1e}, X_{2e}) - C_i, \label{eq:I_def_w} \\
I_{si} &:= I( X_{jf}; Y_i|X_{1e}, X_{2e}) - C_i, \label{eq:I_def_s}
\end{align}
for $\ijj$. In what follows, for clarity, we will focus only on Rx1. The
operations performed at Rx2 are similar. 

\begin{figure}
\centering
\includegraphics[scale=0.62]{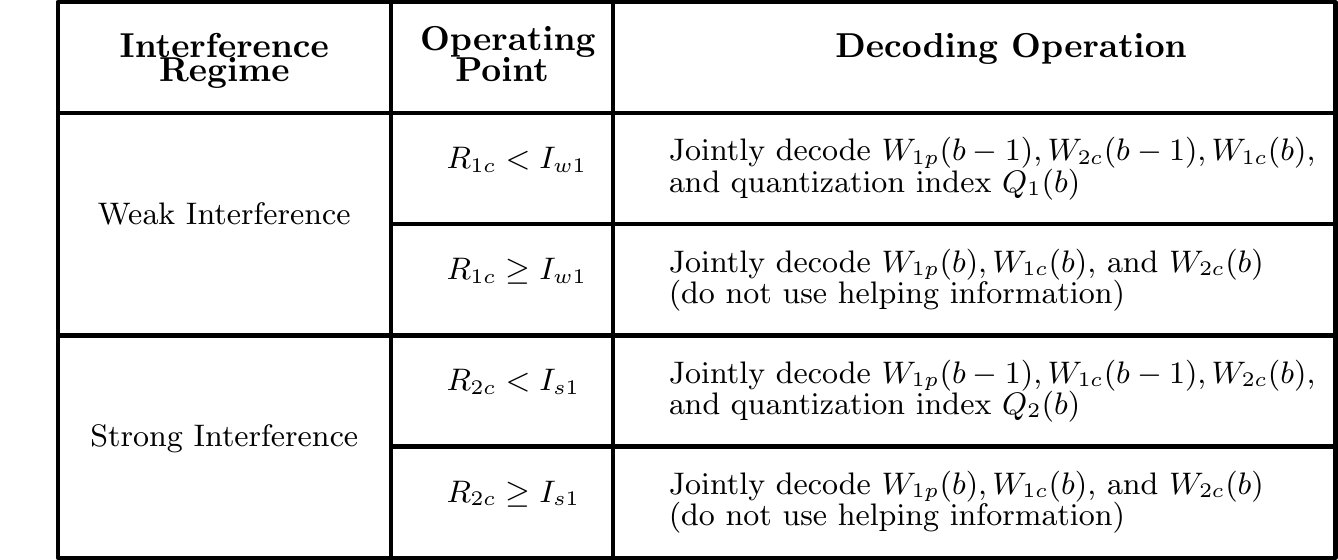}
\centering
\caption{A high-level summary of the decoding policy at Rx1 (Details are omitted).}
\label{fig:decoding_policy}
\end{figure}

\subsubsection{Weak Interference $\lp I(X_2;Y_1|X_1) \leq I(X_1;Y_1|X_2)\rp$}

If, for the desired operating point, $R_{1c} > I_{w1}$, where $I_{w1}$ is as defined in
\eqref{eq:I_def_w}, the helping information is
not used, and a slight modification of Han-Kobayashi scheme is
employed. Otherwise, the helping information is used to decode the
information of block $b-1$. We describe the decoding for the two cases
below.

$\mathbf{R_{1c} \geq I_{w1}}:$ At block $b$, we assume that $X^N_{1e}
(b)$ and $X^N_{1} (b-1)$ are known. The decoder attempts to find
unique indices $\lp m_1 (b), n_1 (b), m_2 (b) \rp \in \lb 2^{NR_{1c}}
\rb \times \lb 2^{NR_{1p}} \rb \times \lb 2^{NR_{2c}} \rb$, and some
$q_2 (b) \in \lb 2^{Nr_2} \rb$ such that
\begin{align}
\lp
\begin{array}{l}
\ul{S}^N (b-1), X^N_{1f} (b-1), X^N_{1e} (b), X^N_{2e} (q_2 (b)), \\
X_{1f}^N(m_1 (b)), X_{1}^N(m_1 (b), n_1 (b)), \\
X^N_{2f} (q_2 (b), m_2 (b)), Y_1^N (b)
\end{array}
\rp \in \mcal{T}_{\epsilon}^{(N)}
\label{eq:weak_hk_dec}
\end{align}
where the known message indices are suppressed. If the receiver can find a
unique collection of such indices, it declares them as the decoded
message indices $\lp \what W_{1c} (b), \what W_{1p}(b), \what W_{2c} (b)\rp$;
otherwise it declares an error.

After decoding, given the knowledge of $X_{1}^N (b)$, Rx1 reconstructs
$X^N_{1e} (b+1)$ by imitating the steps taken by Tx1 at the beginning
of block $b+1$, thereby maintaining the assumption that $X^N_{1e} (b)$
is known at the beginning of block $b$. Further, note that $X^N_{2e} (b)$
is not uniquely decoded, hence in block $b+1$, it will still be jointly (but still, non-uniquely)
decoded with the variables of that block. We resort to non-unique decoding
of this codeword since unique decoding imposes an additional
rate constraint on the helping information, thereby limiting the amount of 
rate enhancement it can provide.

$\mathbf{R_{1c} < I_{w1}}:$ At block $b$, it is assumed that $X_{1f}^N (b-1)$ and $X_1^N (b-2)$ are known at Rx1.

To decode, Rx1 attempts to find unique indices $\lp m_1 (b), n_1
(b-1), m_2 (b-1) \rp \in \lb 2^{NR_{1c}} \rb \times \lb 2^{NR_{1p}}
\rb \times \lb 2^{NR_{2c}} \rb$ and some triple $( q_2 (b-1), q_2 (b),
q_1 (b) ) \in \lb 2^{Nr_2} \rb \times \lb 2^{Nr_2} \rb \times \lb
2^{Nr_1} \rb$ such that
\begin{align}
\lp
\begin{array}{l}  
\underline{S}^N (b-1), X_{1f}^N (b-2), X_{1f}^N (b-1), \\
X_1^N (n_1 (b-1)), X_{2e}^N (q_2 (b-1)),Y_1^N (b-1),  \\
\lp U_{1}^N, X_{1e}^N\rp (q_1 (b)), X_{2e}^N (q_2 (b)), \\
X_{2c}^N (m_2 (b-1)), X_{1c}^N (m_1(b)), Y_1^N (b)
\end{array}
\rp \in \mcal{T}_{\epsilon}^{(N)}
\label{eq:weak_fb_dec}
\end{align}

If a unique collection of such indices exists, then these are declared
as the decoded message indices $\lp \what W_{1c} (b), \what W_{1p} (b-1),\right.$ $\left.\what
W_{2c} (b-1)\rp$. Otherwise, an error is declared.

In \eqref{eq:weak_fb_dec}, the dependence of $ X_1^N (b-1)$ to the indices
$q_1(b-1)$ and $m_1(b-1)$ is suppressed, since these indices
correspond to messages that have already been decoded.

In words, the decoder jointly decodes the private information and the
interference of block $b-1$ jointly with the helping information and
common information from block $b$.

Note that non-unique decoding is performed for $X_{1e}^N (b)$, but we
have assumed that $X_{1f}^N (b-1)$ $\lp \text{and thus, } X_{1e}^N (b-1) \rp$ is
uniquely known at the beginning of block $b$. In order to maintain
this assumption for the next block, $X_{1e}^N (b)$ is reconstructed at
Rx1. To achieve this, given the knowledge of $X_{1}^N (b-1)$, and the
quantization codebook, Rx1 imitates the operations performed by Tx1 at
the beginning of block $b$.

\subsubsection{Strong Interference $\lp I(X_2;Y_1|X_1) > I(X_1;Y_1|X_2)\rp$}

As in the weak interference case, decoding depends on the operating
point. For $R_{2c} < I_{s1}$, where $I_{s1}$ is as defined in \eqref{eq:I_def_s},
helping information is used, otherwise, it is not
used.

$\mathbf{R_{2c} \geq I_{s1}}:$ The operations performed are identical
to those for the case of $R_{1c} \geq I_{w1}$ under weak interference.

$\mathbf{R_{2c} < I_{s1}}:$ We assume $X_1^N (b-2)$, $X_{1e}^N (b-1)$,
and $X_{2c}^N (b-1)$ are known at Rx1 at block $b$.

To decode, Rx1 attempts to find unique indices $\lp m_1 (b-1), n_1
(b-1), m_2 (b) \rp \in \lb 2^{NR_{1c}} \rb \times \lb 2^{NR_{1p}} \rb
\times \lb 2^{NR_{2c}} \rb$ and some $(q_2 (b-1), q_2 (b), q_1 (b) )
\in \lb 2^{Nr_2} \rb \times \lb 2^{Nr_2} \rb \times \lb 2^{Nr_1} \rb$
such that
\begin{align}
\lp
\begin{array}{l}
\ul{S}^N (b-1), X_{1f}^N (b-2), X_{1e}^N (b-1),   \\
X_{2e}^N (q_2 (b-1)),
X_{1c}^N (m_1 (b-1)), \\
 X_{1p}^N ( n_1 (b-1) ), X_{2c}^N (m_2 (b)), X_{1e}^N (q_1 (b)), \\
 (U_{2}^N, X_{2e}^N) (q_2 (b)), 
Y_1^N (b-1), Y_1^N (b) 
\end{array}
\rp \in \mcal{T}_{\epsilon}^{(N)}
 \label{eq:strong_fb_dec}
\end{align} 
If a unique collection of such indices exists, they are declared as
the decoded message indices $\lp \what W_{1c} (b-1), \what W_{1p} (b-1), \what
W_{2c} (b) \rp$. Otherwise, an error is declared. Using the information of
$X^N_{1} (b-1)$, Rx1 can now uniquely reconstruct $X_{1e}^N (b-1)$ by
following the steps taken by Tx1 at the beginning of block $b$.

\subsection{Error Analysis}

Without loss of generality, we only consider the error events
occurring at Tx1 and Rx1. All arguments here will be applicable to the
other Tx-Rx pair. We define the following decoding error events at
Rx1, for block $b$ and block length $N$:
\begin{align*}
 & D_{FB,w} (b,N) = \lbp \what W_{1c} (b) = W_{1c} (b), \right.\\
  & \quad\left.\what W_{1p} (b-1) = W_{1p} (b-1), \what W_{2c} (b-1) = W_{2c} (b-1) \rbp ^c \\
 & D_{FB,s} (b, N) = \lbp \what W_{1c} (b-1) = W_{1c} (b-1),\right.\\
&\quad\left.  \what W_{1p} (b-1) = W_{1p} (b-1), \what W_{2c} (b) = W_{2c} (b) \rbp ^c \\
&  D_{NFB} (b, N) = \lbp \what W_{1} (b) = W_{1} (b), \what W_{2c} (b) =
  W_{2c} (b) \rbp ^c
\end{align*}
The overall decoding error events at Rx1 is given by
\begin{align*}
D_{FB,w}(N)= &\bigcup_{b=1}^B D_{FB,w} (b),\; D_{FB,s}(N) = \bigcup_{b=1}^B D_{FB,w} (b) \\
&D_{NFB}(N) = \bigcup_{b=1}^B D_{NFB} (b)
\end{align*}
We first prove that in order to find the rate achieved after transmission of $B$ blocks, it is sufficient to focus on the error events at an arbitrary block $b$. Without loss of generality, consider the error event $D_{FB,w}(N)$. Assume that, after $B$ blocks of transmission, the effective rate achieved by Tx$i$ is $\bar{R}_i$ (Note that at the end of block $B$, some of the information pertaining to block $B$ is still undecoded), which can be lower bounded by $\bar R_i \geq \frac{B-2}{B} R_i$, by ignoring the partial information decoded in the first block and the last one. We can also upper bound the overall probability of error by
\begin{align*}
\Prob{D_{FB,w}} &\leq \sum_{b=2}^B \Prob{D_{FB,w} (b,N)|\lbp D^c_{FB,w} (b',N) \rbp_{b'=2}^{b-1}} \\
&\leq B\Prob{D_{FB,w} (b,N)|\lbp D^c_{FB,w} (b',N) \rbp_{b'=2}^{b-1}} \\
&=: B\Prob{\mcal{D}_{FB,w} (b,N)}
\end{align*}
for an arbitrary block $b$, where the second line follows by the fact that the encoding and decoding processes are identical in each block, and we made a definition in the last line for brevity\footnote{The event $\mcal{D}_{FB,w}$ is defined in the filtered probability space formed by the conditioning.}. Setting $B=N=N'$, we see that for any $N'$, an error probability less than $N' \Prob{\mcal{D}_{FB,w} (b,N')}$ can be achieved with rate $\frac{N'-1}{N'}R_i$. Therefore, in order to show that rate $R_i$ is achievable, it is sufficient to show that $N \Prob{\mcal{D}_{FB,w} (b,N)} \to 0$ as $N \to \infty$. Using the same arguments, one can show the same result for $D_{FB,s}(N)$ and $D_{NFB}(N)$, and define $\mcal{D}_{FB,s}(b,N)$ and $\mcal{D}_{NFB} (b,N)$ similarly.

Now we analyze the weak and strong interference regimes separately.

\subsubsection{Weak Interference}

The following lemmas characterize the rate constraints for reliable
communication with Rx1 for feedback and non-feedback strategies,
respectively, under weak interference.

\begin{lemma} \label{lem:weak_fb}
For weak interference at Rx1, $N\Prob{\mcal{D}_{FB,w} (b,N)} \to 0$ as $N \to \infty$ if
\begin{align}
  R_{1c} &< I(X_{1f};Y_1|\ul{S}, X_{1e},X_{2e}) - C_1 \label{eq:weak_fb_first} \\
  R_{1p} &< I(X_1;Y_1|\ul{S}, X_{1f}, X_{2f}) - C_1 \label{eq:weak_fb_second} \\
  R_{2c} &< I(X_{2f};Y_1 | \ul{S}, X_{2e}, X_1) - C_1 \\
  R_{1p} + R_{2c} &< \min \Big \{ I(X_1, X_{2f}; Y_1, U_1 |\ul{S},  X_{1f}, X_{2e})-2C_1, \notag\\
  & \qquad I(X_1, X_{2f}; Y_1 | \ul{S}, X_{1c}, X_{2e})-C_1 \Big \} \label{eq:weak_fb_nontrivial}\\
  R_{1} + R_{2c} &< I(X_1, X_{2f} ;Y_1 | \ul{S}, X_{1e}, X_{2e}) -C_1
  \label{eq:weak_fb_last}
\end{align}
\end{lemma}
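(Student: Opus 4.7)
The plan is to carry out a standard joint-typicality-based error analysis for the decoding rule in \eqref{eq:weak_fb_dec}, but with one crucial modification: by Lemma~\ref{lem:listsize}, the effective packing rate of each quantization codeword $U_i^N(q_i)$ (conditioned on the underlying interference signal it was designed to quantize) is at most $\kappa_i$, rather than the design rate $r_i$. This is what produces the additive constant penalty $C_1 = \kappa_1 + 2\kappa_2$ in every bound: since the decoder in \eqref{eq:weak_fb_dec} must jointly resolve the three quantization indices $q_1(b), q_2(b-1), q_2(b)$ alongside the message triple $(m_1(b), n_1(b-1), m_2(b-1))$, each quantization index contributes only a constant $\kappa_i$ to the union-bound count, with one $\kappa_1$ (for $q_1(b)$) and two copies of $\kappa_2$ (for $q_2(b-1)$ and $q_2(b)$).

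By the reduction argument preceding the lemma, it is enough to bound $N \Prob{\mcal{D}_{FB,w}(b, N)}$ for an arbitrary block $b$, conditioned on all earlier blocks being decoded correctly. I would decompose this event as (i) the true tuple of indices failing joint typicality, which vanishes by a covering-lemma argument for the quantization step (using $r_i > I(\wtild V_j; U_i | S_i)$) together with the LLN applied to the Markov chain formed by the channel; plus (ii) a union over the non-empty subsets $\mcal{S}$ of the six indices $\{ m_1(b), n_1(b-1), m_2(b-1), q_1(b), q_2(b-1), q_2(b) \}$ being incorrectly replaced while still satisfying the typicality test. For each $\mcal{S}$, a standard packing-lemma bound of the form $2^{N \sum_{k \in \mcal{S}} \rho_k} \cdot 2^{-N(I_{\mcal{S}} - \delta(\epsilon))}$ applies, where $\rho_k$ is set to $R_k$ for each incorrect message index and to $\kappa_i$ for each incorrect quantization index (by Lemma~\ref{lem:listsize}), and $I_{\mcal{S}}$ is the mutual information between the replaced codewords and the side information consisting of the correct codewords together with the two channel outputs $Y_1^N(b-1)$ and $Y_1^N(b)$.

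The five rate constraints \eqref{eq:weak_fb_first}--\eqref{eq:weak_fb_last} then emerge from the dominating patterns: a wrong $m_1(b)$ paired with free $q_1(b), q_2(b)$ yields \eqref{eq:weak_fb_first}; a wrong $n_1(b-1)$ yields \eqref{eq:weak_fb_second}; a wrong $m_2(b-1)$ yields the $R_{2c}$ bound; wrong $(n_1(b-1), m_2(b-1))$ produces the two branches of \eqref{eq:weak_fb_nontrivial}, distinguished by whether $q_2(b-1)$ is correctly resolved (the first branch, in which $U_1^N$ enters as an additional side observation and the $2C_1$ penalty reflects the fact that quantization indices from both blocks $b-1$ and $b$ must be re-resolved) or itself among the wrong indices (the second branch, which uses only $Y_1^N(b-1)$ without the aid of $U_1^N$); and finally the all-messages-wrong pattern yields \eqref{eq:weak_fb_last}. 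The main technical obstacle will be applying Lemma~\ref{lem:listsize} under the correct conditioning in each case—in particular, when the codeword being quantized is itself one of the replaced indices, the list-size bound must be re-derived with a different conditioning event—together with verifying that the non-unique decoding of $X_{2e}^N(b)$ within block $b$ does not impose a hidden constraint linking the quantization rate $r_2$ to $R_{2c}$ or $R_{1c}$.
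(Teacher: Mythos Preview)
Your overall strategy is correct and mirrors the paper's: condition on successful quantization, union-bound over message-error patterns, and within each pattern further decompose over the $2^3$ sub-cases according to which of the three quantization indices $q_1(b), q_2(b-1), q_2(b)$ is correct, using Lemma~\ref{lem:listsize} to cap the list sizes at $2^{N\kappa_i}$ and then the packing lemma. The paper does exactly this (Appendix~\ref{sec:ap_achievability}), working out the $T(m_1,1,1)$ and $T(1,n_1,m_2)$ cases in full and reducing the remaining patterns by symmetry of the joint distribution.

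There is, however, a concrete error in your explanation of the two branches in \eqref{eq:weak_fb_nontrivial}. The split is \emph{not} governed by whether $q_2(b-1)$ is correct; it is governed by whether $q_1(b)$ is correct. Recall that $(U_1^N, X_{1e}^N)$ are both indexed by $q_1(b)$: when $q_1(b)$ is correct, the decoder knows $X_{1e}$ (hence can condition on $X_{1f}$) and can use $U_1$ as an extra observation, yielding the branch $I(X_1,X_{2f};Y_1,U_1 \mid \ul S, X_{1f},X_{2e})$; when $q_1(b)$ is wrong, neither $X_{1e}$ nor $U_1$ is available and the decoder can only condition on $X_{1c}$, yielding $I(X_1,X_{2f};Y_1 \mid \ul S, X_{1c},X_{2e})$. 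In the paper's eight-term expansion of $\Prob{T(1,n_1,m_2)\mid E^c}$, the four sub-cases with $q_1\neq 1$ collapse to the second branch and the four with $q_1=1$ to the first. Your attribution to $q_2(b-1)$ would not produce these particular conditioning sets. Relatedly, the $2C_1$ penalty on the first branch is not something that falls out of the error analysis directly: the raw bound obtained is the minimum of the two branches minus a single $C_1$, and the lemma as stated deliberately weakens the first branch by an extra $C_1$ so that certain Fourier--Motzkin constraints become redundant later. You should not try to justify $2C_1$ mechanically from the decoding rule; it is an artificial loosening.
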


\begin{proof}
See Appendix~\ref{sec:ap_achievability}.
\end{proof}


\begin{lemma} \label{lem:weak_hk}
For weak interference at Rx1, $N\Prob{\mcal{D}_{NFB} (b,N)} \to 0$ as $N \to \infty$ if
\begin{align}
  R_{1c} &> I(X_{1f};Y_1 | \ul{S}, X_{1e}, X_{2e}) - C_1 \label{eq:weak_hk_first} \\
  R_{1p} &< I( X_1; Y_1| \ul{S}, X_{1f}, X_{2f}) - \kappa_2 \label{eq:weak_hk_second} \\
  R_{2c} &< I( X_{2f} ; Y_1 | \ul{S},  X_ {2e},  X_1) - \kappa_2 \\
  R_{1} &< I( X_1;  Y_1| \ul{S}, X_{2f},  X_{1e}) - \kappa_2 \\
  R_{1}+R_{2c} &< I( X_1, X_{2f}; Y_1| \ul{S}, X_{1e}, X_{2e}) - C_1-\kappa_2
  \label{eq:weak_hk_last}
\end{align}
\end{lemma}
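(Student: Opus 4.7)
The plan is to carry out a standard joint-typicality decoding error analysis for the non-feedback Han-Kobayashi-style rule \eqref{eq:weak_hk_dec}. First, I would bound the conditional error event $\mcal{D}_{NFB}(b,N)$ by the union of (i) the event that the transmitted codewords, the actual quantization index $q_2^\ast$, the genie side information $(\ul{S}^N(b-1), X_{1f}^N(b-1))$, and $Y_1^N(b)$ fail to be jointly $\epsilon$-typical, and (ii) the events that some incorrect index pattern on $(m_1, n_1, m_2)$, together with some plausible $q_2$, produces a jointly typical tuple. Event (i) has vanishing probability by the conditional-typicality lemma combined with the covering-lemma guarantee that the quantization step at Tx2 succeeds with high probability.

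For events (ii), I would enumerate over which subset of $\lbp m_1, n_1, m_2\rbp$ is in error. Because $q_2$ is decoded \emph{non-uniquely}, for each error pattern the probability that a wrong-index tuple is jointly typical must be summed over $q_2 \in \mcal{B}_2^{(N)}(b)$, whose cardinality is bounded by $2^{N(\kappa_2 + \delta(\epsilon))}$ almost surely by Lemma~\ref{lem:listsize}. Standard packing-lemma arguments then give, for each pattern, a bound of the form $2^{N(\text{sum of wrong rates} + \kappa_2 - I(\cdot;Y_1\mid\cdot))}$, where the conditional mutual information holds the correct codewords and the genie side information fixed. Requiring these bounds to vanish as $N \to \infty$ produces the desired constraints: wrong-$n_1$-only gives \eqref{eq:weak_hk_second}; wrong-$m_2$-only gives the bound on $R_{2c}$; wrong-$(m_1,n_1)$ gives the bound on $R_1$; and wrong-$(m_1,n_1,m_2)$ gives \eqref{eq:weak_hk_last}. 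The extra $-C_1$ in \eqref{eq:weak_hk_last} beyond a single $-\kappa_2$ accounts for the combined list-size overhead incurred when all three indices are wrong, where the enumeration effectively sweeps both the $X_{1e}$ and $X_{2e}$ quantization codebooks.

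The \textbf{main obstacle} will be carefully handling the dependencies induced by passive feedback: the codeword $X_{2e}^N(q_2^\ast)$ is correlated with the genie side information $\ul{S}^N(b-1)$ and $X_{1f}^N(b-1)$ through the fact that $q_2^\ast$ was chosen at Tx2 as a deterministic function of $\wtild V_1^N(b-1)$. The packing-lemma probabilities must therefore be computed under the correct induced joint distribution, conditioning on $\ul{S}^N(b-1)$ throughout, and the uniform list-size control relies on the almost-sure statement of Lemma~\ref{lem:listsize}. Finally, I would observe that the first constraint \eqref{eq:weak_hk_first}, $R_{1c} > I(X_{1f};Y_1\mid \ul{S},X_{1e},X_{2e}) - C_1$, is not a decoding-error constraint but rather the \emph{regime} condition selecting this non-feedback rule over the feedback-assisted rule of Lemma~\ref{lem:weak_fb}; it requires no separate argument and is carried through as a hypothesis when assembling the overall achievable region.
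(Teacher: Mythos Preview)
Your approach is essentially the same as the paper's: enumerate the error patterns in $(m_1,n_1,m_2)$, union-bound over the non-uniquely decoded $q_2$, apply the packing lemma, and absorb the list-size overhead via Lemma~\ref{lem:listsize}. The paper carries this out by further splitting each pattern according to whether $q_2$ equals the true index or not, but your coarser ``sum over all $q_2\in\mcal{B}_2$'' yields the same uniform $-\kappa_2$ penalty and is equally valid.

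One point needs correction. Your explanation of the extra $-C_1$ in \eqref{eq:weak_hk_last} is wrong: in the non-feedback decoding rule \eqref{eq:weak_hk_dec}, $X_{1e}^N(b)$ is \emph{known} at Rx1 (it is reconstructed from $X_1^N(b-1)$ after block $b-1$ is decoded), so there is no enumeration over the $X_{1e}$ codebook and hence no $\kappa_1$ contribution. The raw error analysis produces only $-\kappa_2$ penalties; the $-C_1$ in the lemma statement is a deliberate slackening so that the sufficient conditions here line up with the form of the feedback-mode constraints in Lemma~\ref{lem:weak_fb}, which simplifies forming the intersection $\mcal{R}^w_{NFB}\cap\mcal{R}^w_{FB}$ afterward. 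The paper says exactly this at the end of the combined proof of Lemmas~\ref{lem:weak_hk} and~\ref{lem:strong_hk}: ``the conditions in both lemmas are sufficient to ensure $\Prob{\mcal{D}_{NFB}|E^c}\to 0$.'' Your remark that \eqref{eq:weak_hk_first} is a regime hypothesis rather than a decoding constraint is correct.
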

\begin{proof}
See Appendix~\ref{sec:ap_achievability}.
\end{proof}


\subsubsection{Strong Interference}

The following lemmas give the rate constraints for the feedback and
non-feedback modes under strong interference at Rx$i$.

\begin{lemma} \label{lem:strong_fb}
For strong interference at Rx1, $N\Prob{\mcal{D}_{FB,s} (b,N)} \to 0$ as $N \to \infty$ if
\begin{align}
R_{2c} &< I(X_{2f};Y_1| \ul{S}, X_{1e}, X_{2e}) - C_1  \label{eq:strong_fb_first} \\
R_{1p} &< I(X_1;Y_1|\ul{S}, X_{1f},X_{2f}) - C_1 \label{eq:strong_fb_second} \\
R_{1} &< \min \lbp I(X_{1};Y_1, U_2|\ul{S}, X_{1e},X_{2f}), \right.\\
&\qquad \left. I(X_{1}, X_{2e};Y_1|\ul{S}, X_{1e},X_{2c} ) \rbp - C_1  \label{eq:strong_fb_nontrivial}\\
R_{1}+R_{2c} &< I(X_1, X_{2f}; Y_1|\ul{S}, X_{1e},  X_{2e}) - C_1 \label{eq:strong_fb_last}
\end{align}
\end{lemma}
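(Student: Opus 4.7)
The plan is to bound $N\Prob{\mcal{D}_{FB,s}(b,N)}$ by a standard joint-typicality error analysis, enumerating error events according to which subset of the true indices among $(m_1(b-1), n_1(b-1), m_2(b))$ and the quantization indices $(q_2(b-1), q_2(b), q_1(b))$ the decoder confuses with incorrect ones, and then combining via the union bound.

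First, I would dispatch the ``encoding succeeds'' preliminary: by the covering lemma, provided $r_i > I(\wtild V_j; U_i | S_i)$, the probability that Tx$i$ fails to find a quantization index $Q_i(b)$ satisfying joint typicality of $\lp\wtild V_j^N(b-1), U_i^N(Q_i(b))\rp$ decays to zero in $N$. Conditioned on the encoder succeeding for both $(b-1)$ and $b$, the true index tuple is jointly typical with the received signals $Y_1^N(b-1)$ and $Y_1^N(b)$ by the conditional typicality lemma (together with the Markov lemma to handle the quantization step). This handles the ``no correct tuple'' event.

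Next, for each nonempty subset $S$ of indices that differ from the truth, let $\mcal{E}_S$ denote the event that some incorrect tuple agreeing with the true one on $S^c$ satisfies the joint typicality condition in \eqref{eq:strong_fb_dec}. I would upper bound $\Prob{\mcal{E}_S}$ via the packing lemma, i.e., counting roughly $\prod_{i \in S} 2^{Nr_i}$ candidate tuples and multiplying by the probability that an independently generated tuple is jointly typical with $(\ul{S}^N, Y_1^N(b-1), Y_1^N(b), \text{known codewords})$. The key ingredient that yields the $C_1 = \kappa_1 + 2\kappa_2$ penalty is Lemma 5.1: for any fixed interference codeword $X_{jf}^N$, the number of quantization indices jointly typical with it is at most $2^{N(\kappa_i + \delta)}$, so that each of the three quantization indices $q_1(b), q_2(b), q_2(b-1)$ contributes at most $\kappa_1, \kappa_2, \kappa_2$ respectively to the effective rate whenever it is allowed to vary. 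The four constraints arise from the choices: (i) $S = \{m_2(b), q_2(b)\}$ gives \eqref{eq:strong_fb_first}, since the decoder resolves the interfering common codeword $X_{2f}^N(b)$ through $Y_1^N(b)$ while treating $X_{1e}^N(b), X_{2e}^N(b)$ as unknown auxiliaries; (ii) $S = \{n_1(b-1)\}$ gives \eqref{eq:strong_fb_second}, since only $X_{1p}^N$ in block $b-1$ differs given knowledge of both common signals and helping codewords; (iii) $S = \{m_1(b-1), n_1(b-1)\}$ together with $q_1(b)$ yields the first bound in \eqref{eq:strong_fb_nontrivial} by jointly decoding $X_1^N(b-1)$ using both $Y_1^N(b-1)$ and $U_2^N(q_2(b))$ (since the quantization codeword $U_2^N$ at block $b$ contains information about $V_1^N(b-1)$, hence about $X_1^N(b-1)$); while also allowing $q_2(b-1)$ to vary produces the alternative bound $I(X_1, X_{2e}; Y_1 | \ul{S}, X_{1e}, X_{2c}) - C_1$, which is tighter in some channel regimes; (iv) $S$ containing $(m_1(b-1), n_1(b-1), m_2(b))$ and all three quantization indices yields the sum-rate bound \eqref{eq:strong_fb_last}. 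Finally, the factor of $N$ in $N\Prob{\mcal{D}_{FB,s}}$ is absorbed by the exponential decay, so all probabilities vanish.

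The main obstacle I anticipate is the careful accounting of the Markov structure connecting blocks $b-1$ and $b$: the quantization codeword $U_i^N(q_i(b))$ is designed to be jointly typical with $\wtild V_j^N(b-1)$, which itself depends on $X_j^N(b-1)$, so when evaluating a wrong-tuple probability in which both a message index from block $b-1$ and a quantization index from block $b$ are perturbed, the ``independence'' assumption underlying the packing lemma must be invoked on the correct conditioning. Careful bookkeeping is required to ensure that the effective number of candidate tuples times the conditional typicality probability matches the claimed mutual information expressions, especially for the two alternative forms in \eqref{eq:strong_fb_nontrivial}, where the choice of which auxiliary codewords are held fixed vs. allowed to vary drives which of the two bounds is obtained. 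The detailed packing-lemma calculation for each subset $S$, together with this conditioning argument, is deferred to Appendix~\ref{sec:ap_achievability}.
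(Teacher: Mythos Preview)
Your overall plan matches the paper's proof: condition on quantization success via the covering lemma, union-bound over the message error events $T(m_1,n_1,m_2)$, expand each over the non-uniquely decoded quantization indices, invoke Lemma~\ref{lem:listsize} to bound the list sizes, and apply the packing lemma. The cross-block Markov bookkeeping you flag as the main obstacle is exactly where the paper spends its effort.

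One point of confusion to fix before you write it out: your attribution of \emph{specific} quantization indices to each constraint is inconsistent with the $-C_1$ penalty you (correctly) state. In the paper, for \emph{every} message error event $T(m_1,n_1,m_2)$, all three quantization indices $(q_1(b),q_2(b),q_2(b{-}1))$ are allowed to vary, producing eight sub-events. Each sub-event is bounded separately (via Claim~\ref{cl:generalized_ub} and the packing lemma), and then all eight are upper-bounded by a common expression carrying the full $C_1=\kappa_1+2\kappa_2$ penalty. If, as you wrote in (i), only $q_2(b)$ varied for the $R_{2c}$ event, the penalty would be $\kappa_2$, not $C_1$. Similarly, the two branches of the $\min$ in \eqref{eq:strong_fb_nontrivial} do not correspond to ``$q_2(b{-}1)$ fixed versus varying''; they come from partitioning the eight sub-events for $T(m_1,n_1,1)$ into two groups of four, with one group retaining $U_2$ in the mutual information (sub-cases where the correct $q_2(b)$ keeps $U_2^N$ informative about $X_1^N(b{-}1)$) and the other not. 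Both groups still carry $-C_1$ after taking the loosest term in each.
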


\begin{proof}
See Appendix~\ref{sec:ap_achievability}.
\end{proof}

\begin{lemma} \label{lem:strong_hk}
For strong interference at Rx1, $N\Prob{\mcal{D}_{NFB} (b,N)} \to 0$ as $N \to \infty$ if
\begin{align}
R_{2c} &> I(X_{2f};Y_1|\ul{S}, X_{1e}, X_{2e}) - C_1 \label{eq:strong_hk_first} \\
R_{1p} &< I(X_1;Y_1|\ul{S}, X_{1f},X_{2f}) - \kappa_2 \label{eq:strong_hk_second} \\
R_{1p}+R_{2c} &< I( X_1, X_{2f};  Y_1|\ul{S},  X_{1f}, X_{2e}) - \kappa_2 \\
R_{1}+R_{2c} &< I(X_1, X_{2f}; Y_1|\ul{S}, X_{1e}, X_{2e}) - C_1 - \kappa_2 \label{eq:strong_hk_last}
\end{align}
\end{lemma}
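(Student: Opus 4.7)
The plan is to mirror the joint-typicality decoding analysis used for Lemma~\ref{lem:weak_hk}, but re-evaluate which error events dominate under the strong-interference condition $I(X_2;Y_1|X_1) > I(X_1;Y_1|X_2)$. The decoder at Rx1 uses the same rule \eqref{eq:weak_hk_dec}: find unique $(m_1(b),n_1(b),m_2(b))$ and some $q_2(b)$ such that all the listed codewords, together with $Y_1^N(b)$, are jointly $\epsilon$-typical. The regime condition $R_{2c} > I(X_{2f};Y_1|\ul{S},X_{1e},X_{2e})-C_1$ is precisely what selects this HK-style decoding over the feedback path analyzed in Lemma~\ref{lem:strong_fb}; it does not need to be ``achieved'' but certifies that $m_2$ can indeed be non-uniquely decoded without the helping information being exploited.

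First I would verify that the true tuple of indices satisfies the typicality test with probability tending to one. The only non-standard step is ensuring that a valid quantization index $Q_2(b)$ exists at the encoder, which follows from the covering lemma for sufficiently large $r_2$ and the construction used in Lemma~\ref{lem:listsize}. Then I would bound the probability that some incorrect tuple satisfies the same typicality test via a union bound over the subset $\mcal{E}\subseteq\{m_1,n_1,m_2\}$ of indices that disagree with the truth, paired with an unrestricted choice of $q_2\in [2^{Nr_2}]$.

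For each $\mcal{E}$, the packing lemma gives an exponential bound whose exponent must be made negative. Lemma~\ref{lem:listsize} converts the cardinality of plausible $q_2$'s into a rate penalty of $\kappa_2$ when at least one genuine message index anchors the typicality of $X_{2e}^N$ to $\wtild V_1^N(b-1)$, and into $C_1=\kappa_1+2\kappa_2$ when no message index is correct (the second and third quantization costs arise from the interplay between $q_2(b)$ and $q_1(b)$ in the typicality constraint). The events $\mcal{E}=\{n_1\}$, $\mcal{E}=\{n_1,m_2\}$, and $\mcal{E}=\{m_1,n_1,m_2\}$ then produce the three achievability constraints \eqref{eq:strong_hk_second}--\eqref{eq:strong_hk_last} respectively. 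Error events involving an incorrect $m_1$ but correct $m_2$ (namely $\mcal{E}=\{m_1\}$ and $\mcal{E}=\{m_1,n_1\}$) are dominated under strong interference, since the packing-lemma exponents for decoding $m_1$ satisfy $I(X_{1f};Y_1|X_{2f},X_{1e},X_{2e}) \leq I(X_{2f};Y_1|X_{1f},X_{1e},X_{2e})$, automatically implied by the constraints on $m_2$.

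The main obstacle will be the careful bookkeeping around the non-unique decoding of $q_2(b)$: since $X_{2e}^N$ is produced by a covering step rather than generated independently of the received signal, the standard packing-lemma computation must be combined with the listsize bound of Lemma~\ref{lem:listsize}, and one must ensure that after summing over $q_2$ the residual exponents remain strictly positive. The additional factor $N$ in $N\Prob{\mcal{D}_{NFB}(b,N)}$ is absorbed because each rate bound holds with strict inequality, yielding exponential decay in $N$.
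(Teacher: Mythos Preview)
Your overall approach matches the paper's: the proof of Lemma~\ref{lem:strong_hk} is in fact carried out jointly with Lemma~\ref{lem:weak_hk} via a single union-bound/packing-lemma computation on the decoding rule \eqref{eq:weak_hk_dec}, after which one simply checks that the conditions of each lemma suffice. Your use of the regime condition \eqref{eq:strong_hk_first} together with \eqref{eq:strong_hk_last} to absorb the $\{m_1\}$ and $\{m_1,n_1\}$ error events (which would otherwise produce an $R_1$ or $R_{1c}$ constraint) is exactly the mechanism that distinguishes the strong-interference case from the weak one.

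There is, however, a genuine error in your penalty bookkeeping. In the HK-style decoding \eqref{eq:weak_hk_dec} only $q_2(b)$ is non-uniquely decoded; $X_{1e}^N(b)$ is already \emph{known} at Rx1 (it is reconstructed from $X_1^N(b-1)$), so no index $q_1(b)$ appears anywhere in this step and there is no ``interplay between $q_2(b)$ and $q_1(b)$.'' Consequently the listsize penalty supplied by Lemma~\ref{lem:listsize} is at most $\kappa_2$ in every branch of the union bound, never $C_1$. The extra $-C_1$ appearing in \eqref{eq:strong_hk_last} is not produced by the error analysis; it is slack deliberately inserted so that when the non-feedback region is intersected with the feedback region of Lemma~\ref{lem:strong_fb} in the rate-region evaluation, the constraints line up cleanly. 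What you must actually verify is that \eqref{eq:strong_hk_first}--\eqref{eq:strong_hk_last} imply the raw packing-lemma constraints, which carry only a $-\kappa_2$ penalty on the $q_2\neq 1$ branches; this follows from elementary inequalities such as $I(X_1,X_{2f};Y_1|\ul{S},X_{1e},X_{2e})\le I(X_1,X_{2f};Y_1|\ul{S},X_{1e})$ under the product input distribution.
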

\begin{proof}
See Appendix~\ref{sec:ap_achievability}.
\end{proof}

\subsection{Rate Region Evaluation} 

In this subsection, we first explicitly derive the set of achievable $\lp R_1,R_2\rp$ pairs for linear deterministic and Gaussian models, from the results of the previous subsection.

We first find the conditions for decodability at Rx1 under weak interference. Recall that feedback mode is used at Rx$1$ only if $\eqref{eq:weak_fb_first}$
is satisfied; otherwise Han-Kobayashi decoding is performed. If we
define $\ul{R} := \lp R_{1c}, R_{2c}, R_{1p} \rp$, and
\begin{align*}
\mcal{R}_{FB}^w &:= \lbp \underline{R}: \text {\eqref{eq:weak_fb_second}-\eqref{eq:weak_fb_last} is satisfied}  \rbp, \\
\mcal{R}_{NFB}^w &:= \lbp \underline{R}: \text { \eqref{eq:weak_hk_second}-\eqref{eq:weak_hk_last} is satisfied}  \rbp, \\
\mcal{R}_{d}^w &:= \lbp \underline{R}: \text { \eqref{eq:weak_fb_first} is satisfied} \rbp,
\end{align*}
then the set of rate points $\mcal{R}^w$ that ensure decodability at
Rx1 under weak interference contains
\begin{align*}
\mcal{R}^w &= \lp \mcal{R}_{FB}^w \cap \mcal{R}_{d}^w \rp \cup \lp \mcal{R}_{NFB}^w \cap \mcal{R}^{w,c}_{d}   \rp \\
&\supseteq \lp \mcal{R}_{NFB}^w \cap \mcal{R}_{FB}^w \cap \mcal{R}_{d}^w \rp \cup \lp \mcal{R}_{NFB}^w \cap \mcal{R}_{FB}^w \cap \mcal{R}^{w,c}_{d}   \rp \\
&= \mcal{R}_{NFB}^w\cap \mcal{R}_{FB}^w
\end{align*}
where $\mcal{R}^{w,c}_{d}$ is the complement of the set
$\mcal{R}^{w}_{d}$. Therefore, the rate constraints for decodability
at Rx1 for the described strategy for weak interference are given by
\eqref{eq:weak_fb_second}-\eqref{eq:weak_fb_last} and \eqref{eq:weak_hk_second}-\eqref{eq:weak_hk_last},
for all joint distributions $\prod_{i=1}^2
p(x_{ie})p(x_{ic})p(x_{ip})$, symbol-wise mappings 
$x_{if}(x_{ie}, x_{ic})$, $x_{i}(x_{if}, x_{ip})$,
and $p(u_i|\tilde v_j)$,
$\ijj$, consistent with the distortion constraints.

One can perform the same line of arguments as in the case of weak
interference to show that the rate constraints for decodability at Rx1
for strong interference are given by \eqref{eq:strong_fb_second}-\eqref{eq:strong_fb_last}
and \eqref{eq:strong_hk_second}-\eqref{eq:strong_hk_last}, for all joint distributions
$\prod_{i=1}^2 p(x_{ie})p(x_{ic})p(x_{ip})$, symbol-wise mappings 
$x_{if}(x_{ie}, x_{ic})$, $x_{i}(x_{if}, x_{ip})$, and $p(u_i|\tilde v_j)$,
$\ijj$, consistent with the distortion constraints.

Next, we consider linear deterministic and Gaussian models separately, and derive the achievable rate regions explicitly for both cases.

\subsubsection{Rate Region for Linear Deterministic Model}
To obtain the achievable rate region, we first evaluate the mutual information terms with specific input distributions. In particular, we choose the distributions and mappings
\begin{align}
X_{ie} &\sim Unif \lb \mbb{F}_2^{n_{ji}} \rb \label{eq:ldc_inputdist_first}\\
X_{ic} &\sim Unif \lb \mbb{F}_2^{n_{ji}} \rb \\
X_{ip} &\sim Unif \lb \mbb{F}_2^{(n_{ii}-n_{ji})^+} \rb \\
U_i &= \wtild V_j \\
x_{if} &: \mbb{F}_2^{n_{ji}} \times \mbb{F}_2^{n_{ji}} \to \mbb{F}_2^{n_{ji}},\notag\\
x_i &: \mbb{F}_2^{n_{ji}} \times \mbb{F}_2^{(n_{ii}-n_{ji})^+} \to \mbb{F}_2^{\max \lp n_{ii}, n_{ji}\rp}, \notag\\
x_i &= \lb X_{if} \;\; X_{ip} \rb^T, \;\;x_{if}(a,b) = a+b  \label{eq:ldc_inputdist_last} 
\end{align}
for $\ijj$, where $Unif \lb \mcal{A}\rb$ denotes uniform distribution over the set $\mcal{A}$. Evaluating the mutual information terms of the previous subsection with this set of distributions, and applying Fourier-Motzkin elimination (see Appendix~\ref{sec:ap_evaluation} for details), we obtain the rate region given in \eqref{eq:ldc_R1}--\eqref{eq:ldc_R12R2}.

\subsubsection{Rate Region for Gaussian Model}
Now we evaluate the rate constraints obtained in the previous section,
and obtain the final achievable rate region. Assuming available power
$P_i$ at Tx$i$, we assign the following input distributions, for
$\ijj$:
\begin{align}
X_{ie} &\sim \cgauss{0}{\frac{1}{2} P_i} \label{eq:g_inputdist_first}\\
X_{ic} &\sim \cgauss{0}{\frac{1}{2}(1-P_{ip}) P_i} \\
X_{ip} &\sim \cgauss{0}{\frac{1}{2}\min \lp \frac{1}{|h_{ji}|^2 P_i}, 1 \rp P_i} \\
U_i | \wtild V_j &\sim \cgauss{\wtild V_j}{D_i} \\
x_{if} &: \mbb{C} \times\mbb{C} \to \mbb{C}, \;\; x_{i} : \mbb{C} \times\mbb{C} \to \mbb{C}, \notag\\
x_{if}&(a,b) = a+b , \;\; x_{i}(a,b) = a+b \label{eq:g_inputdist_last}
\end{align}
where $D_i>0$ are the distortion parameters. Using these input distributions, and applying Fourier-Motzkin elimination (See Appendix~\ref{sec:ap_evaluation} for details), we can show that the rate region \eqref{eq:ib_Ri}--\eqref{eq:ib_2RiRj}, given in Appendix~\ref{sec:ap_evaluation}, is achievable.

\section{Converse Proof} \label{sec:converse}

We now prove an outer bound region that exactly matches the region given in \eqref{eq:ldc_R1}--\eqref{eq:ldc_R12R2}, and is within a constant gap of the region in \eqref{eq:g_Ri}--\eqref{eq:g_2RiRj}.

The main idea between the novel bounds on $R_1$ and $R_2$ is based on a genie argument, where the receivers are provided with side-information about the messages. The bounds on $R_1+R_2$, $2R_1+R_2$ and $R_1+2R_2$ are proven through a channel enhancement technique, resembling the one used for the multiple-access channel in \cite{KhistiLapidoth_13}.

\subsection{Bounds on $R_1$ and $R_2$}
Since any outer bound for perfect feedback is also an outer bound for intermittent feedback, we have the perfect feedback bound
\begin{align}
R_i \leq \max \lp n_{ii}, n_{ij}\rp
\end{align}
for linear deterministic model, and the bound
\begin{align}
R_i &\leq \sup_{0\leq\rho\leq1} \log \lp 1 + \SNR_i + \INR_i + 2\rho\sqrt{\SNR_i \cdot \INR_i}\rp \label{eq:ob_p_Ri}
\end{align}
for Gaussian model, for $\ijj$, which are both proved in \cite{SuhTse_11}. Next, we prove a novel bound for both models.

Without loss of generality, we focus on the bound on $R_1$. In order to prove the novel bound on $R_1$, the main idea is to provide $\lp W_2, \wtild V_1^N \rp$ as side-information to Rx1. The intuition behind this particular choice is revealed when we consider the interference regime and operating point in which this bound is active. First, due to the structure of the capacity region, this bound is relevant only when the message (\emph{i.e.}, the rate) of the interfering user is small enough. Hence, for that regime, $W_2$ does not carry too much information, and thus providing this to Rx1 still results in a tight outer bound. Second, note that this bound is only active in the strong interference regime, where feedback from Rx2 to Tx2 creates an alternative path for the transmission of $W_1$. Therefore, by forwarding this information, Tx2 indeed provides the information contained in $\wtild V_1^N$ to Rx1.

Based on this idea, we prove the bound
\begin{align}
R_i \leq n_{ii} + p_j\lp n_{ji}-n_{ii}\rp^+  \label{eq:ob_ldc_Ri}
\end{align}
for linear deterministic model in Appendix~\ref{sec:ap_ob_ldc}, and the bound
\begin{align}
R_i \leq \log \lp 1 + \SNR_i \rp + p_j \log \lp1 + \frac{\INR_j}{1+\SNR_i} \rp  \label{eq:ob_g_Ri}
\end{align}
for the Gaussian model in Appendix~\ref{sec:ap_ob_g}, for $\ijj$.

\subsection{Bounds on $R_1+R_2$, $2R_1+R_2$ and $R_1+2R_2$}
We have the perfect feedback outer bounds
\begin{align}
R_i+R_j \leq \max \lp n_{ii}, n_{ij}\rp + \lp n_{jj}-n_{ji}\rp^+
\end{align}
for linear deterministic model, and 
\begin{align}
&R_i+R_j < \sup_{0\leq\rho\leq1} \log \lp 1 + \frac{(1-\rho^2)\SNR_i}{1+(1-\rho^2)\INR_j} \rp  \notag\\
&\qquad+ \log\lp 1 + \SNR_j + \INR_j + 2\rho\sqrt{\SNR_j\cdot\INR_j} \rp \label{eq:ob_p_RiRj}
\end{align}
for Gaussian model, for $\ijj$.

Next, we prove novel outer bounds on the capacity region. {The novelty in these bounds is in the fact that it combines the existing genie-aided bounding techniques for interference channel with the channel enhancement technique of \cite{KhistiLapidoth_13}.} In order to prove these bounds, we first define a notion of enhanced channel. Considering our achievable scheme, feedback can be interpreted as a mechanism for the receivers to separate the interference and the intended signal, to the extent allowed by the erasure probability in the feedback channel. In the weak interference regime, this allows the receiver to cancel the interference. In the strong interference regime, through the alternate path created by the interfering user, it allows the reception of additional information about the intended message. Therefore we consider an enhanced channel where the receivers observe the interference and the intended signal individually whenever the feedback is available, and their sum otherwise. In addition to this enhancement, we provide Rx$i$ with the side-information of $V_i^N$ as well, as was done in \cite{EtkinTse_07}. To make this more precise, we consider the two models separately.

\subsubsection{Linear Deterministic Model}
We define the enhanced linear deterministic channel with intermittent feedback by the following equations
\begin{align*}
\breve Y_i = \left\{ 
\begin{array}{ll}
Y_i, & \text{ if $S_i = 0$}\\
\lp \mb{H}_{ii}X_i, V_j\rp, & \text{ if $S_i = 1$}
\end{array}
\right.
\end{align*}
for $\ijj$, where $\breve Y_i$ is the channel output of the enhanced channel at Rx$i$, $Y_i$ is the channel output of the original channel, and $X_i$ and $V_j$ are as defined for the original channel. The output of the feedback channel is given by $\wtild Y_i = S_i Y_i$, \emph{i.e.}, the same as the original channel. Note that any scheme that achieves arbitrarily small error probability in the original channel can also achieve arbitrarily small error probability for the enhanced channel, using the fact that $Y_i =  \mb{H}_{ii}X_i + V_j$. This means that the capacity region of the original channel is a subset of that of the enhanced channel, and we can derive an outer bound for the enhanced channel instead. 

\begin{figure}
\centering
\includegraphics{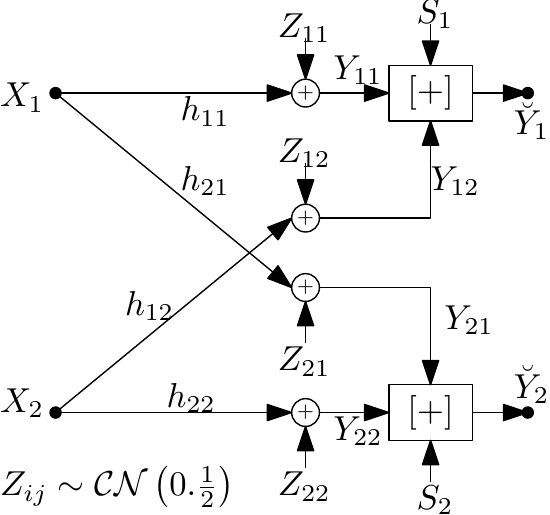}
\caption{The enhanced channel for Gaussian model. The block $\lb+\rb$ is a conditional adder, which outputs the sum of the other two inputs if $S_i=0$, and outputs the two inputs separately otherwise.}
\label{fig:enhanced}
\end{figure}

It is easy to see that this enhancement is equivalent to providing the Rx$i$ with $\wtild V_2^N$, since for time slots where $S_i=1$, Rx$i$ can use this information to individually obtain the interference and the intended symbol.

Using the channel enhancement technique, we arrive at the following outer bounds on the capacity region of the linear deterministic interference channel with intermittent feedback, which are explicitly proved in Appendix~\ref{sec:ap_ob_ldc}.
\begin{align}
&R_1+R_2 \leq \max\lbp n_{12}, \lp n_{11}-n_{21}\rp^+\rbp \notag\\
&+ \max\lbp n_{21}, \lp n_{22}-n_{12}\rp^+\rbp + p_1\min\lbp n_{12}, \lp n_{11}-n_{21}\rp^+\rbp  \notag\\
& + p_2\min\lbp n_{21}, \lp n_{22}-n_{12}\rp^+\rbp\label{eq:ob_ldc_R1R2}\\
&2R_1+R_2 \leq \max\lp n_{11}, n_{12}\rp + \max\lbp n_{21}, \lp n_{22}-n_{12}\rp^+\rbp \notag\\
&+\lp n_{11}-n_{21}\rp^+ + p_2\min\lbp n_{21}, \lp n_{22}-n_{12}\rp^+\rbp \label{eq:ob_ldc_2R1R2}\\
&R_1+2R_2 \leq  \max\lp n_{22}, n_{21}\rp + \max\lbp n_{12}, \lp n_{11}-n_{21}\rp^+\rbp \notag\\
&+\lp n_{22}-n_{12}\rp^+ + p_1\min\lbp n_{12}, \lp n_{11}-n_{21}\rp^+\rbp \label{eq:ob_ldc_R12R2}
\end{align}

\subsubsection{Gaussian Model}
Next, we extend the enhanced channel idea to the Gaussian model. In this case, while splitting the interference and the intended signal, we also split the noise evenly between these two variables (see Figure~\ref{fig:enhanced}). Specifically, we consider the channel defined by the equations
\begin{align*}
\breve Y_i = \left\{ 
\begin{array}{ll}
\bar Y_i, & \text{ if $S_i = 0$}\\
\lp Y_{ii}, Y_{ij}\rp, & \text{ if $S_i = 1$}
\end{array}
\right.
\end{align*}
for $\ijj$, where $\breve Y_i$ is the output of the enhanced channel, and
\begin{align*}
Y_{ii} &= h_{ii} X_i + Z_{ii} \\
Y_{ij} &= h_{ij} X_j + Z_{ij} \\
\bar Y_i &= Y_{ii} + Y_{ij} = h_{ii} X_i +h_{ij} X_j + \bar Z_i
\end{align*}
with $Z_{ij}, Z_{ii}$ are independent and distributed with $\cgauss{0}{\frac{1}{2}}$, and we define $\bar Z_i = Z_{ii}+Z_{ij}$. The output of the feedback channel at Tx$i$ is given by $S_i \bar Y_i = S_i \cdot \lp Y_{ii} + Y_{ij}\rp$, \emph{i.e.}, the same as the original channel. It is worth noting that unlike the linear deterministic case, this enhancement is not equivalent to providing Rx$i$ with $\wtild V_j^N$, since giving this side-information allows the receiver to completely cancel the noise for some time slots, resulting in an infinitely loose bound.

Let $\mcal{C}_e(p_1,p_2)$ denote the capacity region of the enhanced channel.

The next lemma shows that the capacity region of the enhanced channel indeed dominates the original one.
\begin{lemma} \label{lem:enhanced}
For all $0\leq p_1,p_2 \leq 1$,
\begin{align*}
\mcal{C}_G(p_1,p_2) \subseteq \mcal{C}_e(p_1,p_2)
\end{align*}
\end{lemma}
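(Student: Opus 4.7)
The plan is to show that any coding scheme that is reliable over the original Gaussian channel can be \emph{simulated} over the enhanced channel with the same error probability, which immediately gives $\mcal{C}_G(p_1,p_2)\subseteq\mcal{C}_e(p_1,p_2)$. Concretely, I would fix $(R_1,R_2)\in\mcal{C}_G(p_1,p_2)$ together with a sequence of block-length-$N$ encoders and decoders whose error probability vanishes, and then use the \emph{same} encoders on the enhanced channel together with a decoder that first collapses the split observation to $\bar Y_i = Y_{ii}+Y_{ij}$ whenever $S_i=1$, and then feeds $\bar Y_i^N$ to the original decoder.

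The first step is to verify that, under identical encoders, the feedback process and the aggregate outputs have the same joint distribution in the two channels. Since $Z_{ii},Z_{ij}$ are independent $\cgauss{0}{1/2}$, the sum $\bar Z_i := Z_{ii}+Z_{ij}$ is $\cgauss{0}{1}$ and is independent of $\bar Z_{3-i}$; hence $\bar Y_i = h_{ii}X_i + h_{ij}X_j + \bar Z_i$ has the same conditional distribution given $(X_1,X_2)$ as the original $Y_i$. The feedback seen by Tx$i$ in the enhanced channel is $S_i \bar Y_i$, which is distributionally identical to the original $S_i Y_i$. An induction on the time index $t$, using the causal encoding constraint $X_{i,t}\eqFunc(W_i,S_i^{t-1},\wtild Y_i^{t-1})$, then shows that the joint law of $(W_1,W_2,X_1^N,X_2^N,\bar Y_1^N,\bar Y_2^N,\ul S)$ in the enhanced channel equals that of $(W_1,W_2,X_1^N,X_2^N,Y_1^N,Y_2^N,\ul S)$ in the original channel.

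The second step is a trivial data-processing observation at the receivers: from $\breve Y_{i,t}$, Rx$i$ can reconstruct $\bar Y_{i,t}$ deterministically, directly when $S_{i,t}=0$ and by summation when $S_{i,t}=1$. Hence any decoder in the enhanced channel is free to use only $\bar Y_i^N$, and by the previous step the original decoder applied to $\bar Y_i^N$ has exactly the same error probability as in the original channel. This establishes $(R_1,R_2)\in\mcal{C}_e(p_1,p_2)$.

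No serious obstacle is expected; the only substantive check is the distributional identity $\bar Z_i\sim\cgauss{0}{1}$ together with independence of $\bar Z_1,\bar Z_2$, which is precisely why the noise is split into variance-$1/2$ halves in the enhancement. The enhancement is specifically engineered so that when $S_i=1$ the receiver is strictly stronger (it separates signal from interference), while the noise marginals at each receiver and the statistics of the feedback process are left unchanged; these are the properties that will be exploited in the subsequent sum-rate and weighted-sum-rate outer bounds (combined with a genie providing $V_i^N$ as side information in the style of \cite{EtkinTse_07}).
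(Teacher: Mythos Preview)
Your proposal is correct and essentially identical to the paper's proof. The paper packages the argument slightly differently by introducing an intermediate channel whose output is $\bar Y_i = h_{ii}X_i + h_{ij}X_j + \bar Z_i$ (so that $\mcal{C}_G = \mcal{C}_i$ because $\bar Z_i\stackrel{d}{=}Z_i$), and then observes $\mcal{C}_i\subseteq\mcal{C}_e$ via the summation $\bar Y_i = Y_{ii}+Y_{ij}$; your simulation/induction argument is just the unpacked version of the same two observations.
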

\begin{proof}
The proof has two steps. First, we consider an intermediate channel, with capacity region $\mcal{C}_i(p_1,p_2)$, and the channel output at Rx$i$ is given by
\begin{align*}
Y_i = h_{ii}X_i + h_{ij} X_j + \bar Z_i
\end{align*}
for $\ijj$, where $\bar Z_i = Z_{ii}+Z_{ij}$ is the sum of two independent $\cgauss{0}{\frac{1}{2}}$ random variables as in the enhanced channel. Since $\bar Z_i$ and $Z_i$ (the noise in the original channel) have the same probability distribution and are both i.i.d. processes across time and across users, the joint distribution of the channel $p(y_1,y_2|x_1, x_2)$ is identical for both channels, and hence they have the same feedback capacity region, \emph{i.e.}, $\mcal{C}_i(p_1,p_2)=\mcal{C}_G(p_1,p_2)$.

Next, comparing the intermediate channel and the enhanced channel, we note that any rate pair $\lp R_1,R_2\rp$ achievable in the intermediate channel is also achievable for the enhanced channel using the same pair of codes, using the fact that $\bar Y_i = Y_{ii} + Y_{ij}$. Therefore, $\mcal{C}_i(p_1,p_2) \subseteq \mcal{C}_e(p_1,p_2)$, which completes the proof.
\end{proof}

\begin{remark}
We note that a similar channel enhancement technique has been applied before by Khisti and Lapidoth \cite{KhistiLapidoth_13}, for Gaussian multiple-access channel with intermittent feedback. In that work, the variances of the random variables $Z_{ii}$ and $Z_{ij}$ are not fixed, but are arbitrary, subject to the constraint that they sum to one. Although one can optimize over the noise variances in order to obtain the tightest bound, this only results in a small and constant improvement. Hence, for simplicity, we stick to the fixed variance of $\frac{1}{2}$ for the noise variables of the enhanced channel.
\end{remark}

Using Lemma~\ref{lem:enhanced}, we can instead prove outer bounds for the enhanced channel. In Appendix~\ref{sec:ap_ob_g}, we prove the following bounds.
\begin{align}
&R_1+R_2 \leq \log \lp 1+ \INR_1+ \frac{\SNR_1+2\sqrt{\SNR_1 \cdot \INR_1}}{1+\INR_2} \rp\notag \\
&\quad+ \log \lp 1+ \INR_2+ \frac{\SNR_2+2\sqrt{\SNR_2 \cdot \INR_2}}{1+\INR_1} \rp \notag\\
&\quad + p_1 \log \lp \frac{\lp 1+2\INR_1 \rp \lp 1+\frac{\SNR_1}{\INR_2+\frac{1}{2}}\rp}{1+ \INR_1+ \frac{\SNR_1+2\sqrt{\SNR_1 \cdot \INR_1}}{1+\INR_2}} \rp  \notag\\
&\quad+ p_2 \log \lp \frac{\lp 1+2\INR_2 \rp \lp 1+\frac{\SNR_2}{\INR_1+\frac{1}{2}}\rp}{1+ \INR_2+ \frac{\SNR_2+2\sqrt{\SNR_2 \cdot \INR_2}}{1+\INR_1}} \rp \label{eq:ob_g_R1R2}\\
&2R_1+R_2 \leq \log\lp 1 + \SNR_1 + \INR_1 + 2\sqrt{\SNR_1\cdot\INR_1} \rp \notag\\
&\quad+ \log\lp1 + \frac{\SNR_1}{\frac{1}{2}+\INR_2}\rp   \notag\\
&\quad + \log \lp 1+ \INR_2+ \frac{\SNR_2+2\sqrt{\SNR_2 \cdot \INR_2}}{1+\INR_1} \rp\notag\\
&\quad+ p_2 \log \lp \frac{\lp 1+2\INR_2 \rp \lp 1+\frac{\SNR_2}{\INR_1+\frac{1}{2}}\rp}{1+ \INR_2+ \frac{\SNR_2+2\sqrt{\SNR_2 \cdot \INR_2}}{1+\INR_1}} \rp \label{eq:ob_g_2R1R2}\\
&R_1+2R_2 \leq \log\lp 1 + \SNR_2 + \INR_2 + 2\sqrt{\SNR_2\cdot\INR_2} \rp\notag\\
&\quad + \log\lp1 + \frac{\SNR_2}{\frac{1}{2}+\INR_1}\rp \notag\\
&\quad + \log \lp 1+ \INR_1+ \frac{\SNR_1+2\sqrt{\SNR_1 \cdot \INR_1}}{1+\INR_2} \rp \notag\\
&\quad+ p_1 \log \lp \frac{\lp 1+2\INR_1 \rp \lp 1+\frac{\SNR_1}{\INR_2+\frac{1}{2}}\rp}{1+ \INR_1+ \frac{\SNR_1+2\sqrt{\SNR_1 \cdot \INR_1}}{1+\INR_2}} \rp \label{eq:ob_g_R12R2}
\end{align}

\section{Discussion and Extensions}  \label{sec:discussion}
We considered the interference channel with intermittent feedback, and derived
an approximate characterization of the capacity region under Gaussian model,
as well as an exact characterization for the linear deterministic case. The result
shows that even intermittent feedback provides multiplicative gain in capacity in
interference channels. The achievability result was based on quantize-map-forward
relaying at the transmitters, and the outer bound result was based on a channel
enhancement technique.

In this paper, we considered short messaging, \emph{i.e.}, a new message is sent
at every block of transmission. An alternate approach one could try is long messaging, 
where the transmitters send codewords describing the same message at every block,
and the receivers jointly decode all blocks to recover the message. The clear advantage
of short messaging approach is better delay performance, since each message is decoded
immediately after the transmission of the corresponding block, instead of waiting for
the end of the entire transmission. However, combined with forward decoding, the rate region
achievable by this strategy cannot approximate the entire capacity region by itself, as can 
be seen from the results of Section~\ref{sec:achievability}; we need to take the union with
Han-Kobayashi rate region to approximate the entire capacity region. This is because while decoding
block $b$, part of the message of block $b+1$ is jointly decoded by treating the interference
of block $b+1$ as noise, which limits the rate in certain operating points. Hence, long-messaging
approach would remove the need for taking union with Han-Kobayashi region and simplify the
proof, since all blocks are jointly decoded. Such an approach has been taken in \cite{Zaidi_14}
to derive an inner bound on the capacity region of interference channels with generalized
feedback, which overlaps with the capacity region \eqref{eq:ldc_R1}--\eqref{eq:ldc_R12R2} for the special case of linear deterministic IC with intermittent feedback.

The extension to parallel channels is carried out for the special case of identical subchannels in this work. An important generalization can be the case where the channel gains of the subchannels are not necessarily the same. The main obstacle in generalizing our achievable scheme to this case is that it distinguishes the cases of weak and strong interference, although such a separation is not possible for vector channels. Again, long-messaging can be a strategy to circumvent this issue, since it removes the need for making such a distinction between weak and strong interference regimes \cite{Zaidi_14}, albeit at the cost of a much larger delay.

Another important extension could be to the additive white Gaussian noise (AWGN) feedback model of \cite{LeTandon_12}. Since this model assumes passive feedback as well, our quantize-map-forward based scheme can be directly applied to to this channel model. The results of this paper indicate that quantize-map-forward, as a feedback strategy, might be a promising candidate as an approximately-capacity-achieving scheme for AWGN feedback model. However, this investigation is not the focus of this paper, and is left as future work.

\bibliography{Ref}

\appendices

\renewcommand{\thetheorem}{\Alph{section}.\arabic{theorem}}
\renewcommand{\thelemma}{\Alph{section}.\arabic{lemma}}
\renewcommand{\thecorollary}{\Alph{section}.\arabic{corollary}}
\renewcommand{\theremark}{\Alph{section}.\arabic{remark}}
\renewcommand{\theclaim}{\Alph{section}.\arabic{claim}}

\section{Proof of Lemma \ref{lem:listsize}} \label{sec:ap_listsize}
Choose $\epsilon>0$. We suppress the dependence of variables on block index $b$ for simplicity. Consider, for $\ijj$,
\begin{align*}
\E{2^{NK_i}} &= \E{\sum_{q_i = 1}^{2^{Nr_i}} \ind{q_i: \intypset{ X^N_{jf}, U_i^N (q_i) }} } \\
&= \sum_{q_i = 1}^{2^{Nr_i}} \Prob{ \intypset{ X^N_{jf}, U_i^N (q_i) } } \\
&= 2^{Nr_i} \Prob{ \intypset{ X^N_{jf}, U_i^N (1) } }
\end{align*}
Since $U_i^N (1) $ is generated independently from $X^N_{jf}$, by packing lemma \cite{El-GamalKim_11}, there exists $\delta(\epsilon)$ with $\delta(\epsilon) \to 0$ such that
\begin{align*}
\E{2^{NK_i}} \leq 2^{Nr_i} 2^{-N \lb I(X_{jf}; U_i) - \delta(\epsilon)/3 \rb}
\end{align*}
for all $N$. 

Next consider the variance of $2^{NK_i}$.
\begin{align*}
&var (2^{NK_i}) = var \lp \sum_{q_i = 1}^{2^{Nr_i}} \ind{q_i: \intypset{ X^N_{jf}, U_i^N (q_i) }} \rp \\ &\quad\overset{\aaaa}{=}  \sum_{q_i = 1}^{2^{Nr_i}} var \lp \ind{q_i: \intypset{ X^N_{jf}, U_i^N (q_i) }} \rp \\
&\quad=  \sum_{q_i = 1}^{2^{Nr_i}} \E{ \ind{q_i: \intypset{ X^N_{jf}, U_i^N (q_i) }} } \\
&\qquad\qquad\qquad\cdot \lp 1 - \E{ \ind{q_i: \intypset{ X^N_{jf}, U_i^N (q_i) }} } \rp \\
&\quad=  \sum_{q_i = 1}^{2^{Nr_i}} \Prob { \intypset{ X^N_{jf}, U_i^N (q_i) } }\\
&\qquad\qquad\qquad \cdot \lp 1 - \Prob{ \intypset{ X^N_{jf}, U_i^N (q_i)  }} \rp \\
&\quad\overset{\bbbb}{=}  2^{Nr_i} p_N (1 - p_N)\leq 2^{Nr_i} p_N
\end{align*}
where (a) is due to independence of the indicator variables, and we have defined $p_N:= \Prob{ \intypset{ X^N_{jf}, U_i^N (1) } }$ in (b). Hence, there exists $N_1$ such that for all $N>N_1$,
\begin{align*}
var (2^{NK_i}) \leq 2^{Nr_i} 2^{-N \left[ I(X_{jf}; U_i) - \delta(\epsilon) \right]} 
\end{align*}
for some $\delta(\epsilon)$ with $\delta(\epsilon) \to 0$ as $\epsilon \to \infty$.

Define $\eta := 2^{Nr_i} 2^{-N \left[ I(X_{jf}; U_i) - \delta(\epsilon)/3 \right]} (2^{N\delta(\epsilon)/3} - 1)$, and the sequence of events
\begin{align*}
\mcal{E}_n := \lbp \left| 2^{ (n+N_1) K_i} - \E{2^{ (n+N_1) K_i}}  \right| > \eta \rbp
\end{align*}
indexed by $n\geq 1$.

Borel-Cantelli lemma \cite{Durrett_10} states that if $\sum_{n=1}^\infty \Prob {\mcal{E}_n}<\infty$, then $\Prob{\mcal{E}_n \text{ infinitely often}}=0$. Then consider
\begin{align*}
\sum_{n=1}^\infty \Prob {\mcal{E}_n} &=\sum_{n=1}^\infty \Prob {\left| 2^{ (n+N_1) K_i} - \E{2^{ (n+N_1) K_i }}  \right| > \eta} \\
& \overset{\aaaa}{\leq} \sum_{n=1}^\infty \frac{var \lp 2^{ (n+N_1) K_i } \rp}{\eta^2} \\
&\leq \sum_{n=1}^\infty \frac{1}{2^{n \lb r_i - I(X_{jf}; U_i) + \delta(\epsilon)/3 \rb} (2^{n\delta(\epsilon)/3} - 1)^2} \\
&\overset{\bbbb}{<} \infty
\end{align*}
where (a) follows by Chebyshev's inequality, and (b) is because exponentially decaying series converge, and $r_i > I(\wtild V_j; U_i) \geq I(X_{jf}; U_i)$ where the first inequality is by covering lemma \cite{El-GamalKim_11}, and the second is by data processing inequality (recall that $X_{jf} - \wtild V_j - U_i$ is a Markov chain). Therefore, with probability one, there exists a finite integer $N_2 \geq N_1$ such that for all $N \geq N_2$,
\begin{align*}
2^{NK_i} < 2^{Nr_i} 2^{-N \left[ I(X_{jf}, U_i) - 2\delta(\epsilon)/3 \right]}
\end{align*}
Choosing $r_i = I(\wtild V_j; U_i) + \delta(\epsilon)/3$, taking the logarithm of both sides, and dividing by $N$, we get the desired result.

\section{Proofs of Lemmas \ref{lem:weak_fb}, \ref{lem:weak_hk}, \ref{lem:strong_fb} and \ref{lem:strong_hk}} \label{sec:ap_achievability}
\subsection{Notation} \label{subsec:notation}
We will often suppress the dependence on block index $b$ and block length $N$ for brevity.

For any given set of message indices $\lp m_1, n_1, m_2\rp$, define the following events, with a little abuse of notation
\begin{align*}
&T(m_1, n_1, m_2) := \lbp \exists \lp q_1, q_2, q_2 ' \rp \text{ s.t. \eqref{eq:weak_fb_dec} holds for the} \right. \\ & \left. \text{indices $(m_1, n_1, m_2, q_1, q_2, q_2 ')$} \rbp, \\
&T(m_1, n_1, m_2, q_1, q_2, q_2 ') := \lbp \text{\eqref{eq:weak_fb_dec} holds for the indices} \right. \\  &\left. \text{$(m_1, n_1, m_2, q_1, q_2, q_2 ')$} \rbp. 
\end{align*}
We also define the following quantization error event at Tx$i$
\begin{align*}
E_i &= \lbp \wtild V_j^N \in \mcal{T}^{(N)}_{\epsilon '}, \; \nintypset{\wtild V_j^N, U_i^N (q_i)} \; \forall q_i \rbp \\
 &\quad \cup \lbp \wtild V_j^N \notin \mcal{T}^{(N)}_{\epsilon '} \rbp
\end{align*}
for $\ijj$, and $E := E_1 \cup E_2$.

Without loss of generality, we assume that the correct message and quantization indices correspond to the index 1, \emph{i.e.} $\lp W_{1c}, W_{1p}, W_{2c}, Q_1, Q_2 \rp = \lp 1,1,1,1,1 \rp$ for all blocks. We introduce the notation 
\begin{align*}
\mcal{\bar B}_i (b) := \mcal{B}_i (b) \setminus \lbp1\rbp.
\end{align*}
An arbitrary element of the set $\mcal{\bar B}_i (b)$ will be denoted with $\bar q_i$, or $\bar q_i '$.
In this analysis, we focus on an arbitrary block $b$, but we will also need to refer to variables from block $b-1$. The variables associated with block $b-1$ will be represented with a caron notation when in single letter form. For example, while $\check X_{2e}$ is the single letter form for $X_{2e}^N (b-1)$, $X_{2e}$ is the single letter form for $X_{2e}^N (b)$. The feedback state pair $\ul{S}=\lp S_1,S_2\rp$ is assumed to be conditioned upon in all the mutual information terms (since the receivers have access to this information causally), but will be omitted for brevity.

\subsection{Claims}
In this subsection, we will prove two simple claims that will be useful in bounding the probability of decoding error.
\begin{claim} \label{cl:generalized_ub}
Let $A_k$, $k=1,2,...$ be a sequence of i.i.d. events. Let $\mcal{S} \subset \mbb{N}$ be a random subset of natural numbers (not necessarily independent from the events $A_k$) such that $\left| \mcal{S}\right| \leq M$ a.s. for some real number $M$, and $\Prob{A_k | \mcal{S}} = \Prob{A_m | \mcal{S}}$ a.s. for all $(k,m)$ pairs. Then
\begin{align*}
\Prob{\bigcup_{k \in \mcal{S}} A_k} \leq M \Prob{A_j}
\end{align*}
for an arbitrary $j$.
\end{claim}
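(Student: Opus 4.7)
The plan is to condition on the random index set $\mcal{S}$, apply a deterministic union bound over its realization, collapse the resulting sum using the assumed equality of conditional probabilities, and finally de-condition via the tower property. The statement is essentially a conditional refinement of the standard union bound, with the only ingredient beyond the textbook version being the exchangeability-like hypothesis on $\Prob{A_k\,|\,\mcal{S}}$.

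In more detail, I would first write $\Prob{\bigcup_{k \in \mcal{S}} A_k} = \E{\Prob{\bigcup_{k \in \mcal{S}} A_k \,|\, \mcal{S}}}$. For each realization of $\mcal{S}$ the indexing set of the union is deterministic, so the ordinary finite union bound yields $\Prob{\bigcup_{k\in\mcal{S}}A_k\,|\,\mcal{S}}\le\sum_{k\in\mcal{S}}\Prob{A_k\,|\,\mcal{S}}$ almost surely. The hypothesis $\Prob{A_k\,|\,\mcal{S}}=\Prob{A_m\,|\,\mcal{S}}$ a.s.\ for every pair $(k,m)$ then lets me replace every summand by $\Prob{A_j\,|\,\mcal{S}}$ for the fixed arbitrary index $j$, which collapses the sum to $|\mcal{S}|\,\Prob{A_j\,|\,\mcal{S}}$. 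Invoking the a.s.\ bound $|\mcal{S}|\le M$ and taking expectation of both sides then gives
\begin{align*}
\Prob{\bigcup_{k\in\mcal{S}}A_k}\le M\,\E{\Prob{A_j\,|\,\mcal{S}}}=M\,\Prob{A_j},
\end{align*}
where the final equality is the tower property, completing the argument.

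The role of the i.i.d.\ assumption on $\{A_k\}$ is only to ensure that $\Prob{A_j}$ on the right-hand side is a common unambiguous value across $j$; the substantive hypothesis is the conditional-equality one, which is what allows $\Prob{A_j\,|\,\mcal{S}}$ to be pulled out of the sum even though $\mcal{S}$ may be statistically dependent on the $A_k$. I do not anticipate any real obstacle — the potential dependence between $\mcal{S}$ and $\{A_k\}$ looks threatening but is exactly neutralized by conditioning on $\mcal{S}$. The claim will then be used in the error analysis whenever one needs to union-bound a pairwise error event over the random list of candidate quantization indices drawn from $\mcal{B}_i^{(N)}(b)$, whose cardinality is controlled by Lemma~\ref{lem:listsize}.
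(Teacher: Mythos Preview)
Your proposal is correct and follows essentially the same route as the paper's proof: condition on $\mcal{S}$ via the tower property, apply the union bound over the now-deterministic index set, use the hypothesis $\Prob{A_k\mid\mcal{S}}=\Prob{A_m\mid\mcal{S}}$ to collapse the sum to $|\mcal{S}|\,\Prob{A_j\mid\mcal{S}}$, bound $|\mcal{S}|\le M$, and de-condition. The only cosmetic difference is that the paper phrases everything in terms of conditional expectations of indicators rather than conditional probabilities, but the argument is step-for-step the same.
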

\begin{proof}
\begin{align*}
\Prob{\bigcup_{k \in \mcal{S}} A_k} &= \E{ \E{ \inds{\cup_{k\in \mcal{S}} A_k} | \mcal{S} } } \leq \E{ \E{ \left. \sum_{k \in \mcal{S}} \inds{A_k} \right| \mcal{S} } } \\
&\overset{\aaaa}{=} \E{ \E{\left| \mcal{S}\right| \inds{A_1} | \mcal{S} } } = \E{ \left|\mcal{S}\right| \E{ \inds{A_1} | \mcal{S} } } \\
&\leq \E{ M \E{ \inds{A_1} | \mcal{S} } } = M \E{ \E{ \inds{A_1} | \mcal{S} } } = M \Prob{A_1}
\end{align*}
where (a) follows by the fact that $\Prob{A_k | \mcal{S}}$ is the same for all $k$.
\end{proof}
\begin{claim} \label{cl:generalized_pl}
Let $\lp X^N,Y^N,Z^N \rp$ be distributed i.i.d. according to $p(x,y,z)$, and $\lp \wtild X^N, \wtild Y^N, \wtild Z^N \rp$ be distributed i.i.d. according to $p(x)p(y)p(z)$. Then there exists $\delta (\epsilon)$ with $\lim_{\epsilon \to 0} \delta (\epsilon) = 0$ such that
\begin{align*}
\Prob{\intypset{\wtild X^N, \wtild Y^N, \wtild Z^N}} \leq 2^{-N \lb I(X;Y) + I(Z;X,Y) - \delta(\epsilon) \rb}
\end{align*}
\end{claim}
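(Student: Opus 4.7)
The plan is to treat this as a direct generalization of the standard packing lemma to three mutually independent sequences, exploiting the identity
\[
I(X;Y) + I(Z;X,Y) = H(X) + H(Y) + H(Z) - H(X,Y,Z),
\]
which is essentially the multi-information of the triple $(X,Y,Z)$. The bound we want to prove says that the probability of three independently drawn sequences falling into the jointly typical set decays at this multi-information rate.

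First I would expand the probability as a sum over the joint typical set:
\[
\Prob{\intypset{\wtild X^N, \wtild Y^N, \wtild Z^N}} = \sum_{(x^N,y^N,z^N)\in\mcal{T}^{(N)}_\epsilon} p(x^N)\,p(y^N)\,p(z^N),
\]
where each factor is the product measure of the corresponding marginal. Next I would invoke two standard properties of strongly typical sequences. (i) Size of the joint typical set: $\bigl|\mcal{T}^{(N)}_\epsilon\bigr| \le 2^{N[H(X,Y,Z)+\delta_1(\epsilon)]}$. (ii) Marginal consistency: if $(x^N,y^N,z^N)\in\mcal{T}^{(N)}_\epsilon$, then $x^N$ is typical with respect to $p(x)$, so $p(x^N)\le 2^{-N[H(X)-\delta_2(\epsilon)]}$, and similarly $p(y^N)\le 2^{-N[H(Y)-\delta_2(\epsilon)]}$ and $p(z^N)\le 2^{-N[H(Z)-\delta_2(\epsilon)]}$, with each $\delta_i(\epsilon)\to 0$ as $\epsilon\to 0$.

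Combining these three pointwise bounds with the cardinality bound gives
\[
\Prob{\intypset{\wtild X^N, \wtild Y^N, \wtild Z^N}} \le 2^{N[H(X,Y,Z)+\delta_1(\epsilon)]}\cdot 2^{-N[H(X)+H(Y)+H(Z)-3\delta_2(\epsilon)]},
\]
and collecting exponents with $\delta(\epsilon):=\delta_1(\epsilon)+3\delta_2(\epsilon)$ yields the desired
\[
\Prob{\intypset{\wtild X^N, \wtild Y^N, \wtild Z^N}} \le 2^{-N[I(X;Y)+I(Z;X,Y)-\delta(\epsilon)]}.
\]

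There is no real obstacle here: the argument is entirely mechanical once the multi-information identity is recognized. The only mild care that is needed is consistency of the typicality parameter: one has to take the $\delta(\epsilon)$ appearing in the marginal bound and in the cardinality bound from the same underlying $\epsilon$, which is handled uniformly by the standard strong typicality lemmas (see, e.g., \cite{El-GamalKim_11}). This claim is essentially the ``two-step'' packing lemma that will be invoked in the error analysis of Appendix~\ref{sec:ap_achievability} whenever one needs to bound the probability that a spurious codeword triple (e.g., a wrong quantization index at one transmitter, a wrong common message index at the other, and the current common codeword) is jointly typical with the intended signals.
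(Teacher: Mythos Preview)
Your proof is correct and is essentially the same as the paper's, just written out more explicitly. The paper compresses the entire counting argument into a single line via the relative-entropy bound
\[
\Prob{\intypset{\wtild X^N,\wtild Y^N,\wtild Z^N}} \le 2^{-N\lb D(P_{X,Y,Z}\Vert P_X P_Y P_Z)-\delta(\epsilon)\rb},
\]
and then notes $D(P_{X,Y,Z}\Vert P_X P_Y P_Z)=I(X;Y)+I(Z;X,Y)$; your typical-set cardinality bound plus marginal probability bounds are precisely the ingredients underlying that divergence inequality.
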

\begin{proof}
\begin{align*}
&\Prob{\intypset{\wtild X^N, \wtild Y^N, \wtild Z^N}} \\
&\qquad \leq 2^{-N \lb D\lp P_{X,Y,Z}||P_X P_Y P_Z\rp-\delta(\epsilon)\rb} \\
&\qquad = 2^{-N \lb I(X;Y) + I(Z;X,Y) - \delta(\epsilon) \rb}
\end{align*}
where $D\lp P||Q\rp$ is the relative entropy between probability distributions $P$ and $Q$.
\end{proof}

\subsection{Proof of Lemma~\ref{lem:weak_fb}}
We will show that there exists a sequence of codes such that $\Prob{\mcal{D}_{FB,w}} \to 0$ \emph{exponentially}, if the given rate constraints are satisfied, which implies the claimed result. The probability of the decoding error event $\mcal{D}_{FB,w}$ can be bounded by
\begin{align}
\Prob{\mcal{D}_{FB,w}} &= \Prob{E} \Prob{\mcal{D}_{FB,w} | E} + \Prob{E^c} \Prob{\mcal{D}_{FB,w} | E^c} \notag \\
&\leq \Prob{E} + \Prob{\mcal{D}_{FB,w} | E^c} \notag \\
&\leq \Prob{E_1} + \Prob{E_2} + \Prob{\mcal{D}_{FB,w} | E^c} \label{eq:error1}
\end{align}
If we choose the rates of the quantization codebooks such that $r_i > I(\wtild V_j; U_i)$, for $\ijj$, by covering lemma \cite{El-GamalKim_11}, $\Prob{E_1}, \Prob{E_2} \to 0$. Therefore, it is sufficient to show that $\Prob{\mcal{D}_{FB,w} | E^c}$ vanishes if the conditions in the lemma are satisfied.

The decoding error event $\mcal{D}_{FB,w}$ can also be expressed as the following union of events.
\begin{align}
\mcal{D}_{FB,w} &= \bigcup_{m_1 \neq 1} T(m_1, 1, 1) \cup \bigcup_{n_1 \neq 1} T(1, n_1, 1) \notag\\
&\quad \cup \bigcup_{m_2 \neq 1} T(1, 1, m_2) \cup \bigcup_{\substack{m_1 \neq 1 \\ n_1 \neq 1}} T(m_1, n_1, 1) \notag\\
&\quad \cup \bigcup_{\substack{m_1 \neq 1 \\ m_2 \neq 1}} T(m_1, 1, m_2) \cup \bigcup_{\substack{n_1 \neq 1 \\ m_2 \neq 1}} T(1, n_1, m_2) \notag\\
&\quad \cup \bigcup_{\substack{m_1 \neq 1 \\ n_1 \neq 1 \\ m_2 \neq 1}} T(m_1, n_1, m_2) \cup T^c (1,1,1)  \label{eq:error2}
\end{align}
Using the union bound on \eqref{eq:error2}, probability of decoding error conditioned on quantization success can be bounded by
\begin{align}
&\Prob{\mcal{D}_{FB,w}|E^c} = \notag\\
&\quad 2^{NR_{1c}} \Prob{T(m_1, 1, 1)|E^c} + 2^{NR_{1p}} \Prob{T(1, n_1, 1)|E^c} \notag \\
&\quad + 2^{NR_{2c}} \Prob{T(1, 1, m_2)|E^c} + 2^{NR_{1}} \Prob{T(m_1, n_1, 1)|E^c} \notag \\
&\quad + 2^{N(R_{1c}+R_{2c})} \Prob{T(m_1, 1, m_2)|E^c} \notag\\
&\quad + 2^{N(R_{1p}+R_{2c})} \Prob{T(1, n_1, m_2)|E^c} \notag \\
&\quad + 2^{N(R_{1}+R_{2c})} \Prob{T(m_1, n_1, m_2)|E^c} \notag\\
&\quad + \Prob{T^c (1,1,1)|E^c} \label{eq:dec_error} 
\end{align}
Note that conditioned on successful quantization, the relevant random variables are distributed i.i.d. over time according to the joint distribution
\begin{align}
&p(\check x_{1p}, \check x_{2e}, \check x_{2c}, u_1, x_{1e}, x_{1c}, x_{2e}, \check y_1, y_1) = \notag\\
&\quad p(\check x_{1p}) p(\check x_{2e}) p(\check x_{2c}) p( x_{1e}) p( x_{1c}) p(x_{2e}) \notag \\
&\quad \cdot p(u_1| \check y_1,\check x_{1p} ) p(\check y_1 | \check x_{2e}, \check x_{2c}, \check x_{1p}) p(y_1|x_{1e}, x_{1c}, x_{2e}) \label{eq:joint_dist}
\end{align}
Next, we bound the error terms one by one. In what follows, joint typicality is sought with respect to the joint distribution in $(\ref{eq:joint_dist})$. The first term is bounded by $\Prob{T^c (1,1,1)|E^c}<\epsilon$ by law of large numbers. 
\begin{figure}
\centering
\includegraphics[scale=0.9]{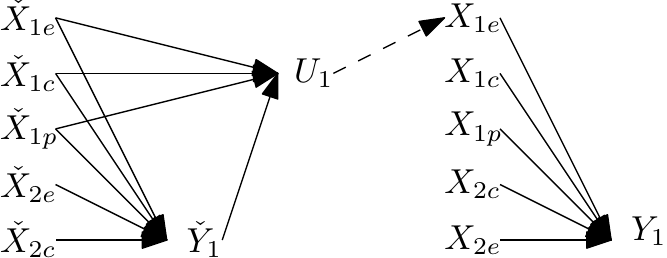}
\centering
\caption{Markov network showing the dependence of the relevant variables. The variables connected with the dashed arrow are independent in the single-letter form, although in multi-letter form they are not.}
\label{fig:dependence}
\end{figure}

The second error term in \eqref{eq:dec_error} can be bounded as follows.
\begin{align*}
&\Prob{T(m_1, 1, 1)|E^c} = \mbb{P} \lp \left. \bigcup_{q_1, q_2, q_2 '} T(m_1, 1, 1, q_1, q_2, q_2 ') \right| E^c \rp \\
&= \mbb{P} \lp \bigcup_{\substack{q_1 \neq 1, q_2 \neq 1 \\ q_2 ' \neq 1}}  T(m_1, 1, 1, q_1, q_2, q_2 ')\right.\\
&\quad \cup \bigcup_{\substack{q_2 \neq 1 \\ q_2 ' \neq 1}} T(m_1, 1, 1, 1, q_2, q_2 ') \cup \bigcup_{\substack{q_1 \neq 1 \\ q_2 ' \neq 1}} T(m_1, 1, 1, q_1, 1, q_2 ') \\
&\quad \cup \bigcup_{\substack{q_1 \neq 1 \\ q_2  \neq 1}} T(m_1, 1, 1, q_1, q_2, 1) \cup \bigcup_{q_1 \neq 1} T(m_1, 1, 1, q_1, 1, 1) \\
&\quad \cup \bigcup_{ q_2  \neq 1} T(m_1, 1, 1, 1, q_2, 1) \cup \bigcup_{q_2 ' \neq 1} T(m_1, 1, 1, 1, 1, q_2 ') \\
&\quad \cup T(m_1, 1, 1, 1, 1, 1)  \left. \left| E^c \vphantom{\bigcup_{\substack{q_1 \neq 1, q_2 \neq 1 \\ q_2 ' \neq 1}}} \right. \rp \\
&\overset{\aaaa}{=} \mbb{P} \left( \bigcup_{\substack{q_1 \in \mcal{\bar B}_1 (1,q_2) \\ q_2 \in \mcal{\bar B}_2 (b-1), q_2 ' \in \mcal{\bar B}_2 (b)}}  T(m_1, 1, 1, q_1, q_2, q_2 ') \right. \\
&\quad \cup \bigcup_{\substack{q_2 \in \mcal{\bar B}_2 \\ q_2 ' \in \mcal{\bar B}_2 (b)}} T(m_1, 1, 1, 1, q_2, q_2 ') \\
&\quad \cup \bigcup_{\substack{q_1 \in \mcal{\bar B}_1 (b) \\ q_2 ' \in \mcal{\bar B}_2 (b)}} T(m_1, 1, 1, q_1, 1, q_2 ') \\
&\quad \cup \bigcup_{\substack{q_1 \in \mcal{\bar B}_1 (1,q_2) \\ q_2 \in \mcal{\bar B}_2 (b-1)}} T(m_1, 1, 1, q_1, q_2, 1) \\
&\quad \cup \bigcup_{q_1 \in \mcal{\bar B}_1 (b)} T(m_1, 1, 1, q_1, 1, 1) \\
&\quad \cup \bigcup_{ q_2 \in \mcal{\bar B}_2 (b-1)} T(m_1, 1, 1, 1, q_2, 1) \\
&\quad \cup \bigcup_{q_2 ' \in \mcal{\bar B}_2 (b)} T(m_1, 1, 1, 1, 1, q_2 ')\\
&\quad \cup T(m_1, 1, 1, 1, 1, 1) \left. \left| E^c  \vphantom{\bigcup_{\substack{q_1 \neq 1, q_2 \neq 1 \\ q_2 ' \neq 1}}} \right. \rp \\
&\overset{\bbbb}{\leq} \Prob{ \left. \bigcup_{\substack{q_1 \in \mcal{\bar B}_1 (1,q_2) \\ q_2 \in \mcal{\bar B}_2 (b-1), q_2 ' \in \mcal{\bar B}_2 (b)}} T(m_1, 1, 1, q_1, q_2, q_2 ') \right| E^c} \\
&\quad + \Prob{ \left. \bigcup_{\substack{q_2 \in \mcal{\bar B}_2 (b-1) \\ q_2 ' \in \mcal{\bar B}_2 (b)}} T(m_1, 1, 1, 1, q_2, q_2 ') \right| E^c } \\
&\quad + \Prob{ \left. \bigcup_{\substack{q_1 \in \mcal{\bar B}_1 (b) \\ q_2 ' \in \mcal{\bar B}_2 (b)}} T(m_1, 1, 1, q_1, 1, q_2 ') \right| E^c } \\
&\quad + \Prob{ \left. \bigcup_{\substack{q_1 \in \mcal{\bar B}_1 (1,q_2(b-1)) \\ q_2 \in \mcal{\bar B}_2 (b-1)}} T(m_1, 1, 1, q_1, q_2, 1) \right| E^c }  \\
&\quad +\Prob{ \left. \bigcup_{q_1 \in \mcal{\bar B}_1 (b)} T(m_1, 1, 1, q_1, 1, 1) \right| E^c}\\
&\quad +\Prob{ \left. \bigcup_{ q_2 \in \mcal{\bar B}_2 (b-1)} T(m_1, 1, 1, 1, q_2, 1) \right| E^c }\\
&\quad + \Prob{ \left. \bigcup_{q_2 ' \in \mcal{\bar B}_2 (b)} T(m_1, 1, 1, 1, 1, q_2 ') \right| E^c} \\
&\quad + \Prob{ T(m_1, 1, 1, 1, 1, 1) |E^c } \\
&\overset{\cccc}{=} 2^{N(\kappa_1+2\kappa_2)} \Prob{ T(m_1, 1, 1,  q_1, q_2, q_2 ')|E^c } \\
&\quad + 2^{2N\kappa_2} \Prob{ T(m_1, 1, 1, 1, q_2, q_2 ') |E^c } \\
&\quad + 2^{N(\kappa_1+\kappa_2)} \Prob{ T(m_1, 1, 1, q_1, 1, q_2 ') |E^c } \\
&\quad + 2^{N(\kappa_1+\kappa_2)} \Prob{ T(m_1, 1, 1, q_1, q_2, 1)|E^c  } \\
&\quad +2^{N\kappa_1} \Prob{ T(m_1, 1, 1, q_1, 1, 1)  |E^c} \\
&\quad + 2^{N\kappa_2} \Prob{ T(m_1, 1, 1, 1, q_2, 1) |E^c }\\
&\quad +2^{N\kappa_2} \Prob{ T(m_1, 1, 1, 1, 1, q_2 ')  |E^c} \\
&\quad + \Prob{ T(m_1, 1, 1, 1, 1, 1) |E^c } \\
&\overset{\dddd}{\leq} 2^{N(\kappa_1+2\kappa_2)} \\
&\quad \cdot 2^{-N\lb I (U_1;\check X_{2e})+ I (\check X_{2e}, U_1, X_{1f}, X_{2e}; Y_1, \check Y_1, \check X_{2c}| \check X_{1}) - \delta(\epsilon) \rb} \\
&\quad + 2^{2N\kappa_2} 2^{-N \lb I(X_{1f}, X_{2e}, \check X_{2e};\check Y_1, Y_1 | U_1, X_{1e}, \check X_{2c}, \check X_{1}) - \delta(\epsilon)\rb} \\
&\quad + 2^{N(\kappa_1+\kappa_2)} \\
&\quad \cdot  2^{-N\lb I (U_1;\check X_{2e})+ I (\check X_{2e}, U_1, X_{1f}; Y_1, \check Y_1, \check X_{2c}| \check X_{1}, X_{2e}) - \delta(\epsilon) \rb} \\ 
&\quad + 2^{N(\kappa_1+\kappa_2)} 2^{-N\lb I(X_{1f}, X_{2e}, U_1; Y_1, \check Y_1, \check X_{2f}| \check X_1)-\delta(\epsilon)\rb} \\
&\quad +2^{N\kappa_1} 2^{-N\lb I(X_{1f}, U_1; Y_1, \check Y_1, \check X_{2f}| \check X_1, X_{2e})-\delta(\epsilon)\rb}\\
&\quad + 2^{N\kappa_2} 2^{-N\lb I(X_{1f}, \check X_{2e}; U_1, Y_1, \check Y_1|X_{1e}, X_{2e}, \check X_1, \check X_{2c})-\delta(\epsilon)\rb}  \\
&\quad +2^{N\kappa_2} 2^{-N\lb I(X_{1f}, X_{2e}; Y_1| X_{1e}, U_1, \check Y_1, \check X_1, \check X_{2f})-\delta(\epsilon)\rb} \\
&\quad + 2^{-N\lb I(X_{1f};Y_1| X_{1e}, U_1, \check Y_1, \check X_1, \check X_{2f}, X_{2e})-\delta(\epsilon)\rb} \\
&\overset{\eeee}{=} 2^{N \lp  \kappa_1 + 2\kappa_2 \rp}\\
&\quad \cdot 2^{-N\lb I(U_1;\check X_{2e}) + I (X_{1f}, X_{2e}; Y_1)+I (\check X_{2e}; \check Y_1|\check X_1, \check X_{2c}) - \delta(\epsilon) \rb}\\
&\quad + 2^{-N \lb I(X_{1f}, X_{2e}, \check X_{2e};\check Y_1, Y_1 | U_1, X_{1e}, \check X_{2c}, \check X_{1}) -2\kappa_2 - \delta(\epsilon)\rb} \\
&\quad +2^{N \lp  \kappa_1 + \kappa_2 \rp} \\
&\quad \cdot 2^{-N \lb I(U_1;\check X_{2e}) + I (X_{1f}; Y_1|X_{2e}) +I (\check X_{2e}; \check Y_1|\check X_1, \check X_{2c}) - \delta(\epsilon) \rb} \\
&\quad + 2^{-N\lb I(X_{1f}, X_{2e}; Y_1)+ I (U_1; \check Y_1 | \check X_1, \check X_{2f})- \kappa_1 - \kappa_2-\delta(\epsilon)\rb} \\
&\quad +2^{-N\lb I(X_{1f}; Y_1|X_{2e})+ I (U_1; \check Y_1 | \check X_1, \check X_{2f})-\kappa_1-\delta(\epsilon)\rb}\\
&\quad +2^{-N\lb I(X_{1f}, \check X_{2e}; U_1, Y_1, \check Y_1|X_{1e}, X_{2e}, \check X_1, \check X_{2c})-\kappa_2-\delta(\epsilon)\rb}  \\
&\quad +2^{-N\lb I(X_{1f}, X_{2e}; Y_1|X_{1e})-\kappa_2-\delta(\epsilon)\rb} \\
&\quad + 2^{-N\lb I(X_{1f};Y_1| X_{1e}, X_{2e})-\delta(\epsilon)\rb} \\
&\overset{\ffff}{\leq} 8 \cdot 2^{-N\lb I(X_{1f};Y_1| X_{1e}, X_{2e})-C_1-\delta(\epsilon)\rb}
\end{align*}
where
\begin{itemize}
\item (a) is since $T(m_1, 1, 1, q_1, q_2, q_2 ')$ is empty set for $q_1 \notin \mcal{B}_1 (b)$, $q_2 \notin \mcal{B}_2 (b-1)$, or $q_2 ' \notin \mcal{B}_2 ((q_2, m_2)(b))$, since for random variables $(X,Y,Z) \sim p(x,y,z)$, $\intypset{X,Y,Z}$ implies $\intypset{X,Y}$,
\item (b) follows by union bound,
\item (c) follows by Claim \ref{cl:generalized_ub}, where the upper bound on the size of the $\mcal{\bar B}_i$ sets for sufficiently large $N$ is given by Lemma \ref{lem:listsize},
\item (d) follows by packing lemma and Claim~\ref{cl:generalized_pl},
\item (e) is by manipulating the mutual information terms using the dependence structure of the involved variables (see Figure~\ref{fig:dependence}),
\item (f) is by upper bounding each of the eight terms with the same bound, using chain rule and non-negativity of mutual information.
\end{itemize}

Next, we bound the term $\Prob{T(1, n_1, m_2)|E^c}$. We apply steps (a)-(d), which are also applicable here, to obtain the following.
\begin{align*}
&\Prob{T(1, n_1, m_2)|E^c} \\
&\quad \leq 2^{N(\kappa_1+2\kappa_2)} \\
&\quad \cdot 2^{-N\lb I (U_1;\check X_{2f})+ I (\check X_1, \check X_{2f}, X_{2e}, U_1, X_{1e}; Y_1, \check Y_1, X_{1c}| \check X_{1f}) - \delta(\epsilon) \rb}\\
&\quad + 2^{2N\kappa_2} 2^{-N \lb I(\check X_{1}, \check X_{2f}, X_{2e};\check Y_1, Y_1, U_1| X_{1f}, \check X_{1f}) - \delta(\epsilon)\rb} \\
&\quad + 2^{N(\kappa_1+\kappa_2)} \\
&\quad \cdot 2^{-N\lb I (U_1;\check X_{2f})+ I (\check X_{2f}, \check X_{1}, U_1, X_{1e}; Y_1, \check Y_1| \check X_{1f}, X_{1c}, X_{2e}) - \delta(\epsilon) \rb} \\
&\quad + 2^{N(\kappa_1+\kappa_2)} \\
&\quad \cdot 2^{-N\lb I (\check X_{2f}, \check X_{1}, U_1, X_{1e}, X_{2e}; Y_1, \check Y_1| \check X_{1f}, X_{1c}, \check X_{2e})-\delta(\epsilon)\rb}\\
&\quad +2^{N\kappa_1} 2^{-N\lb I(\check X_{1}, \check X_{2f}, U_1, X_{1e}; Y_1, \check Y_1, \check X_{2e}| \check X_{1f}, X_{1c}, X_{2e})-\delta(\epsilon)\rb}\\
&\quad + 2^{N\kappa_2} 2^{-N\lb I(\check X_1, \check X_{2f}; U_1, Y_1, \check Y_1|X_{1f}, X_{2e}, \check X_{1f})-\delta(\epsilon)\rb}  \\
&\quad +2^{N\kappa_2} 2^{-N\lb I(\check X_1, \check X_{2f}, X_{2e}; Y_1, \check Y_1, U_1 | \check X_{1f}, X_{1f}, \check X_{2e})-\delta(\epsilon)\rb} \\
&\quad + 2^{-N\lb I(\check X_{1}, \check X_{2f};\check Y_1, U_1| Y_1, \check X_{1f}, \check X_{2e}, X_{1e}, X_{2e})-\delta(\epsilon)\rb} \\
&\overset{\eeee}{\leq} 4 \cdot 2^{-N\lb I (\check X_{2f}, \check X_{1}, U_1, X_{1e}; Y_1, \check Y_1| \check X_{1f}, \check X_{2e}, X_{1c}, X_{2e})-\delta(\epsilon)\rb} \\
&\quad + 4 \cdot 2^{-N\lb I (\check X_{2f}, \check X_{1}; Y_1, \check Y_1, U_1| \check X_{1f}, \check X_{2e}, X_{1f}, X_{2e})-\delta(\epsilon)\rb} \\
&\overset{\ffff}{\leq} 4 \cdot 2^{-N\lb I (X_{2f}, X_{1}; Y_1| X_{1c}, X_{2e})-\delta(\epsilon)\rb} \\
&\quad + 4 \cdot 2^{-N\lb I (\check X_{2f}, \check X_{1}; \check Y_1, U_1| \check X_{1f}, \check X_{2e})-\delta(\epsilon)\rb}
\end{align*}
where (e) is by upper bounding the first, third, fourth and fifth terms with the first term in (j), and the rest of the terms with the second; (f) is by rearranging the mutual information terms using chain rule and the fact that the distribution of variables are the same for each block.

In order to bound the term $\Prob{T(1, n_1, 1)|E^c}$ in \eqref{eq:dec_error}, we note that the the joint distribution \eqref{eq:joint_dist} has a similar structure with respect to $X_{1c}$ and $\check X_{1p}$, with the following mapping between random variables
\begin{align*}
\check X_{1p} &\leftrightarrow X_{1c}, \\
\lp \check Y_1, U_1 \rp &\leftrightarrow Y_1, \\
\lp \check X_{2e}, \check X_{2c} \rp &\leftrightarrow \lp X_{1e}, X_{2e} \rp
\end{align*}
Therefore, one can perform the steps (a)-(f) for the third error term as well, by switching the variables as above, to obtain the following bound
\begin{align*}
\Prob{T(1, n_1, 1)|E^c} &\leq 4 \cdot 2^{-N\lb I(\check X_{1};\check Y_1, U_1| \check X_{1f}, \check X_{2f})-C_1-\delta(\epsilon)\rb} \\
&\quad + 4 \cdot 2^{-N\lb I(\check X_{1};\check Y_1, U_1| \check X_{1c}, \check X_{2f})-C_1-\delta(\epsilon)\rb} \\
&\leq 8 \cdot 2^{-N\lb I(\check X_{1};\check Y_1| \check X_{1f}, \check X_{2f})-C_1-\delta(\epsilon)\rb}\\
&=8 \cdot 2^{-N\lb I(X_{1};Y_1| X_{1f}, X_{2f})-C_1-\delta(\epsilon)\rb}
\end{align*}
We have dropped the $U_1$ variable from the mutual information term for the sake of simplicity in evaluating the rate region, since its contribution is small. In the final step, we used the fact that the distribution of variables is the same for each block. We can obtain the following bounds for each error term in a similar way, by exploiting the structure of the joint distribution as done above and noting that the steps (a)-(f) are applicable with an appropriate mapping between the variables.
\begin{align*}
&\Prob{T(1, 1, m_2)|E^c} \\
&\quad \leq 8 \cdot 2^{-N\lb I(\check X_{2f};\check Y_1, U_1| \check X_{1}, \check X_{2e})-C_1-\delta(\epsilon)\rb} \\
&\quad \leq 8 \cdot 2^{-N\lb I(\check X_{2f};\check Y_1| \check X_{1}, \check X_{2e})-C_1-\delta(\epsilon)\rb} \\
&\quad = 8 \cdot 2^{-N\lb I(X_{2f};Y_1| X_{1}, X_{2e})-C_1-\delta(\epsilon)\rb} \\
&\Prob{T(m_1, n_1, 1)|E^c} \\
&\quad \leq 8 \cdot 2^{-N\lb I(\check X_{1}, X_{1f};\check Y_1, Y_1, U_1| \check X_{1f}, \check X_{2f}, X_{1e}, X_{2e})-C_1-\delta(\epsilon)\rb} \\
&\quad \leq 8 \cdot 2^{-N\lb I(\check X_{1}, X_{1f};\check Y_1, Y_1| \check X_{1f}, \check X_{2f}, X_{1e}, X_{2e})-C_1-\delta(\epsilon)\rb} \\
&\quad = 8 \cdot 2^{-N\lb I(\check X_{1};\check Y_1| \check X_{1f}, \check X_{2f})+I(X_{1f};Y_1| X_{1e}, X_{2e})-C_1-\delta(\epsilon)\rb} \\
&\quad = 8 \cdot 2^{-N\lb I(X_{1};Y_1| X_{1f}, X_{2f})+I(X_{1f};Y_1| X_{1e}, X_{2e})-C_1-\delta(\epsilon)\rb} \\
&\Prob{T(m_1, 1, m_2)|E^c} \\
&\quad \leq 8 \cdot 2^{-N\lb I(\check X_{2f}, X_{1f};\check Y_1, Y_1, U_1| \check X_{1}, \check X_{2e}, X_{1e}, X_{2e})-C_1-\delta(\epsilon)\rb} \\
&\quad \leq 8 \cdot 2^{-N\lb I(\check X_{2f}, X_{1f};\check Y_1, Y_1| \check X_{1}, \check X_{2e}, X_{1e}, X_{2e})-C_1-\delta(\epsilon)\rb} \\
&\quad = 8 \cdot 2^{-N\lb I(\check X_{2f};\check Y_1| \check X_{1}, \check X_{2e})+I(X_{1f};Y_1| X_{1e}, X_{2e})-C_1-\delta(\epsilon)\rb} \\
&\quad = 8 \cdot 2^{-N\lb I(X_{2f};Y_1| X_{1}, X_{2e})+I(X_{1f};Y_1| X_{1e}, X_{2e})-C_1-\delta(\epsilon)\rb} \\
&\Prob{T(m_1, n_1, m_2)|E^c} \\
&\quad \leq 8 \cdot 2^{-N\lb I(\check{X_1}, \check X_{2f}, X_{1f};\check Y_1, Y_1, U_1| \check X_{1f}, \check X_{2e}, X_{1e}, X_{2e})-C_1-\delta(\epsilon)\rb} \\
&\quad \leq 8 \cdot 2^{-N\lb I(\check X_1, \check X_{2f}, X_{1f};\check Y_1, Y_1| \check X_{1f}, \check X_{2e}, X_{1e}, X_{2e})-C_1-\delta(\epsilon)\rb} \\
&\quad = 8 \cdot 2^{-N\lb I(\check X_1, \check X_{2f};\check Y_1| \check X_{1f}, \check X_{2e})+I(X_{1f};Y_1|X_{1e}, X_{2e})-C_1-\delta(\epsilon)\rb} \\
&\quad = 8 \cdot 2^{-N\lb I( X_1, X_{2f}; Y_1| X_{1f}, X_{2e})+I(X_{1f};Y_1|X_{1e}, X_{2e})-C_1-\delta(\epsilon)\rb} \\
&\quad = 8 \cdot 2^{-N\lb I( X_1, X_{2f}; Y_1| X_{1e}, X_{2e})-C_1-\delta(\epsilon)\rb}
\end{align*}
Using these bounds in \eqref{eq:dec_error}, it is easy to see that if the following are satisfied, then $\Prob{\mcal{D}_{FB,w}|E^c} \to 0$ as $N \to \infty$ (Note that the bounds on $R_{1p}+R_{1c}$ and $R_{1c}+R_{2c}$ are redundant, as they can be expressed as a sum of other bounds),
\begin{align}
R_{1p} &< I(X_1;Y_1|X_{1f}, X_{2f}) - C_1 \label{eq:weak_fb_rc_first} \\
R_{1c} &< I(X_{1f};Y_1|X_{1e}, X_{2e}) - C_1 \\
R_{1p}+R_{2c} &< \min \lbp I (X_{2f}, X_{1}; Y_1| X_{1c}, X_{2e}), \right. \notag\\
&\left. I (X_{2f}, X_{1}; Y_1, U_1| X_{1f}, X_{2e}) \rbp - C_1 \\
R_1+R_{2c} &< I(X_1,X_{2f};Y_1|X_{1e}, X_{2e}) - C_1 \label{eq:weak_fb_rc_last} 
\end{align}
The rate constraint on $R_{1p}+R_{2c}$ provided in the lemma is slightly stricter, which allows us to show the redundancy of some of the bounds obtained later.

\subsection{Proof of Lemma~\ref{lem:strong_fb}}
We will show that there exists a sequence of codes such that $\Prob{\mcal{D}_{FB,s}} \to 0$ \emph{exponentially}, if the given rate constraints are satisfied, which implies the claimed result. 

Similar to the case of weak interference, choosing the quantization rates such that $r_i > I(\wtild V_j; U_i)$, for $\ijj$, probability of decoding error can be bounded by
\begin{align}
&\Prob{\mcal{D}_{FB,s}|E^c} \notag \\
&\quad = \Prob{T^c (1,1,1)|E^c} + 2^{NR_{1p}} \Prob{T(1, n_1, 1)|E^c} \notag \\
&\quad + 2^{NR_{2c}} \Prob{T(1, 1, m_2)|E^c} + 2^{NR_{1}} \Prob{T(m_1, n_1, 1)|E^c} \notag \\
&\quad + 2^{N(R_{1p}+R_{2c})} \Prob{T(1, n_1, m_2)|E^c} \notag\\
&\quad + 2^{N(R_{1}+R_{2c})} \Prob{T(m_1, n_1, m_2)|E^c} \label{eq:strong_dec_error} 
\end{align}
Note that conditioned on successful quantization, the relevant random variables are distributed i.i.d. over time according to the joint distribution
\begin{align}
&p(\check x_{1c}, \check x_{1p}, \check x_{2e}, x_{2c}, u_2, x_{2e}, x_{1e}, \check y_1, y_1) \notag\\
&\quad = p(\check x_{1c}) p(\check x_{1p}) p(\check x_{2e}) p( x_{1e}) p( x_{1c}) p(x_{2e}) \notag \\
&\quad \cdot p(u_2| \check x_{1c},\check x_{1p} ) p(\check y_1 | \check x_{1c}, \check x_{1c}, \check x_{2e}) p(y_1|x_{1e}, x_{2e}, x_{2c}).  \label{eq:sjoint_dist}
\end{align}
Next, we bound the error terms one by one. In what follows, joint typicality is sought with respect to the joint distribution in \eqref{eq:sjoint_dist}. The first term is bounded by $\Prob{T^c (1,1,1)|E^c}<\epsilon$ by law of large numbers. 

Now we take the third term, which is bounded as follows.
\begin{align*}
&\Prob{T(1, n_1, 1)|E^c} = \Prob{ \bigcup_{q_1, q_2, q_2 '} T(1, n_1, 1, q_1, q_2, q_2 ') |E^c} \\
&= \mbb{P} \lp \bigcup_{\substack{q_1 \neq 1, q_2 \neq 1 \\ q_2 ' \neq 1}} T(1, n_1, 1, q_1, q_2, q_2 ') \right. \\
&\quad \cup \bigcup_{\substack{q_2 \neq 1 \\ q_2 ' \neq 1}} T(1, n_1, 1, 1, q_2, q_2 ') \cup \bigcup_{\substack{q_1 \neq 1 \\ q_2 ' \neq 1}} T(1, n_1, 1, q_1, 1, q_2 ') \\
&\quad \cup \bigcup_{\substack{q_1 \neq 1 \\ q_2  \neq 1}} T(1, n_1, 1, q_1, q_2, 1) \cup \bigcup_{q_1 \neq 1} T(1, n_1, 1, q_1, 1, 1) \\
&\quad \cup \bigcup_{ q_2  \neq 1} T(1, n_1, 1, 1, q_2, 1) \cup \bigcup_{q_2 ' \neq 1} T(1, n_1, 1, 1, 1, q_2 ') \\
&\quad \cup T(1, n_1, 1, 1, 1, 1)  \left. \left| E^c \vphantom{\bigcup_{\substack{q_1 \neq 1, q_2 \neq 1 \\ q_2 ' \neq 1}}} \right. \rp \\
&\overset{\aaaa}{=} \mbb{P} \lp \bigcup_{\substack{q_1 \in \mcal{\bar B}_1 (q_2), q_2 \in \mcal{\bar B}_2 (b-1) \\ q_2 ' \in \mcal{\bar B}_2 (b)}} T(1, n_1, 1, q_1, q_2, q_2 ') \right. \\
&\quad \cup \bigcup_{\substack{q_2 \in \mcal{\bar B}_2 (b-1) \\ q_2 ' \in \mcal{\bar B}_2 (b)}} T(1, n_1, 1, 1, q_2, q_2 ') \\
&\quad \cup \bigcup_{\substack{q_1 \in \mcal{\bar B}_1 (q_2) \\ q_2 ' \in \mcal{\bar B}_2 (b)}} T(1, n_1, 1, q_1, 1, q_2 ') \\
&\quad \cup \bigcup_{\substack{q_1 \in \mcal{\bar B}_1 (q_2) \\ q_2 \in \mcal{\bar B}_2 (b-1)}} T(1, n_1, 1, q_1, q_2, 1) \\
&\quad \cup \bigcup_{q_1 \in \mcal{\bar B}_1 (q_2)} T(1, n_1, 1, q_1, 1, 1) \\
&\quad \cup \bigcup_{ q_2 \in \mcal{\bar B}_2 (b-1)} T(1, n_1, 1, 1, q_2, 1) \\
&\quad \cup \bigcup_{q_2 ' \in \mcal{\bar B}_2 ((b)} T(1, n_1, 1, 1, 1, q_2 ') \\
&\quad \cup T(1, n_1, 1, 1, 1, 1) \left. \left| E^c \vphantom{\bigcup_{\substack{q_1 \neq 1, q_2 \neq 1 \\ q_2 ' \neq 1}}} \right. \right) \\
&\overset{\bbbb}{\leq} \Prob{ \left. \bigcup_{\substack{q_1 \in \mcal{\bar B}_1 (q_2), q_2 \in \mcal{\bar B}_2 (b-1) \\ q_2 ' \in \mcal{\bar B}_2 (b)}} T(1, n_1, 1, q_1, q_2, q_2 ') \right| E^c} \\
&\quad + \Prob{ \left. \bigcup_{\substack{q_2 \in \mcal{\bar B}_2 (b-1) \\ q_2 ' \in \mcal{\bar B}_2 (b)}} T(1, n_1, 1, 1, q_2, q_2 ') \right| E^c } \\
&\quad + \Prob{ \left.\bigcup_{\substack{q_1 \in \mcal{\bar B}_1 (q_2) \\ q_2 ' \in \mcal{\bar B}_2 (b)}} T(1, n_1, 1, q_1, 1, q_2 ') \right| E^c } \\
&\quad + \Prob{ \left.\bigcup_{\substack{q_1 \in \mcal{\bar B}_1 (q_2) \\ q_2 \in \mcal{\bar B}_2 (b-1)}} T(1, n_1, 1, q_1, q_2, 1) \right| E^c }  \\
&\quad +\Prob{ \left.\bigcup_{q_1 \in \mcal{\bar B}_1 (q_2)} T(1, n_1, 1, q_1, 1, 1) \right| E^c} \\
&\quad +\Prob{ \left.\bigcup_{ q_2 \in \mcal{\bar B}_2 (b-1)} T(1, n_1, 1, 1, q_2, 1) \right| E^c }\\
&\quad + \Prob{ \left.\bigcup_{q_2 ' \in \mcal{\bar B}_2 (b)} T(1, n_1, 1, 1, 1, q_2 ') \right| E^c} \\
&\quad + \Prob{ T(1, n_1, 1, 1, 1, 1) | E^c } \\
&\overset{\cccc}{\leq} 2^{N(\kappa_1+2\kappa_2)} \Prob{ T(1, n_1, 1, q_1, q_2, q_2 ') |E^c} \\
&\quad + 2^{2N\kappa_2} \Prob{ T(1, n_1, 1, 1, q_2, q_2 ') |E^c } \\
&\quad + 2^{N(\kappa_1+\kappa_2)} \Prob{ T(1, n_1, 1, q_1, 1, q_2 ') |E^c } \\
&\quad + 2^{N(\kappa_1+\kappa_2)} \Prob{ T(1, n_1, 1, q_1, q_2, 1) |E^c }\\ 
&\quad +2^{N\kappa_1} \Prob{ T(1, n_1, 1, \bar q_1, 1, 1) |E^c } \\
&\quad + 2^{N\kappa_2} \Prob{ T(1, n_1, 1, 1, \bar q_2, 1) |E^c }\\
&\quad +2^{N\kappa_2} \Prob{ T(1, n_1, 1, 1, 1, \bar q_2 ') |E^c } \\
&\quad + \Prob{ T(1, n_1, 1, 1, 1, 1) |E^c } \\
&\overset{\dddd}{\leq} 2^{N(\kappa_1+2\kappa_2)} 2^{-N I (U_2;\check X_{1p}|\check X_{1e})}\\
&\quad \cdot 2^{-N\lb I (\check X_{1p}, \check X_{2e}, U_2, X_{1e}, X_{2e}; Y_1, \check Y_1, X_{2c}, \check X_{1c}|\check X_{1e}, \check X_{2c}) - \delta(\epsilon) \rb}\\
&\quad + 2^{2N\kappa_2} 2^{-N I (U_2;\check X_{1p})}\\
&\quad \cdot 2^{-N \lb I (\check X_{1p}, \check X_{2e}, U_2, X_{2e}; Y_1, \check Y_1, X_{1e}, X_{2c}, \check X_{1c}|\check X_{1e},\check X_{2c}) - \delta(\epsilon)\rb} \\
&\quad + 2^{N(\kappa_1+\kappa_2)} 2^{-N I (U_2;\check X_{1p})}\\
&\quad \cdot 2^{-N\lb I (\check X_{1p}, U_2, X_{1e}, X_{2e}; Y_1, \check Y_1, \check X_{2e}, X_{2c}, \check X_{1c}|\check X_{1e},\check X_{2c}) - \delta(\epsilon) \rb} \\
&\quad + 2^{N(\kappa_1+\kappa_2)}  \\
&\quad \cdot 2^{-N\lb I (\check X_{1p}, \check X_{2e}, X_{1e}; Y_1, \check Y_1, U_2, X_{2f}, \check X_{1c}|\check X_{1e}, \check X_{2c})-\delta(\epsilon)\rb} \\
&\quad +2^{N\kappa_1} \\
&\quad \cdot 2^{-N\lb I (\check X_{1p}, X_{1e}; Y_1, \check Y_1, \check X_{2f}, U_2, X_{2f}, \check X_{1c}|\check X_{1e}, \check X_{2c})-\delta(\epsilon)\rb}\\
&\quad + 2^{N\kappa_2} 2^{-N I (U_2;\check X_{1p}|\check X_{1e}) }\\
&\quad \cdot 2^{-N\lb I (\check X_1, U_2, X_{2e}; Y_1, \check Y_1, \check X_{2e}, X_{2c}, X_{1e}, \check X_{1c}|\check X_{1e}, \check X_{2c})-\delta(\epsilon)\rb}  \\
&\quad +2^{N\kappa_2} \\
&\quad \cdot 2^{-N\lb I (\check X_{1p}, \check X_{2e}; Y_1, \check Y_1, X_{2c}, U_2, X_{1e}, X_{2e}, \check X_{1c}|\check X_{1e}, \check X_{2c})-\delta(\epsilon)\rb} \\
&\quad + 2^{-N\lb I (\check X_{1p}; Y_1, \check Y_1, \check X_{2c}, \check X_{2e}, U_2, X_{1e}, X_{2e}, \check X_{1c}|\check X_{1e}, \check X_{2c})-\delta(\epsilon)\rb} \\
&\overset{\eeee}{\leq} 2^{N(\kappa_1+2\kappa_2)} \\
&\quad \cdot 2^{-N\lb I (\check X_{1p}, \check X_{2e}, X_{1e}, X_{2e};\check Y_1, Y_1, U_2|\check X_{1f}, X_{2c}, \check X_{2c}) - \delta(\epsilon) \rb}\\
&\quad + 2^{2N\kappa_2} \\
&\quad \cdot 2^{-N \lb I (\check X_{1p}, \check X_{2e}, U_2, X_{2e}; Y_1, \check Y_1|\check X_{1f}, X_{1e}, X_{2c}, \check X_{2c}) - \delta(\epsilon)\rb} \\
&\quad + 2^{N(\kappa_1+\kappa_2)} \\
&\quad \cdot 2^{-N\lb I (\check X_{1p}, X_{1e}, X_{2e};\check Y_1, Y_1, U_2|\check X_{1e}, \check X_{1c}, \check X_{2f}, X_{2c})- \delta(\epsilon) \rb} \\
&\quad + 2^{N(\kappa_1+\kappa_2)} \\
&\quad \cdot  2^{-N\lb I (\check X_{1p}, \check X_{2e}, X_{1e}; Y_1, \check Y_1, U_2|\check X_{1f}, \check X_{2c}, X_{2f})-\delta(\epsilon)\rb} \\
&\quad +2^{N\kappa_1} 2^{-N\lb I (\check X_{1p}, X_{1e}; Y_1, \check Y_1, U_2|\check X_{1f}, \check X_{2f}, X_{2f})-\delta(\epsilon)\rb}\\
&\quad + 2^{N\kappa_2} 2^{-N\lb I (\check X_{1p}, X_{2e};\check Y_1, Y_1, U_2|\check X_{1f}, \check X_{2f}, X_{1e}, X_{2c}) -\delta(\epsilon)\rb}  \\
&\quad +2^{N\kappa_2} 2^{-N\lb I (\check X_{1p}, \check X_{2e}; Y_1, \check Y_1, U_2|\check X_{1f}, \check X_{2c}, X_{1e}, X_{2f})-\delta(\epsilon)\rb} \\
&\quad + 2^{-N\lb I (\check X_{1p}; Y_1, \check Y_1, U_2|\check X_{1f}, \check X_{2f}, X_{1e}, X_{2e})-\delta(\epsilon)\rb} \\
&\overset{\ffff}{\leq} 4 \cdot 2^{-N\lb I(\check X_{1};\check Y_1, U_2| \check X_{1f}, \check X_{2f}) -C_1-\delta(\epsilon)\rb}\\
&\quad + 4 \cdot 2^{-N \lb I(\check X_1, X_{2e};\check Y_1, Y_1|\check X_{1e}, \check X_{2f}, X_{1e}, X_{2c})-C_1-\delta(\epsilon)\rb} \\
&\overset{\gggg}{=} 4 \cdot 2^{-N\lb I(X_{1};Y_1, U_2| X_{1f}, X_{2f})-C_1 -\delta(\epsilon)\rb}\\
&\quad + 4 \cdot 2^{-N\lb I(X_1, X_{2e};Y_1|X_{1f}, X_{2c})-C_1-\delta(\epsilon)\rb} \\ 
&\overset{\hhhh}{\leq} 8 \cdot 2^{-N\lb I(X_{1};Y_1| X_{1f}, X_{2f})-C_1-\delta(\epsilon)\rb}
\end{align*}
where
\begin{itemize}
\item (a) is since $T(1, n_1, 1, q_1, q_2, q_2 ')$ is empty set for $q_1 \notin \mcal{B}_1 (b)$, $q_2 \notin \mcal{B}_2 (b-1)$, or $q_2 ' \notin \mcal{B}_2 ((q_2, m_2)(b))$, since for random variables $(X,Y,Z) \sim p(x,y,z)$, $\intypset{X,Y,Z}$ implies $\intypset{X,Y}$,
\item (b) follows by union bound,
\item (c) follows by Claim \ref{cl:generalized_ub}, where the upper bound on the number of terms is given by Lemma \ref{lem:listsize},
\item (d) is by packing lemma, Claim~\ref{cl:generalized_pl}, and the fact that $X_{1e}^N (b-1)$ is already known at the decoder,
\item (e) is by rearranging mutual information terms using chain rule and independence (see Figure~\ref{fig:dependence}),
\item (f) follows by upper bounding four of the terms with the first expression, the remaining terms with the second expression, and using the definition of $C_1$,
\item (g) is because the distribution of variables is the same for all blocks,
\item (h) is by upper bounding the two terms with the same expression.
\end{itemize}

Once again, we use the structure of the joint distribution \eqref{eq:sjoint_dist} to show that a similar bounding can be performed for other error terms as follows.
\begin{align*}
&\Prob{T(1, 1, m_2)|E^c} \\
&\quad \leq 8 \cdot 2^{-N\lb I(X_{2f}; Y_1| X_{1e}, X_{2e})-C_1-\delta(\epsilon)\rb} \\
&\Prob{T(m_1, n_1, 1)|E^c} \\
&\quad \leq 4 \cdot 2^{-N\lb I(X_{1};Y_1, U_2| X_{1e}, X_{2f}) -C_1-\delta(\epsilon)\rb }\\
&\qquad + 4\cdot 2^{-N\lb I(X_1, X_{2e};Y_1|X_{1e}, X_{2c})-C_1-\delta(\epsilon)\rb} \\ 
&\Prob{T(1, n_1, m_2)|E^c} \\
&\quad \leq 8 \cdot 2^{-N\lb I(\check X_{1}, X_{2f};\check Y_1, Y_1, U_2| \check X_{1f}, \check X_{2e}, X_{1e}, X_{2e})-C_1-\delta(\epsilon)\rb} \\
&= 8 \cdot 2^{-N\lb I(X_{1}, X_{2f};Y_1, U_2| X_{1f}, X_{2e})-C_1-\delta(\epsilon)\rb} \\
&\leq 8 \cdot 2^{-N\lb I(X_{1}, X_{2f};Y_1| X_{1f}, X_{2e})-C_1-\delta(\epsilon)\rb} \\
&\Prob{T(m_1, n_1, m_2)|E^c} \\
&\quad \leq 8 \cdot 2^{-N\lb I(\check{X}_1, X_{2f};\check Y_1, Y_1, U_2| \check X_{1e}, \check X_{2e}, X_{1e}, X_{2e})-C_1-\delta(\epsilon)\rb} \\
&= 8 \cdot 2^{-N\lb I(X_1, X_{2f};Y_1, U_2| X_{1e}, X_{2e})-C_1-\delta(\epsilon)\rb} \\
&\leq 8 \cdot 2^{-N\lb I(X_1, X_{2f};Y_1| X_{1e}, X_{2e})-C_1-\delta(\epsilon)\rb}
\end{align*}
Using these bounds in \eqref{eq:strong_dec_error}, we see that if the conditions in the lemma are satisfied, $\Prob{\mcal{D}_{FB,s}|E^c} \to 0$ as $N \to \infty$.

\subsection{Proof of Lemmas~\ref{lem:weak_hk} and \ref{lem:strong_hk}}
Extending the notation defined in the first subsection, we define
\begin{align*}
&T(m_1, n_1, m_2, q_2) := \\
&\quad \lbp \text{\eqref{eq:weak_hk_dec} holds for the indices $(m_1, n_1, m_2, q_2)$} \rbp
\end{align*}
We will show that there exists a code such that $\Prob{\mcal{D}_{NFB}} \to 0$ \emph{exponentially}, if the given rate constraints are satisfied, which implies the claimed result. The decoding error event $\mcal{D}_{NFB}$ can be expressed as follows.
\begin{align*}
\mcal{D}_{NFB} &= \lp \bigcap_{q_2} T^c (1,1,1,q_2) \rp \cup \bigcup_{n_1 \neq 1} T(1, n_1, 1, 1) \\
&\quad \cup \bigcup_{m_2 \neq 1} T(1, 1, m_2, 1) \cup \bigcup_{\substack{m_1 \neq 1 \\ n_1 \neq 1}} T(m_1, n_1, 1, 1) \\
&\quad \cup \bigcup_{\substack{n_1 \neq 1 \\ m_2 \neq 1}} T(1, n_1, m_2,1) \cup \bigcup_{\substack{m_2 \neq 1 \\ q_2 \neq 1}} T(1, 1, m_2,q_2) \\
&\quad \cup \bigcup_{\substack{m_1 \neq 1 \\ n_1 \neq 1 \\ m_2 \neq 1}} T(m_1, n_1, m_2, 1) \cup \bigcup_{\substack{n_1 \neq 1 \\ m_2 \neq 1 \\ q_2 \neq 1}} T(1, n_1, m_2, q_2) \\
&\quad \cup \bigcup_{\substack{m_1 \neq 1, n_1 \neq 1 \\ m_2 \neq 1, q_2 \neq 1}} T(m_1, n_1, m_2,q_2)
\end{align*}
Similar to the previous proofs, choosing $r_i > I(\wtild V_j;U_i)$ ensures quantization success with high probability. Then since
\begin{align*}
\Prob{\mcal{D}_{NFB}} \leq \Prob{E_1} + \Prob{E_2} + \Prob{\mcal{D}_{NFB}|E^c},
\end{align*}
it is sufficient to show that $\Prob{\mcal{D}_{NFB}|E^c} \to 0$. Using union bound, packing lemma, Lemma \ref{lem:listsize}, and Claim \ref{cl:generalized_ub}, we can upper bound the probability of decoding error conditioned on quantization success by
\begin{align*}
&\Prob{\mcal{D}_{NFB}|E^c} \\
&\quad \leq \epsilon_N + 2^{NR_{1p}} 2^{-N \lb I(X_1;Y_1|X_{1f}, X_{2f})-\delta(\epsilon) \rb} \\
&\qquad + 2^{NR_{2c}} 2^{-N \lb I(X_{2f};Y_1|X_{1}, X_{2e})-\delta(\epsilon) \rb} \\ 
&\qquad + 2^{NR_{1}} 2^{-N \lb I(X_1;Y_1|X_{1e}, X_{2f})-\delta(\epsilon) \rb} \\
&\qquad + 2^{N(R_{1p}+R_{2c})} 2^{-N \lb I(X_1, X_{2f};Y_1|X_{1f}, X_{2e})-\delta(\epsilon) \rb} \\ 
&\qquad + 2^{N(R_{2c}+C_2 ')} 2^{-N \lb I(X_{2f};Y_1|X_{1})-\delta(\epsilon) \rb} \\
&\qquad + 2^{N(R_1+R_{2c})} 2^{-N \lb I(X_1, X_{2f};Y_1|X_{1e}, X_{2e})-\delta(\epsilon) \rb} \\
&\qquad + 2^{N(R_{1p}+R_{2c}+C_2 ')} 2^{-N \lb I(X_1, X_{2f};Y_1|X_{1f})-\delta(\epsilon) \rb} \\
&\qquad + 2^{N(R_{1}+R_{2c}+C_2 ')} 2^{-N \lb I(X_1, X_{2f};Y_1|X_{1e})-\delta(\epsilon) \rb}.
\end{align*}
where $\epsilon_N \to 0$ as $N \to \infty$. Note that the conditions in both lemmas are sufficient to ensure $\Prob{\mcal{D}_{NFB}|E^c} \to 0$ as $N \to \infty$.

\section{Evaluation of Rate Regions} \label{sec:ap_evaluation}
In this section, we consider the set of rate conditions derived in Section~\ref{sec:achievability} for decodability (\emph{i.e.}, \eqref{eq:weak_fb_second}--\eqref{eq:weak_fb_last}, \eqref{eq:weak_hk_second}--\eqref{eq:weak_hk_last} for weak interference; \eqref{eq:strong_fb_second}--\eqref{eq:strong_fb_last}, \eqref{eq:strong_hk_second}--\eqref{eq:strong_hk_last} for strong interference), and obtain an explicit rate region for both linear deterministic and Gaussian models.

\subsection{Rate Region for Linear Deterministic Model}
Under the input distribution given by \eqref{eq:ldc_inputdist_first}--\eqref{eq:ldc_inputdist_last}, the set of rate constraints for decodability at Rx1 are evaluated as follows:
\begin{align*}
R_{1p} &\leq H\lp Y_1 | V_1, V_2\rp= \lp n_{11}-n_{21}\rp^+ \\
R_{2c} &\leq H\lp Y_1 | X_1\rp = n_{12} \\
R_{1p}+R_{2c} &\leq \min \lbp H\lp Y_1, \wtild V_2 | V_1 \rp, H\lp Y_1\rp \rbp\\
&= \min \lbp p_1 \lp n_{11}-n_{21}\rp^+ \right. \\
&\quad+ (1-p_1)\max\lbp n_{12}, \lp n_{11}-n_{21}\rp^+\rbp+ p_1n_{12}, \\
&\quad \left. \max\lp n_{11}, n_{12}\rp \vphantom{\max\lbp n_{12}, \lp n_{11}-n_{21}\rp^+\rbp}\rbp \\
R_1+R_{2c} &\leq H\lp Y_1\rp =\max\lp n_{11},n_{12}\rp
\end{align*}
for weak interference ($n_{12}\leq n_{11}$), and
\begin{align*}
R_{1p} &\leq H\lp Y_1 | V_1, V_2\rp= \lp n_{11}-n_{21}\rp^+ \\
R_{2c} &\leq H\lp Y_1 | X_1\rp = n_{12} \\
R_{1} &\leq \min \lbp H\lp Y_1, \wtild V_1 | V_2 \rp, H\lp Y_1\rp \rbp\\
&= \min \lbp p_2\lp n_{11}-n_{21}\rp^+ + (1-p_2)n_{11} \right.\\
&\quad\left.+p_2 n_{21}, \max\lp n_{11}, n_{12}\rp \vphantom{\lp n_{11}-n_{21}\rp^+ } \rbp \\
R_1+R_{2c} &\leq H\lp Y_1\rp=\max\lp n_{11},n_{12}\rp
\end{align*}
for strong interference ($n_{12}> n_{11}$). Note that the set of conditions given above can be summarized into the following five inequalities, valid for any interference regime.
\begin{align*}
R_{1p} &\leq  \lp n_{11}-n_{21}\rp^+ \\
R_1 &\leq n_{11} + p_2 \lp n_{21}-n_{11}\rp^+ \\
R_{2c} &\leq n_{12} \\
R_{1p}+R_{2c} &\leq \max \lbp n_{12}, \lp n_{11}-n_{21}\rp^+\rbp \\
&\quad+ p_1 \min \lbp n_{12}, \lp n_{11}-n_{21}\rp^+\rbp \\
R_1+R_{2c} &\leq \max\lp n_{11}, n_{12}\rp
\end{align*}
Combining these inequalities with their Rx2 counterparts, and applying Fourier-Motzkin elimination,
we arrive at the set of inequalities given in \eqref{eq:ldc_R1}--\eqref{eq:ldc_R12R2}.

\begin{figure*}[!t]
\begin{align}
R_{ip} &< \msf{A}_i := \log \lp 3 + \frac{\SNR_i}{1+\INR_j} \rp - \log 3 - C_i \label{eq:g_rc_Ai} \\
R_{jc} &< \msf{B}_i :=  \log \lp 2 + \INR_i \rp - \log 3 - C_i \label{eq:g_rc_Bi}\\
R_{i} &< \msf{C}_i := \log \lp 3 + \SNR_i+\INR_i \rp  - \log 3 - C_i \label{eq:g_rc_Ci}\\
R_i &< \msf{D}_i :=\log \lp 3 + \SNR_i \rp + \ind{\SNR_i\leq\INR_i} p_j\lb \log \lp 1+\frac{\INR_j}{ 3 +\SNR_i } \rp -\log \frac{5}{3}\rb - \log 3 - C_i\label{eq:g_rc_Di} \\
R_{ip}+R_{jc} &< \msf{E}_i := \log \lp 2 + \SNR_i + \INR_i + \frac{\SNR_i}{1+\INR_j} \rp -\log3 - C_i \label{eq:g_rc_Ei}\\
R_{ip}+R_{jc} &< \msf{F}_i := \log \lp 2 + \INR_i + \frac{\SNR_i}{1+\INR_j} \rp + \ind{\SNR_i\geq\INR_i} p_i \lb \log \lp \frac{\lp 2+\INR_i\rp\lp 3+\frac{\SNR_i}{1+\INR_j}\rp}{2+\frac{\SNR_i}{1+\INR_j}+\INR_i} \rp - \log 6 \rb\notag\\
&\quad  -\log 3 - C_i -\ind{\SNR_i\geq\INR_i}C_i\label{eq:g_rc_Fi} \\
R_{i}+R_{jc} &< \msf{G}_i := \log \lp 2 + \SNR_i+\INR_i\rp -\log 3-C_i - \kappa_j \label{eq:g_rc_Gi}\\
&\quad C_i := p_i+2p_j, \; \kappa_j = p_j \notag
\end{align}
\hrulefill
\end{figure*}

\subsection{Rate Region for Gaussian Model}
We consider the input distributions \eqref{eq:g_inputdist_first}--\eqref{eq:g_inputdist_last}, and the set of input-output relationships given by
\begin{align*}
Y_i &= h_{ii}X_i + h_{ij}X_j + Z_i \\
U_i &= S_i \lp h_{ij}X_j + Z_i \rp + Q_i
\end{align*}
for $\ijj$, where $Q_i \sim \cgauss{0}{D_i}$. Choosing $D_1=D_2=\frac{3}{2}$, and using standard techniques, it is straightforward to evaluate the rate inequalities derived in Section~\ref{sec:achievability}, and show that the set of rate triples $\lp R_{1p}, R_{1c}, R_{2c}\rp$ defined by \eqref{eq:g_rc_Ai}--\eqref{eq:g_rc_Gi}, for $(i,j)=(1,2)$ are contained in the set defined by \eqref{eq:weak_fb_second}--\eqref{eq:weak_fb_last}, \eqref{eq:weak_hk_second}--\eqref{eq:weak_hk_last} for weak interference, and \eqref{eq:strong_fb_second}--\eqref{eq:strong_fb_last}, \eqref{eq:strong_hk_second}--\eqref{eq:strong_hk_last} for strong interference. In \eqref{eq:g_rc_Ai}--\eqref{eq:g_rc_Gi}, we used indicator functions to unify the constraints for weak and strong interference. 

In order to find the set of achievable $\lp R_1,R_2\rp$ points, we first note that $\msf{E}_i\geq\msf{G}_i$ and $\msf{C}_i\geq\msf{G}_i$, and hence the bounds $\msf{E}_i$ and $\msf{C}_i$ are redundant. Considering the remaining bounds for $\ijj$, noting that $\msf{F}_i \leq \msf{A}_i+\msf{B}_i$, and applying Fourier-Motzkin elimination, we find that the set of $\lp R_1,R_2\rp$ points that satisfy the following are achievable.
\begin{align}
R_i &< \min\lbp \msf{A}_i+\msf{B}_j, \msf{D}_i \rbp \label{eq:ib_Ri}\\
R_i+R_j &< \min\lbp \msf{A}_i+\msf{G}_j, \msf{F}_i+\msf{F}_j \rbp \label{eq:ib_RiRj}\\
2R_i + R_j &< \msf{A}_i + \msf{F}_j + \msf{G}_i \label{eq:ib_2RiRj}
\end{align}
for $\ijj$.

\section{Proofs of Outer Bounds \eqref{eq:ob_ldc_Ri}, \eqref{eq:ob_ldc_R1R2}, \eqref{eq:ob_ldc_2R1R2}, and \eqref{eq:ob_ldc_R12R2}} \label{sec:ap_ob_ldc}
In this section, we prove the outer bounds for the linear deterministic channel, based on the ideas presented in Section~\ref{sec:converse}. We first prove four claims that will be useful in the main proof.

\subsection{Proof of the Bound \eqref{eq:ob_ldc_Ri}}
By symmetry, we only focus on the bound on $R_1$. By Fano's inequality,
\begin{align*}
&N\lp R_1 - \epsilon_N \rp \leq I\lp W_1; Y_1^N \ul{S}^N\rp = I\lp W_1; Y_1^N, \wtild V_1^N, W_2, \ul{S}^N\rp \\
&\quad = I\lp W_1; Y_1^N, \wtild V_1^N| W_2, \ul{S}^N\rp \overset{\aaaa}{=} H\lp Y_1^N, \wtild V_1^N| W_2, \ul{S}^N\rp \\
&\quad = H\lp Y_1^N| \wtild V_1^N, W_2, \ul{S}^N\rp + H\lp \wtild V_1^N|W_2, \ul{S}^N\rp \\
&\quad \overset{\bbbb}{=} H\lp Y_1^N| \wtild V_1^N, W_2, X_2^N, \ul{S}^N\rp + H\lp\wtild V_1^N|W_2, \ul{S}^N\rp \\
&\quad \leq H\lp Y_1^N| \wtild V_1^N, X_2^N, \ul{S}^N\rp + H\lp\wtild V_1^N|\ul{S}^N\rp \\
&\quad \overset{\cccc}{\leq} n_{11} + p_2\lp n_{21}-n_{11}\rp^+
\end{align*}
where (a) follows by the fact that channel is deterministic and hence all variables are completely determined given $\lp W_1, W_2, \ul{S}^N \rp$; (b) follows by Claim~\ref{cl:ldc_claim1}, and (c) follows by Claim~\ref{cl:ldc_claim5}.

\subsection{Proof of the Bound \eqref{eq:ob_ldc_R1R2}}
By Fano's inequality,
\begin{align*}
&N\lp R_1+R_2-\epsilon_N\rp \leq I(W_1;Y_1^N, \ul{S}^N) + I(W_2;Y_2^N, \ul{S}^N)  \\
&\quad = I(W_1;Y_1^N|  \ul{S}^N) + I(W_2;Y_2^N| \ul{S}^N)  \\
&\quad \leq I(W_1;Y_1^N, V_1^N, \wtild V_2^N| \ul{S}^N) + I(W_2;Y_2^N, V_2^N, \wtild V_1^N| \ul{S}^N)  \\
&\quad = H\lp Y_1^N, V_1^N, \wtild V_2^N| \ul{S}^N \rp + H\lp Y_2^N, V_2^N, \wtild V_1^N| \ul{S}^N \rp \\
&\qquad - H\lp Y_1^N, V_1^N, \wtild V_2^N| W_1, \ul{S}^N \rp \\
&\qquad - H\lp Y_2^N, V_2^N, \wtild V_1^N| W_2, \ul{S}^N \rp\\
& \overset{\aaaa}{=} H\lp Y_1^N | V_1^N, \wtild V_2^N, \ul{S}^N \rp + H\lp Y_2^N| V_2^N, \wtild V_1^N, \ul{S}^N \rp \\
&\quad + H\lp V_1^N, \wtild V_2^N | \ul{S}^N \rp + H\lp V_2^N, \wtild V_1^N | \ul{S}^N \rp \\
&\quad - H\lp V_2^N, \wtild V_1^N| W_1, \ul{S}^N \rp - H\lp V_1^N, \wtild V_2^N| W_2, \ul{S}^N \rp\\
&= H\lp Y_1^N | V_1^N, \wtild V_2^N, \ul{S}^N \rp + H\lp Y_2^N| V_2^N, \wtild V_1^N, \ul{S}^N \rp \\
&\quad + I\lp W_2; V_1^N, \wtild V_2^N | \ul{S}^N \rp + I\lp W_1; V_2^N, \wtild V_1^N | \ul{S}^N \rp \\
&\overset{\bbbb}{\leq} N\max\lbp n_{12}, \lp n_{11}-n_{21}\rp^+ \rbp \\
&\quad + N\max\lbp n_{21}, \lp n_{22}-n_{12}\rp^+ \rbp \\
&\quad + Np_1\min\lbp n_{12}, \lp n_{11}-n_{21}\rp^+ \rbp \\
&\quad + Np_2\min\lbp n_{21}, \lp n_{22}-n_{12}\rp^+ \rbp
\end{align*}
where (a) follows by Claim~\ref{cl:ldc_claim2}, and (b) follows by Claims \ref{cl:ldc_claim4} and \ref{cl:ldc_claim5}.

\subsection{Proof of the Bounds \eqref{eq:ob_ldc_2R1R2} and \eqref{eq:ob_ldc_R12R2}}
By symmetry, it is sufficient to prove \eqref{eq:ob_ldc_2R1R2}. To prove this bound, we consider two copies of Rx1, where one of the copies are enhanced as decribed in Section~\ref{sec:converse}, while the other one is provided with the output of the original channel. The only copy of Rx2 receives the enhanced channel output as well. We would like to prove a sum rate bound for this three-receiver channel. By Fano's inequality,
\begin{align*}
&N\lp 2R_1+R_2 - \epsilon_N \rp \\
& \leq I\lp W_1; Y_1^N, \ul{S}^N\rp + I\lp W_2; Y_2^N, \ul{S}^N\rp + I\lp W_1; Y_1^N, \ul{S}^N\rp\\
& = I\lp W_1; Y_1^N| \ul{S}^N\rp + I\lp W_2; Y_2^N| \ul{S}^N\rp + I\lp W_1; Y_1^N| \ul{S}^N\rp\\
& \leq I\lp W_1; Y_1^N| \ul{S}^N\rp + I\lp W_2; Y_2^N, V_2^N, \wtild V_1^N | \ul{S}^N\rp \\
&\quad + I\lp W_1; Y_1^N, V_1^N| \ul{S}^N, W_2\rp\\
&\overset{\aaaa}{=} H\lp Y_1^N | \ul{S}^N\rp - H\lp V_2^N, \wtild V_1^N |\ul{S}^N, W_1\rp \\
&\quad + H\lp Y_2^N, V_2^N, \wtild V_1^N| \ul{S}^N\rp - H\lp Y_2^N, V_2^N, \wtild V_1^N | \ul{S}^N, W_2 \rp \\
&\quad + H\lp Y_1^N, V_1^N| \ul{S}^N, W_2\rp \\
&\overset{\bbbb}{=} H\lp Y_1^N | \ul{S}^N\rp - H\lp V_2^N, \wtild V_1^N |\ul{S}^N, W_1 \rp \\
&\quad + H\lp V_2^N, \wtild V_1^N | \ul{S}^N \rp + H\lp Y_2^N | V_2^N, \wtild V_1^N\rp \\
&\quad - H\lp V_1^N | \ul{S}^N, W_2 \rp + H\lp Y_1^N, V_1^N | \ul{S}^N, W_2 \rp \\
&= H\lp Y_1^N | \ul{S}^N\rp + I\lp W_1; V_2^N, \wtild V_1^N |\ul{S}^N \rp \\
&\quad H\lp Y_2^N | V_2^N, \wtild V_1^N\rp + H\lp Y_1^N | \ul{S}^N, W_2, V_1^N \rp \\
&\overset{\cccc}{\leq} \max\lp n_{11}, n_{12}\rp + \max\lbp n_{21}, \lp n_{22}-n_{12}\rp^+ \rbp \\
&\quad + \lp n_{11}-n_{21}\rp^+ + p_2 \min\lbp n_{21}, \lp n_{22}-n_{12}\rp^+ \rbp
\end{align*}
where (a) follows by Claim ~\ref{cl:ldc_claim2} , (b) follows by Claim~\ref{cl:ldc_claim3}, (c) follows by Claims~\ref{cl:ldc_claim4},~\ref{cl:ldc_claim5} and~\ref{cl:ldc_claim6}.

\subsection{Claims}
\begin{claim} \label{cl:ldc_claim1}
For $\ijj$,
\begin{align*}
X_{i,t} \eqFunc \lp W_i, \wtild{V}_j^{t-1}, \ul{S}^{t-1}\rp \eqFunc \lp W_i, V_j^{t-1}, \ul{S}^{t-1}\rp
\end{align*}
\end{claim}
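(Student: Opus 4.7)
The plan is to prove the claim by induction on $t$, exploiting the fact that each transmitter knows its own past inputs and can therefore subtract its own contribution from the feedback signal to recover the (punctured) interference. The second $\eqFunc$ is then immediate since $\wtild V_j^{t-1}$ is itself a deterministic function of $(V_j^{t-1}, \ul{S}^{t-1})$ via the coordinate-wise product $\wtild V_{j,\tau} = S_{i,\tau} V_{j,\tau}$ (note that $\wtild V_j = S_i V_j$ from the definition $\wtild V_i := S_j V_i$), and $S_i^{t-1}$ is a component of $\ul{S}^{t-1}$.

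For the induction, the base case $t=1$ is trivial: by the encoding constraint $X_{i,1} \eqFunc W_i$ since there is no past feedback. For the inductive step, suppose $X_{i,\tau} \eqFunc \lp W_i, \wtild V_j^{\tau-1}, \ul{S}^{\tau-1}\rp$ holds for every $\tau \leq t$. The system model gives $X_{i,t+1} \eqFunc \lp W_i, S_i^{t}, \wtild Y_i^{t}\rp$, so it suffices to show that $\wtild Y_i^{t}$ is a deterministic function of $\lp W_i, \wtild V_j^{t}, \ul{S}^{t}\rp$. For each $\tau \leq t$, writing $Y_{i,\tau} = \mb{H}_{ii} X_{i,\tau} + V_{j,\tau}$ (and analogously for the Gaussian case, with the channel noise absorbed into $V_{j,\tau}$) we have
\begin{align*}
\wtild Y_{i,\tau} = S_{i,\tau} Y_{i,\tau} = S_{i,\tau} \mb{H}_{ii} X_{i,\tau} + S_{i,\tau} V_{j,\tau} = S_{i,\tau} \mb{H}_{ii} X_{i,\tau} + \wtild V_{j,\tau}.
\end{align*}
The first summand is a function of $\lp W_i, \wtild V_j^{\tau-1}, \ul{S}^{\tau-1}\rp$ by the inductive hypothesis (applied to $X_{i,\tau}$) together with $S_{i,\tau}$, and the second summand is $\wtild V_{j,\tau}$ itself. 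Hence each $\wtild Y_{i,\tau}$, and therefore the whole block $\wtild Y_i^{t}$, is a deterministic function of $\lp W_i, \wtild V_j^{t}, \ul{S}^{t}\rp$, completing the inductive step.

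Having established $X_{i,t} \eqFunc \lp W_i, \wtild V_j^{t-1}, \ul{S}^{t-1}\rp$, the second equality follows by observing that the map $\lp W_i, V_j^{t-1}, \ul{S}^{t-1}\rp \mapsto \lp W_i, \wtild V_j^{t-1}, \ul{S}^{t-1}\rp$ is deterministic, since $\wtild V_{j,\tau} = S_{i,\tau} V_{j,\tau}$ for each $\tau$, and composition of deterministic functions is deterministic. The only non-routine point is bookkeeping the index convention $\wtild V_j = S_i V_j$; once that is in place, the subtraction step in the induction is mechanical, so I do not expect any substantive obstacle.
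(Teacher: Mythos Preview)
Your proof is correct and follows essentially the same approach as the paper's own proof: induction on $t$, with the key step being the decomposition $\wtild Y_{i,\tau} = S_{i,\tau}\mb{H}_{ii}X_{i,\tau} + \wtild V_{j,\tau}$ so that the transmitter can recover $\wtild V_{j,\tau}$ from its own past inputs and feedback. You spell out the second $\eqFunc$ and the inductive bookkeeping in more detail than the paper does, but the argument is the same.
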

\begin{proof}
We focus on the case $(i,j)=(1,2)$ without loss of generality. Note that
\begin{align*}
X_{1,1} \eqFunc W_1
\end{align*}
and by the definition of the channel,
\begin{align*}
X_{1,t} &\eqFunc \lp W_1, \wtild Y_1^{t-1}, \ul{S}^t \rp \\
&\overset{\aaaa}{\eqFunc} \lp W_1, \wtild V_2^{t-1}, X_1^{t-1} \ul{S}^t \rp,
\end{align*}
hence the result follows by induction on $t$. (a) follows because $\wtild Y_1^{t-1} = S_1^{t-1}\mb{H}_{11}X_{1}^{t-1} + \wtild V_2^{t-1}$.
\end{proof}

\begin{claim}\label{cl:ldc_claim2}
For $\ijj$,
\begin{align*}
&H\lp Y_i^N|W_i,\ul{S}^N\rp = H\lp V_j^N,\wtild{V}_i^N|W_i,\ul{S}^N\rp.
\end{align*}
\end{claim}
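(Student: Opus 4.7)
The plan is to prove the identity by exhibiting a common middle term: I will establish
\[
H(Y_i^N | W_i, \ul{S}^N) \;=\; H(V_j^N | W_i, \ul{S}^N) \;=\; H(V_j^N, \wtild V_i^N | W_i, \ul{S}^N),
\]
each of the two equalities following from a deterministic-function relationship enabled by Claim~\ref{cl:ldc_claim1}.

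For the right-hand equality, I will argue that $\wtild V_i^N$ is redundant once we condition on $(W_i, \ul{S}^N, V_j^N)$. Given these three, Claim~\ref{cl:ldc_claim1} lets us reconstruct $X_i^N$ recursively, since $X_{i,t} \eqFunc (W_i, V_j^{t-1}, \ul{S}^{t-1})$ and all of the required arguments are available in the conditioning. From $X_i^N$ we immediately obtain $V_i^N = \mb{H}_{ji} X_i^N$ and $\wtild V_i^N = S_j V_i^N$, so $H(\wtild V_i^N | W_i, \ul{S}^N, V_j^N) = 0$, and the chain rule yields the right-hand equality.

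For the left-hand equality I need the corresponding two-way deterministic relation between $Y_i^N$ and $V_j^N$. One direction reuses the previous reconstruction: from $(W_i, \ul{S}^N, V_j^N)$ we obtain $X_i^N$ and hence $Y_i^N = \mb{H}_{ii} X_i^N + V_j^N$, so $H(Y_i^N | W_i, \ul{S}^N, V_j^N) = 0$. The other direction is a short induction: $X_{i,1} \eqFunc W_i$, so $V_{j,1} = Y_{i,1} - \mb{H}_{ii} X_{i,1}$ is determined; inductively, once $V_j^{t-1}$ is known, Claim~\ref{cl:ldc_claim1} supplies $X_{i,t}$, from which $V_{j,t} = Y_{i,t} - \mb{H}_{ii} X_{i,t}$ follows. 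Hence $H(V_j^N | W_i, \ul{S}^N, Y_i^N) = 0$ as well, and the left-hand equality drops out.

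The only real obstacle is ensuring that the recursion aligns with the causal structure of the encoder, specifically that $X_{i,t}$ depends on feedback only through $\wtild V_j^{t-1}$ and $\ul{S}^{t-1}$ (and not, for instance, on $V_i^{t-1}$ directly). Claim~\ref{cl:ldc_claim1} packages exactly this fact, so once it is invoked the remaining work is routine entropy bookkeeping via the chain rule.
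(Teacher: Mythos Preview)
Your proof is correct and follows essentially the same route as the paper's. The paper writes the argument as a single chain-rule computation---expanding $H(Y_i^N|W_i,\ul{S}^N)$ term by term, inserting $X_i^t$ into the conditioning, subtracting $\mb{H}_{ii}X_{i,t}$ to turn $Y_{i,t}$ into $V_{j,t}$, removing $X_i^t$ via Claim~\ref{cl:ldc_claim1}, and finally appending $\wtild V_i^N$---whereas you organize the same steps as two deterministic-equivalence facts joined through the middle term $H(V_j^N|W_i,\ul{S}^N)$; the content and the key invocation of Claim~\ref{cl:ldc_claim1} are identical.
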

\begin{proof}
Let us focus on the case $(i,j)=(1,2)$. 
\begin{align*}
&H\lp Y_1^N|W_1,\ul{S}^N\rp 
=\sum_{t=1}^N H\lp Y_{1,t} | W_1, \ul{S}^N,Y_1^{t-1}\rp\\
&\overset{\aaaa}{=}\sum_{t=1}^N H\lp Y_{1,t} | W_1, \ul{S}^N,Y_1^{t-1}, X_1^t\rp\\ 
&=\sum_{t=1}^N H\lp V_{2,t} | W_1, \ul{S}^N,V_2^{t-1}, X_1^t\rp\\ 
&\overset{\bbbb}{=}\sum_{t=1}^N H\lp V_{2,t} | W_1, \ul{S}^N,V_2^{t-1}\rp\\ 
&= H\lp V_2^N,\wtild{V}_1^N|W_1,\ul{S}^N\rp,
\end{align*}
where (a) is by definition, (b) is due to Claim~\ref{cl:ldc_claim1}.
The other holds similarly.
\end{proof}

\begin{claim}\label{cl:ldc_claim3}
For $\ijj$,
\begin{align*}
H\lp Y_i^N, V_i^N, \wtild V_j^N |W_i,\ul{S}^N\rp &= H\lp V_j^N,\wtild{V}_i^N|W_i,\ul{S}^N\rp \\
&= H\lp V_j^N|W_i,\ul{S}^N\rp
\end{align*}
\end{claim}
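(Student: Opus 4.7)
Without loss of generality fix $(i,j)=(1,2)$. The overall strategy is to establish that, conditioned on $(W_1,\ul{S}^N)$, the tuple $(Y_1^N, V_1^N, \wtild V_2^N)$, the tuple $(V_2^N, \wtild V_1^N)$, and $V_2^N$ are all deterministic functions of each other. Once this is in place, both equalities drop out immediately from the fact that a deterministic function does not change conditional entropy.

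The first step, which I expect to be the most delicate, is to show that given $(W_1,\ul{S}^N)$, the sequence $V_2^N$ can be reconstructed from $Y_1^N$. I would argue this by induction on $t$: by Claim~\ref{cl:ldc_claim1}, $X_{1,1}\eqFunc W_1$, so given $Y_{1,1}=\mb{H}_{11}X_{1,1}+V_{2,1}$ we recover $V_{2,1}$; inductively, having reconstructed $V_2^{t-1}$, Claim~\ref{cl:ldc_claim1} gives $X_{1,t}\eqFunc(W_1,V_2^{t-1},\ul{S}^{t-1})$, and then $V_{2,t}=Y_{1,t}-\mb{H}_{11}X_{1,t}$. Hence $V_2^N\eqFunc(W_1,Y_1^N,\ul{S}^N)$. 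The converse direction is easier: given $(W_1,V_2^N,\ul{S}^N)$, Claim~\ref{cl:ldc_claim1} determines $X_1^N$, after which $V_1^N=\mb{H}_{21}X_1^N$, $Y_1^N=\mb{H}_{11}X_1^N+V_2^N$, $\wtild V_2^N=S_1^NV_2^N$, and $\wtild V_1^N=S_2^NV_1^N$ are all deterministic.

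Combining these two directions, conditioned on $(W_1,\ul{S}^N)$ the random vector $(Y_1^N,V_1^N,\wtild V_2^N)$ and $V_2^N$ are deterministic functions of each other, which gives
\begin{align*}
H(Y_1^N,V_1^N,\wtild V_2^N\mid W_1,\ul{S}^N)=H(V_2^N\mid W_1,\ul{S}^N).
\end{align*}
Similarly, since $\wtild V_1^N$ was shown above to be a deterministic function of $(W_1,V_2^N,\ul{S}^N)$, we have $H(V_2^N,\wtild V_1^N\mid W_1,\ul{S}^N)=H(V_2^N\mid W_1,\ul{S}^N)$. Chaining the two identities yields both equalities in the claim. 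The analogous argument for $(i,j)=(2,1)$ is identical by symmetry. The only subtlety is the causal reconstruction of $V_2^N$ from $Y_1^N$; the rest is bookkeeping using the deterministic structure of the channel and the feedback puncturing.
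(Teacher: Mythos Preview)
Your proof is correct and follows essentially the same approach as the paper's: both rely on Claim~\ref{cl:ldc_claim1} to inductively recover $X_{1,t}$ from $(W_1,V_2^{t-1},\ul{S}^{t-1})$, then subtract $\mb{H}_{11}X_{1,t}$ from $Y_{1,t}$ to obtain $V_{2,t}$, and note that $V_{1,t}$, $\wtild V_{2,t}$, $\wtild V_{1,t}$ are deterministic in $(X_{1,t},V_{2,t},\ul{S}_t)$. The only cosmetic difference is packaging: you phrase it as a global bijection between $(Y_1^N,V_1^N,\wtild V_2^N)$ and $V_2^N$ conditioned on $(W_1,\ul{S}^N)$, while the paper unrolls the same argument via a time-step chain-rule expansion, adding $X_1^t$ to the conditioning at each step and then removing it at the end via Claim~\ref{cl:ldc_claim1}.
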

\begin{proof}
Let us focus on the case $(i,j)=(1,2)$. 
\begin{align*}
&H\lp Y_1^N, V_1^N, \wtild V_2^N |W_1,\ul{S}^N\rp  \\
&=\sum_{t=1}^N H\lp Y_{1,t}, V_{1,t}, \wtild V_{2,t} | W_1, \ul{S}^N, Y_1^{t-1}, V_1^{t-1}, \wtild V_2^{t-1}\rp\\
&\overset{\aaaa}{=}\sum_{t=1}^N H\lp Y_{1,t}, V_{1,t}, \wtild V_{2,t} | W_1, \ul{S}^N, Y_1^{t-1}, V_1^{t-1}, \wtild V_2^{t-1}, X_1^t \rp\\
&\overset{\bbbb}{=} \sum_{t=1}^N H\lp V_{2,t}, \wtild V_{1,t} | W_1, \ul{S}^N, V_2^{t-1}, \wtild V_1^{t-1}, X_1^t \rp\\
&\overset{\cccc}{=} \sum_{t=1}^N H\lp V_{2,t} | W_1, \ul{S}^N, V_2^{t-1}, X_1^t \rp\\
\end{align*}
where (a) follows by the fact that $X_{1,t} \eqFunc \lp W_1, \wtild Y_1^N, \ul{S}^N\rp$, (b) follows by subtracting $X_{1,t}$ from $Y_{1,t}$ and because $V_{1,t} \eqFunc X_{1,t}$ and $\wtild V_{2,t} \eqFunc \lp \ul{S}_t, V_{2,t}\rp$. Similarly, (c) follows since $\wtild V_{1,t} \eqFunc \lp X_{1,t}, \ul{S}_t \rp$. 

Now, the two equalities in the claim can be easily obtained from (b) and (c) respectively, by removing $X_{1}^t$ from the conditioning by virtue of Claim~\ref{cl:ldc_claim1}, and using chain rule.
\end{proof}

\begin{claim}\label{cl:ldc_claim4}
For $\ijj$,
\begin{align*}
&I\lp W_i;V_j^N,\wtild{V}_i^N|\ul{S}^N\rp \le Np_jn_{ji}.
\end{align*}
\end{claim}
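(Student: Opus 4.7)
The goal is to bound $I(W_i;V_j^N,\wtild V_i^N|\ul S^N)$ by $Np_j n_{ji}$, which is morally the capacity of the ``feedback pipe'' about $X_i$: $\wtild V_i=S_j V_i$ is active only with probability $p_j$, and each active sample carries at most $n_{ji}$ bits because $\mb H_{ji}$ is the shift matrix $\mb S^{q-n_{ji}}$. My plan is to first strip $V_j^N$ from the mutual information (which seems to carry fresh information about $W_i$ through the feedback loop, but in fact does not once we condition on $W_j$), and then bound what remains by the entropy of $\wtild V_i^N$ symbol-by-symbol.

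Concretely, I would proceed in three steps. First, since $W_i,W_j,\ul S^N$ are mutually independent, $I(W_i;W_j|\ul S^N)=0$, so applying data processing and the chain rule gives
\begin{align*}
I(W_i;V_j^N,\wtild V_i^N|\ul S^N)
&\le I(W_i;W_j,V_j^N,\wtild V_i^N|\ul S^N)\\
&= I(W_i;V_j^N,\wtild V_i^N|W_j,\ul S^N).
\end{align*}
Second, by Claim~\ref{cl:ldc_claim1} applied to user $j$, $X_{j,t}\eqFunc (W_j,\wtild V_i^{t-1},\ul S^{t-1})$, so $X_j^N$, and therefore $V_j^N=\mb H_{ij}X_j^N$, is a deterministic function of $(W_j,\wtild V_i^N,\ul S^N)$. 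Hence $V_j^N$ contributes no entropy beyond $(W_j,\wtild V_i^N,\ul S^N)$, and
\[
I(W_i;V_j^N,\wtild V_i^N|W_j,\ul S^N)= I(W_i;\wtild V_i^N|W_j,\ul S^N)\le H(\wtild V_i^N|\ul S^N).
\]
Third, I would bound the entropy term by summing per-letter contributions: $H(\wtild V_i^N|\ul S^N)\le\sum_{t=1}^N H(\wtild V_{i,t}|S_{j,t})$, and since $\wtild V_{i,t}=S_{j,t}V_{i,t}$ equals $0$ when $S_{j,t}=0$ and equals $V_{i,t}$ when $S_{j,t}=1$, with $H(V_{i,t})\le n_{ji}$ (the shift structure of $\mb H_{ji}$ forces $V_{i,t}$ to lie in an $n_{ji}$-dimensional subspace of $\mbb F_2^q$), each term is at most $p_j n_{ji}$, giving the advertised bound.

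The main obstacle, if any, is the middle step: it is tempting to expect that $V_j^N$ can leak additional information about $W_i$ because $X_j$ reacts to past feedback about $X_i$. The conceptual point is that once $W_j$ is in the conditioning, the only ``new'' information flowing into $V_j^N$ is what $X_j$ learns from $\wtild V_i^N$ itself, so after also conditioning on $\wtild V_i^N$ there is nothing left. Making this precise is exactly what Claim~\ref{cl:ldc_claim1} does: it propagates the recursion $X_{j,t}\eqFunc(W_j,\wtild V_i^{t-1},\ul S^{t-1})$ through the entire block, which is why the reduction of the mutual information to just $\wtild V_i^N$ is clean and the remaining per-letter bounding is routine.
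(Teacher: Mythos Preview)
Your proof is correct and follows essentially the same route as the paper: you replace $V_j^N$ by $W_j$ (using that $V_j^N\eqFunc(W_j,\wtild V_i^N,\ul S^N)$ via Claim~\ref{cl:ldc_claim1}), use independence of $W_i,W_j,\ul S^N$ to move $W_j$ into the conditioning, reduce to $H(\wtild V_i^N\mid\ul S^N)$, and finish with a per-letter bound using the rank of $\mb H_{ji}$. The only cosmetic difference is that the paper performs the substitution $V_j^N\mapsto W_j$ in a single step (writing $I(W_i;V_j^N,\wtild V_i^N|\ul S^N)\le I(W_i;W_2,\wtild V_i^N|\ul S^N)$ directly), whereas you first add $W_j$ and then drop $V_j^N$; both are the same argument.
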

\begin{proof}
Let us focus on the case $(i,j)=(1,2)$. 
\begin{align*}
&I\lp W_1;V_2^N,\wtild{V}_1^N|\ul{S}^N\rp\\
&\overset{\aaaa}\le I\lp W_1;W_2,\wtild{V}_1^N|\ul{S}^N\rp
= I\lp W_1;\wtild{V}_1^N|\ul{S}^N,W_2\rp\\
&= H\lp\wtild{V}_1^N|\ul{S}^N,W_2\rp
\le H\lp\wtild{V}_1^N|S_2^N\rp\\
&= \mbb{E}_{S_2^N}\lb H\lp (s_2V_1)^N \rp \big| S_2^N = s_2^N\rb\\
&\le \mbb{E}_{S_2^N}\lb \sum_{t=1}^N H\lp s_{2,t}V_{1,t} \rp \Bigg| S_2^N = s_2^N\rb\\
&\le \mbb{E}_{S_2^N}\lb N_1\lp s_2^N\rp n_{21} \Big| S_2^N = s_2^N\rb\\
&= Np_2n_{21}.
\end{align*}
Here $N_1(\cdot)$ denotes the number of $1$'s in the sequence. (a) follows because $V_2^N \eqFunc \lp W_2, \wtild{V}_1^{N}, \ul{S}^{N}\rp$.
\end{proof}

\begin{claim}\label{cl:ldc_claim5}
For $\ijj$,
\begin{align*}
&N^{-1}H\lp Y_i^N| V_i^N, \wtild{V}_j^N, \ul{S}^N\rp\\
&\le p_i (n_{ii}-n_{ji})^+ + (1-p_i)\max\lbp n_{ij}, (n_{ii}-n_{ji})^+\rbp,\\
&N^{-1}H\lp Y_i^N| V_j^N, \wtild{V}_i^N, \ul{S}^N\rp\\
&\le p_j(n_{ii}-n_{ji})^+ + (1-p_j)n_{ii},\\
\end{align*}
\end{claim}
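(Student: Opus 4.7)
The plan is to reduce both inequalities to per-symbol bounds and then case-split on the instantaneous feedback state, exploiting the level structure of the shift matrix $\mb{H}_{ij}=\mb{S}^{q-n_{ij}}$. Starting with the chain rule and then dropping conditioning (which only increases entropy),
\begin{align*}
H\lp Y_i^N \,|\, V_i^N, \wtild V_j^N, \ul{S}^N\rp &= \sum_{t=1}^N H\lp Y_{i,t} \,|\, Y_i^{t-1}, V_i^N, \wtild V_j^N, \ul{S}^N\rp \\
&\le \sum_{t=1}^N H\lp Y_{i,t} \,|\, V_{i,t}, \wtild V_{j,t}, \ul{S}_t\rp,
\end{align*}
with an analogous step for the second entropy (conditioning on $V_j^N$ and $\wtild V_i^N$). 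Since the triple $\lp V_{i,t},\wtild V_{j,t},\ul{S}_t\rp$ has the same joint distribution for every $t$, it suffices to bound a generic summand by the claimed convex combination.

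For the first inequality, I would split on $S_{i,t}$, using $\wtild V_{j,t}=S_{i,t}\mb{H}_{ij}X_{j,t}$. When $S_{i,t}=1$, subtracting $\wtild V_{j,t}$ from $Y_{i,t}$ leaves $\mb{H}_{ii}X_{i,t}$, whose residual entropy given $V_{i,t}=\mb{H}_{ji}X_{i,t}$ is at most $(n_{ii}-n_{ji})^+$, because $\mb{H}_{ji}X_{i,t}$ already determines all but the lowest $(n_{ii}-n_{ji})^+$ positions of $\mb{H}_{ii}X_{i,t}$. When $S_{i,t}=0$, only $V_{i,t}$ remains in the conditioning; I would argue that $Y_{i,t}=\mb{H}_{ii}X_{i,t}+\mb{H}_{ij}X_{j,t}$, after removing the portion of $\mb{H}_{ii}X_{i,t}$ determined by $V_{i,t}$, is supported on the lowest $\max\lbp n_{ij},(n_{ii}-n_{ji})^+\rbp$ coordinates of the $q$-vector. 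Averaging over $S_{i,t}\sim\mathrm{Bernoulli}(p_i)$ then yields the stated bound.

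The second inequality follows the same template, but now only $\wtild V_{i,t}=S_{j,t}V_{i,t}$ carries information about $X_{i,t}$ beyond what is already in $V_{j,t}$, so I would case-split on $S_{j,t}$ instead. Knowing $V_{j,t}=\mb{H}_{ij}X_{j,t}$ always allows one to cancel $\mb{H}_{ij}X_{j,t}$ from $Y_{i,t}$, reducing the analysis to bounding $H\lp\mb{H}_{ii}X_{i,t}\,|\,\wtild V_{i,t}\rp$: at most $(n_{ii}-n_{ji})^+$ when $S_{j,t}=1$ (since then $V_{i,t}=\mb{H}_{ji}X_{i,t}$ is fully available) and at most $n_{ii}$ when $S_{j,t}=0$. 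Averaging over $S_{j,t}\sim\mathrm{Bernoulli}(p_j)$ gives the claim.

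The main obstacle is obtaining the $\max\lbp n_{ij},(n_{ii}-n_{ji})^+\rbp$ bound, rather than the looser $n_{ij}+(n_{ii}-n_{ji})^+$, in the $S_{i,t}=0$ branch of the first inequality. A naive argument would add the individual bit counts of the two independent contributions, giving the sum; but matching the outer bound to the achievability requires the tighter estimate. The key is the geometric observation that both $\mb{H}_{ii}X_{i,t}$ and $\mb{H}_{ij}X_{j,t}$ are placed at the bottom of the $q$-vector with overlapping supports, so the residual entropy after conditioning on $V_{i,t}$ is limited by the number of distinct positions at which randomness remains, not by the total number of random bits across both terms.
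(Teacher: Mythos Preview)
Your approach is correct and essentially identical to the paper's: reduce to single-letter terms via the chain rule and dropping conditioning, case-split on the relevant feedback state ($S_{i,t}$ for the first inequality, $S_{j,t}$ for the second), and bound each branch by a support-size count on the residual of $Y_{i,t}$, with your overlap observation giving exactly the $\max\lbp n_{ij},(n_{ii}-n_{ji})^+\rbp$ that the paper simply asserts. One small caveat: the joint law of $\lp V_{i,t},\wtild V_{j,t},\ul{S}_t\rp$ need not be stationary (the code can be time-varying), but this is harmless because your per-symbol bounds are distribution-free support bounds and the state marginals $p_i,p_j$ are the same for every $t$.
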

\begin{proof}
Let us focus on the case $(i,j)=(1,2)$. 
\begin{align*}
&H\lp Y_1^N | V_1^N, \wtild{V}_2^N, \ul{S}^N\rp 
\le H\lp Y_1^N | V_1^N, \wtild{V}_2^N, S_1^N\rp\\
&= \mbb{E}_{S_1^N}\lb H\lp Y_1^N | V_1^N, (s_1V_2)^N \rp \big| S_1^N = s_1^N\rb\\
&\le \mbb{E}_{S_1^N}\lb \sum_{t=1}^N H\lp Y_{1,t} | V_{1,t}, s_{1,t}V_{2,t}\rp \Bigg| S_1^N = s_1^N\rb\\
&\le \mbb{E}_{S_1^N}\lb \left. \begin{array}{l}N_1\lp s_1^N\rp (n_{11}-n_{21})^+\\ + N_0\lp s_1^N\rp \max\lbp n_{12}, (n_{11}-n_{21})^+\rbp \end{array}\right | S_1^N = s_1^N\rb\\
&= Np_1 (n_{11}-n_{21})^+ + N(1-p_1)\max\lbp n_{12}, (n_{11}-n_{21})^+\rbp
\end{align*}
Here $N_1(\cdot)$ and $N_0(\cdot)$ denote the number of $1$'s and $0$'s respectively in the sequence.

For the second inequality,
\begin{align*}
&H\lp Y_1^N | V_2^N, \wtild{V}_1^N, \ul{S}^N\rp 
\le H\lp Y_1^N | V_2^N, \wtild{V}_1^N, S_2^N\rp\\
&= \mbb{E}_{S_2^N}\lb H\lp Y_1^N | V_2^N, (s_2V_1)^N \rp \big| S_2^N = s_2^N\rb\\
&\le \mbb{E}_{S_2^N}\lb \sum_{t=1}^N H\lp Y_{1,t} | V_{2,t}, s_{2,t}V_{1,t}\rp \Bigg| S_2^N = s_2^N\rb\\
&\le \mbb{E}_{S_2^N}\lb N_1\lp s_2^N\rp (n_{11}-n_{21})^+ + N_0\lp s_2^N\rp n_{11} \Big| S_2^N = s_2^N\rb\\
&= Np_2 (n_{11}-n_{21})^+ + N(1-p_2)n_{11}.
\end{align*}
The case $(i,j)=(2,1)$ follows similarly.
\end{proof}

\begin{claim}\label{cl:ldc_claim6}
For $\ijj$,
\begin{align*}
H\lp Y_i^N | \ul{S}^N, W_j, V_i^N \rp \leq N\lp n_{ii}-n_{ji}\rp^+
\end{align*}
\end{claim}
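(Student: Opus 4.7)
By symmetry it suffices to treat the case $(i,j)=(1,2)$, so the goal becomes $H(Y_1^N \mid \ul S^N, W_2, V_1^N) \le N(n_{11}-n_{21})^+$. My plan is to reduce the conditional entropy to that of a purely deterministic, memoryless quantity about $X_1$, and then bound it symbol-by-symbol using the structure of the shift-matrix channel.

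First, I would use the channel equation $Y_1 = \mb{H}_{11} X_1 + \mb{H}_{12} X_2$ to rewrite the conditional entropy. Note that $V_2 = \mb{H}_{12} X_2$, so $Y_1 = \mb{H}_{11} X_1 + V_2$. By Claim C.1 applied with $(i,j)=(2,1)$, we have $X_{2,t} \eqFunc (W_2, V_1^{t-1}, \ul S^{t-1})$, so recursively $X_2^N$ (and hence $V_2^N$) is a deterministic function of $(W_2, V_1^N, \ul S^N)$. Therefore subtracting $V_2^N$ from $Y_1^N$ inside the conditional entropy gives
\begin{align*}
H(Y_1^N \mid \ul S^N, W_2, V_1^N) = H(\mb{H}_{11} X_1^N \mid \ul S^N, W_2, V_1^N).
\end{align*}

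Second, since $V_1^N = \mb{H}_{21} X_1^N$ by definition, we can drop the conditioning on $(\ul S^N, W_2)$ and use the chain rule to split across time:
\begin{align*}
H(\mb{H}_{11} X_1^N \mid \ul S^N, W_2, \mb{H}_{21} X_1^N) \le \sum_{t=1}^N H(\mb{H}_{11} X_{1,t} \mid \mb{H}_{21} X_{1,t}).
\end{align*}
Now I would invoke the structural fact that $\mb{H}_{ij}=\mb{S}^{q-n_{ij}}$ makes $\mb{H}_{ij} X$ reveal exactly the top $n_{ij}$ coordinates of $X$. Consequently, when $n_{11} \le n_{21}$, $\mb{H}_{11} X_{1,t}$ is a deterministic function of $\mb{H}_{21} X_{1,t}$, so each term vanishes; when $n_{11} > n_{21}$, the top $n_{21}$ coordinates of $X_{1,t}$ revealed by $\mb{H}_{11} X_{1,t}$ are already determined by $\mb{H}_{21} X_{1,t}$, leaving at most $n_{11}-n_{21}$ bits of uncertainty. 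In either case each summand is at most $(n_{11}-n_{21})^+$, yielding the desired bound $N(n_{11}-n_{21})^+$.

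I do not anticipate a real obstacle here; the only mildly delicate point is the recursive reconstruction of $X_2^N$ from $(W_2, V_1^N, \ul S^N)$, which is already packaged in Claim C.1 and just needs to be invoked carefully with the right index assignment. Everything else is a one-line application of the shift-matrix structure plus the chain rule.
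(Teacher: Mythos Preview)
Your proposal is correct and essentially mirrors the paper's own proof: the paper also invokes Claim~\ref{cl:ldc_claim1} to add $V_2^N$ (equivalently, subtract it from $Y_1^N$) to the conditioning, then drops $W_2$ and bounds $H(Y_1^N\mid \ul S^N,V_1^N,V_2^N)\le N(n_{11}-n_{21})^+$. Your write-up is slightly more explicit about the final shift-matrix step, but the route is the same.
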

\begin{proof}
We focus on $(i,j)=(1,2)$ without loss of generality.
\begin{align*}
&H\lp Y_i^N | \ul{S}^N, W_j, V_i^N \rp \overset{\aaaa}{=} H\lp Y_i^N | \ul{S}^N, W_j, V_i^N, V_j^N \rp \\
&\quad \leq H\lp Y_i^N | \ul{S}^N,V_i^N, V_j^N \rp \leq N\lp n_{11}-n_{21}\rp^+
\end{align*}
where (a) follows because $V_j^N \eqFunc X_j^N \eqFunc \lp W_j, V_i^N, \ul{S}^N\rp$ by Claim~\ref{cl:ldc_claim1}.
\end{proof}

\section{Proofs of Outer Bounds \eqref{eq:ob_g_Ri}, \eqref{eq:ob_g_R1R2}, \eqref{eq:ob_g_2R1R2}, and \eqref{eq:ob_g_R12R2}} \label{sec:ap_ob_g}
In this section, we prove an outer bound region for the enhanced channel defined in Section~\ref{sec:converse}.

\subsection{Notation}
We define
\begin{align*}
\breve V_i = \left\{ 
\begin{array}{ll}
\bar V_i, & \text{ if $S_j = 0$}\\
Y_{ji}, & \text{ if $S_j = 1$}
\end{array}
\right.
\end{align*}
for $\ijj$, where $\bar V_i = Y_{ji} + Z_{jj} = h_{ji}X_i+\bar Z_j$, and
\begin{align*}
M_i &= \left| \lbp t: S_{i,t}=1 \rbp \right|, \\
L_i &= N-M_i.
\end{align*}
For any random vector $E^N$, we define
\begin{align*}
E^{(t)} &= \lbp E_{t '} \rbp_{t': S_{i,t'} = 1, t'\leq t} \\
E^{[t]} &= \lbp E_{t '} \rbp_{t': S_{i,t'} = 0, t'\leq t} \\
E_{i,(t)}&= \left\{ 
\begin{array}{ll}
\emptyset, & \text{ if $S_i = 0$}\\
E_{i,t}, & \text{ if $S_i = 1$}
\end{array}
\right. \\
E_{i,[t]}&= \left\{ 
\begin{array}{ll}
E_{i,t}, & \text{ if $S_i = 0$}\\
\emptyset, & \text{ if $S_i = 1$}
\end{array}
\right.
\end{align*}
for $i=1$ or 2. Note that in vector form, this notation omits any reference to user index $i$ for the sake of brevity. That is, although it is not clear whether $E^{(t)}$ is defined with respect to $S_1$ or $S_2$, in the proof this will be clear from the context. For instance, $Y_1^{(t)}$ and $V_2^{(t)}$ are defined with respect to $S_1$, since these variables refer to signals that pass through the feedback channel controlled by $S_1$. The partial average power for Tx$i$, $P_i^{(jk)}$, is a random variable defined as
\begin{align*}
P_i^{(j0)} &= \frac{1}{L_i} \sum_{t:S_{j,t=0}} P_{i,t} \\
P_i^{(j1)} &= \frac{1}{M_i} \sum_{t:S_{j,t=1}} P_{i,t}
\end{align*}
for $j=1,2$, where $P_{i,t}$ ie the power used by Tx$i$ at time slot $t$.

Finally, we define $h_{S}\lp\cdot\rp:= h\lp\cdot|\ul{S}^N=S^N\rp$ for convenience, where $h(\cdot)$ denotes differential entropy, and $S^N$ is a particular realization of $\ul{S}^N$. Similarly, we define $I_S\lp \cdot;\cdot\rp:=I_S\lp \cdot;\cdot|\ul{S}^N=S^N\rp$.

\subsection{Proof of the Bound \eqref{eq:ob_g_Ri}}
We focus on the case $(i,j)=(1,2)$. By Fano's inequality,
\begin{align*}
&N(R_1 - \epsilon_N) \leq I(W_1; Y_1^N, \ul{S}^N) \\
&\quad\leq I(W_1; Y_1^N, \wtild V_1^N, W_2, \ul{S}^N) \\
&\quad\leq I(W_1; Y_1^N, \wtild V_1^N, \ul{S}^N | W_2) \\
&\quad\leq \sum_{t=1}^N I(W_1; Y_{1,t}, \wtild V_{1,t}, \ul{S}_t | W_2, Y_1^{t-1}, \wtild V_1^{t-1}, \ul{S}^{t-1}) \\
&\quad= \sum_{t=1}^N I(W_1; Y_{1,t}, \wtild V_{1,t}| W_2, Y_1^{t-1}, \wtild V_1^{t-1}, \ul{S}^{t}) \\
&\quad\overset{\aaaa}{=} \sum_{t=1}^N I(W_1; Y_{1,t}, \wtild V_{1,t}| W_2, Y_1^{t-1}, \wtild V_1^{t-1}, \ul{S}^{t}, X_{2,t}) \\
&\quad= \sum_{t=1}^N I(W_1; Y_{1,t}| W_2, Y_1^{t-1}, \wtild V_1^{t}, \ul{S}^{t}, X_{2,t}) \\
&\qquad +I(W_1; \wtild V_{1,t}| W_2, Y_1^{t-1}, \wtild V_1^{t-1}, \ul{S}^{t}, X_{2,t})  \\
&\quad= \sum_{t=1}^N h(Y_{1,t}| W_2, Y_1^{t-1}, \wtild V_1^{t}, \ul{S}^{t}, X_{2,t}) \\
&\qquad - h(Y_{1,t}| W_2, Y_1^{t-1}, \wtild V_1^{t}, \ul{S}^{t}, X_{2,t}, W_1) \\
&\qquad +I(W_1; \wtild V_{1,t}| W_2, Y_1^{t-1}, \wtild V_1^{t-1}, \ul{S}^{t}, X_{2,t})  \\
&\overset{\bbbb}{=} \sum_{t=1}^N h(Y_{1,t}| W_2, Y_1^{t-1}, \wtild V_1^{t}, \ul{S}^{t}, X_{2,t}) \\
&\qquad - h(Y_{1,t}| W_2, Y_1^{t-1}, \wtild V_1^{t}, \ul{S}^{t}, X_{2,t}, W_1, X_{1,t}) \\
&\qquad +\sum_{t=1}^N I(W_1; \wtild V_{1,t}| W_2, Y_1^{t-1}, \wtild V_1^{t-1}, \ul{S}^{t}, X_{2,t})  \\
&\quad = \sum_{t=1}^N h(Y_{1,t}| W_2, Y_1^{t-1}, \wtild V_1^{t}, \ul{S}^{t}, X_{2,t}) \\
&\qquad - h(Z_{1,t}| W_2, Y_1^{t-1}, \wtild V_1^{t}, \ul{S}^{t}, X_{2,t}, W_1, X_{1,t}) \\
&\qquad + I(W_1; \wtild V_{1,t}| W_2, Y_1^{t-1}, \wtild V_1^{t-1}, \ul{S}^{t}, X_{2,t})  \\
&\quad\overset{\cccc}{=} \sum_{t=1}^N h(Y_{1,t}| W_2, Y_1^{t-1}, \wtild V_1^{t}, \ul{S}^{t}, X_{2,t}) - h(Z_{1,t})\\
&\qquad + I(W_1; \wtild V_{1,t}| W_2, Y_1^{t-1}, \wtild V_1^{t-1}, \ul{S}^{t}, X_{2,t})  \\
&\quad \leq \sum_{t=1}^N h(Y_{1,t}| \wtild V_{1,t}, S_{2,t}, X_{2,t}) - h(Z_{1,t}) \\
&\qquad + I(W_1; \wtild V_{1,t}| W_2, Y_1^{t-1}, \wtild V_1^{t-1}, \ul{S}^{t}, X_{2,t})  \\ 
&\quad \leq \sum_{t=1}^N h(Y_{1,t}| \wtild V_{1,t}, S_{2,t}, X_{2,t}) - h(Z_{1,t}) \\
&\qquad + I(W_1, X_{1,t}; \wtild V_{1,t}| W_2, Y_1^{t-1}, \wtild V_1^{t-1}, \ul{S}^{t}, X_{2,t})  \\
&\quad = \sum_{t=1}^N h(Y_{1,t}| \wtild V_{1,t}, S_{2,t}, X_{2,t}) - h(Z_{1,t}) \\
&\qquad + I(X_{1,t}; \wtild V_{1,t}| W_2, Y_1^{t-1}, \wtild V_1^{t-1}, \ul{S}^{t}, X_{2,t}) \\
&\qquad + I(W_1; \wtild V_{1,t}| W_2, Y_1^{t-1}, \wtild V_1^{t-1}, \ul{S}^{t}, X_{2,t}, X_{1,t})  \\
&\overset{\dddd}{=} \sum_{t=1}^N h(Y_{1,t}| \wtild V_{1,t}, S_{2,t}, X_{2,t}) - h(Z_{1,t}) \\
&\qquad + I(X_{1,t}; \wtild V_{1,t}| W_2, Y_1^{t-1}, \wtild V_1^{t-1}, \ul{S}^{t}, X_{2,t})  \\
&= \quad \sum_{t=1}^N h(Y_{1,t}| \wtild V_{1,t}, S_{2,t}, X_{2,t}) - h(Z_{1,t}) \\
&\qquad + h( \wtild V_{1,t}| W_2, Y_1^{t-1}, \wtild V_1^{t-1}, \ul{S}^{t}, X_{2,t}) \\
&\qquad - h( \wtild V_{1,t}| W_2, Y_1^{t-1}, \wtild V_1^{t-1}, \ul{S}^{t}, X_{2,t}, X_{1,t})  \\
&\quad \leq \sum_{t=1}^N h(Y_{1,t}| \wtild V_{1,t}, S_{2,t}, X_{2,t}) - h(Z_{1,t}) \\
&\qquad + h( \wtild V_{1,t}|S_{2,t}) - h( \wtild V_{1,t}| W_2, Y_1^{t-1}, \wtild V_1^{t-1}, \ul{S}^{t}, X_{2,t}, X_{1,t})  \\
&\quad \overset{\eeee}{=} \sum_{t=1}^N h(Y_{1,t}| \wtild V_{1,t}, S_{2,t}, X_{2,t}) - h(Z_{1,t}) \\
&\qquad + h( \wtild V_{1,t}|\ul{S}_{t}) - h( \wtild V_{1,t}| S_{2,t}, X_{1,t})  \\
&\quad= \sum_{t=1}^N h(Y_{1,t}| \wtild V_{1,t}, S_{2,t}, X_{2,t}) - h(Z_{1,t}) \\
&\qquad + I(X_{1,t};\wtild V_{1,t}|S_{2,t}) \\
&\quad \overset{\ffff}{=} p_2 \log \lp 1+\frac{\SNR_1}{1+\INR_2}  \rp + (1-p_2)\log \lp 1+\SNR_1 \rp \\
&\qquad + p_2 \log \lp 1 + \INR_2 \rp \\
&\quad =  \log \lp 1 + \SNR_1 \rp + p_2 \log \lp1 + \frac{\INR_2}{1+\SNR_1} \rp
\end{align*}
where
\begin{itemize}
\item (a) is due to Lemma~\ref{lem:xfunc}
\item (b) is because $X_{1,t} \eqFunc \lp \ul{S}^{t-1}, W_1, \wtild Y_1^{t-1} \rp \eqFunc \lp \ul{S}^{t-1}, W_1, Y_1^{t-1} \rp$,
\item (c) is because $Z_{1,t}$ is independent from all past signals and messages,
\item (d) is because $W_1 - X_{1,t} - \wtild V_{1,t}$ is a Markov chain, hence conditioned on $X_{1,t}$, $\wtild V_{1,t}$ is independent from $W_1$ and all the other past signals,
\item (e) is because given $\lp S_{2,t}, X_{1,t} \rp$, $\wtild V_{1,t}$ is independent from all the other variables in the conditioning,
\item (f) is due to Lemma~\ref{lem:hYtV}
\end{itemize}

\subsection{Proof of Bound \eqref{eq:ob_g_R1R2}}
In this section, we exclusively focus on the enhanced channel defined in Section~\ref{sec:converse}. By Fano's inequality.
\begin{align}
&N(R_1+R_2 - \epsilon_N) \notag\\
&\quad\leq I(W_1; \breve Y_1^N, \ul{S}^N) + I(W_2; \breve Y_2^N, \ul{S}^N) \notag\\
&\quad= I(W_1; \breve Y_1^N | \ul{S}^N) + I(W_2; \breve Y_2^N | \ul{S}^N) \notag\\
&\quad\leq I(W_1; \breve Y_1^N, \breve V_1^N | \ul{S}^N) + I(W_2; \breve Y_2^N, \breve V_2^N | \ul{S}^N) \notag\\
&\quad= h(\breve Y_1^N|\ul{S}^N, \breve V_1^N) + h(\breve Y_2^N|\ul{S}^N, \breve V_2^N) \label{eq:one}\\
&\qquad + h(\breve V_1^N|\ul{S}^N) + h(\breve V_2^N|\ul{S}^N) \label{eq:two}\\
&\qquad - h(\breve Y_1^N, \breve V_1^N | \ul{S}^N, W_1) - h(\breve Y_2^N, \breve V_2^N | \ul{S}^N, W_2) \label{eq:three}
\end{align}
Let us take one term from \eqref{eq:one}. 
\begin{align*}
&h(\breve Y_1^N|\ul{S}^N, \breve V_1^N) = \Es{ \hs{\breve Y_1^N|\breve V_1^N} } \\
&\quad \overset{\aaaa}{=} \Es{ \hs{\bar Y_1^{L_1}, Y_{11}^{M_1}, Y_{12}^{M_1}|\breve V_1^N} } \\
&\quad \leq \Es{ \hs{\bar Y_1^{L_1}|\breve V_1^N} }+ \Es{ \hs{Y_{11}^{M_1}|\breve V_1^N} } \\
&\qquad + \Es{ \hs{ Y_{12}^{M_1}|\breve V_1^N} }
\end{align*}
where (a) follows by (with a slight abuse of notation) decomposing $\breve Y_1^N$ into $\lp Y_{11}^{M_1}, Y_{12}^{M_1} \rp$ for time slots where $S_{1,t}=1$, and into $\bar Y_1^{L_1}$ for time slots for which $S_{1,t}=0$.

The other term in \eqref{eq:one} can be bounded similarly. Let us take one term from \eqref{eq:three}.
\begin{align*}
&- h(\breve Y_1^N, \breve V_1^N | \ul{S}^N, W_1) \\
&= -\Es{ \hs{\breve Y_1^N, \breve V_1^N |W_1}} \\
&\overset{\bbbb}{=} -\Es{ \hs{\breve V_2^N, Z_{11}^{M_1}, Z_2^{L_2}, Z_{21}^{M_2} |W_1}} \\
&\overset{\cccc}{=} -\sum_{t=1}^N \mbb{E}_{S^N}\lb h \lp \breve V_{2,t}, Z_{11,(t)}, Z_{2,[t]}, Z_{21,(t)}\right.\right.\\
&\qquad\qquad \left.\left| W_1, \breve V_{2}^{t-1}, Z_{11}^{(t-1)}, Z_{2}^{[t-1]}, Z_{21}^{(t-1)}\rp \rb \\
&\overset{\dddd}{=} -\mbb{E} \lb \sum_{t=1}^N \hs{\breve V_{2,t}|W_1, \breve V_2^{t-1}, Z_{11}^{(t-1)}, Z_2^{[t-1]} Z_{21}^{(t-1)}} \right. \\
&\quad + \sum_{t: S_{1,t}=1} \hs{Z_{11,t}|W_1, \breve V_2^{t-1}, Z_{11}^{(t-1)}, Z_2^{[t-1]} Z_{21}^{(t-1)}}  \\
&\quad + \sum_{t: S_{2,t}=0} \hs{Z_{2,t}|W_1, \breve V_2^{t-1}, Z_{11}^{(t-1)}, Z_2^{[t-1]} Z_{21}^{(t-1)}}  \\
&\quad +\left. \sum_{t: S_{2,t}=1} \hs{Z_{21,t}|W_1, \breve V_2^{t-1}, Z_{11}^{(t-1)}, Z_2^{[t-1]} Z_{21}^{(t-1)}} \rb \\
&\overset{\eeee}{=} -\Es{\sum_{t=1}^N \hs{\breve V_{2,t}|W_1, \breve V_2^{t-1}, Z_{11}^{(t-1)}, Z_2^{[t-1]} Z_{21}^{(t-1)}} } \\
&\quad -\Es{ L_2 h(Z_2) + M_2 h(Z_{21}) + M_1 h(Z_{11})}\\
&= -\Es{\sum_{t=1}^N \hs{\breve V_{2,t}|W_1, \breve V_2^{t-1}, Z_{11}^{(t-1)}, Z_2^{[t-1]} Z_{21}^{(t-1)}} }\\
&\quad  - N(1-p_2)h(Z_2) - Np_2 h(Z_{21}) - Np_1 h(Z_{11})
\end{align*}
where (b) follows by Lemma \ref{lem:YgW}, (c) follows by chain rule, (d) follows by the fact that for a given time slot $t$, the involved noise terms are independent from each other and from $\breve V_{2,t}$; and (e) is because the signals up to time $t-1$ are independent from the noise at time $t$, and because noise processes are i.i.d.

The other term in \eqref{eq:three} can be bounded similarly.

Putting everything together, we have
\begin{align}
&N(R_1+R_2-\epsilon_N) \notag\\
& \leq \Es{ \hs{\bar Y_1^{L_1}|\breve V_1^N} } + \Es{ \hs{Y_{11}^{M_1}|\breve V_1^N} } \notag\\
& + \Es{ \hs{\bar Y_2^{L_2}|\breve V_2^N} } + \Es{ \hs{Y_{22}^{M_2}|\breve V_2^N} } \notag\\
& + \Es{ \hsc{\breve V_1^N, Y_{12}^{M_1}} } \label{eq:four}\\
& + \Es{ \hsc{\breve V_2^N, Y_{21}^{M_2}} } \label{eq:five}\\
&  -\Es{\sum_{t=1}^N \hs{\breve V_{2,t}|W_1, \breve V_2^{t-1}, Z_{11}^{(t-1)}, Z_2^{[t-1]} Z_{21}^{(t-1)}} } \label{eq:six}\\
&  -\Es{\sum_{t=1}^N \hs{\breve V_{1,t}|W_2, \breve V_1^{t-1}, Z_{22}^{(t-1)}, Z_1^{[t-1]} Z_{12}^{(t-1)}} } \label{eq:seven}\\
&  - N(1-p_2)h(Z_2) - Np_2 h(Z_{21}) - Np_1 h(Z_{11}) \notag\\
& - N(1-p_1)h(Z_1) - Np_1 h(Z_{12}) - Np_2 h(Z_{22}) \notag
\end{align}
Let us combine \eqref{eq:four} and \eqref{eq:seven}.
\begin{align*}
&\eqref{eq:four}+\eqref{eq:seven} = \mbb{E}_{S^N} \lb \hsc{\breve V_1^N, Y_{12}^{M_1}} \right. \\
&\quad \left. - \sum_{t=1}^N \hs{\breve V_{1,t}|W_2, \breve V_1^{t-1}, Z_{22}^{(t-1)}, Z_1^{[t-1]} Z_{12}^{(t-1)}} \rb \\
&= \mbb{E}_{S^N} \lb \sum_{t=1}^N \hs{\breve V_{1,t}|\breve V_1^{t-1}, Y_{12}^{(t-1)}} \right.\\
& \quad - \hs{\breve V_{1,t}|W_2, \breve V_1^{t-1}, Z_{22}^{(t-1)}, Z_1^{[t-1]} Z_{12}^{(t-1)}} \\
& \quad + \left. \sum_{t:S_{1,t}=1} \hs{Y_{12,t}| \breve V_1^{t}, Y_{12}^{(t-1)}} \rb \\
& \leq  \mbb{E}_{S^N}  \lb \sum_{t:S_{1,t}=1} \hsc{Y_{12,t}} \right.\\
&\quad \left.\sum_{t=1}^N I_S(\breve V_{1,t};W_2, Z_{22}^{(t-1)}, Z_1^{[t-1]} Z_{12}^{(t-1)} |\breve V_1^{t-1}, Y_{12}^{(t-1)} \rb
\end{align*}
Similarly, we combine \eqref{eq:five} with \eqref{eq:six} to obtain the same expression with user indices swapped. Plugging these back, we get
\begin{align*}
&N(R_1+R_2-\epsilon_N) \\
& \leq \Es{ \hs{\bar Y_1^{L_1}|\breve V_1^N} } + \Es{ \hs{\bar Y_{11}^{M_1}|\breve V_1^N} } \\
& +  \mbb{E}_{S^N} \lb \hs{\bar Y_2^{L_2}|\breve V_2^N} + \hs{\bar Y_{22}^{M_2}|\breve V_2^N} \right. \\
& + \sum_{t=1}^N I_S(\breve V_{1,t};W_2, Z_{22}^{(t-1)}, Z_1^{[t-1]} Z_{12}^{(t-1)} |\breve V_1^{t-1}, Y_{12}^{(t-1)})  \\
& + \left. \sum_{t=1}^N I_S(\breve V_{2,t};W_1, Z_{11}^{(t-1)}, Z_2^{[t-1]} Z_{21}^{(t-1)} |\breve V_2^{t-1}, Y_{21}^{(t-1)}) \rb \\
& + \Es{ \sum_{t:S_{1,t}=1} \hsc{Y_{12,t}} } + \Es{ \sum_{t:S_{2,t}=1} \hsc{Y_{21,t}} } \\
&  - N(1-p_2)h(Z_2) - Np_2 h(Z_{21}) - Np_1 h(Z_{11}) \\
&  - N(1-p_1)h(Z_1)- Np_1 h(Z_{12}) - Np_2 h(Z_{22})
\end{align*}
We use Lemmas \ref{lem:hT}, \ref{lem:I0}, \ref{lem:YgV} and \ref{lem:UgV} to bound each of these terms, and use the fact that noise distribution is Gaussian to obtain the desired bound.

\subsection{Proof of the Bounds \eqref{eq:ob_g_2R1R2} and \eqref{eq:ob_g_R12R2}}
We excelusively focus on the enhanced channel, defined in Section~\ref{sec:converse}. By symmetry, it is sufficient to prove \eqref{eq:ob_g_2R1R2}. In addition to the enhanced interference channel in the case of sum rate bound, we consider an additional copy of Receiver 1, who always receives $\bar Y_1$ (\emph{i.e.}, as in the original channel). The feedback signal of Tx1 is still given by $S_1 \cdot \bar Y_1$, \emph{i.e.}, the same as the original channel. We would like to prove a sum rate upper bound on this new channel.

By Fano's inequality,
\begin{align}
&N(2R_1+R_2-\epsilon_N) \notag\\
&\quad \leq I(W_1;\bar Y_1^N, \ul{S}^N) +  I(W_2;\breve Y_2^N, \ul{S}^N) + I(W_1;\breve Y_1^N, \ul{S}^N) \notag\\
&\quad = I(W_1;\bar Y_1^N| \ul{S}^N) +  I(W_2;\breve Y_2^N| \ul{S}^N) + I(W_1;\breve Y_1^N| \ul{S}^N) \notag\\
&\quad = I(W_1;\bar Y_1^N| \ul{S}^N) +  I(W_2;\breve Y_2^N, \bar V_2^N| \ul{S}^N) \notag\\
&\qquad + I(W_1;\breve Y_1^N, \breve V_1^N| W_2, \ul{S}^N) \label{eq:eight}
\end{align}
The first mutual information term in \eqref{eq:eight} can be bounded as follows
\begin{align*}
I(W_1;\bar Y_1^N| \ul{S}^N) &= h(\bar Y_1^N| \ul{S}^N) - h(\bar Y_1^N|W_1, \ul{S}^N) \\
&= h(\bar Y_1^N| \ul{S}^N) - \sum_{t=1}^N h(\bar Y_{1,t}|W_1, \bar Y_1^{t-1}, \ul{S}^N) \\
&\overset{\aaaa}{=} h(\bar Y_1^N| \ul{S}^N) - \sum_{t=1}^N h(\bar V_{2,t}|W_1, \bar V_2^{t-1}, \ul{S}^N) \\
&\leq \sum_{t=1}^N h(\bar Y_{1,t}) - h(\bar V_{2,t}|W_1, \bar V_2^{t-1}, \ul{S}^N)
\end{align*}
where (a) follows by the fact that $X_{1,t}\eqFunc\lp W_1, \bar Y_1^{t-1}, \ul{S}^{t-1}\rp$ and by subtracting $X_{1,t}$ from $\bar Y_{1,t}$.

Let us consider the second mutual information term from \eqref{eq:eight}.
\begin{align*}
&I(W_2;\breve Y_2^N, \bar V_2^N| \ul{S}^N)\\
& = \Es{ \hsc{\breve Y_2^N, \bar V_2^N} - \hs{\breve Y_2^N, \bar V_2^N|W_2} } \\
&\overset{\aaaa}{=} \mbb{E}_{S^N} \lb \hsc{Y_{22}^{M_2}, Y_{21}^{M_2}, \bar Y_2^{L_2}, \bar V_2^N} \right.\\
&\quad \left.- \hs{\breve V_1^N, Z_1^N, Z_{22}^{M_2}|W_2} \rb \\
&= \Es{ \hs{Y_{22}^{M_2}, \bar Y_2^{L_2}|Y_{21}^{M_2}, \bar V_2^N}}  \\
&\quad + \Es{ \hsc{Y_{21}^{M_2}, \bar V_2^N} - \hs{\breve V_1^N, Z_1^N, Z_{22}^{M_2}|W_2} } \\
&\overset{\bbbb}{\leq}  \Es{ \hs{Y_{22}^{M_2}| \bar V_2^N} + \hs{ \bar Y_2^{L_2}|\bar V_2^N} } \\
&\quad + \mbb{E}_{S^N} \lb \sum_{t=1}^N \hs{\bar V_{2,t}|\bar V_2^{t-1}, Y_{21}^{(t-1)}} \right.\\
&\quad  + \sum_{t:S_{2,t}=1}^N \hs{Y_{21,t}|\bar V_2^{t-1}, Y_{21}^{(t-1)}} \\
&\quad  - \left.\sum_{t=1}^N \hs{\breve V_{1,t}, Z_{1,t}, Z_{22,(t)}|W_2, \breve V_1^{t-1}, Z_1^{t-1}, Z_{22}^{(t-1)} }\rb \\
&\overset{\cccc}{=}  \Es{ \hs{Y_{22}^{M_2}| \bar V_2^N} + \hs{ \bar Y_2^{L_2}|\bar V_2^N} } \\
&\quad + \mbb{E}_{S^N} \lb \sum_{t=1}^N \hs{\bar V_{2,t}|\bar V_2^{t-1}, Y_{21}^{(t-1)}} \right.\\
&\quad \left. + \sum_{t:S_{2,t}=1}^N \hs{Y_{21,t}|\bar V_2^{t-1}, Y_{21}^{(t-1)}} \rb \\
&\quad  - \mbb{E}_{S^N}\lb\sum_{t=1}^N \hs{\breve V_{1,t}|W_2, \breve V_1^{t-1}, Z_1^{t-1}, Z_{22}^{(t-1)} }\right.  \\
&\quad  + \sum_{t=1}^N \hs{Z_{1,t}|W_2, \breve V_1^{t-1}, Z_1^{t-1}, Z_{22}^{(t-1)} }  \\
&\quad + \left. \sum_{t: S_{2,t}=1} \hs{Z_{22,t}|W_2, \breve V_1^{t-1}, Z_1^{t-1}, Z_{22}^{(t-1)} } \rb \\
&\overset{\dddd}{\leq} \Es{ \hs{Y_{22}^{M_2}| \bar V_2^N} + \hs{ \bar Y_2^{L_2}|\bar V_2^N} } \\
&\quad + \Es{ \sum_{t=1}^N \hs{\bar V_{2,t}|\bar V_2^{t-1}, Y_{21}^{(t-1)}} + \sum_{t:S_{2,t}=1} \hsc{Y_{21,t}}}\\
&\quad - \Es{\sum_{t=1}^N \hs{\breve V_{1,t}|W_2, \breve V_1^{t-1}, Z_1^{t-1}, Z_{22}^{(t-1)} } } \\
&\quad - Nh(Z_1) - Np_2h(Z_{22})
\end{align*}
where (a) is by decomposing $\breve Y_2^N$ into $\lp Y_{22}^{M_2}, Y_{21}^{M_2}\rp$ for time slots where $S_{2,t}=1$, and to $\bar Y_2^{L_2}$ for time slots where $S_{2,t}=0$, and by Lemma \ref{lem:YgW}. (b) is because conditioning reduces entropy and by chain rule. (c) is because for a given time slot $t$, the noise terms involved are independent from each other and from $\breve V_{1,t}$. (d) is because conditioning reduces entropy and the noise processes are i.i.d., and because the noise terms are independent from the signals up to time $t-1$.

Next, we consider the third mutual information term from \eqref{eq:eight}. 
\begin{align*}
&I(W_1;\breve Y_1^N, \breve V_1^N| W_2, \ul{S}^N) \\
&= \Es{ I_S(W_1;\breve Y_1^N, \breve V_1^N| W_2 )} \\
&\overset{\aaaa}{\leq} \Es{ I(W_1;\breve Y_1^N, \breve V_1^N| W_2, Z_{22}^{M_2}, Z_{11}^{M_1})} \\
&= \Es{ \hs{\breve Y_1^N, \breve V_1^N| W_2, Z_{22}^{M_2}, Z_{11}^{M_1}} }\\
& - \Es{\hs{\breve Y_1^N, \breve V_1^N| W_2, Z_{22}^{M_2}, Z_{11}^{M_1}, W_1} } \\
&= \sum_{t=1}^N \Es{ \hs{\breve Y_{1,t}, \breve V_{1,t}| \breve Y_1^{t-1},\breve V_1^{t-1}, W_2, Z_{22}^{M_2}, Z_{11}^{M_1}} }\\
& - \Es{\hs{\breve Y_{1,t}, \breve V_{1,t}| \breve Y_1^{t-1},\breve V_1^{t-1}, W_2, Z_{22}^{M_2}, Z_{11}^{M_1}, W_1} } \\
&= \sum_{t=1}^N \Es{ \hs{\breve Y_{1,t}| \breve Y_1^{t-1},\breve V_1^{t}, W_2, Z_{22}^{M_2}, Z_{11}^{M_1}}} \\
& + \Es{\hs{\breve V_{1,t}| \breve Y_1^{t-1},\breve V_1^{t-1}, W_2, Z_{22}^{M_2}, Z_{11}^{M_1}}}\\
&  - \Es{\hs{\breve Y_{1,t}, \breve V_{1,t}| \breve Y_1^{t-1},\breve V_1^{t-1}, W_2, Z_{22}^{M_2}, Z_{11}^{M_1}, W_1} } \\ 
&\overset{\bbbb}{=} \sum_{t=1}^N \Es{ \hs{\breve Y_{1,t}| \breve Y_1^{t-1},\breve V_1^{t}, W_2, Z_{22}^{M_2}, Z_{11}^{M_1}, X_{2,t}}} \\
& + \Es{\hs{\breve V_{1,t}| \breve Y_1^{t-1},\breve V_1^{t-1}, W_2, Z_{22}^{M_2}, Z_{11}^{M_1}}}\\
&  - \mbb{E}_{S^N} \lb h_S \lp \breve Y_{1,t}, \breve V_{1,t}\right.\right.\\
&\qquad \quad \left.\left. | \breve Y_1^{t-1},\breve V_1^{t-1}, W_2, Z_{22}^{M_2}, Z_{11}^{M_1}, W_1, X_{1,t}, X_{2,t} \rp \rb  \\
&\overset{\cccc}{\leq} \sum_{t=1}^N \Es{ \hs{\breve Y_{1,t}| \breve V_{1,t}, X_{2,t}} }\\
& + \Es{\hs{\breve V_{1,t}| Y_{12}^{(t-1)},\breve V_1^{t-1}, W_2, Z_{22}^{M_2}, Z_{11}^{M_1}}}\\ 
&  - \Es{\sum_{t:S_{2,t}=0} \hsc{Z_{2,t}} + \sum_{t:S_{2,t}=1} \hsc{Z_{21,t}}}\\
&  - \Es{\sum_{t:S_{1,t}=0} \hsc{Z_{1,t}}+ \sum_{t:S_{1,t}=1} \hsc{Z_{11,t}, Z_{12,t}}}\\
&\overset{\dddd}{\leq} \sum_{t=1}^N \Es{ \hs{\breve Y_{1,t}| \breve V_{1,t}, X_{2,t}} }\\
& + \Es{\hs{\breve V_{1,t}| Y_{12}^{(t-1)},\breve V_1^{t-1}, W_2, Z_{22}^{(t-1)}, Z_{11}^{(t-1)}}}\\
&  - N(1-p_1)h(Z_1) - Np_1h(Z_{11})- Np_1h(Z_{12})\\
&  - Np_2h(Z_{21}) - N(1-p_2)h(Z_2)
\end{align*}
where (a) follows by the fact that $\lp Z_{11}^{M_1}, Z_{22}^{M_2} \rp$ is independent from $W_1$ given $W_2$, (b) is because $X_{1,t}\eqFunc\lp W_1, \breve Y_1^{t-1}, \ul{S}^{t-1}\rp$ and $X_{2,t}\eqFunc\lp W_2, \breve V_1^{t-1}, Z_{22}^{(t-1)}, \ul{S}^{t-1}\rp$. In (c), the first two terms are upper bounded using the fact that $\breve Y_1^{t-1} = \lp Y_{11}^{(t-1)}, Y_{12}^{(t-1)}, \bar Y_1^{[t-1]} \rp$, and that conditioning reduces entropy. The noise terms are obtained by subtracting $X_{1,t}$ and $X_{2,t}$ from $\breve Y_{1,t}$ and $\breve V_{1,t}$, and using the fact that noise variables at time $t$ are independent from the variables up to time $t-1$. (d) is because conditioning reduces entropy and because noise processes are i.i.d.

Putting everything back together, we have
\begin{align*}
&N(2R_1+R_2-\epsilon_N) \\
& \leq \sum_{t=1}^N h(\bar Y_{1,t}) + \Es{ \hs{Y_{22}^{M_2}| \bar V_2^N} + \hs{ \bar Y_2^{L_2}|\bar V_2^N} } \\
&\quad + \mbb{E}_{S^N} \lb \vphantom{\sum_{t:S_{2,t}=1}}
\sum_{t=1}^N I_S(\bar V_{2,t}; W_1|\bar V_2^{t-1}, Y_{21}^{(t-1)})\right. \\
&\quad +  \sum_{t=1}^N  I_S(\breve V_{1,t}; Z_{1}^{t-1}| Y_{12}^{(t-1)},\breve V_1^{t-1}, W_2, Z_{22}^{(t-1)}, Z_{11}^{(t-1)})\\
&\quad + \left. \sum_{t=1}^N \hs{\breve Y_{1,t}| \breve V_{1,t}, X_{2,t}}+\sum_{t:S_{2,t}=1} \hsc{Y_{21,t}} \rb\\
&\quad - N(1-p_1)h(Z_1) - Np_1h(Z_{11})- Np_1h(Z_{12}) \\
&\quad - Np_2h(Z_{21}) - N(1-p_2)h(Z_2)- Nh(Z_1)\\
&\quad  - Np_2h(Z_{22})
\end{align*}
Using Lemma \ref{lem:hT}, \ref{lem:I0}, \ref{lem:I0_1}, \ref{lem:YgV}, \ref{lem:UgV} and \ref{lem:priv} to bound each of these terms, we get the desired bound.

\subsection{Lemmas}
In this subsection, we prove the lemmas that have been used in the proofs of the previous subsections.
\begin{lemma}\label{lem:xfunc}
$X_{2,t} \eqFunc \lp W_2, \wtild V_1^{t-1}, \ul{S}^t \rp$
\end{lemma}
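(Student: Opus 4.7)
My plan is to adapt the inductive argument used to prove Claim~\ref{cl:ldc_claim1}, replacing the linear deterministic channel map with its Gaussian counterpart. The key observation is the decomposition
\begin{align*}
\wtild Y_{2,s} = S_{2,s}\lp h_{22}X_{2,s} + h_{21}X_{1,s} + Z_{2,s}\rp = S_{2,s} h_{22} X_{2,s} + \wtild V_{1,s},
\end{align*}
which shows that once Tx2 knows its own past input $X_{2,s}$ and the state $S_{2,s}$, it can reconstruct the feedback sample $\wtild Y_{2,s}$ from the stripped interference sample $\wtild V_{1,s}$.

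I would proceed by induction on $t$. The base case $t=1$ is immediate: there is no prior feedback to the encoder, so $X_{2,1} \eqFunc W_2$, which is certainly a function of $\lp W_2, \ul{S}^1\rp$. For the inductive step, assume $X_{2,s} \eqFunc \lp W_2, \wtild V_1^{s-1}, \ul{S}^s\rp$ for every $s \le t$. The channel model dictates $X_{2,t+1} \eqFunc \lp W_2, S_2^t, \wtild Y_2^t\rp$, so it suffices to exhibit $\wtild Y_2^t$ as a function of $\lp W_2, \wtild V_1^t, \ul{S}^{t+1}\rp$. Applying the displayed identity componentwise for each $s\le t$ and plugging in the inductive hypothesis for $X_{2,s}$ yields $\wtild Y_{2,s} \eqFunc \lp W_2, \wtild V_1^s, \ul{S}^s\rp$; concatenating over $s=1,\dots,t$ closes the induction.

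There is no substantive obstacle here; the argument is essentially the Gaussian mirror image of Claim~\ref{cl:ldc_claim1}, with the Gaussian output equation $Y_2 = h_{22}X_2 + h_{21}X_1 + Z_2$ playing the role that the matrix equation $Y_2 = \mb{H}_{22}X_2 + \mb{H}_{21}X_1$ played in the deterministic case. The only subtlety worth flagging is the asymmetry between the $\ul{S}^t$ appearing in the lemma statement and the $S_2^{t-1}$ that literally enters the encoder at time $t$; this discrepancy is harmless since any deterministic function of a subset of random variables is trivially a deterministic function of any superset.
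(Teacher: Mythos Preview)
Your proof is correct and follows essentially the same inductive argument as the paper: both rely on the decomposition $\wtild Y_{2,s} = S_{2,s}h_{22}X_{2,s} + \wtild V_{1,s}$ together with the encoder definition $X_{2,t}\eqFunc\lp W_2,\wtild Y_2^{t-1},\ul{S}^t\rp$ to run the induction on $t$. The paper's version is slightly more terse, absorbing your reconstruction of $\wtild Y_2^t$ into the single line $X_{2,t}\eqFunc\lp W_2,\wtild V_1^{t-1},X_2^{t-1},\ul{S}^t\rp$, but the content is identical.
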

\begin{proof}
Note that
\begin{align*}
X_{2,1} \eqFunc W_2
\end{align*}
and by the definition of the channel,
\begin{align*}
X_{2,t} &\eqFunc \lp W_2, \wtild Y_2^{t-1}, \ul{S}^t \rp \\
&\overset{\aaaa}{\eqFunc} \lp W_2, \wtild V_1^{t-1}, X_2^{t-1}, \ul{S}^t \rp,
\end{align*}
hence the result follows by induction on $t$. (a) follows because $\wtild Y_2^{t-1} = S_2^{t-1}h_{22}X_{2}^{t-1} + \wtild V_1^{t-1}$.
\end{proof}

\begin{lemma}\label{lem:hYtV}
\begin{align*}
\sum_{t=1}^N h(Y_{1,t}| \wtild V_{1,t}, S_{2,t}, X_{2,t}) &\leq Np_2 \log \lp 1+\frac{\SNR_1}{1+\INR_2}  \rp \\
&\quad + N(1-p_2)\log \lp 1+\SNR_1 \rp 
\end{align*}
\end{lemma}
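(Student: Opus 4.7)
The plan is to peel off the conditioning on $S_{2,t}$ and then bound each case by a linear–MMSE / maximum-entropy argument. Since $S_{2,t}$ is Bernoulli with parameter $p_2$, and (by the i.i.d.\ temporal structure of $\underline{S}$ together with the definition of the channel) is independent of the tuple $(X_{1,t}, X_{2,t}, Z_{1,t}, Z_{2,t})$, I would first split
\begin{align*}
h(Y_{1,t}\mid \wtild V_{1,t}, S_{2,t}, X_{2,t}) = p_2\, h(Y_{1,t}\mid V_{1,t}, X_{2,t}) + (1-p_2)\, h(Y_{1,t}\mid X_{2,t}),
\end{align*}
and note that conditioning on $X_{2,t}$ lets us replace $Y_{1,t}$ by $h_{11}X_{1,t} + Z_{1,t}$ inside the differential entropy.

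Next I would handle the two cases separately. For the $S_{2,t}=0$ term, drop $X_{2,t}$ (conditioning reduces entropy) and apply the Gaussian maximum-entropy bound
\begin{align*}
h(h_{11}X_{1,t} + Z_{1,t}) \le \log\!\bigl(\pi e(1 + |h_{11}|^2 P_{1,t})\bigr),
\end{align*}
where $P_{1,t}:=\mathbb{E}[|X_{1,t}|^2]$. For the $S_{2,t}=1$ term, use the linear-MMSE trick: for any scalar $\alpha\in\mathbb{C}$,
\begin{align*}
h(h_{11}X_{1,t} + Z_{1,t}\mid h_{21}X_{1,t} + Z_{2,t}, X_{2,t}) \le \log\!\bigl(\pi e\,\mathbb{E}|(h_{11}-\alpha h_{21})X_{1,t} + Z_{1,t} - \alpha Z_{2,t}|^2\bigr).
\end{align*}
Optimizing over $\alpha$ (the Wiener coefficient) yields the bound $\log\!\bigl(\pi e\bigl(1 + \tfrac{|h_{11}|^2 P_{1,t}}{1 + |h_{21}|^2 P_{1,t}}\bigr)\bigr)$, and crucially this step only uses the second moment of $X_{1,t}$, so no Gaussianity of $X_{1,t}$ is needed.

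Finally, sum over $t$ and invoke Jensen's inequality. Both $x\mapsto \log(1+|h_{11}|^2 x)$ and $x\mapsto \log\!\bigl(1 + \tfrac{|h_{11}|^2 x}{1 + |h_{21}|^2 x}\bigr)$ are concave on $[0,\infty)$ (the latter because $x/(1+ax)$ is concave and $\log(1+\cdot)$ is concave and monotone), so the averaged power constraint $\tfrac{1}{N}\sum_t P_{1,t} \le P_1$ gives
\begin{align*}
\tfrac{1}{N}\sum_{t}\log\!\Bigl(1+\tfrac{|h_{11}|^2 P_{1,t}}{1+|h_{21}|^2 P_{1,t}}\Bigr) \le \log\!\Bigl(1+\tfrac{\SNR_1}{1+\INR_2}\Bigr),\qquad \tfrac{1}{N}\sum_t \log(1 + |h_{11}|^2 P_{1,t}) \le \log(1 + \SNR_1),
\end{align*}
which combined with the per-$t$ bounds above yields exactly the claimed inequality (the additive $N\log(\pi e)$ that appears cancels against the noise entropy $Nh(Z_{1,t})$ subtracted in step (c) of the $R_1$ bound, which is presumably how this lemma is invoked in context).

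The main obstacle is the $S_{2,t}=1$ case: $X_{1,t}$ is not Gaussian in general, so one must resist the temptation to quote $h(U\mid V)$ formulas for jointly Gaussian pairs. The linear-estimator bound $h(U\mid V)\le h(U-\alpha V)\le \log(\pi e\, \mathbb{E}|U-\alpha V|^2)$ circumvents this, at the cost of making the bound depend only on second-moment information (which is precisely what matches the MMSE denominator structure $1 + \INR_2$). Everything else is concavity plus the power constraint.
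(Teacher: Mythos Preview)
Your proposal is correct. Both you and the paper begin with the same Bernoulli split on $S_{2,t}$; the divergence is in how the resulting per-slot conditional entropies are turned into the single-letter expressions. The paper introduces a uniform time-sharing variable $Q\in[N]$, writes $\sum_t h(\cdot)=N\,h(\cdot\mid Q)$, and then appeals in one line to ``jointly Gaussian inputs with $\rho=0$ maximize the conditional differential entropy'' to land on the stated bound. You instead keep the per-slot structure, invoke the linear-estimator inequality $h(U\mid V)\le\log\bigl(\pi e\,\mathbb{E}|U-\alpha V|^2\bigr)$ (optimizing $\alpha$) for the $S_{2,t}=1$ branch, and only then average over $t$ via Jensen, checking concavity of $x\mapsto\log\bigl(1+\tfrac{|h_{11}|^2 x}{1+|h_{21}|^2 x}\bigr)$ explicitly. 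Your route is longer but makes transparent that only second-moment information about $X_{1,t}$ is used and that no Gaussianity of the inputs is assumed; the paper's route is terser but leaves the concavity/Jensen step implicit inside the ``Gaussian maximizes'' assertion. Your observation that the lemma as stated omits an additive $N\log(\pi e)$ which cancels against the $-\sum_t h(Z_{1,t})$ term in the surrounding $R_1$ bound is also correct.
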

\begin{proof}
\begin{align*}
&\sum_{t=1}^N h(Y_{1,t}| \wtild V_{1,t}, S_{2,t}, X_{2,t}) = \sum_{t=1}^N p_2 h(Y_{1,t}| V_{1,t}, X_{2,t}) \\
&\qquad + (1-p_2) h(Y_{1,t}| X_{2,t}) \\
&\quad \overset{\aaaa}{=}\sum_{t=1}^N p_2 h(Y_{1,Q}| V_{1,Q}, X_{2,Q},Q=t)\\
&\qquad + (1-p_2) h(Y_{1,Q}| X_{2,Q},Q=t) \\
&\quad = N p_2 h(Y_{1,Q}| V_{1,Q}, X_{2,Q},Q)\\
&\qquad + N(1-p_2) h(Y_{1,Q}| X_{2,Q},Q) \\
&\quad \overset{\bbbb}{=} N p_2 h(Y_1| V_1, X_2, Q)+ N(1-p_2) h(Y_1| X_2,Q) \\
&\quad \overset{\cccc}{\leq} Np_2 \log \lp 1+\frac{\SNR_1}{1+\INR_2}  \rp \\
&\qquad + N(1-p_2)\log \lp 1+\SNR_1 \rp 
\end{align*}
where (a) follows by introducing a time-sharing variable $Q$ uniformly distributed between 1 and $N$, (b) is by defining $Y_{i}:=Y_{i,Q}$, $X_{i}:=X_{i,Q}$ and $V_{i}:=V_{i,Q}$. (c) follows by the fact that choosing jointly Gaussian input distribution with correlation coefficient $\rho=0$ for $p(x_1,x_2)$ maximizes the given conditional differential entropy.
\end{proof}

\begin{lemma} \label{lem:hT}
\begin{align*}
\Es{\sum_{t:S_{i,t}=1} \hsc{Y_{ij,t}}} = Np_i \log\lp 2\pi e \lp \frac{1}{2}+ \INR_i\rp \rp
\end{align*} for $\ijj$.
\end{lemma}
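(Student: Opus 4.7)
The plan is to combine the maximum-entropy property of the (complex) Gaussian with the per-user average power constraint, after taking the expectation over the feedback state in a way that exposes the marginal $p_i = \Prob{S_{i,t}=1}$.

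First, I would fix a realization $\ul{S}^N = s^N$ and, for each $t$ with $s_{i,t}=1$, apply the Gaussian maximum-entropy bound to $Y_{ij,t} = h_{ij}X_{j,t} + Z_{ij,t}$. Since $Z_{ij,t}\sim\cgauss{0}{1/2}$ is independent of both $X_{j,t}$ and $\ul{S}^N$, the conditional variance satisfies
\begin{align*}
\mathrm{Var}\bigl(Y_{ij,t}\mid \ul{S}^N=s^N\bigr) \;\le\; |h_{ij}|^2\,\mathbb{E}\bigl[|X_{j,t}|^2\,\big|\,\ul{S}^N=s^N\bigr] + \tfrac{1}{2},
\end{align*}
so that $h_{s}(Y_{ij,t}) \le \log\!\bigl(2\pi e\bigl(|h_{ij}|^2\,\mathbb{E}[|X_{j,t}|^2\mid\ul{S}^N=s^N] + \tfrac{1}{2}\bigr)\bigr)$.

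Next, I would rewrite the sum over $\{t:S_{i,t}=1\}$ using indicator functions and swap expectation and summation:
\begin{align*}
\Es{\sum_{t:S_{i,t}=1}\hs{Y_{ij,t}}}
= \sum_{t=1}^{N} \Es{\inds{S_{i,t}=1}\,\hs{Y_{ij,t}}}.
\end{align*}
Since the feedback state process $\ul{S}^N$ is i.i.d.\ across time with $\Prob{S_{i,t}=1}=p_i$, and using the Gaussian bound above followed by Jensen's inequality (concavity of $\log$) over the time index $t$, this is at most
\begin{align*}
p_i \sum_{t=1}^{N} \log\!\Bigl(2\pi e\bigl(|h_{ij}|^2\,\mathbb{E}[|X_{j,t}|^2] + \tfrac{1}{2}\bigr)\Bigr)
\;\le\; N\,p_i\,\log\!\Bigl(2\pi e\bigl(|h_{ij}|^2\,\tfrac{1}{N}\textstyle\sum_t \mathbb{E}[|X_{j,t}|^2] + \tfrac{1}{2}\bigr)\Bigr).
\end{align*}
Invoking the average transmit power constraint $\tfrac{1}{N}\sum_t \mathbb{E}[|X_{j,t}|^2]\le P_j$ and the definition $\INR_i = |h_{ij}|^2 P_j$ then yields the stated right-hand side.

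The only subtlety I foresee is that $X_{j,t}$ is a function of $(W_j,\wtild Y_j^{t-1},\ul{S}^{t-1})$, so its distribution is coupled with the feedback state. This coupling does not affect the argument, however, because the power constraint $\tfrac{1}{N}\sum_t |X_{j,t}|^2 \le P_j$ is required to hold for every realization of $\ul{S}^N$; taking expectation preserves it, and the Gaussian maximum-entropy step is applied pointwise in $s^N$ before averaging. (Note that the statement is really an upper bound, as is standard when invoking Gaussian extremality inside a converse; the paper uses $=$ as a convenient shorthand in the role it plays in the outer bounds \eqref{eq:ob_g_R1R2}--\eqref{eq:ob_g_R12R2}.)
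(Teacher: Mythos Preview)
Your argument is correct and uses the same ingredients as the paper (Gaussian maximum entropy, Jensen, power constraint) but organizes them differently. The paper writes the sum as $M_i\cdot\tfrac{1}{M_i}\sum_{t:S_{i,t}=1}(\cdot)$, introduces the partial average power $P_j^{(i1)}:=\tfrac{1}{M_i}\sum_{t:S_{i,t}=1}P_{j,t}$, applies Jensen first over $t$ and then over the state realizations conditioned on $M_i$, and finishes with $\E{P_j^{(i1)}\mid M_i}\le P_j$ together with $\E{M_i}=Np_i$. Your indicator decomposition instead isolates $p_i$ directly from the independence of $S_{i,t}$ and $X_{j,t}$ (strict causality: $X_{j,t}$ depends only on $\ul{S}^{t-1}$, while $S_{i,t}$ is a fresh i.i.d.\ draw), after which Jensen over $t$ and the full power constraint finish the job; this makes the origin of the factor $p_i$ more transparent, whereas the paper's partial-power bookkeeping keeps the notation parallel to its other converse lemmas. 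One small clarification: what actually licenses pulling out $p_i$ is precisely that causal independence, not the per-realization power constraint you invoke in your last paragraph---the latter is what guarantees $\tfrac{1}{N}\sum_t\E{|X_{j,t}|^2}\le P_j$ at the final step. Your remark that the displayed ``$=$'' is really an upper bound is also on point.
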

\begin{proof}
\begin{align*}
&\Es{\sum_{t:S_{i,t}=1} \hsc{Y_{ij,t}}}  = \Es{M_i \frac{1}{M_i}\sum_{t:S_{i,t}=1} \hsc{Y_{ij,t}} } \\
& \leq \Es{M_i \frac{1}{M_i}\sum_{t:S_{i,t}=1} \log 2\pi e \lp \frac{1}{2}+ |h_{ij}|^2 P_{j,t}\rp } \\
& =  \mathbb{E} \lb M_i \mathbb{E} \lb \left.  \frac{1}{M_i}\sum_{t:S_{i,t}=1} \log 2\pi e \lp \frac{1}{2}+ |h_{ij}|^2 P_{j,t}\rp \right|M_i \rb \rb \\
& \overset{\aaaa}{\leq} \mathbb{E}_{M_i} \lb M_i \mathbb{E}\lb \left. \log 2\pi e \lp \frac{1}{2}+ |h_{ij}|^2 P_{j}^{(i1)} \rp \right|M_i \rb \rb \\
& \overset{\bbbb}{\leq} \mathbb{E}_{M_i} \lb M_i \log 2\pi e \lp \frac{1}{2}+ |h_{ij}|^2 \E{P_{j}^{(i1)}|M_i} \rp  \rb \\
&\overset{\cccc}{\leq} \mathbb{E}_{M_i} \lb M_i \log 2\pi e \lp \frac{1}{2}+ |h_{ij}|^2 P_j \rp  \rb \\
& = \E{M_i} \log\lp 2\pi e \lp \frac{1}{2}+ \INR_i\rp \rp \\
& = Np_i \log\lp 2\pi e \lp \frac{1}{2}+ \INR_i \rp \rp
\end{align*}
where (a) and (b) follow by Jensen's inequality (since $\log(\cdot)$ is concave), and (c) follows since $P_{j}^{(i1)}$ averaged over the realizations of $S_i$ is the average power, which is less than the power constraint $P_j$.
\end{proof}

\begin{lemma} \label{lem:I0}
\begin{align*}
&\Es{I_S(V_{i,t}; W_j, Z_{jj}^{(t-1)}, Z_i^{[t-1]}, Z_{ij}^{(t-1)}|V_i^{t-1}, Y_{ij}^{(t-1)})} \\
&=\Es{I_S(V_{i,t}; Z_{i}^{t-1}| Y_{ij}^{(t-1)},V_i^{t-1}, W_j, Z_{jj}^{(t-1)}, Z_{ii}^{(t-1)}) } \\
&= 0
\end{align*}
for $(i,j)=(1,2),(2,1)$.
\end{lemma}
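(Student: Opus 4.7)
The plan is to exploit the recursive causal structure of the enhanced channel to identify, for each of the two mutual informations, the primitive random variables (messages and noise components) that generate $\breve V_{i,t}$ and the conditioning, and then to verify by inspection that the ``free'' variables consist of a disjoint subset of these independent primitives. Once disjointness is established, independence of primitives yields the claimed conditional independence.

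The starting observation is the identity $\wtild Y_{i,s}=S_{i,s}(h_{ii}X_{i,s}+Y_{ij,s}+Z_{ii,s})$ in the enhanced channel, which by induction on $s$ yields
\[ X_{i,t}\eqFunc\lp W_i,\,Y_{ij}^{(t-1)},\,Z_{ii}^{(t-1)},\,\ul{S}^N\rp. \]
Hence, given $(Y_{ij}^{(t-1)},\ul{S}^N)$, the variable $\breve V_{i,t}=h_{ji}X_{i,t}+N_t$ (with $N_t$ the Rx$j$ noise at time $t$) is a function of the triple $(W_i,Z_{ii}^{(t-1)},N_t)$, while $\breve V_i^{t-1}$ is a function of $W_i$, $Z_{ii}^{(t-2)}$, and the Rx$j$ noise $\{Z_{ji,s'}:S_{j,s'}=1\}\cup\{\bar Z_{j,s'}:S_{j,s'}=0\}$ for $s'\le t-1$.

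For the first mutual information, I would verify that the free set $(W_j,Z_{jj}^{(t-1)},\bar Z_i^{[t-1]},Z_{ij}^{(t-1)})$ is disjoint, as a collection of independent primitive random variables, from the source variables listed above. The critical observations are: $Z_{jj,s'}$ with $S_{j,s'}=1$ never enters $\breve V_{i,s'}=Y_{ji,s'}=h_{ji}X_{i,s'}+Z_{ji,s'}$; the time-index sets $\{S_{i,\cdot}=1\}$ and $\{S_{i,\cdot}=0\}$ separate $Z_{ii}^{(t-1)}$ from $\bar Z_i^{[t-1]}$; the $Z_{ij}$ components only appear in the free set; and $W_i\perp W_j$. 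These disjointness facts give $(\breve V_{i,t},\breve V_i^{t-1})\perp (W_j,Z_{jj}^{(t-1)},\bar Z_i^{[t-1]},Z_{ij}^{(t-1)})$ given $(Y_{ij}^{(t-1)},\ul{S}^N)$, which upon absorbing $\breve V_i^{t-1}$ into the conditioning collapses to $\breve V_{i,t}\perp(W_j,Z_{jj}^{(t-1)},\bar Z_i^{[t-1]},Z_{ij}^{(t-1)})\mid(\breve V_i^{t-1},Y_{ij}^{(t-1)})$, i.e.\ zero mutual information.

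For the second mutual information, I would split $\bar Z_i^{t-1}=(\bar Z_i^{(t-1)},\bar Z_i^{[t-1]})$ according to $S_{i,s}$. The ``on'' part $\bar Z_i^{(t-1)}$ is deterministically recoverable from the conditioning: for $S_{i,s}=1$, $Z_{ii,s}\in Z_{ii}^{(t-1)}$ and $Z_{ij,s}=Y_{ij,s}-h_{ij}X_{j,s}$, where $X_{j,s}$ is reconstructed via the symmetric recursion from $(W_j,Y_{ji}^{(s-1)},Z_{jj}^{(s-1)},\ul{S}^N)$, with $Y_{ji}^{(s-1)}$ embedded in $\breve V_i^{t-1}$ through the identity $\breve V_{i,s'}=Y_{ji,s'}$ when $S_{j,s'}=1$. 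The ``off'' part $\bar Z_i^{[t-1]}$ is unconditionally independent of everything in the problem, since when $S_{i,s}=0$ the feedback vanishes ($\wtild Y_{i,s}=0$), so $\bar Z_{i,s}$ propagates into no downstream signal and thus into no conditioning variable nor into $\breve V_{i,t}$. The main obstacle will be the careful bookkeeping in the case split above---specifically, verifying that the noise primitive carried by $\breve V_{i,s'}$ when $S_{j,s'}=1$ is $Z_{ji,s'}$ rather than $Z_{jj,s'}$---so that $Z_{jj}^{(t-1)}$ indeed remains disjoint from the source variables of $\breve V_i^{t-1}$; once this is in hand, the rest of the argument reduces to a direct application of the independence of primitive random variables.
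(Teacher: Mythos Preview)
Your treatment of the second mutual information is sound: splitting $\bar Z_i^{t-1}$ by $S_i$, recovering the ``on'' part from the conditioning (via the recursion $X_{j,s}\eqFunc(W_j,\{\breve V_{i,s'}+Z_{jj,s'}\}_{S_{j,s'}=1},\ul S^N)$), and observing that the ``off'' part never propagates anywhere, is correct and is in fact more direct than the paper's route, which reduces both mutual informations to a common upper bound before factorizing.

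The argument for the first mutual information, however, has a genuine gap. You assert that disjointness of primitives gives
\[
(\breve V_{i,t},\breve V_i^{t-1})\ \perp\ (W_j,Z_{jj}^{(t-1)},\bar Z_i^{[t-1]},Z_{ij}^{(t-1)})\quad\text{given }(Y_{ij}^{(t-1)},\ul S^N),
\]
and only afterwards absorb $\breve V_i^{t-1}$ into the conditioning. But this intermediate independence is \emph{false} in general: $Y_{ij}^{(t-1)}$ is a collider that links the two primitive groups. Indeed, for $\tau\ge 2$ one has $Y_{ij,\tau}=h_{ij}X_{j,\tau}+Z_{ij,\tau}$ with $X_{j,\tau}$ a function of $(W_j,Z_{jj}^{(\tau-1)},\breve V_i^{\tau-1})$ (since $\bar V_{i,s'}=\breve V_{i,s'}+Z_{jj,s'}$ on $S_{j,s'}=1$ slots). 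Hence $Y_{ij}^{(t-1)}$ depends on $\breve V_i^{t-2}$, and conditioning on it alone creates explaining-away dependence between $\breve V_i^{t-1}$ and $(W_j,Z_{jj}^{(t-1)},Z_{ij}^{(t-1)})$ whenever Tx$j$ actually uses its feedback. Your premise therefore fails, and the chain-rule deduction to the desired statement is vacuous.

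The repair is to keep $\breve V_i^{t-1}$ in the conditioning from the outset. The paper does this by first applying data processing, $\breve V_{i,t}\to X_{i,t}\to(W_i,Z_{ii}^{(t-1)})$, and then verifying that the joint law of $\bigl(W_i,Z_{ii}^{(t-1)},\,W_j,Z_{jj}^{(t-1)},Z_i^{[t-1]},Z_{ij}^{(t-1)},\,\breve V_i^{t-1},Y_{ij}^{(t-1)}\bigr)$ factors as $f\cdot g$ with $\breve V_i^{t-1}$ and $Y_{ij}^{(t-1)}$ appearing in \emph{both} factors. The factorization works precisely because the time-$\tau$ conditionals $p(\breve V_{i,\tau}\mid W_i,Z_{ii}^{(\tau-1)},Y_{ij}^{(\tau-1)})$ and $p(Y_{ij,(\tau)}\mid W_j,Z_{jj}^{(\tau-1)},Z_{ij,(\tau)},\breve V_i^{\tau-1})$ draw on disjoint fresh primitives --- your disjointness observation is the right ingredient, but it must be deployed at the level of these one-step conditionals (equivalently, a d-separation argument with both $\breve V_i^{t-1}$ and $Y_{ij}^{(t-1)}$ observed), not as a one-shot claim modulo $Y_{ij}^{(t-1)}$ alone.
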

\begin{proof}
Since all variables involved are related to $V_{i,t}$ through $X_{i,t}$; by data processing inequality,
\begin{align*}
&\Es{I_S(V_{i,t}; W_j, Z_{jj}^{(t-1)}, Z_i^{[t-1]}, Z_{ij}^{(t-1)}|V_i^{t-1}, Y_{ij}^{(t-1)})} \\
& \leq \Es{I_S(X_{i,t}; W_j, Z_{jj}^{(t-1)}, Z_i^{[t-1]}, Z_{ij}^{(t-1)}|V_i^{t-1}, Y_{ij}^{(t-1)})} \\
&\leq \mbb{E}_{S^N} \lb I_S(W_i, Z_{ii}^{(t-1)}; W_j, Z_{jj}^{(t-1)}, Z_i^{[t-1]}, Z_{ij}^{(t-1)} \right. \\
&\qquad\qquad\qquad\qquad\qquad\qquad\qquad\left. |V_i^{t-1}, Y_{ij}^{(t-1)}) \rb
\end{align*}
where the latter inequality follows by the fact that $X_{i,t} \eqFunc \lp W_i, Y_{ij}^{(t-1)}, Z_{ii}^{(t-1)}, \ul{S}^{t-1}\rp$. Similarly, the second mutual information can be bounded by
\begin{align*}
&\Es{I_S(V_{i,t}; Z_{i}^{t-1}| Y_{ij}^{(t-1)},V_i^{t-1}, W_j, Z_{jj}^{(t-1)}, Z_{ii}^{(t-1)})} \\
&\leq  \mbb{E}_{S^N} \lb I_S(X_{i,t}; Z_{i}^{[t-1]}, Z_{ij}^{(t-1)} \right.\\
&\qquad \qquad\qquad\left. | Y_{ij}^{(t-1)},V_i^{t-1}, W_j, Z_{jj}^{(t-1)}, Z_{ii}^{(t-1)}) \rb \\
&\leq \mbb{E}_{S^N} \lb I_S(W_i, Z_{ii}^{(t-1)}; Z_{jj}^{(t-1)}, Z_i^{[t-1]}, Z_{ij}^{(t-1)}, W_j \right. \\
&\qquad\qquad\qquad \qquad \qquad \qquad \left. |V_i^{t-1}, Y_{ij}^{(t-1)}) \rb
\end{align*}
where the first step is because $Z_{i}^{t-1} \eqFunc \lp Z_{i}^{[t-1]}, Z_{ij}^{(t-1)}, Z_{ii}^{(t-1)}, \ul{S}^{t-1}\rp$, and second step is because for random variables $A,B,C$; $I(A,B;C) \geq I(A;C|B)$; and  $X_{i,t} \eqFunc \lp W_i, Y_{ij}^{(t-1)}, Z_{ii}^{(t-1)}, \ul{S}^{t-1}\rp$. Note that we have the same upper bound for both mutual information terms. We will next show that this upper bound is zero.

To show conditional independence, we will use the property that $X$ and $Y$ are independent given $Z$ if and only if the probability distribution $p(X,Y,Z)$ can be factorized as
\begin{align*}
p(X,Y,Z) = f(X,Z)g(Y,Z)
\end{align*}
for some functions $f$ and $g$. Consider the joint distribution of all the variables involved in the above mutual information (we define $p_S(\cdot):=p(\cdot|\ul{S}^N=S^N)$).
\begin{align*}
&p_S(W_i, Z_{ii}^{(t-1)}, Z_{jj}^{(t-1)}, Z_i^{[t-1]}, Z_{ij}^{(t-1)}, W_j, V_i^{t-1}, Y_{ij}^{(t-1)}) =\\
& p(W_i)p(W_j) \prod_{\tau = 1}^{t-1} p_S(Z_{ii,(\tau)}, Z_{jj,(\tau)}, Z_{i,[\tau]}, Z_{ij,(\tau)}, V_{i,\tau}, Y_{ij,(\tau)} \\
&\quad|Z_{ii}^{(\tau-1)}, Z_{jj}^{(\tau-1)}, Z_i^{[\tau-1]}, Z_{ij}^{(\tau-1)}, V_i^{\tau-1}, Y_{ij}^{(\tau-1)}, W_i, W_j) \\
&\overset{\aaaa}{=} p(W_i)p(W_j) \prod_{\tau = 1}^{t-1} p_S(Z_{ii,(\tau)})p_S(Z_{jj,(\tau)})p_S(Z_{i,[\tau]}) \\
&\qquad \cdot p_S(Z_{ij,(\tau)}) p_S(V_{i,\tau}|Z_{ii}^{(\tau-1)}, Y_{ij}^{(\tau-1)}, W_i) \\
&\qquad \cdot p_S(Y_{ij,(\tau)}|Z_{jj}^{(\tau-1)}, Z_{ij}^{(\tau)}, V_i^{\tau-1}, W_j)\\
&= f(W_i, Z_{ii}^{(t-1)}, V_i^{t-1}, Y_{ij}^{(t-1)}) \\
&\qquad \cdot g(Z_{jj}^{(t-1)}, Z_i^{(t-1)}, Z_{ij}^{(t-1)}, W_j, V_i^{t-1}, Y_{ij}^{(t-1)})
\end{align*} 
where (a) follows since 
\begin{align*}
Y_{ij,(\tau)} &\eqFunc \lp X_{j,\tau}, Z_{ij}^{(\tau)}, \ul{S}^{\tau} \rp \\
&\eqFunc \lp Z_{jj}^{(\tau-1)}, Z_{ij}^{(\tau)}, V_i^{\tau-1}, \ul{S}^{\tau}, W_j \rp
\end{align*}
and 
\begin{align*}
V_{i,\tau} &\eqFunc \lp X_{i,\tau}, Z_{j,(\tau)}, Z_{ji, (\tau)} \rp \\
&\eqFunc \lp  Z_{ii}^{(\tau-1)}, Y_{ij}^{(\tau-1)}, \ul{S}^{\tau-1}, W_i, Z_{j,(\tau)}, Z_{ji, (\tau)} \rp
\end{align*}
and $\lp Z_{j,(\tau)}, Z_{ji, (\tau)} \rp$ is independent of everything else. In the last line, we define
\begin{align*}
&f(W_i, Z_{ii}^{(t-1)}, V_i^{t-1}, Y_{ij}^{(t-1)}) \\
&\quad = p(W_i) \prod_{\tau = 1}^{t-1} p_S(Z_{ii,(\tau)}) p_S(V_{i,\tau}|Z_{ii}^{(\tau-1)}, Y_{ij}^{(\tau-1)}, W_i) \\
&g(Z_{jj}^{(t-1)}, Z_i^{(t-1)}, Z_{ij}^{(t-1)}, W_j, V_i^{t-1}, Y_{ij}^{(t-1)}) \\
&\quad = p(W_j)\prod_{\tau = 1}^{t-1} p_S(Z_{jj,(\tau)}) p_S(Z_{i,(\tau)})p_S(Z_{ij,(\tau)})\\
&\quad \cdot p_S(Y_{ij,\tau}|Z_{jj}^{(\tau-1)}, Z_{ij}^{(\tau)}, V_i^{\tau-1}, W_j)
\end{align*}
from which the result follows.
\end{proof}

\begin{lemma}\label{lem:I0_1}
\begin{align*}
\Es{I_S(\bar V_{j,t}; W_i | \bar V_j^{t-1}, Y_{ji}^{(t-1)})} = 0
\end{align*}
\end{lemma}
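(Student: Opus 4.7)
The plan is to mirror the factorization argument used in Lemma~\ref{lem:I0}, proving the conditional independence $W_i \perp \bar V_{j,t} \mid \lp \bar V_j^{t-1}, Y_{ji}^{(t-1)}, \ul{S}^N=S^N\rp$ (which directly gives the claim). First I would decompose $\bar V_{j,t} = h_{ij}X_{j,t} + \bar Z_{i,t}$, where $\bar Z_{i,t} = Z_{ii,t}+Z_{ij,t}$ is fresh noise at time $t$, independent of both messages and of every variable in the conditioning set. Hence it suffices to establish $W_i \perp X_{j,t} \mid \lp \bar V_j^{t-1}, Y_{ji}^{(t-1)}, \ul{S}^N=S^N\rp$.

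The key structural observation is that in the enhanced channel the feedback seen by Tx$j$ is still $S_j\bar Y_j = S_j\lp h_{jj}X_j + h_{ji}X_i + Z_{jj} + Z_{ji}\rp = S_j\lp h_{jj}X_j + Y_{ji} + Z_{jj}\rp$, so after subtracting its own contribution Tx$j$ effectively observes $\lbp Y_{ji,t'} + Z_{jj,t'}\rbp$ for the time slots $t'<t$ with $S_{j,t'}=1$. By (the enhanced-channel analog of) Lemma~\ref{lem:xfunc}, this makes $X_{j,t}$ a deterministic function of $\lp W_j, \ul{S}^{t-1}, \lbp Y_{ji,t'}+Z_{jj,t'}\rbp_{t'<t,\,S_{j,t'}=1}\rp$. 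Once $Y_{ji}^{(t-1)}$ is placed in the conditioning, the residual randomness entering $X_{j,t}$ consists only of $W_j$ and the noises $\lbp Z_{jj,t'}\rbp$, which are independent of $W_i$. Moreover, the side information $\bar V_j^{t-1}$ does not reintroduce a dependence on $W_i$, since $\bar V_{j,t'} = h_{ij}X_{j,t'} + Z_{ii,t'}+Z_{ij,t'}$, and the $X_{j,t'}$'s for $t'<t$ depend on $W_i$ only through the $Y_{ji}$ values already revealed in $Y_{ji}^{(t'-1)}\subseteq Y_{ji}^{(t-1)}$.

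To make this rigorous, I would, in parallel with Lemma~\ref{lem:I0}, expand the joint law $p_S\lp W_i,W_j,\bar V_j^{t-1},Y_{ji}^{(t-1)},X_{j,t}\rp$ chronologically and verify a factorization of the form
\begin{align*}
p_S\lp W_i,W_j,\bar V_j^{t-1},Y_{ji}^{(t-1)},X_{j,t}\rp = f\lp W_i,Y_{ji}^{(t-1)}\rp \cdot g\lp W_j,X_{j,t},\bar V_j^{t-1},Y_{ji}^{(t-1)}\rp,
\end{align*}
which yields the required conditional independence and hence the lemma. The main obstacle is pure bookkeeping of strict causality in the enhanced channel: one must check that $Y_{ji,t'}$ enters the feedback observed by Tx$j$ only when $S_{j,t'}=1$, so that conditioning on $Y_{ji}^{(t-1)}$ already captures every piece of information about $W_i$ that can causally propagate to $X_{j,t}$ (and therefore to $\bar V_{j,t}$) through either the feedback path or the conditioning variables $\bar V_j^{t-1}$.
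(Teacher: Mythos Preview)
Your plan is essentially the paper's own argument: apply data processing to pass from $\bar V_{j,t}$ to $X_{j,t}$, then invoke the enhanced-channel analogue of Lemma~\ref{lem:xfunc} and establish conditional independence through a factorization of the joint law, just as in Lemma~\ref{lem:I0}. One difference worth noting is that the paper first replaces $X_{j,t}$ by the pair $\lp W_j, Z_{jj}^{(t-1)}\rp$ (since $X_{j,t}\eqFunc\lp W_j,Y_{ji}^{(t-1)},Z_{jj}^{(t-1)},\ul{S}^{t-1}\rp$) before factorizing; carrying $Z_{jj}^{(t-1)}$ explicitly makes the chronological decomposition of the joint density cleaner, because each $\bar V_{j,\tau}$ is then conditionally determined by $\lp W_j,Y_{ji}^{(\tau-1)},Z_{jj}^{(\tau-1)}\rp$ and fresh noise.

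There is one concrete error in the factorization you wrote down: the form $f\lp W_i,Y_{ji}^{(t-1)}\rp\cdot g\lp W_j,X_{j,t},\bar V_j^{t-1},Y_{ji}^{(t-1)}\rp$ cannot hold in general. Conditioning on $Y_{ji,(\tau)}=h_{ji}X_{i,\tau}+Z_{ji,\tau}$, where $X_{i,\tau}$ depends on both $W_i$ and $\bar V_j^{\tau-1}$ through Tx$i$'s feedback, couples $W_i$ and $\bar V_j^{\tau-1}$; so $W_i$ and $\bar V_j^{t-1}$ are \emph{not} conditionally independent given only $Y_{ji}^{(t-1)}$, contradicting what your factorization would imply. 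The correct form is $f\lp W_i,\bar V_j^{t-1},Y_{ji}^{(t-1)}\rp\cdot g\lp W_j,Z_{jj}^{(t-1)},\bar V_j^{t-1},Y_{ji}^{(t-1)}\rp$, exactly as in the paper, which still yields the required $W_i\perp X_{j,t}\mid\lp\bar V_j^{t-1},Y_{ji}^{(t-1)}\rp$. Your sentence ``$\bar V_j^{t-1}$ does not reintroduce a dependence on $W_i$'' is therefore backwards: it is precisely because $\bar V_j^{t-1}$ must sit in the conditioning set that the mutual information vanishes.
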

\begin{proof}
\begin{align*}
&\Es{I_S(\bar V_{j,t}; W_i | \bar V_j^{t-1}, Y_{ji}^{(t-1)})} \\
& \quad\leq \Es{I_S(X_{j,t}; W_i | \bar V_j^{t-1}, Y_{ji}^{(t-1)})} \\
& \quad \leq \Es{I_S(W_j, Z_{jj}^{(t-1)}; W_i | \bar V_j^{t-1}, Y_{ji}^{(t-1)})} \\
\end{align*}
where the first step follows by data processing inequality, and the second one follows by $X_{j,t} \eqFunc \lp W_j, Y_{ji}^{(t-1)}, Z_{jj}^{(t-1)}, \ul{S}^{t-1} \rp$. The proof technique is similar to that of Lemma~\ref{lem:I0}. The probability distribution of the involved variables is
\begin{align*}
&p(W_i, W_j, Z_{jj}^{(t-1)}, \bar V_j^{t-1}, Y_{ji}^{(t-1)}) \\
& = p(W_i)p(W_j) \prod_{\tau=1}^{t-1} p_S(Z_{jj,(\tau)} ) \\
&\qquad p_S(\bar V_{j,\tau}, Y_{ji,(\tau)}|\bar V_{j}^{\tau-1}, Y_{ji}^{(\tau-1)},Z_{jj}^{(\tau-1)},W_i,W_j) \\
& \overset{\aaaa}{=} p(W_i)p(W_j) \prod_{\tau=1}^{t-1} p_S(Z_{jj,(\tau)}) \\
&\qquad p_S(\bar V_{j,\tau}|Y_{ji}^{(\tau-1)},Z_{jj}^{(\tau-1)},W_j) p_S(Y_{ji,(\tau)}|W_i, \bar V_j^{t-1})
\end{align*}
where (a) follows by the fact that 
\begin{align*}
 \bar V_{j,\tau} &\eqFunc \lp W_j, Y_{ji}^{(\tau-1)}, Z_{jj}^{(t-1)}, \ul{S}^{t-1}, Z_{i,\tau} \rp, \\
 Y_{ji,(\tau)} &\eqFunc \lp  W_i, \bar V_j^{t-1}, \ul{S}^{t-1}, Z_{ji,(\tau)}\rp
\end{align*}
and that $Z_{i,\tau}$ and $Z_{ji,(\tau)}$ are independent of everything else. Then the result follows by defining
\begin{align*}
&f(W_i, \bar V_j^{t-1}, Y_{ji}^{(t-1)}) \\
&\quad = p(W_i) \prod_{\tau=1}^{t-1}p_S(Y_{ji,(\tau)}|W_i, \bar V_j^{t-1}) \\
&g(W_j, Z_{jj}^{(t-1)}, \bar V_j^{t-1}, Y_{ji}^{(t-1)}) \\
&\quad = p(W_j) \prod_{\tau=1}^{t-1} p_S(Z_{jj,(\tau)}) p_S(\bar V_{j,\tau}|Y_{ji}^{(\tau-1)},Z_{jj}^{(\tau-1)},W_i)
\end{align*}
and noting that the above probability distribution factorizes as $f \cdot g$.
\end{proof}

\begin{lemma} \label{lem:YgW}
\begin{align*}
&\Es{\hs{\breve Y_i^N, \breve V_i^N|W_i}} \\
&\qquad = \Es{\hs{\breve V_j^N, Z_{ii}^{M_i}, Z_j^{L_j}, Z_{ji}^{M_j}|W_i} } \\
&\Es{\hs{\breve Y_i^N, \bar V_i^N|W_i}} \\
&\qquad = \Es{\hs{\breve V_j^N, Z_{ii}^{M_i}, Z_j^N|W_i} }
\end{align*}
 for $\ijj$.
\end{lemma}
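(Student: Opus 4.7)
The plan is to prove both identities by establishing, for each realization $\ul{S}^N=s^N$ and each fixed $W_i$, a deterministic affine bijection with unit Jacobian determinant between $(\breve Y_i^N,\breve V_i^N)$ and $(\breve V_j^N,Z_{ii}^{M_i},Z_j^{L_j},Z_{ji}^{M_j})$; since differential entropy is invariant under such maps, the first identity holds realization-by-realization, and averaging via $\Es{\cdot}$ yields the lemma. The second identity is proved by the same argument with $\breve V_i^N$ replaced by $\bar V_i^N=h_{ji}X_i^N+Z_j^N$, so that the relevant noise on the Rx$j$ side is the aggregated $Z_j^N$ rather than the split pair $(Z_j^{L_j},Z_{ji}^{M_j})$.

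I would construct the forward map inductively in $t$. Given $W_i$ and $\ul{S}^N$, $X_{i,1}$ is determined by $W_i$, and the enhanced-channel definitions immediately give $\breve Y_{i,1}$ and $\breve V_{i,1}$ as explicit affine expressions in $X_{i,1}$, $\breve V_{j,1}$, $Z_{ii,1}$ (if $S_{i,1}=1$), and one of $Z_{ji,1}$ or $Z_{j,1}$ according to $S_{j,1}$. At $t>1$, the feedback signal $S_{i,t-1}\bar Y_{i,t-1}$ needed to update $X_{i,t}$ is recoverable from $\breve Y_{i,t-1}$ already built in the previous step, because $\bar Y_i=Y_{ii}+Y_{ij}$ when $S_i=1$ and $\bar Y_i=\breve Y_i$ when $S_i=0$; the construction of $\breve Y_{i,t}$ and $\breve V_{i,t}$ then proceeds as at $t=1$. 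Crucially, in this construction the time-$t$ outputs depend on the time-$t$ inputs only via pure translation by quantities fixed by $W_i$, $s^N$, and past inputs, so the overall Jacobian is block lower-triangular with identity diagonal blocks and thus has unit determinant.

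The inverse map is equally explicit and confirms bijectivity: recover $\bar Y_i^N$ from $\breve Y_i^N$ (summing $Y_{ii,t}+Y_{ij,t}$ when $S_{i,t}=1$ and reading off $\breve Y_{i,t}$ otherwise), then $X_i^N$ from $(W_i,\bar Y_i^N,\ul{S}^N)$, and finally read off $\breve V_j^N$, $Z_{ii}^{M_i}$, $Z_{ji}^{M_j}$, and $Z_j^{L_j}$ by subtracting the appropriate multiples of $X_{i,t}$ from the components of $\breve Y_i^N$ and $\breve V_i^N$. A dimension check ($2M_i+L_i+N=N+M_i+L_j+M_j=2N+M_i$) verifies that both sides live in $\mbb{C}^{2N+M_i}$. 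The main obstacle is purely bookkeeping: one must carefully track the recursive dependence of $X_i^N$ on past noise through feedback to confirm that it is consistent with the triangular Jacobian structure. Once this is checked, the second identity follows from the same argument, with the single modification that $\bar V_{i,t}=h_{ji}X_{i,t}+Z_{j,t}$ for every $t$ makes the relevant noise input $Z_j^N$ uniformly across the block rather than the split sequence.
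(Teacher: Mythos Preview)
Your proposal is correct and is essentially the same argument as the paper's, just packaged differently: the paper applies the chain rule time-step by time-step, inserts $X_i^t$ into the conditioning (since $X_i^t\eqFunc(W_i,\breve Y_i^{t-1},\ul S^{t-1})$), subtracts the $X_{i,t}$ contributions from $\breve Y_{i,t}$ and $\breve V_{i,t}$, and then re-assembles via chain rule; your bijection with unit-determinant Jacobian is precisely the global expression of that same translation-by-$X_i^N$ step. The key observation in both is identical, and your triangular-Jacobian bookkeeping corresponds exactly to the paper's inductive insertion of $X_i^t$ into the conditioning.
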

\begin{proof}
\begin{align*}
&\Es{\hs{\breve Y_i^N, \breve V_i^N|W_i}} \\
&\quad = \Es{\sum_{t=1}^N\hs{\breve Y_{i,t}, \breve V_{i,t}|W_i, \breve Y_i^{t-1}, \breve V_i^{t-1} }} \\
&\quad \overset{\aaaa}{=} \Es{\sum_{t=1}^N\hs{\breve Y_{i,t}, \breve V_{i,t}|W_i, \breve Y_i^{t-1}, \breve V_i^{t-1}, X_{i}^t }} \\
&\quad = \mbb{E}_{S^N} \lb \sum_{t=1}^N h_S \lp \bar Y_{i,[t]}, Y_{ii,(t)}, Y_{ij,(t)}, \bar V_{i,[t]}, Y_{ji,(t)}\right.\right.\\
&\qquad\left.\left|W_i, \bar Y_i^{[t-1]}, Y_{ii}^{(t-1)}, Y_{ij}^{(t-1)}, \bar V_i^{[t-1]}, Y_{ji}^{(t-1)}, X_{i}^t  \rp \vphantom{\sum_{t=1}^N}\rb \\ 
&\quad =\mbb{E}_{S^N} \lb \sum_{t=1}^N h_S \lp\bar V_{j,[t]}, Z_{ii,(t)}, Y_{ij,(t)}, Z_{j,[t]}, Z_{ji,(t)} \right.\right.\\
&\qquad \left.\left|W_i, \bar V_j^{[t-1]}, Z_{ii}^{(t-1)}, Y_{ij}^{(t-1)}, Z_j^{[t-1]}, Z_{ji}^{(t-1)} \rp\vphantom{\sum_{t=1}^N}\rb \\ 
&\quad = \Es{\hs{\bar V_{j}^{L_i}, Z_{ii}^{M_i}, Y_{ij}^{M_i}, Z_{j}^{L_j}, Z_{ji}^{M_j} |W_i }} \\ 
&\quad \overset{\bbbb}{=} \Es{\hs{\breve V_{j}^{N}, Z_{ii}^{M_i}, Z_{j}^{L_j}, Z_{ji}^{M_j} |W_i }}
\end{align*}
where (a) is because $X_{i}^t \eqFunc \lp W_i, \breve Y_i^{t-1}, \mcal{S}^{t-1} \rp$, and (b) is because $\breve V_j^N = \lp \bar V_j^{L_i}, Y_{ij}^{M_i} \rp$. The second equality can be proved using similar steps.
\end{proof}

\begin{lemma} \label{lem:YgV}
\begin{align*}
\Es{ \hs{\bar Y_i^{L_i}|\breve V_i^{N} } } &\leq  a\\
\Es{ \hs{\bar Y_i^{L_i}|\bar V_i^{N} } } &\leq a
\end{align*}  
for $\ijj$, where
\begin{align*}
a &= N(1-p_i)\\
&\quad \cdot \log 2\pi e \lp 1+ \INR_i+\frac{\SNR_i  + 2\sqrt{ \SNR_i \cdot \INR_i }}{1 + \INR_j} \rp
\end{align*}
\end{lemma}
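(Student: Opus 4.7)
\medskip
\noindent\textbf{Proof plan for Lemma \ref{lem:YgV}.}

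The plan is to reduce the multi-letter entropy to a sum of single-letter conditional entropies, replace the (possibly sharp) observation $\breve V_{i,t}$ by the noisier $\bar V_{i,t}$, and then bound each single-letter term by a Gaussian MMSE computation. First I would apply the chain rule over the $L_i$ indices $t$ with $S_{i,t}=0$ and use conditioning reduces entropy to drop everything except the time-$t$ observation:
\begin{align*}
\hs{\bar Y_i^{L_i}\,|\,\breve V_i^N} \;=\; \sum_{t:\,S_{i,t}=0} \hs{\bar Y_{i,t}\,|\,\bar Y_i^{[t-1]},\breve V_i^N} \;\le\; \sum_{t:\,S_{i,t}=0} \hs{\bar Y_{i,t}\,|\,\breve V_{i,t}}.
\end{align*}
For the two cases $S_{j,t}=0$ and $S_{j,t}=1$, we have respectively $\breve V_{i,t}=\bar V_{i,t}$ and $\bar V_{i,t}=\breve V_{i,t}+Z_{ji,t}$ with $Z_{ji,t}\sim\cgauss{0}{1/2}$ independent of everything else; hence $\bar V_{i,t}$ is a degraded version of $\breve V_{i,t}$, and conditioning on it can only enlarge the entropy:
\begin{align*}
\hs{\bar Y_{i,t}\,|\,\breve V_{i,t}} \;\le\; \hs{\bar Y_{i,t}\,|\,\bar V_{i,t}}.
\end{align*}
The same bound is the starting point for the second inequality of the lemma (where we already condition on $\bar V_i^N$); so after this step both inequalities reduce to bounding $\sum_{t:S_{i,t}=0}\hs{\bar Y_{i,t}\,|\,\bar V_{i,t}}$.

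Next, I would bound each single-letter term by a Gaussian. Writing $\rho_t := \E{X_{i,t}X_{j,t}^*}/\sqrt{P_{i,t}P_{j,t}}$ (so $|\rho_t|\le 1$), the trick is to use $\hs{\bar Y_{i,t}|\bar V_{i,t}}\le \hs{\bar Y_{i,t}-\alpha\bar V_{i,t}}\le \log 2\pi e\,\mathrm{Var}(\bar Y_{i,t}-\alpha\bar V_{i,t})$ for the MMSE choice $\alpha = \mathrm{Cov}(\bar Y_{i,t},\bar V_{i,t})/\mathrm{Var}(\bar V_{i,t})$. After expanding, one finds
\begin{align*}
\mathrm{Var}(\bar Y_{i,t}\,|\,\bar V_{i,t}) \;=\; 1+\INR_i(t) - \tfrac{\INR_i(t)\INR_j(t)\rho_t^2}{1+\INR_j(t)} + \tfrac{\SNR_i(t)+2\rho_t\sqrt{\SNR_i(t)\INR_i(t)}}{1+\INR_j(t)},
\end{align*}
where $\SNR_i(t)=|h_{ii}|^2P_{i,t}$, etc. Bounding $\rho_t^2\le 1$ drops the negative middle term, and bounding $\rho_t\le 1$ in the cross term yields the target per-slot expression $1+\INR_i(t)+\tfrac{\SNR_i(t)+2\sqrt{\SNR_i(t)\INR_i(t)}}{1+\INR_j(t)}$.

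Finally, I would sum over $t$ with $S_{i,t}=0$ and take the expectation over $\ul S^N$. Using Jensen's inequality together with the fact that $\sqrt{\cdot}$, $a/(1+b)$, and $\sqrt{ab}/(1+cb)$ are concave in the relevant power variables on the positive orthant (the standard arguments used e.g.\ in Etkin--Tse--Wang), I can pull the per-slot average inside the $\log$ and replace the empirical averages $\frac{1}{L_i}\sum_{t:S_{i,t}=0}P_{i,t}$, etc., by their upper bounds coming from the overall power constraint $\sum_t P_{i,t}\le NP_i$. Taking the outer expectation $\E{L_i}=N(1-p_i)$ then yields the claimed bound. I expect this last step to be the main obstacle: the power allocation across time slots can depend on the (random) feedback state $\ul S^N$, so the Jensen/convexity argument must be applied carefully---first conditionally on $\ul S^N$ to absorb the $\{P_{i,t}\}$, and then over $\ul S^N$ to replace $L_i$ by its mean---rather than naively in one pass.
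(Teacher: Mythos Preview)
Your approach is correct and essentially matches the paper's: single-letterize via chain rule and conditioning-reduces-entropy, then bound each term by the Gaussian conditional-variance expression and average. One minor slip: the extra noise in your degradation step should be $Z_{jj,t}$, not $Z_{ji,t}$, since $\bar V_i = Y_{ji}+Z_{jj}$ when $S_j=1$ (it is still $\cgauss{0}{1/2}$ and independent of $(\bar Y_{i,t},\breve V_{i,t})$, so the Markov-chain argument is unaffected).

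The only methodological difference is cosmetic: the paper introduces a time-sharing variable $Q$ uniform over $\{t:S_{i,t}=0\}$, splits the single-letter term $h_S(\bar Y_i\mid\breve V_i,Q)$ into the two cases $S_{j}=0$ (condition on $\bar V_i$) and $S_{j}=1$ (condition on $Y_{ji}$), bounds each by the Gaussian expression, and then upper-bounds both by the looser $\bar V_i$ case. Your degradation-first step $h_S(\bar Y_{i,t}\mid\breve V_{i,t})\le h_S(\bar Y_{i,t}\mid\bar V_{i,t})$ collapses those two cases in one line and immediately unifies the two inequalities of the lemma, which is slightly cleaner; otherwise the arguments coincide. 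The power-averaging concern you flag is exactly the place where the paper is also terse---it simply invokes ``Gaussian maximizes conditional entropy'' after time-sharing---so your plan to apply Jensen first conditionally on $\ul S^N$ and then over $\ul S^N$ is the right level of care.
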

\begin{proof}
\begin{align*}
&\Es{ \hs{\bar Y_i^{L_i}|\breve V_i^{N} } } \\
&\leq \Es{ \sum_{t: S_{i,t}=0} \hs{\bar Y_{i,t}|\breve V_{i,t} } } \\
&\overset{\aaaa}{=} \Es{ L_i \frac{1}{L_i}\sum_{t: S_{i,t}=0} \hs{\bar Y_{i,Q}|\breve V_{i,Q}, Q=t } } \\
&\overset{\bbbb}{=} \Es{ L_i \hs{\bar Y_{i}|\breve V_{i}, Q} } \\
&= \Es{ L_i \lp p_j \hs{\bar Y_{i}|Y_{ji}, Q} + (1-p_j) \hs{\bar Y_{i}|\bar V_{i}, Q} \rp} \\
&\overset{\cccc}{\leq} \mbb{E} \lb L_i p_j \log 2\pi e \lp 1+ \INR_i+\frac{\SNR_i  + 2\sqrt{ \SNR_i \cdot \INR_i }}{1 + 2\INR_j} \rp \right. \\
&+ L_i(1-p_j)  \\
&\quad \cdot \left. \log 2\pi e\lp 1+ \INR_i+\frac{\SNR_i  + 2\sqrt{ \SNR_i \cdot \INR_i }}{1 + \INR_j} \rp  \rb \\
&\leq (1-p_i)\log 2\pi e \lp 1+ \INR_i+\frac{\SNR_i  + 2\sqrt{ \SNR_i \cdot \INR_i }}{1 + \INR_j} \rp
\end{align*}
where (a) is by introducing a time-sharing random variable $Q$ with uniform distribution over the set $\lbp t: S_{i,t}=0 \rbp$, and (b) follows by setting $\bar Y_i = \bar Y_{i,Q}$ and $\breve V_i = \breve V_{i,Q}$. (c) follows by the fact that choosing jointly Gaussian input distribution with correlation coefficient $\rho=0$ for $p(x_1,x_2)$ maximizes the given conditional differential entropy.

By following similar steps, we can show that
\begin{align*}
&\Es{ \hs{\bar Y_i^{L_i}|\bar V_i^{N} } } \leq \Es{ L_i \hs{\bar Y_i|\bar V_i, Q}} \\
& \leq (1-p_i)\log 2\pi e \lp 1+ \INR_i+\frac{\SNR_i  + 2\sqrt{ \SNR_i \cdot \INR_i }}{1 + \INR_j} \rp
\end{align*}
\end{proof}

\begin{lemma} \label{lem:UgV}
\begin{align*}
\Es{ \hs{Y_{ii}^{M_i}|\breve V_i^{N} } } &\leq b \\
\Es{ \hs{Y_{ii}^{M_i}|\bar V_i^{N} } } &\leq b
\end{align*}
where
\begin{align*}
b=Np_i \log 2\pi e \lp \frac{1}{2}+\frac{\SNR_i}{2\INR_j+1} \rp
\end{align*}
for $\ijj$.
\end{lemma}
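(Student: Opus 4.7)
The plan is to follow the template of Lemma \ref{lem:YgV} closely, since the structure of the quantity is analogous: we have a random-length subvector of Gaussian channel outputs conditioned on a vector of noisy observations of the same input. First I would use the chain rule together with the fact that conditioning reduces entropy to bound
\begin{align*}
h_S\lp Y_{ii}^{M_i} \big| \breve V_i^N \rp \leq \sum_{t: S_{i,t}=1} h_S\lp Y_{ii,t} \big| \breve V_{i,t}\rp,
\end{align*}
and then introduce a time-sharing random variable $Q$ uniformly distributed over the set $\lbp t: S_{i,t}=1\rbp$ (which has cardinality $M_i$), so that the above sum becomes $M_i \cdot h_S(Y_{ii,Q}|\breve V_{i,Q}, Q)$. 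Setting $Y_{ii}:=Y_{ii,Q}$ and $\breve V_i:=\breve V_{i,Q}$ brings the problem to a single-letter form.

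Next, I would split according to the value of $S_{j,Q}$. By definition, $\breve V_i = Y_{ji}$ whenever $S_j=1$ and $\breve V_i = \bar V_i$ whenever $S_j=0$, so
\begin{align*}
h_S\lp Y_{ii} \big| \breve V_i, Q\rp = p_j \, h\lp Y_{ii} \big| Y_{ji}, Q\rp + (1-p_j)\, h\lp Y_{ii} \big| \bar V_i, Q\rp.
\end{align*}
For each branch I would apply the standard Gaussian maximum-entropy argument: under the input power constraint $\E|X_i|^2 \leq P_i$, the conditional entropy is maximized by a jointly Gaussian $X_i \sim \cgauss{0}{P_{i,Q}}$, after which a direct MMSE calculation yields $\Var(Y_{ii}|Y_{ji}) = \tfrac{1}{2} + \tfrac{\SNR_i}{2\INR_j + 1}$ (noise variance $1/2$ in $Z_{ji}$) and an analogous expression for $\Var(Y_{ii}|\bar V_i)$ (noise variance $1$ in $\bar Z_j$). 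Passing from per-time-slot power $P_{i,t}$ to the average power $P_i$ uses concavity of $\log$ and Jensen's inequality, exactly as in Lemma \ref{lem:YgV}.

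Finally I would take the outer expectation, use $\E[M_i] = Np_i$, and combine the two branches into a single upper bound by the expression $b$ in the lemma (upper-bounding the mixture by the appropriate single case as is done in Lemma \ref{lem:YgV}'s last line). The second bound, for $\bar V_i^N$, is handled by exactly the same three-step pipeline (chain rule, time-sharing, Gaussian MMSE) except that no splitting on $S_j$ is needed because the conditioning variable is already fixed to $\bar V_i$.

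The main obstacle I foresee is the bookkeeping around the joint law of $(S_{i,t}, S_{j,t})$ under the time-sharing variable $Q$: marginally $\Pr(S_j=1)=p_j$, but conditioning on $\lbp S_{i,t}=1\rbp$ can in principle skew this proportion when $S_1$ and $S_2$ are dependent. As in Lemma \ref{lem:YgV}, this is handled by passing to the larger of the two per-branch bounds before taking the expectation over $S^N$, which makes the argument robust to the conditional proportion. The other technical care point is ensuring that the Gaussian-maximization step can legitimately be applied after fixing the time-sharing realization, which follows by the usual argument that adding $Q$ to the conditioning and choosing the input distribution to be Gaussian conditional on $Q$ does not lose optimality.
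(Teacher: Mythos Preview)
Your proposal is correct and follows essentially the same route as the paper: chain rule plus conditioning-reduces-entropy to single-letterize, a time-sharing variable $Q$ uniform over $\{t:S_{i,t}=1\}$, splitting on $S_{j}$ to decompose $\breve V_i$ into the $Y_{ji}$ and $\bar V_i$ cases, Gaussian maximization for each branch, and then $\mathbb{E}[M_i]=Np_i$ for the outer expectation. Your observation about the conditional proportion of $S_j$ given $S_i=1$ and its resolution by passing to the larger per-branch bound is exactly the mechanism the paper (implicitly) uses in its final inequality.
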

\begin{proof}
\begin{align*}
&\Es{ \hs{Y_{ii}^{M_i}|\breve V_i^{N} } } \leq \Es{ \sum_{t:S_{i,t}=1} \hs{Y_{ii,t}|\breve V_{i,t} } } \\
&\overset{\aaaa}{=} \Es{ M_i \frac{1}{M_i}\sum_{t:S_{i,t}=1} \hs{Y_{ii,Q}|V_{i,Q}, Q=t } } \\
&\overset{\bbbb}{=} \Es{ M_i \hs{Y_{ii}|\breve V_i, Q } } \\
&= \Es{ M_i \lp (1-p_j) \hs{Y_{ii}|\bar V_i, Q } + p_j \hs{Y_{ii}|Y_{ji}, Q } \rp } \\
&\overset{\cccc}{\leq} \Es{ M_i \lp (1-p_j)\log 2\pi e \lp \frac{1}{2}+\frac{2\SNR_i+\frac{1}{2}}{2\INR_j+1} \rp \right. \right. \\
&\quad \left. \left. + p_j \log 2\pi e \lp \frac{1}{2}+\frac{\SNR_i}{2\INR_j+1} \rp \rp } \\
&\leq p_i \log 2\pi e \lp \frac{1}{2}+\frac{\SNR_i}{2\INR_j+1} \rp
\end{align*}
where (a) is by introducing a time-sharing random variable $Q$ with uniform distribution over the set $\lbp t: S_{i,t}=1 \rbp$, and (b) follows by setting $Y_{ii} = Y_{ii,Q}$ and $\breve V_i = \breve V_{i,Q}$. (c) follows by the fact that choosing jointly Gaussian input distribution with correlation coefficient $\rho=0$ for $p(x_1,x_2)$ maximizes the given conditional differential entropy. Similarly,
\begin{align*}
\Es{ \hs{Y_{ii}^{M_i}|\bar V_i^{N} } } &\leq \Es{M_i \hs{Y_{ii}|\bar V_i, Q }} \\
&\leq p_i \log 2\pi e \lp \frac{1}{2}+\frac{\SNR_i}{2\INR_j+1} \rp
\end{align*}
\end{proof}

\begin{lemma}\label{lem:priv}
\begin{align*}
&\Es{\sum_{t=1}^N \hs{\breve Y_{i,t}|\breve V_{i,t}, X_{j,t}}}  \\
&\qquad\leq p_i \log2\pi e\lp \frac{1}{2}+ \frac{\SNR_i}{2\INR_j+1}\rp + p_i\log2\pi e\frac{1}{2} \\
&\qquad \quad  + (1-p_i)\log2\pi e \lp1 + \frac{\SNR_i}{1+\INR_j}\rp
\end{align*}
for $\ijj$.
\end{lemma}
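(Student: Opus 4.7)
The plan is to split the sum over $t$ according to the value of $S_{i,t}$, since the definition of $\breve Y_{i,t}$ differs in the two cases, and then mimic the time-sharing arguments used in Lemmas~F.5 and~F.6. Writing
\begin{align*}
&\Es{\sum_{t=1}^N\hs{\breve Y_{i,t}|\breve V_{i,t}, X_{j,t}}} \\
&\quad = \Es{\sum_{t:S_{i,t}=0}\hs{\breve Y_{i,t}|\breve V_{i,t}, X_{j,t}}} \\
&\qquad + \Es{\sum_{t:S_{i,t}=1}\hs{\breve Y_{i,t}|\breve V_{i,t}, X_{j,t}}},
\end{align*}
I would bound each piece separately.

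For the $S_{i,t}=0$ piece, by definition of the enhanced channel $\breve Y_{i,t}=\bar Y_{i,t}=h_{ii}X_{i,t}+h_{ij}X_{j,t}+\bar Z_{i,t}$; subtracting the conditioned $X_{j,t}$ leaves $h_{ii}X_{i,t}+\bar Z_{i,t}$. Introducing a time-sharing variable $Q$ uniform over $\{t:S_{i,t}=0\}$ (exactly as in Lemma~F.6), conditioning further on $S_j$, and applying the Gaussian maximum-entropy bound to the resulting conditional variance $\mathrm{Var}(h_{ii}X_i+\bar Z_i|\breve V_i)$ yields the $(1-p_i)\log 2\pi e(1+\SNR_i/(1+\INR_j))$ term, since a direct variance computation shows this expression dominates both the $S_j=0$ subcase (where $\breve V_i=\bar V_i$) and the $S_j=1$ subcase (where $\breve V_i=Y_{ji}$, giving an even smaller conditional variance).

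For the $S_{i,t}=1$ piece, $\breve Y_{i,t}=(Y_{ii,t},Y_{ij,t})$, so subtracting $h_{ij}X_{j,t}$ from $Y_{ij,t}$ yields the independent noise $Z_{ij,t}$. The chain rule together with independence of $Z_{ij,t}$ from $(X_{j,t}, Y_{ii,t}, \breve V_{i,t})$ gives
\begin{align*}
\hs{\breve Y_{i,t}|\breve V_{i,t}, X_{j,t}} \le \hs{Y_{ii,t}|\breve V_{i,t}} + h(Z_{ij,t}).
\end{align*}
The noise term contributes $p_i\log 2\pi e\cdot\tfrac12$ after summing over $\{t:S_{i,t}=1\}$ and taking the expectation (using $\E{M_i}=Np_i$), while the remaining $\Es{\sum_{t:S_{i,t}=1}\hs{Y_{ii,t}|\breve V_{i,t}}}$ is precisely the quantity Lemma~F.5 upper-bounds by $p_i\log 2\pi e(\tfrac12+\SNR_i/(2\INR_j+1))$, so I would simply invoke that lemma.

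The main technical care required is in the Gaussian maximum-entropy step: since the codeword powers $P_{i,t}$ vary in time subject only to the average constraint $\tfrac1N\sum_t P_{i,t}\le P_i$, I would use Jensen's inequality on the concave $\log(\cdot)$ (as in Lemmas~F.3--F.6) to replace the empirical average of the instantaneous powers by $P_i$, and only then read off $\SNR_i$ and $\INR_j$. Summing the three resulting contributions then yields the claimed bound.
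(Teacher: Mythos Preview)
Your proposal is correct and follows essentially the same route as the paper: split according to $S_{i,t}$, subtract $X_{j,t}$ to isolate the noise $Z_{ij,t}$ in the $S_{i,t}=1$ case, and apply the Gaussian maximum-entropy bound after time-sharing. The only cosmetic differences are that the paper introduces a single time-sharing variable $Q$ uniform over $\{1,\dots,N\}$ \emph{before} splitting on $S_i$, and that it redoes the $h(Y_{ii}\mid\breve V_i)$ computation inline rather than citing the earlier lemma; note also that what you call ``Lemma~F.5'' actually \emph{states} a bound on the vector quantity $h_S(Y_{ii}^{M_i}\mid\breve V_i^{N})$, and the sum $\sum_{t:S_{i,t}=1}h_S(Y_{ii,t}\mid\breve V_{i,t})$ you need is the first intermediate step in its proof, so you are really invoking that proof rather than the lemma statement itself.
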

\begin{proof}
\begin{align*}
&\Es{\sum_{t=1}^N \hs{\breve Y_{i,t}|\breve V_{i,t}, X_{j,t}}} \\
&\overset{\aaaa}{=} \Es{N \frac{1}{N}\sum_{t=1}^N \hs{\breve Y_{i,Q}|\breve V_{i,Q}, X_{j,Q}, Q=t}} \\
&= \Es{N \hs{\breve Y_{i,Q}|\breve V_{i,Q}, X_{j,Q}, Q}} \\
&\overset{\bbbb}{=} N h(\breve Y_{i}|\breve V_{i}, X_{j}, Q) \\
&= p_i h(Y_{ii}, Y_{ij}|\breve V_i, X_j, Q) + (1-p_i) h(\bar Y_i|\breve V_i, X_j, Q) \\
&\leq p_i h(Y_{ii}|V_i) + p_i h(Y_{ij}|X_j) \\
&\quad + (1-p_i) \lb (1-p_j) h(\bar Y_i|\bar V_i, X_j) +p_jh(\bar Y_i|Y_{ji}, X_j)  \rb\\
&\overset{\cccc}{\leq} p_i \log2\pi e\lp \frac{1}{2} + \frac{\SNR_i}{2\INR_j+1}\rp + p_i\log2\pi e\frac{1}{2} \\
&\quad + (1-p_i)\log2\pi e \lp1 + \frac{\SNR_i}{1+\INR_j}\rp
\end{align*}
where (a) is by introducing a uniformly distributed time-sharing random variable $Q$, and (b) is by defining $\breve Y_i = \breve Y_{i,Q}$, $\breve V_i = \breve V_{i,Q}$ and $X_j = X_{j,Q}$. (c) follows by the fact that choosing jointly Gaussian input distribution with correlation coefficient $\rho=0$ for $p(x_1,x_2)$ maximizes the given conditional differential entropy.
\end{proof}

\section{Gap Analysis} \label{sec:ap_gap}
In this section, we give upper bounds for the gap terms $\delta_1$ and $\delta_2$ from Theorem~\ref{th:gaussian}. We also compare our achievable region with the outer bound of \cite{SuhTse_11} for the case $p_1=p_2=1$.

\subsection{Bounding $\delta_1$}
We will show that, each of the bounds \eqref{eq:ib_Ri}, \eqref{eq:ib_RiRj}, and \eqref{eq:ib_2RiRj} are within a constant gap of the region given in \eqref{eq:g_Ri}--\eqref{eq:g_2RiRj}. Without loss of generality, we focus on the case $(i,j)=(1,2)$, and start with the first bound in \eqref{eq:ib_Ri}.
\begin{align*}
&\msf{A}_1 + \msf{B}_2 = \log \lp 3 + \frac{\SNR_1}{1+\INR_2} \rp \\
&\quad + \log \lp 2 + \INR_2 \rp - 2\log 3 - C_1 - C_2 \\
&\geq \log \lp 1 + \frac{\SNR_1}{1+\INR_2} \rp \\
&\quad + \log \lp 1 + \INR_2 \rp - 2\log 3 - C_1 - C_2 \\
&= \log \lp 1 + \SNR_1 + \INR_2 \rp - 2\log 3 - C_1 - C_2 \\
&= \log \lp 1 + \SNR_1 \rp +\log\lp 1+\frac{ \INR_2}{1+\SNR_1} \rp \\
&\quad - 2\log 3 - C_1 - C_2 \\
&\geq \eqref{eq:g_Ri}_{(1,2),R} - 2\log 3 - C_1 - C_2
\end{align*}
where $\eqref{eq:g_Ri}_{(1,2),R}$ refers to bound on the right-hand side of \eqref{eq:g_Ri}, evaluated with $(i,j)=(1,2)$. Next, we consider the second bound in \eqref{eq:ib_Ri}. If $\SNR_1 \geq \INR_1$,
\begin{align*}
&\msf{D}_1 =\log\lp 3+\SNR_1\rp - \log 3 - C_1 \\
&\geq \log\lp 1+\SNR_1\rp - \log 3 - C_1 \\
&\geq \log\lp 1+\SNR_1 + \INR_1\rp - \log 3 - C_1 -1 \\
&= \eqref{eq:g_Ri}_{(1,2),L}- \log 3 - C_1 -1
\end{align*}
where $\eqref{eq:g_Ri}_{(1,2),L}$ refers to bound on the left-hand side of \eqref{eq:g_Ri}, evaluated with $(i,j)=(1,2)$. If $\SNR_1 < \INR_1$,
\begin{align*}
&\msf{D}_1 =\log\lp 3+\SNR_1\rp + p_2 \log \lp 1+\frac{\INR_2}{ 3 +\SNR_1 } \rp \\
&\quad  - p_2 \log \frac{5}{3} - \log 3 - C_1 \\
&\geq  \log\lp 1+\SNR_1\rp + p_2 \log \lp 1+\frac{\INR_2}{ 1 +\SNR_1 } \rp \\
&\quad - p_2 \log \frac{5}{3} -p_2\log 3 - \log 3 - C_1 \\
&= \eqref{eq:g_Ri}_{(1,2),R} -p_2 \log 5 - \log 3 - C_1
\end{align*}
Next, we consider the first bound in \eqref{eq:ib_RiRj}.
\begin{align*}
&\msf{A}_1+\msf{G}_2 = \log \lp 3 + \frac{\SNR_1}{1+\INR_2} \rp \\
&\quad + \log \lp 2 + \SNR_2+\INR_2\rp- 2\log 3 - C_1-C_2 - \kappa_1 \\
&\geq \log \lp 1 + \frac{\SNR_1}{1+\INR_2} \rp \\
&\quad + \log \lp 1 + \SNR_2+\INR_2\rp- 2\log 3 - C_1-C_2 - \kappa_1 \\
&= \eqref{eq:g_RiRj_1}_{(1,2)} - 2\log 3 - C_1-C_2 - \kappa_1
\end{align*}
where $\eqref{eq:g_RiRj_1}_{(1,2)}$ refers to the bound \eqref{eq:g_RiRj_1} evaluated with $(i,j)=(1,2)$. For the bound $\msf{F}_1+\msf{F}_2$, we first consider the case when $\INR_1 >\SNR_1$.
\begin{align*}
&\msf{F}_1+\msf{F}_2 \geq \log \lp 2 + \INR_1 + \frac{\SNR_1}{1+\INR_2} \rp \\
&\quad + \log \lp 2 + \INR_2 + \frac{\SNR_2}{1+\INR_1} \rp -2\log 3 - C_1 - C_2 \\
&\geq \log \lp 1 + \INR_1 \rp + \log \lp 1 + \INR_2 + \frac{\SNR_2}{1+\INR_1} \rp \\
&\quad  -2\log 3 - C_1 - C_2 \\
&\geq \log \lp 1 + \SNR_1 +\INR_1\rp + \log \lp 1 + \INR_2 + \frac{\SNR_2}{1+\INR_1} \rp \\
&\quad  -2\log 3 - C_1 - C_2 -1\\
&= \eqref{eq:g_RiRj_1}_{(2,1)}-2\log 3 - C_1 - C_2 -1
\end{align*}
By symmetry, we can show that when $\INR_2 >\SNR_2$,
\begin{align*}
\msf{F}_1+\msf{F}_2 \geq \eqref{eq:g_RiRj_1}_{(1,2)}-2\log 3 - C_1 - C_2 -1
\end{align*}
Next we consider the only remaining case of $\INR_1 \leq \SNR_1$, $\INR_2 \leq \SNR_2$.
\begin{align*}
&\msf{F}_1+\msf{F}_2 = \log \lp 2 + \INR_1 + \frac{\SNR_1}{1+\INR_2} \rp \\
&\quad + \log \lp 2 + \INR_2 + \frac{\SNR_2}{1+\INR_1} \rp \\ 
&\quad + p_1 \log \lp \frac{\lp 2+\INR_1\rp\lp 3+\frac{\SNR_1}{1+\INR_2}\rp}{2+\frac{\SNR_1}{1+\INR_2}+\INR_1} \rp \\
&\quad + p_2 \log \lp \frac{\lp 2+\INR_2\rp\lp 3+\frac{\SNR_2}{1+\INR_1}\rp}{2+\frac{\SNR_2}{1+\INR_1}+\INR_2} \rp \\
&\quad -2\log 3 - 2C_1 - 2C_2 - \lp p_1+p_2\rp\log 6 \\
&\overset{\aaaa}{\geq} \log \lp 1 + \INR_1 + \frac{\SNR_1}{1+\INR_2} \rp \\
&\quad + \log \lp 1 + \INR_2 + \frac{\SNR_2}{1+\INR_1} \rp \\ 
&\quad + p_1 \log \lp \frac{\lp 1+\INR_1\rp\lp 1+\frac{\SNR_1}{1+\INR_2}\rp}{1+\frac{\SNR_1}{1+\INR_2}+\INR_1} \rp \\
&\quad + p_2 \log \lp \frac{\lp 1+\INR_2\rp\lp 1+\frac{\SNR_2}{1+\INR_1}\rp}{1+\frac{\SNR_2}{1+\INR_1}+\INR_2} \rp \\
&\quad -2\log 3 - 2C_1 - 2C_2 - \lp p_1+p_2\rp\log 6 \\
&= \eqref{eq:g_RiRj_2}-2\log 3 - 2C_1 - 2C_2 - \lp p_1+p_2\rp\log 6
\end{align*}
where in (a), we used the fact that the function $\log\lp \frac{x+a}{x+a+b}\rp$ is monotonically increasing in $x$, for $x,a,b>0$. Finally, we consider the bound \eqref{eq:ib_2RiRj}. Again, we distinguish the cases $\INR_2>\SNR_2$ and $\INR_2\leq \SNR_2$. For the former case,
\begin{align*}
&\msf{A}_1+\msf{F}_2+\msf{G}_1 = \log \lp 3 + \frac{\SNR_1}{1+\INR_2} \rp \\
&\quad + \log \lp 1 + \INR_2 + \frac{\SNR_2}{1+\INR_1} \rp  \\
&\quad + \log \lp 2 + \SNR_1+\INR_1\rp- 3\log 3 - 2C_1-C_2 - \kappa_2 \\
&\geq \log \lp 3 + \frac{\SNR_1}{1+\INR_2} \rp \\
&\quad + \log \lp 1 + \INR_2 \rp + \log \lp 2 + \SNR_1+\INR_1\rp \\
&\quad - 3\log 3 - 2C_1-C_2 - \kappa_2 \\
&\geq \log \lp 3 +  \frac{\SNR_1}{1+\INR_2} \rp \\
&\quad + \log \lp 1 + \INR_2 +\SNR_2 \rp + \log \lp 2 + \SNR_1+\INR_1\rp \\
&\quad - 3\log 3 - 2C_1-C_2 - \kappa_2 -1 \\
&\geq \log \lp 1 + \frac{\SNR_1}{1+\INR_2} \rp \\
&\quad + \log \lp 1 + \INR_2 +\SNR_2 \rp + \log \lp 1 + \SNR_1+\INR_1\rp \\
&\quad - 3\log 3 - 2C_1-C_2 - \kappa_2 -1 \\
&= \eqref{eq:g_RiRj_1}_{(1,2)} + \eqref{eq:g_Ri}_{(1,2),L}- 3\log 3 - 2C_1-C_2 - \kappa_2 -1 
\end{align*}
For the case $\INR_2\leq \SNR_2$,
\begin{align*}
&\msf{A}_1+\msf{F}_2+\msf{G}_1 = \log \lp 3 + \frac{\SNR_1}{1+\INR_2} \rp \\
& + \log \lp 1 + \INR_2 + \frac{\SNR_2}{1+\INR_1} \rp + \log \lp 2 + \SNR_1+\INR_1\rp \\
& + p_2 \log \lp \frac{\lp 2+\INR_2\rp\lp 3+\frac{\SNR_2}{1+\INR_1}\rp}{2+\frac{\SNR_2}{1+\INR_1}+\INR_2} \rp \\
& - 3\log 3 - 2C_1-C_2 - p_2\log 6 -\kappa_2 \\
&\overset{\aaaa}{\geq} \log \lp 1 + \frac{\SNR_1}{1+\INR_2} \rp \\
& + \log \lp 1 + \INR_2 + \frac{\SNR_2}{1+\INR_1} \rp + \log \lp 1 + \SNR_1+\INR_1\rp \\
& + p_2 \log \lp \frac{\lp 1+\INR_2\rp\lp 1+\frac{\SNR_2}{1+\INR_1}\rp}{1+\frac{\SNR_2}{1+\INR_1}+\INR_2} \rp \\
& - 3\log 3 - 2C_1-C_2 - p_2\log 6 -\kappa_2 \\
&= \eqref{eq:g_2RiRj}- 3\log 3 - 2C_1-2C_2 - p_2\log 6 -\kappa_2
\end{align*}
where in (a), as before, we used the fact that the function $\log\lp \frac{x+a}{x+a+b}\rp$ is monotonically increasing in $x$, for $x,a,b>0$. By symmetry, similar gaps apply to the case $(i,j)=(2,1)$. Now, we can upper bound $\delta_1$ by noting that it cannot be larger than the maximum of the gaps found above (after proper normalization, \emph{e.g.}, the gap found for the bound on $R_1+R_2$ is divided by 2, and the one on $2R_1+R_2$ is divided by 3). Hence, using the fact that $C_i=2p_j+p_i$ and $\kappa_i=p_i$, we find
\begin{align*}
\delta_1 < 2\log 3 + 3\lp p_1+p_2\rp \text{ bits.}
\end{align*}

\subsection{Bounding $\delta_2$}
In order to bound $\delta_2$, we compare the bounds obtained in Section~\ref{sec:converse} with the bounds \eqref{eq:g_Ri}--\eqref{eq:g_2RiRj} one by one. Without loss of generality, we focus on $(i,j)=(1,2)$, and begin with the first bound in \eqref{eq:g_Ri}.
\begin{align*}
&\eqref{eq:g_Ri}_{(1,2),L} = \log\lp 1+\SNR_1+\INR_1\rp \\
&\geq \log\lp 1+\SNR_1+\INR_1 + 2\sqrt{\SNR_1\cdot\INR_1}\rp - \log 3 \\
&\geq \eqref{eq:ob_p_Ri} - \log 3
\end{align*}
Next, we consider the second bound in \eqref{eq:g_Ri}, and note that $\eqref{eq:ob_g_Ri}=\eqref{eq:g_Ri}_R$, where $\eqref{eq:g_Ri}_R$ refers to the right-hand side of the minimization in \eqref{eq:g_Ri}. We now consider the bound \eqref{eq:g_RiRj_1}.
\begin{align*}
\eqref{eq:g_RiRj_1} &= \log \lp 1 + \frac{\SNR_1}{1+\INR_2} \rp \\
&\quad + \log \lp 1 + \SNR_2 + \INR_2 \rp \\
&\geq  \log \lp 1 + \frac{\SNR_1}{1+\INR_2} \rp \\
&\quad + \log \lp 1 + \SNR_2 + \INR_2 + 2\sqrt{\SNR_2\cdot\INR_2} \rp - \log 3 \\
&\geq \eqref{eq:ob_p_RiRj} - \log 3
\end{align*}
Let us take \eqref{eq:g_RiRj_2}.
\begin{align*}
\eqref{eq:g_RiRj_2} &= \log \lp 1 + \frac{\SNR_1}{1+\INR_2} + \INR_1 \rp \\
&\quad + \log \lp 1 + \frac{\SNR_2}{1+\INR_1} + \INR_2 \rp \\
&\quad + p_1 \log \lp \frac{\lp 1+\INR_1 \rp\lp 1+\frac{\SNR_1}{1+\INR_2}\rp}{1+\frac{\SNR_1}{1+\INR_2}+\INR_1} \rp\\
&\quad + p_2 \log \lp \frac{\lp 1+\INR_2 \rp\lp 1+\frac{\SNR_2}{1+\INR_1}\rp}{1+\frac{\SNR_2}{1+\INR_1}+\INR_2} \rp\\
&\geq \log \lp 1 + \frac{\SNR_1}{1+\INR_2} + \INR_1 +2\sqrt{\SNR_1 \cdot \INR_1}\rp \\
&\quad + \log \lp 1 + \frac{\SNR_2}{1+\INR_1} + \INR_2 +2\sqrt{\SNR_2 \cdot \INR_2}\rp \\
&\quad + p_1 \log \lp \frac{\lp 1+\INR_1 \rp\lp 1+\frac{\SNR_1}{1+\INR_2}\rp}{1+\frac{\SNR_1}{1+\INR_2}+\INR_1} \rp\\
&\quad + p_2 \log \lp \frac{\lp 1+\INR_2 \rp\lp 1+\frac{\SNR_2}{1+\INR_1}\rp}{1+\frac{\SNR_2}{1+\INR_1}+\INR_2} \rp\\
&\quad -2\log 3 \\
&\geq \log \lp 1 + \frac{\SNR_1}{1+\INR_2} + \INR_1 +2\sqrt{\SNR_1 \cdot \INR_1}\rp \\
&\quad + \log \lp 1 + \frac{\SNR_2}{1+\INR_1} + \INR_2 +2\sqrt{\SNR_2 \cdot \INR_2}\rp \\
&\quad + p_1 \log \lp \frac{\lp 1+\INR_1 \rp\lp 1+\frac{\SNR_1}{1+\INR_2}\rp}{1+\frac{\SNR_1+2\sqrt{\SNR_1 \cdot \INR_1}}{1+\INR_2}+\INR_1} \rp\\
&\quad + p_2 \log \lp \frac{\lp 1+\INR_2 \rp\lp 1+\frac{\SNR_2}{1+\INR_1}\rp}{1+\frac{\SNR_2+2\sqrt{\SNR_2 \cdot \INR_2}}{1+\INR_1}+\INR_2} \rp\\
&\quad -2\log 3 \\
&\geq \eqref{eq:ob_g_R1R2} - 2\log 3 - 2p_1-2p_2
\end{align*}
Finally, we consider \eqref{eq:g_2RiRj}
\begin{align*}
\eqref{eq:g_2RiRj} &= \log \lp 1 + \frac{\SNR_1}{1+\INR_2} \rp \\
&\quad + \log \lp 1 + \frac{\SNR_2}{1+\INR_1} + \INR_2 \rp \\
&\quad + \log \lp 1 + \SNR_1 + \INR_1 \rp \\
&\quad + p_2 \log \lp \frac{\lp 1+\INR_2 \rp\lp 1+\frac{\SNR_2}{1+\INR_1}\rp}{1+\frac{\SNR_2}{1+\INR_1}+\INR_2} \rp \\
&\geq \log \lp 1 + \frac{\SNR_1}{1+\INR_2} \rp \\
&\quad + \log \lp 1 + \frac{\SNR_2+2\sqrt{\SNR_2 \cdot \INR_2}}{1+\INR_1} + \INR_2 \rp \\
&\quad + \log \lp 1 + \SNR_1 + \INR_1 +2\sqrt{\SNR_1 \cdot \INR_1}\rp \\
&\quad + p_2 \log \lp \frac{\lp 1+\INR_2 \rp\lp 1+\frac{\SNR_2}{1+\INR_1}\rp}{1+\frac{\SNR_2}{1+\INR_1}+\INR_2} \rp \\
&\quad -2\log 3 \\
&\geq \log \lp 1 + \frac{\SNR_1}{1+\INR_2} \rp \\
&\quad + \log \lp 1 + \frac{\SNR_2+2\sqrt{\SNR_2 \cdot \INR_2}}{1+\INR_1} + \INR_2 \rp \\
&\quad + \log \lp 1 + \SNR_1 + \INR_1 +2\sqrt{\SNR_1 \cdot \INR_1}\rp \\
&\quad + p_2 \log \lp \frac{\lp 1+\INR_2 \rp\lp 1+\frac{\SNR_2}{1+\INR_1}\rp}{1+\frac{\SNR_2+2\sqrt{\SNR_2 \cdot \INR_2}}{1+\INR_1}+\INR_2} \rp \\
&\quad -2\log 3 \\
&\geq \eqref{eq:ob_g_2R1R2} - 2\log 3 - 1 -2p_2
\end{align*}
In order to bound $\delta_2$, we note that it cannot be larger than the maximum of the gaps found above, after normalization as done with bounding $\delta_1$. Hence, we find
\begin{align*}
\delta_2 < \log 3 + p_1 + p_2 \text{ bits.}
\end{align*}

\subsection{Comparison with Suh-Tse Outer Bound}
In this subsection, we compare our inner bound for the case $p_1=p_2=1$ with the perfect feedback outer bound of \cite{SuhTse_11}. Looking at the region \eqref{eq:g_Ri}--\eqref{eq:g_2RiRj}, we see that if we set $p_1=p_2=1$, then the bounds \eqref{eq:g_RiRj_2} and \eqref{eq:g_2RiRj} become redundant, and the region reduces to the outer bound region of \cite{SuhTse_11}, with the following differences:
\begin{itemize}
\item The outer bounds in \cite{SuhTse_11} are parameterized by the parameter $\rho$, which captures the correlation between the symbols of two users. In the region \eqref{eq:g_Ri}--\eqref{eq:g_2RiRj}, supremum values over all possible values of $\rho$ is given.
\item The bounds in \eqref{eq:g_Ri}--\eqref{eq:g_2RiRj} do not contain beamforming gain terms $2\rho\sqrt{\SNR_i\cdot\INR_i}$, which appear in the outer bounds of \cite{SuhTse_11}. 
\end{itemize}
It is easy to see that the first item does not result in a rate penalty, while the second one gives a penalty of $\log 3$ since
\begin{align*}
\eqref{eq:g_Ri}_L &= \log \lp 1+\SNR_1+\INR_1\rp \\
&\geq \log \lp 1+\SNR_1+\INR_1 + 2\sqrt{\SNR_1\cdot\INR_1}\rp - \log 3
\end{align*}
Hence, the region $\mcal{\bar C}(1,1)$ is within at most $\log 3$ bits of the outer bound region of \cite{SuhTse_11}. From the results of this section, we also know that our scheme achieves the region $\mcal{\bar C}(1,1)-\delta_1$. Evaluating $\delta_1$ for $p_1=p_2=1$, we see that the proposed scheme achieves within $3+3\log3 \approx 7.75$ bits of the Suh-Tse outer bound region.

\section{Proofs of Corollaries \ref{cor:threshold_ldc} and \ref{cor:threshold}} \label{sec:ap_threshold}
In this section, we prove that when the feedback probabilities are sufficiently high, perfect feedback sum-capacity can be achieved (approximately for Gaussian case, exactly for linear deterministic case). The precise statements for the two models are given in Corollaries~\ref{cor:threshold_ldc} and \ref{cor:threshold}.

\subsection{Proof of Corollary~\ref{cor:threshold_ldc}}
We will show that when feedback is perfect, the bounds on the sum rate that involve the feedback probabilities become strictly redundant. That is, setting $p_1=p_2=p$, we will prove that the bounds $\eqref{eq:ldc_R1R2_3}$, $\eqref{eq:ldc_R1}+\eqref{eq:ldc_R2}$, $\frac{\eqref{eq:ldc_R1}+\eqref{eq:ldc_R12R2}}{2}$, $\frac{\eqref{eq:ldc_R2}+\eqref{eq:ldc_2R1R2}}{2}$, and $\frac{\eqref{eq:ldc_2R1R2}+\eqref{eq:ldc_R12R2}}{3}$ are all strictly larger than the perfect feedback bound \eqref{eq:ldc_R1R2_1} when $p=1$ and $n_{12}, n_{21}>0$. Then the result follows by noting that all such bounds are continuous and monotonically increasing functions of $p$, and hence there must exist a $p^*<1$ such that whenever $p=p^*$, perfect feedback sum-rate bound \eqref{eq:ldc_R1R2_1} is exactly matched by these bounds.

We first prove a claim that will be used in the main proof.
\begin{claim}\label{cl:ldc_cor}
For $n_{21},n_{12}>0$,
\begin{align*}
\eqref{eq:ldc_R1R2_1} < n_{12} + n_{21} + \lp n_{11}-n_{21}\rp^+ + \lp n_{22}-n_{12}\rp^+
\end{align*}
\end{claim}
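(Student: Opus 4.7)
The plan is to show that the weaker of the two terms inside the minimum defining \eqref{eq:ldc_R1R2_1}, call it $A := \max(n_{11},n_{12}) + (n_{22}-n_{12})^+$, is already strictly dominated by the right-hand side. Since $\min\{A,B\} \le A$, this suffices; symmetry between the two sides of the minimum is not needed, so I will not bother with the other term $B := \max(n_{22},n_{21}) + (n_{11}-n_{21})^+$.

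After canceling the common $(n_{22}-n_{12})^+$ on both sides, the claim reduces to the elementary inequality
\begin{align*}
\max(n_{11},n_{12}) < n_{12} + n_{21} + (n_{11}-n_{21})^+,
\end{align*}
which is where the hypothesis $n_{12},n_{21}>0$ enters. I would handle this by a short case split on the sign of $n_{11}-n_{21}$.

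In the case $n_{11}\le n_{21}$, the $(\cdot)^+$ term vanishes and the right-hand side is $n_{12}+n_{21}$; one uses $n_{11}\le n_{21}<n_{12}+n_{21}$ (strict because $n_{12}>0$) and $n_{12}<n_{12}+n_{21}$ (strict because $n_{21}>0$) to bound the maximum. In the case $n_{11}>n_{21}$, the right-hand side becomes $n_{12}+n_{11}$; then $n_{11}<n_{11}+n_{12}$ follows from $n_{12}>0$, and $n_{12}<n_{11}+n_{12}$ follows from $n_{11}\ge n_{21}>0$.

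I do not expect any real obstacle here — the only mild subtlety is making sure the strictness of the inequality is preserved, which is precisely where the positivity assumptions on the cross links are used. This strict gap at $p_1=p_2=1$ is exactly what is needed in the outer argument of Corollary~\ref{cor:threshold_ldc}: by continuity and monotonicity in $p$ of the bounds \eqref{eq:ldc_R1R2_3}, \eqref{eq:ldc_2R1R2}, \eqref{eq:ldc_R12R2} (and their appropriate averages with \eqref{eq:ldc_R1} and \eqref{eq:ldc_R2}), a threshold $p^* < 1$ below which \eqref{eq:ldc_R1R2_1} ceases to be the binding constraint must exist.
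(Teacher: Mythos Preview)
Your proposal is correct and takes essentially the same approach as the paper: both bound $\min\{A,B\}$ by $A=\max(n_{11},n_{12})+(n_{22}-n_{12})^+$, cancel the common $(n_{22}-n_{12})^+$, and then verify $\max(n_{11},n_{12})<n_{12}+n_{21}+(n_{11}-n_{21})^+$ using $n_{12},n_{21}>0$. The only cosmetic difference is that the paper avoids your case split by invoking the identity $n_{21}+(n_{11}-n_{21})^+=\max(n_{11},n_{21})$, rewriting the right-hand side as $\max(n_{12}+n_{11},\,n_{12}+n_{21})+(n_{22}-n_{12})^+$ and comparing the two maxima directly.
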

\begin{proof}
\begin{align*}
& n_{12} + n_{21} + \lp n_{11}-n_{21}\rp^+ + \lp n_{22}-n_{12}\rp^+ \\
&\quad = n_{12} + \max\lp n_{11}, n_{21}\rp + \lp n_{22}-n_{12}\rp^+ \\
&\quad = \max\lp n_{12} + n_{11}, n_{12} + n_{21}\rp + \lp n_{22}-n_{12}\rp^+ \\
&\quad > \max\lp n_{11}, n_{12} \rp + \lp n_{22}-n_{12}\rp^+ \geq \eqref{eq:ldc_R1R2_1}\\
\end{align*}
where the strict inequality follows by the fact that $n_{21},n_{12}>0$. 
\end{proof}

Next, we consider the bound \eqref{eq:ldc_R1R2_3}.
\begin{align*}
\eqref{eq:ldc_R1R2_3} &= \max \lbp n_{12}, \lp n_{11}-n_{21}\rp^+\rbp+\max \lbp n_{21}, \lp n_{22}-n_{12}\rp^+\rbp \\
& +\min \lbp n_{12}, \lp n_{11}-n_{21}\rp^+\rbp+\min \lbp n_{21}, \lp n_{22}-n_{12}\rp^+\rbp \\
&= n_{12} + n_{21} + \lp n_{11}-n_{21}\rp^+ + \lp n_{22}-n_{12}\rp^+ \\
&> \eqref{eq:ldc_R1R2_1}\\
\end{align*}
where the last line follows by Claim~\ref{cl:ldc_cor}. Hence, the bound \eqref{eq:ldc_R1R2_3} becomes strictly redundant. Next, consider
\begin{align*}
\eqref{eq:ldc_2R1R2}+\eqref{eq:ldc_R12R2} &= \max\lp n_{11},n_{12}\rp + \max\lp n_{22},n_{21}\rp \\
&\quad + \lp n_{11}-n_{21}\rp^+ + \lp n_{22}-n_{12}\rp^+ \\
&\quad + \max\lbp n_{12},\lp n_{11}-n_{21}\rp^+\rbp \\
&\quad + \max\lbp n_{21},\lp n_{22}-n_{12}\rp^+\rbp \\
&\quad + \min\lbp n_{12},\lp n_{11}-n_{21}\rp^+\rbp \\
&\quad + \min\lbp n_{21},\lp n_{22}-n_{12}\rp^+\rbp \\
&= \max\lp n_{11},n_{12}\rp + \max\lp n_{22},n_{21}\rp \\
&\quad + 2\lp n_{11}-n_{21}\rp^+ + 2\lp n_{22}-n_{12}\rp^+ n_{12} + n_{21} \\
&\geq 2\min \lbp \max\lp n_{11},n_{12}\rp + \lp n_{22}-n_{12}\rp^+ ,\right.\\
&\quad \left. +\max\lp n_{22},n_{21}\rp + \lp n_{11}-n_{21}\rp^+\rbp \\
&\quad + \lp n_{11}-n_{21}\rp^+ + \lp n_{22}-n_{12}\rp^+ n_{12} + n_{21} \\
&> 3 \cdot \eqref{eq:ldc_R1R2_1}
\end{align*}
where the last line follows by Claim~\ref{cl:ldc_cor}.

Next, we consider the sum of individual rate bounds. Since these bounds consist of the minimum of two terms, we consider each case separately. In what follows, $\eqref{eq:ldc_R1}_R$ denotes the term on the right-hand side of the minimization in \eqref{eq:ldc_R1}, while $\eqref{eq:ldc_R1}_L$ denotes the term on the left-hand side ($\eqref{eq:ldc_R2}_R$ and $\eqref{eq:ldc_R2}_L$ are also defined similarly). By symmetry, it is sufficient to prove that $\eqref{eq:ldc_R1}_R+\eqref{eq:ldc_R2}_R$ and $\eqref{eq:ldc_R1}_L+\eqref{eq:ldc_R2}_R$ are strictly redundant for $p_1=p_2=1$. We show this as follows.
\begin{align*}
\eqref{eq:ldc_R1}_R+\eqref{eq:ldc_R2}_R &= n_{11} + n_{22} + \lp n_{21} - n_{11}\rp^+ \\
&\quad + \lp n_{12} - n_{22}\rp^+ \\
&= \max \lp n_{11}, n_{21}\rp +  \max \lp n_{22}, n_{12}\rp \\
&= n_{12} + n_{21} + \lp n_{11}-n_{21}\rp^++ \lp n_{22}-n_{12}\rp^+ \\
&> \eqref{eq:ldc_R1R2_1}
\end{align*}
by Claim~\ref{cl:ldc_cor}, and
\begin{align*}
\eqref{eq:ldc_R1}_L+\eqref{eq:ldc_R2}_R &= \max\lp n_{11},n_{12} \rp + n_{22} + \lp n_{12} - n_{22}\rp^+ \\
&= \max\lp n_{11},n_{12} \rp +  \max \lp n_{22}, n_{12}\rp \\
&> \max\lp n_{11},n_{12} \rp +  \lp n_{22} - n_{12}\rp^+ \geq \eqref{eq:ldc_R1R2_1} \\
\end{align*}
since $n_{12}>0$.

Finally, we consider the bounds $\eqref{eq:ldc_2R1R2}+\eqref{eq:ldc_R2}$, and $\eqref{eq:ldc_R12R2}+\eqref{eq:ldc_R1}$. By symmetry, it is sufficient to show the redundancy of  $\eqref{eq:ldc_2R1R2}+\eqref{eq:ldc_R2}_R$ and $\eqref{eq:ldc_2R1R2}+\eqref{eq:ldc_R2}_L$. The former is shown by
\begin{align*}
\eqref{eq:ldc_2R1R2}+\eqref{eq:ldc_R2}_R &= \max\lp n_{11},n_{12}\rp + \lp n_{11}-n_{21}\rp^+ \\
&\quad + \max\lbp n_{21},\lp n_{22}-n_{12}\rp^+\rbp \\
&\quad + \min\lbp n_{21},\lp n_{22}-n_{12}\rp^+\rbp+ n_{22} \\
&\quad + \lp n_{12}-n_{22}\rp^+ \\
&= \max\lp n_{11},n_{12}\rp + \lp n_{11}-n_{21}\rp^+ \\
&\quad + n_{21} + \lp n_{22}-n_{12}\rp^+ + \max\lp n_{22}, n_{12}\rp \\
&= \max\lp n_{11},n_{12}\rp + + \lp n_{22}-n_{12}\rp^+ \\
&\quad + n_{21}  + n_{12}+ \lp n_{11}-n_{21}\rp^+ \lp n_{22}- n_{12}\rp^+ \\
&\geq \eqref{eq:ldc_R1R2_1} + n_{21}  + n_{12}+ \lp n_{11}-n_{21}\rp^+ \lp n_{22}- n_{12}\rp^+\\
&> 2\cdot \eqref{eq:ldc_R1R2_1} 
\end{align*}
where the last line follows by Claim~\ref{cl:ldc_cor}, and
\begin{align*}
&\eqref{eq:ldc_2R1R2}+\eqref{eq:ldc_R2}_L = \max\lp n_{11},n_{12}\rp + \lp n_{11}-n_{21}\rp^+ \\
& + \max\lbp n_{21},\lp n_{22}-n_{12}\rp^+\rbp + \min\lbp n_{21},\lp n_{22}-n_{12}\rp^+\rbp \\
& + \max \lbp n_{22}, n_{21}\rbp \\
&= \max\lp n_{11},n_{12}\rp + \lp n_{11}-n_{21}\rp^+ \\
&\quad + n_{21} + \lp n_{22}-n_{12}\rp^+ + \max\lp n_{22}, n_{21}\rp \\
&> \max\lp n_{11},n_{12}\rp + \lp n_{11}-n_{21}\rp^+ \\
&\quad+  \max\lp n_{22}, n_{21}\rp + \lp n_{22}-n_{12}\rp^+ \\
&\geq 2\min \lbp \max\lp n_{11},n_{12}\rp + \lp n_{22}-n_{12}\rp^+ ,\right. \\
&\quad+ \left. \max\lp n_{22}, n_{21}\rp + \lp n_{11}-n_{21}\rp^+  \rbp\\
&= 2\cdot \eqref{eq:ldc_R1R2_1} 
\end{align*}
where the strict inequality follows by the fact that $n_{21}>0$.

\subsection{Proof of Corollary~\ref{cor:threshold}}
Similar to the proof of Corollary~\ref{cor:threshold_ldc}, we will show that when feedback is perfect, the bounds on the sum rate that involve the feedback probabilities become redundant for the set $\mcal{\bar C}(p_1, p_2)$. Since the capacity region is within a constant gap of the region $\mcal{\bar C}(p_1, p_2)$, for all channel parameters, the result will follow. 

Specifically, setting $p_1=p_2=p$, we will show that when $p=1, \INR_1,\INR_2>0$, the bounds $\eqref{eq:g_RiRj_2}$, $\eqref{eq:g_Ri}_{(1,2)}+\eqref{eq:g_Ri}_{(2,1)}$, $\frac{\eqref{eq:g_Ri}_{(1,2)}+\eqref{eq:g_2RiRj}_{(2,1)}}{2}$, $\frac{\eqref{eq:g_Ri}_{(2,1)}+\eqref{eq:g_2RiRj}_{(1,2)}}{2}$, and $\frac{\eqref{eq:g_2RiRj}_{(1,2)}+\eqref{eq:g_2RiRj}_{(2,1)}}{3}$ are all strictly larger than the perfect feedback bound \eqref{eq:g_RiRj_1}, where the subscript $\lp a,b\rp$ denotes the evaluation of the relevant bound with $(i,j)=(a,b)$.

We first prove a claim that will be useful in the proof of the corollary.
\begin{claim}\label{cl:g_cor}
For $\INR_1,\INR_2> 0$,
\begin{align*}
&\min_{\ijj}\eqref{eq:g_RiRj_1} < \log \lp 1+\INR_1\rp + \log \lp 1+\INR_2\rp \\
&\quad +\log \lp 1+\frac{\SNR_1}{1+\INR_2}\rp + \log \lp 1+\frac{\SNR_2}{1+\INR_1}\rp
\end{align*}
\begin{align*}
&\log \lp 1+\INR_1\rp + \log \lp 1+\INR_2\rp \\
&\quad +\log \lp 1+\frac{\SNR_1}{1+\INR_2}\rp + \log \lp 1+\frac{\SNR_2}{1+\INR_1}\rp \\
&= \log \lp 1+\frac{\SNR_2}{1+\INR_1}\rp \\
&  + \log\lp 1+\INR_1+\INR_2+\SNR_1+\INR_1\INR_2 + \INR_1\SNR_1 \rp \\
&> \log\lp 1+\SNR_1+\INR_1\rp + \log \lp1+\frac{\SNR_2}{1+\INR_1} \rp \\
&\geq \min_{\ijj}\eqref{eq:g_RiRj_1}
\end{align*}
\end{claim}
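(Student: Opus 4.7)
My plan is to prove this purely algebraically by showing that the right-hand side strictly dominates $\eqref{eq:g_RiRj_1}_{(2,1)}$; by symmetry this immediately handles the minimum over $\ijj$, so there is no need to treat both orderings separately. The second displayed chain in the claim statement is really a roadmap: it tells me which logarithms to merge and which factor to cancel, so I would follow that structure.

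Concretely, the first step is to fuse $\log\lp 1+\INR_2\rp$ with $\log\lp 1+\frac{\SNR_1}{1+\INR_2}\rp$ using $\log a + \log b = \log(ab)$; this collapses to $\log\lp 1+\SNR_1+\INR_2\rp$. Next I would absorb $\log\lp 1+\INR_1\rp$ to obtain $\log\lp (1+\INR_1)(1+\SNR_1+\INR_2)\rp$, which expands to $\log\lp 1+\INR_1+\INR_2+\SNR_1+\INR_1\INR_2+\INR_1\SNR_1\rp$ exactly as displayed in the claim. The term $\log\lp 1+\frac{\SNR_2}{1+\INR_1}\rp$ is common to both sides and plays no further role.

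The comparison then reduces to showing $(1+\INR_1)(1+\SNR_1+\INR_2) > 1+\SNR_1+\INR_1$, i.e.\ that $\INR_2+\INR_1\SNR_1+\INR_1\INR_2 > 0$. This holds strictly under the hypothesis $\INR_1,\INR_2>0$ (in fact $\INR_2>0$ alone is enough), which is precisely the assumption of the claim. Combined with the monomial for $\log\lp 1+\frac{\SNR_2}{1+\INR_1}\rp$, this gives strict dominance over $\eqref{eq:g_RiRj_1}_{(2,1)}$, hence over $\min_{\ijj}\eqref{eq:g_RiRj_1}$.

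I do not anticipate any real obstacle; this is an elementary algebraic identity, and the only bit of foresight required is choosing the merging order so that the surviving polynomial telescopes cleanly against $1+\SNR_1+\INR_1$. The conceptual content of the claim — which is what makes it useful for Corollary~\ref{cor:threshold} — is that the perfect-feedback sum-rate bound $\eqref{eq:g_RiRj_1}$ leaves strictly positive slack whenever both cross links carry nonzero power, and this slack can be quantified explicitly through the product expansion above.
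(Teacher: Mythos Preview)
Your proposal is correct and follows essentially the same route as the paper: merge $\log(1+\INR_2)$ with $\log\lp 1+\frac{\SNR_1}{1+\INR_2}\rp$, then with $\log(1+\INR_1)$, and compare the resulting polynomial against $1+\SNR_1+\INR_1$ using $\INR_2+\INR_1\SNR_1+\INR_1\INR_2>0$. Indeed, the second displayed chain in the claim \emph{is} the paper's entire proof, so you are simply expanding the steps already recorded there; the only cosmetic remark is that once you have strict dominance over $\eqref{eq:g_RiRj_1}_{(2,1)}$ you are automatically above the minimum, so the appeal to symmetry is not needed.
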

Next, we show that under the condition $\INR_1,\INR_2> 0$ and $p=1$, all of the mentioned bounds are strictly redundant. We start with \eqref{eq:g_RiRj_2}:
\begin{align*}
\eqref{eq:g_RiRj_2} &= \log \lp 1+\INR_1\rp + \log \lp 1+\INR_2\rp \\
&\quad +\log \lp 1+\frac{\SNR_1}{1+\INR_2}\rp + \log \lp 1+\frac{\SNR_2}{1+\INR_1}\rp \\
&> \min_{\ijj}\eqref{eq:g_RiRj_1}
\end{align*}
by Claim~\ref{cl:g_cor}. Next, 
\begin{align*}
&\eqref{eq:g_2RiRj}_{(1,2)}+\eqref{eq:g_2RiRj}_{(2,1)} =  \log\lp 1+\SNR_1+\INR_1\rp \\
&\quad  +\log\lp 1+\SNR_2+\INR_2\rp + 2\log \lp1+\frac{\SNR_1}{1+\INR_2} \rp \\
&\quad + 2\log \lp1+\frac{\SNR_2}{1+\INR_1} \rp + \log\lp 1 + \INR_1\rp \\
&\quad  + \log\lp 1 + \INR_2\rp \\
&= \eqref{eq:g_RiRj_1}_{(1,2)}+ \eqref{eq:g_RiRj_1}_{(2,1)} + \log \lp 1+\INR_i\rp + \log \lp 1+\INR_j\rp \\
&\quad +\log \lp 1+\frac{\SNR_i}{1+\INR_j}\rp + \log \lp 1+\frac{\SNR_j}{1+\INR_i}\rp \\
&> 2\cdot \eqref{eq:g_RiRj_1}_{(1,2)}+ \eqref{eq:g_RiRj_1}_{(2,1)} \geq 3\cdot \min_{\ijj} \eqref{eq:g_RiRj_1}
\end{align*}
by Claim~\ref{cl:g_cor}. Next, we consider the bounds $\eqref{eq:g_Ri}_{(1,2), R}+\eqref{eq:g_Ri}_{(2,1), R}$, $\eqref{eq:g_Ri}_{(1,2), L}+\eqref{eq:g_Ri}_{(2,1), R}$, and $\eqref{eq:g_Ri}_{(1,2), R}+\eqref{eq:g_Ri}_{(2,1), L}$, where the subscript $L$ and $R$ refer to the left-hand and right-hand side of the minimization in the relevant bounds, respectively. By symmetry, it is sufficient to show the redundancy of the former two.
\begin{align*}
&\eqref{eq:g_Ri}_{(1,2), R}+\eqref{eq:g_Ri}_{(2,1), R}  = \log \lp 1+\INR_1\rp + \log \lp 1+\INR_2\rp \\
&\quad +\log \lp 1+\frac{\SNR_1}{1+\INR_2}\rp + \log \lp 1+\frac{\SNR_2}{1+\INR_1}\rp \\
&> \min_{\ijj}\eqref{eq:g_RiRj_1}
\end{align*}
by Claim~\ref{cl:g_cor}, and 
\begin{align*}
&\eqref{eq:g_Ri}_{(1,2), L}+\eqref{eq:g_Ri}_{(2,1), R} = \log \lp 1+\SNR_1 + \INR_1\rp \\
&\quad + \log \lp 1+\SNR_2+\INR_1\rp \\
&>  \log \lp 1+\SNR_1 + \INR_1\rp + \log \lp 1+\frac{\SNR_2}{1+\INR_1}\rp \\
&\geq \min_{\ijj}\eqref{eq:g_RiRj_1}
\end{align*}
since $\INR_1,\INR_2>0$. Finally, we show the redundancy of $\frac{\eqref{eq:g_Ri}_{(1,2)}+\eqref{eq:g_2RiRj}_{(2,1)}}{2}$ and $\frac{\eqref{eq:g_Ri}_{(2,1)}+\eqref{eq:g_2RiRj}_{(1,2)}}{2}$. By symmetry, we only consider the former, and in particular, $\frac{\eqref{eq:g_Ri}_{(1,2),L}+\eqref{eq:g_2RiRj}_{(2,1)}}{2}$ and $\frac{\eqref{eq:g_Ri}_{(1,2),R}+\eqref{eq:g_2RiRj}_{(2,1)}}{2}$, where the subscript $L$ and $R$ refer to the left-hand and right-hand side of the minimization in the relevant bounds, respectively. Then
\begin{align*}
&\eqref{eq:g_Ri}_{(1,2),L}+\eqref{eq:g_2RiRj}_{(2,1)} = \log \lp 1+\SNR_1 + \INR_1\rp \\
&\quad + \log \lp 1+\SNR_2 + \INR_2\rp +  \log \lp 1+\frac{\SNR_2}{1+\INR_1} \rp \\
&\quad + \log\lp 1+\INR_1\rp+ \log \lp 1+\frac{\SNR_1}{1+\INR_2} \rp \\
&> \log \lp 1+\SNR_1 + \INR_1\rp \\
&\quad + \log \lp 1+\SNR_2 + \INR_2\rp +  \log \lp 1+\frac{\SNR_2}{1+\INR_1} \rp \\
&\quad + \log \lp 1+\frac{\SNR_1}{1+\INR_2} \rp = \eqref{eq:g_RiRj_1}_{(1,2)} + \eqref{eq:g_RiRj_1}_{(2,1)} \\
&\geq \min_{\ijj}\eqref{eq:g_RiRj_1}
\end{align*}
and
\begin{align*}
&\eqref{eq:g_Ri}_{(1,2),R}+\eqref{eq:g_2RiRj}_{(2,1)} = \log \lp 1+ \INR_2\rp \\
&\quad + \log \lp 1+\SNR_2 + \INR_2\rp +  \log \lp 1+\frac{\SNR_2}{1+\INR_1} \rp \\
&\quad + \log\lp 1+\INR_1\rp+ 2\log \lp 1+\frac{\SNR_1}{1+\INR_2} \rp \\
&> \eqref{eq:g_RiRj_1}_{(2,1)} + \log \lp 1+ \INR_2\rp + \log \lp 1+ \INR_1\rp \\
&\quad \log \lp 1+\frac{\SNR_2}{1+\INR_1} \rp + \log \lp 1+\frac{\SNR_1}{1+\INR_2} \rp \\
&> 2\cdot \min_{\ijj} \eqref{eq:g_RiRj_1}.
\end{align*}

\begin{IEEEbiography}[\raisebox{15mm}{\includegraphics[width=1in,height=1.25in,clip,keepaspectratio]{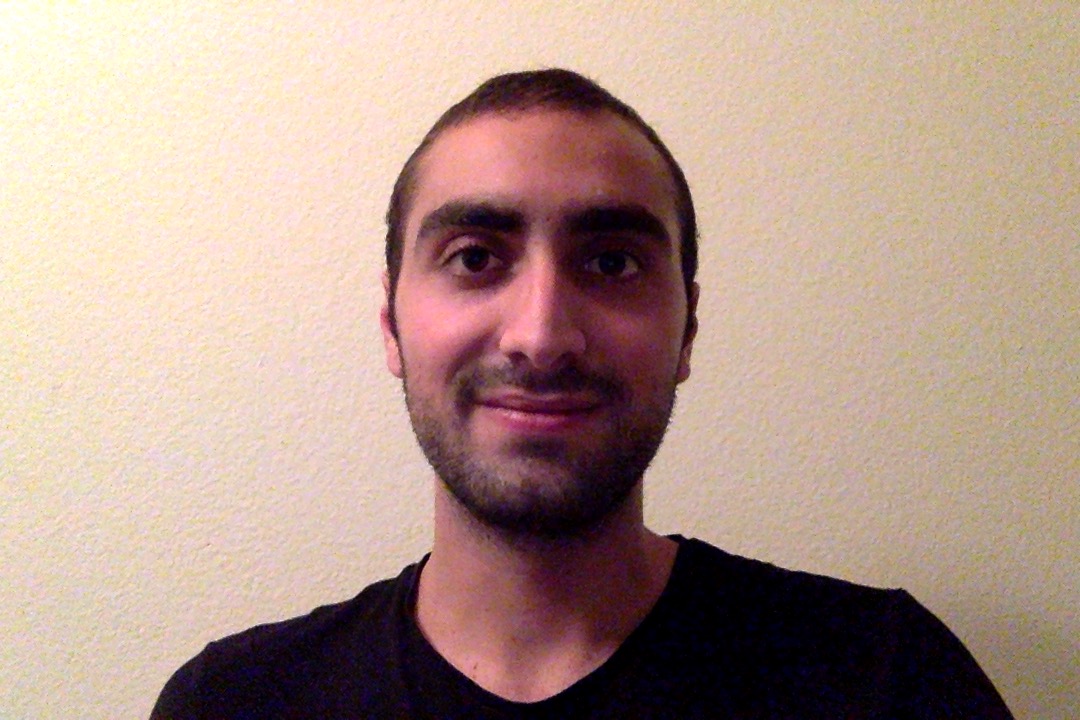}}]{Can Karakus}
received his B.S. degree from Bilkent University, Turkey, in 2011, his M.S. degree from University of California, Los Angeles (UCLA), USA, in 2013, and currently working towards his Ph.D. degree at UCLA, all in electrical engineering. He was a recipient of UCLA Graduate Division Fellowship in 2011, and UCLA Electrical Engineering Department Fellowship in 2013. His research interests include information theory, wireless networks and algorithms.
\end{IEEEbiography}

\begin{IEEEbiography}[{\includegraphics[width=1in,height=1.25in,clip,keepaspectratio]{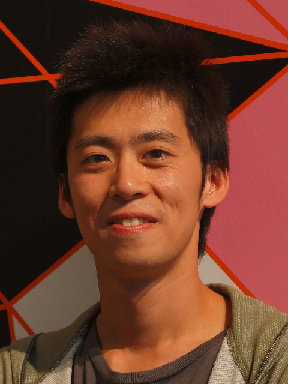}}]{I-Hsiang Wang}
received the B.Sc. degree in electrical engineering from National Taiwan University, Taiwan, in 2006. He received a Ph.D. degree in electrical engineering and computer sciences from the University of California at Berkeley, USA, in 2011. From 2011 to 2013, he was a postdoctoral researcher at \`{E}cole Polytechnique F\`{e}d\`{e}rale de Lausanne, Switzerland. Since 2013, he has been at the Department of Electrical Engineering in National Taiwan University, where he is now an assistant professor. His research interests include network information theory, wireless networks, coding theory, and network coding. He received a 2-year Vodafone Graduate Fellowship in 2006. He was a finalist of the Best Student Paper Award of IEEE International Symposium on Information Theory, 2011.
\end{IEEEbiography}

\begin{IEEEbiography}[{\includegraphics[width=1in,height=1.25in,clip,keepaspectratio]{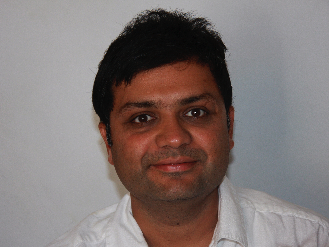}}]{Suhas Diggavi}
received the B. Tech. degree in electrical
engineering from the Indian Institute of Technology, Delhi, India, and
the Ph.D. degree in electrical engineering from Stanford University,
Stanford, CA, in 1998.

After completing his Ph.D., he was a Principal Member Technical Staff
in the Information Sciences Center, AT\&T Shannon Laboratories, Florham
Park, NJ. After that he was on the faculty of the School of Computer
and Communication Sciences, EPFL, where he directed the Laboratory for
Information and Communication Systems (LICOS).  He is currently a
Professor, in the Department of Electrical Engineering, at the
University of California, Los Angeles, where he directs the
Information Theory and Systems laboratory.  His research interests
include wireless network information theory, wireless networking
systems, network data compression and network algorithms.

He is a co-recipient of the 2013 IEEE Information Theory Society \&
Communications Society Joint Paper Award, the 2013 ACM International
Symposium on Mobile Ad Hoc Networking and Computing (MobiHoc) best
paper award, the 2006 IEEE Donald Fink prize paper award, 2005 IEEE
Vehicular Technology Conference best paper award and the Okawa
foundation research. He is a Fellow of the IEEE, a distinguished
lecturer for the information theory society and has served on the
editorial board for Transactions on Information Theory, ACM/IEEE
Transactions on Networking, IEEE Communication Letters, a guest editor
for IEEE Selected Topics in Signal Processing. He served as the
Technical Program Co-Chair for 2012 IEEE Information Theory Workshop
(ITW) and the Technical Program Co-Chair for the 2015 IEEE International
Symposium on Information Theory (ISIT).
\end{IEEEbiography}
\vfill

\end{document}